\renewcommand\PrintNames@a[4]{%
    \PrintSeries{\name}
        {#1}
        {}{ y \set@othername}
        {,}{ \set@othername}
        {,}{ y \set@othername}
        {#2}{#4}{#3}%
}
\newtheorem{thm}{Theorem}[section]
\newtheorem*{thm*}{Theorem}
\newtheorem{cor}[thm]{Corollary}
\newtheorem{lem}[thm]{Lemma}
\newtheorem{propo}[thm]{Proposition}
\theoremstyle{plain}
\newtheorem{defi}[thm]{Definition}
\theoremstyle{definition}
\newtheorem{example}[thm]{Example}
\newtheorem{rem}[thm]{Remark}
\theoremstyle{remark}
\newacronym{qft}{QFT}{quantum field theory}
\newacronym{pqft}{pQFT}{perturbative quantum field theory}
\newacronym{tvs}{TVS}{topological vector space}
\newacronym{lcs}{LCS}{locally convex space}
\newglossaryentry{angelsperarea}{
  name = $a$ ,
  description = The number of angels per unit area,
}
\newglossaryentry{infdiff}{name=space of infinitely differentiable functions,symbol={\[\mathscr{E}\left( \Omega\right)  \]},
description=blah}
\begin{document}

\thispagestyle{empty}
\fancyhead[LE]{\small\bfseries\leftmark}
\fancyhead[RO]{\small\bfseries\rightmark}
\fancyfoot[C]{\small\bfseries\thepage}
\begin {center}

\textbf{\large Quantum field theory and renormalization à la Stückelberg-Petermann-Epstein-Glaser}
\vspace{.2 cm}

\vspace{1.5cm}
\textbf{Virginia Gali}

\vspace{4 cm}

\includegraphics[scale=.4]{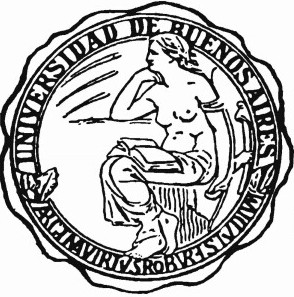}

\vspace{3cm}

\textbf{\large Tesis de Licenciatura en Ciencias Físicas}

\textbf{\large Facultad de Ciencias Exactas y Naturales}

\textbf{\large Universidad de Buenos Aires}
\vspace{1 cm}

Diciembre 2015

\end {center}

\newpage

\thispagestyle{empty}
\noindent
\textit{O God, I could be bounded in a
\\nutshell and count myself a King of in-
\\finite space.}
\\

\noindent
Hamlet, II, 2. 
\newpage
\pagenumbering{roman}
\tableofcontents
\newpage
\chapter*{Aknowledgements\markboth{AKNOWLEDGEMENTS}{AKNOWLEDGEMENTS}}
\addcontentsline{toc}{chapter}{Aknowledgements}
First of all I would like to thank my parents for their priceless support and continued encouragement to follow my studies all along these years.
For his excellent guidance, I am most grateful to my advisor, Professor Estanislao Herscovich.
It is incredible the amount of knowledge a person can possess and I find it necessary to express my gratitude in his sharing it with me all along the past year.
It was a pleasure to be his student.

I want to thank the examiners of my master thesis, Professors Rafael Ferraro, Mauricio Leston and Mariano Suarez Álvarez, whose corrections helped to improve my work considerably.
Finally, I would like to thank Professors Victoria Bekeris and Mirta Gil por their infinite patience with me...

\newpage
\chapter*{Notations\markboth{NOTATIONS}{NOTATIONS}}
\addcontentsline{toc}{chapter}{Notations}
\paragraph{Sets of numbers:}
The sets of natural numbers and the set of integers  will be denoted with the symbols $\mathbb{N}$ and $\mathbb{Z}$, respectively.
The symbols $\mathbb{Q}$, $\mathbb{R}$ and $\mathbb{C}$ will denote the fields of rational, real and complex numbers, respectively. 

\paragraph{Multi-index notations:}
$k=(k_1,\dots,k_n)$ in $\mathbb{N}^n$ is called an \textit{n-integer}\index{n@$n$"-integer} or \emph{multi-index}\index{multi"-index}, and the \textit{order of k} \index{order of a multi"-index} is defined as $|k|=k_1+\dots+k_n$. 
We will also use the following notations: 
	\begin{equation*}
		\begin{split}
			k!=(k_1!)\dots(k_n!)&\mbox{,}\quad x^k=x_1^{k_1}\dots x_n^{k_n}\ \mbox{ if }\ 	x=(x_i)\in \mathbb{C}^n\\ \mbox{ and }\ \  &\partial^k_x=\frac{\partial^{|k|}}{\partial x^k}=\frac{\partial^{|k|}}{\partial x^{k_1}\dots\partial x^{k_n}}. 
		\end{split}
	\end{equation*}
If $l$ and $k$ are two $n$-integers, we denote by $k+l$ the $n$-integer defined as $k+l=(k_1+l_1,\dots,k_n+l_n)$. 
Ordering $\mathbb{N}^n$ by the relation 
	\begin{equation*}
		k\leq l\iff k_i\leq l_i \quad \forall i=1\dots n\mbox{,}
	\end{equation*} 
we can define $k-l=(k_1-l_1,\dots ,k_n-l_n)$ if $k\geq l $ and we write 
	\begin{equation*}
		\binom{l}{k}=\frac{l!}{(l-k)!k!}\mbox{,}\quad \mbox{ if } l\geq k\mbox{;} \mbox{ otherwise } \binom{l}{k}=0.  
	\end{equation*}
\paragraph{Pullbacks and pushforwards:}\label{pulpush}
We denote by
	\begin{itemize}
		\item $\psi^\ast(f)$ the pullback of a function\index{pullback of a function} $f$ given by
			\begin{equation*}
				\psi^\ast(f)=f\circ \psi.
			\end{equation*}
		\item  $\psi_\ast(f)$ the pushforward of a function\index{pushforward of a function} $f$ given by
			\begin{equation*}
				\begin{split}
					\psi_\ast(f)&=f\circ \psi^{-1}\\&=\left( \psi^{-1}\right) ^\ast(f)\mbox{,}
				\end{split}
			\end{equation*}
		where the last equation makes sense only if  $\psi$ is invertible.
		\item $\psi^\ast(t)$ the pullback of a distribution\index{pullback of a distribution} $t$ given by 
			\begin{equation*}
				\big\langle\psi^\ast(t),f\big\rangle=\big\langle t,f\circ \psi^{-1} \big\rangle\mbox{,}
			\end{equation*}
		where the last equation makes sense only if  $\psi$ is invertible.
		\item  $\psi_\ast(t)$ the pushforward of a distribution \index{pushforward of a distribution} $t$ given by
			\begin{equation*}
				\big\langle\psi_\ast(t),f\big\rangle=\big\langle t,f\circ \psi \big\rangle.
			\end{equation*}
	\end{itemize}

\chapter*{Introduction}\label{introooo}
\markboth{INTRODUCTION}{INTRODUCTION}
\addcontentsline{toc}{chapter}{Introduction}
The objective of this thesis is to analyze certain results presented by Nguyen Viet Dang in his article on the  extension of distributions on Riemannian manifolds (\textit{cf.} \cite{vietdang}). 
Some of his proofs were thoroughly revised.
In addition, corrections or  more detailed descriptions and explanations were added to them.

In the present work, we study the renormalization of perturbative quantum field theory (pQFT) as a problem of extension of distributions originally defined on the complement of a closed set in a manifold.
Our  approach is simpler in the case of locally Euclidean quantum field theories (QFT).
Therefore, we choose to work with spacetimes which are $d$-dimensional Riemannian manifolds  (instead of pseudo-Riemannian manifolds). 
This has the advantage that we deal only with the Green functions and there are no time ordered products of fields.
The geometric view also favours the study of renormalization in coordinate space.
This is crucial for theories with curved spacetime backgrounds; in such scenarios, translation invariance is lost and the study of renormalization in momentum space is not possible.
In addition, we do not specify the theory to which the Green functions belong to, so our study separates the problem from  particular models of pQFT. 

Let us briefly explain the heuristic contents of pQFT. 
Consider a real scalar field $\varphi$ defined on a  $d$-dimensional Riemannian manifold $\mathcal{M}$.
To simplify the exposition we consider the classical field configuration space $\mathscr{E}\left(\mathcal{M}\right)$ (and not only the space of solutions of the field equations)\footnote{$\mathscr{E}\left(\mathcal{M}\right)$ denotes the space of $\mathcal{C}^\infty$ functions on $\mathcal{M}$.}.

We define the space $\mathcal{F}$ of \emph{observables}  as the set of all functionals
	\begin{equation*}
		V:\mathscr{E}\left(\mathcal{M}\right)\rightarrow\mathbb{C}
	\end{equation*}   
which are infinitely differentiable in the sense of Bastiani-Michal (see \cite{bastiani}, \cite{Mi} and \cite{Mi2}) and  whose functional derivatives
	\begin{equation}\label{derivfunc}
		\frac{\delta^nV}{\delta\varphi^n}
	\end{equation}
are distributions with compact support on $\mathcal{M}^n$, for every $n$ in $\mathbb{N}$.
There is an additional defining condition on the wavefront sets 
	\begin{equation}\label{wfsets}
		WF\left(\frac{\delta^nV}{\delta\varphi^n}\right) 
	\end{equation}
which is a microlocal version of translation invariance (see \cite{dutsch}).

The space of \emph{nonlocal functionals} $\mathcal{F}_{0}$ is the subspace of $\mathcal{F}$ defined by the stronger requirement that, for every $n$ in $\mathbb{N}$, the functional derivative \eqref{derivfunc} is a smooth function with compact support outside the big diagonal
		\begin{equation}\label{bigdiag}
			D_n=\Big\lbrace \left( x_i \right)_{i=1}^n:\exists i\neq j \mbox{ such that } x_i=x_j \Big\rbrace\subseteq\mathcal{M}^n.
		\end{equation}

The space of \emph{local functionals} $\mathcal{F}_{loc}$ is defined to be the subspace of $\mathcal{F}$ with the additional condition that, for every $n$ in $\mathbb{N}$,
	\begin{equation*}
		\frac{\delta^nV}{\delta\varphi^n}(x_1,\dots,x_n)=0 
	\end{equation*}
if $x_i\neq x_j$ for some pair $(i,j)$.

Let 
	\begin{equation}\label{greengreen}
		G:\mathcal{M}^2\setminus\left\lbrace (x,y)\in\mathcal{M}^2 : x=y \right\rbrace \rightarrow \mathbb{C}
	\end{equation}
be a Green function of the theory.
Every such function $G$ induces a product on the space of nonlocal functionals in the following way.
For every pair $V$ and $W$ in $\mathcal{F}_{0}$, we define the $\star_G$-\emph{product} by the formula
	\begin{equation}\label{starprod}
		\begin{split}
			\left(  V\star_G W\right) \left(\varphi \right) = \sum_{n=0}^\infty &\frac{\hbar^n}{n!}\int_{\mathcal{M}^{2n}}\mbox{d}x_1\dots \mbox{d}y_1\dots \frac{\delta^nV}{\delta\varphi(x_1)\dots \delta\varphi(x_n)}\\
			& G(x_1,y_1)\dots G(x_n,y_n)\frac{\delta^nW}{\delta\varphi(y_1)\dots \delta\varphi(y_n)}.
		\end{split}
	\end{equation}
The above integral clearly depends on the chosen Green function $G$. 
It exists for nonlocal functionals $V$ and $W$, and it is associative and distributive (see \cite{dutsch}).
Definition \eqref{starprod} can be extended to a product of  $n$-th order 
	\begin{equation}\label{starprod2}
		V_1\star_G\dots \star_G V_n\mbox{,}
	\end{equation}
for $V_i$ in $\mathcal{F}_{0}$ for every $i=1,\dots,n$.

In this thesis, we are going to focus  on the task of defining a  product like \eqref{starprod2} for generic functions whose support might intersect the big diagonal \eqref{bigdiag}. 
Observe that, in general, \eqref{starprod} or \eqref{starprod2} are ill defined in this case, due to the occurrence of UV singularities  whenever the arguements in the Green functions coincide.
Therefore, our task will be to extend the above definition of $\star_G$ to generic functions.
This is a simpler problem than the attempt to find such an  extension for general functionals.
The latter is a much more interesting  challenge.
It consists of  extending \eqref{starprod2} to functionals in a certain subspace of the tensor product
	\begin{equation*}
		\left( \mathcal{F}_0+\mathcal{F}_{loc}\right)^{\otimes n}.
	\end{equation*}
Using the conditions imposed on the wavefront sets \eqref{wfsets}, the already known formulae for functionals in $\mathcal{F}_0$ (eqs. \eqref{starprod}, \eqref{starprod2}) and distributing all the appearing  terms, the latter problem  reduces to finding an expression for \eqref{starprod2} when the arguments $V_i$ only belong to the space $\mathcal{F}_{loc}$.

As it stands, it is clear that \eqref{starprod2} is ill defined for $V_1,\dots ,V_n$ in $\mathcal{F}_{loc}$, because of the appearance of UV singularities  whenever the arguements in the Green functions coincide.
Therefore, the $\star_G$-product should be defined in an alternative axiomatic way, as a linear and totally symmetric map
	\begin{align}
	\star_G^n:	\mathcal{F}_{loc}^{\otimes n}&\rightarrow \mathcal{F}\\
		V_1\otimes \dots \otimes V_n&\mapsto 	V_1\star_G\dots \star_G V_n.\nonumber
	\end{align}
The defining axioms could then be given in terms of the generating functional, the S-matrix,
	\begin{equation*}
		\mathbf{S}:\mathcal{F}_{loc}\rightarrow \mathcal{F},
	\end{equation*} 
with
	\begin{equation}\label{smatrix}
		\mathbf{S}(V)=\sum_{n=0}^{\infty} \frac{\overbrace{V\star_G\dots\star_G V}^{n-times}}{n!}
	\end{equation}
for every $V$ in $\mathcal{F}_{loc}$.
Or vice versa, the product is obtained from $\mathbf{S}$ by
	\begin{equation*}
		V\star_G\dots \star_G V=\mathbf{S}^{\left(n\right) }\left(0 \right) \left(V^{\otimes n} \right)\mbox{,}
	\end{equation*} 	
where $\mathbf{S}^{\left(n\right) }\left(0 \right) $ denotes the $n$-th derivative of $\mathbf{S}$ at the origin.
We see that imposing physical axioms on $\mathbf{S}$ is equivalent to imposing corresponding  axioms on $\star_G^n$.

Causal perturbation theory requires that $\mathbf{S}$ satisfies the following conditions (see \cite{ zeidler}):\\

\begin{changemargin}{0.75cm}{0.75cm} 
\noindent\textit{Locality}: physical interactions are localized, and this condition can be realized according to the  Epstein–Glaser setting, by requiring that the S-matrix is an operator-valued tempered distribution.
In particular, if the support of the distribution $V$ in $\mathcal{F}_{loc}$ is
concentrated on a small region of the $d$-dimensional spacetime manifold, then the interaction is localized.\\

\noindent\textit{Unitarity}: to conserve the total probability  we require that $\mathbf{S}$ is a unitary operator, \textit{i.e.}
	\begin{equation*}
		\overline{\mathbf{S}\left(-V\right)}\star_G \mathbf{S}\left(\overline{V}\right)=1\mbox{,}
	\end{equation*}
where the bar means complex conjugation.\\

\noindent\textit{Causality}: whenever the supports of the functionals are disjoint
	\begin{equation*}
		\operatorname{Supp}\left(V\right)\cap\operatorname{Supp}\left(W\right)=\emptyset\mbox{,}
	\end{equation*}
we require that the S-matrix satisfies the factorization property
	\begin{equation*}
		\mathbf{S}\left(V+W\right)=\mathbf{S}\left(V\right)\star_G\mathbf{S}\left(W\right).
	\end{equation*}
\end{changemargin}
The previous axioms should be considered if one seeks to extend the definition of \eqref{starprod2} for $V_1,\dots ,V_n$ in $\mathcal{F}_{loc}$.

As mentioned earlier, we are going to focus in the less ambitious task of defining the $\star_G$ product for generic functions, instead of distributions.
This problem is much more simple because we do not need to set conditions on the wavefront sets \eqref{wfsets}, as we only deal with products of functions.
Our task reduces then to finding extensions of finite linear combinations of products of the type
	\begin{equation}\label{atland}
		f\prod_{1\leq i \leq j \leq n} G^{n_{ij}}\left( x_i,x_j\right)\mbox{,}
	\end{equation} 
where $f$ belongs to $\mathscr{E}\left( \mathcal{M}^n\right)$ and $n_{ij}$ to $\mathbb{N}$, such that they satisfy the required physical axioms.

Since the Green functions \eqref{greengreen} are regular functions on $\mathcal{M}^2\setminus\left\lbrace x=y\right\rbrace$, finite sums of products of the type \eqref{atland} are regular functions  defined on the subspace of all pairwise distinct arguments $\left(x_1,\dots ,x_{n} \right)$ in $\mathcal{M}^{n}$.
This means that they form an algebra, and we are going  to extend its elements to distributions over the whole space $\mathcal{M}^n$ by means of a \emph{system of renormalization maps}
	\begin{equation*}
		\left\lbrace \mathcal{R}_n\right\rbrace_{n\in\mathbb{N}}.
	\end{equation*}
Specifically, it is a family of linear extension operators $\mathcal{R}_n$, whose action on expressions of the type \eqref{atland}  guarantee the convergence of the integrals in \eqref{starprod} and those appearing in the general case \eqref{starprod2}.

In this thesis we shall construct inductively such a system of renormalization maps, satisfying certain natural axioms due to N. Nikolov \cite{nikolov}, but not until Chapter \ref{tengofiebree}.

In the first four chapters we develop the mathematical background needed to derive the results presented in this thesis.
In the other chapters we apply our analytic machinery to the study of pQFT on Riemannian manifolds, treating the problem of extension of distributions.

In  Chapter \ref{joann1} some preliminaries on functional analysis are given in order to understand the topology and basic properties of topological vector spaces and, in particular, of Fréchet spaces.
We mainly follow the exposition of W. Rudin, \cite{rudin}.
The results given in Lemma \ref{Mullova}; and propositions \ref{joann2}, \ref{ramiro} and \ref{gershwin} are of particular interest.

In Chapter \ref{asdf}, we recall the theory of spaces of continuously differentiable functions and  spaces of test functions on open subsets of the Euclidean space $\mathbb{R}^d$, making emphasis in the description of their topology.
This is done in order to give a precise definition of a distribution, as a continuous linear functional defined on the space of test functions. At the end of the chapter, in Proposition \ref{joannnn}, we give an equivalent definition of distribution, which is very useful for explicit calculations.

In Chapter \ref{joann3}, we introduce jet spaces, and in particular, the subspace of  differentiable functions in the sense of Whitney, following the text of B. Malgrange, \cite{malgrange}.
At the end of the chapter we present Whitney's Extension Theorem (see Theorem \ref{whitney}), taken from E. Bierstone's text, \cite{bierstone}, whose utility becomes clear in Chapter \ref{cucuu} (see below).

In Chapter \ref{macri} we present a generalization of  the concepts introduced in Chapter \ref{asdf}.
We define the spaces of continuously differentiable functions and the spaces of test functions on open subsets of some manifold $\mathcal{M}$, following Dieudonne's text, \cite{dieu}.
Later on, a definition of a distribution is given, as a continuous linear functional on the space of test functions.
This is done in an analogous way to the definition given in Chapter \ref{asdf}, with the obvious changes.
In Proposition \ref{rachmaninov2}, we give the equivalent definition of distribution in this general case, which is very useful in explicit calculations.
Theorem \ref{gluedistrib} is of particular interest, as it is one of the results on which the proof of the existence of a system of renormalization maps rests.
In the last section of this chapter, we introduce a natural condition of moderate growth for a distribution $t$ along a closed subset $X$ of a manifold $\mathcal{M}$, following  \cite{vietdang}.
This condition measures the
singular behaviour of $t$ near $X$.
The moderate growth condition given by the author in \cite{vietdang} is rather unclear so we define it in a much clearer way.
We also state and prove in Proposition \ref{viir} that the product of a distribution with moderate growth and a $\mathcal{C}^\infty$ function gives a distribution with moderate growth.
This result is mentioned at the beginning of \S 1.1 of \cite{vietdang}, but the author gives no demonstration.

The importance of the notion of a distribution with moderate growth becomes clear in Chapter \ref{cucuu}, as it is a necessary and sufficient condition for a distribution originally defined on $\mathcal{M}\setminus X$, to be extendible to $\mathcal{M}$, for $X\subseteq\mathcal{M}$ a closed subset. 
This last statement is the equivalence of conditions \textit{(i)} and \textit{(ii)} listed in Theorem \ref{elteo}, that we choose to transcribe here due to the central role it plays in this work.
\begin{thm*}
For a distribution $t$ defined on  $\mathcal{M}\setminus X$ the three following claims are equivalent:
	\begin{itemize}
		\item [(i)] $t$ has moderate growth along $X$.
		\item [(ii)] $t$ is extendible to $\mathcal{M}$.
		\item [(iii)] There is a family of $\mathcal{C}^\infty$ functions
			\begin{equation*}
				(\beta_\lambda)_{\lambda\in(0,1]}
			\end{equation*}
		defined on $\mathcal{M}$ such that
			\begin{enumerate}
				\item $\beta_\lambda=0\ $ in a neighborhood of $X$,
				\item $\lim\limits_{\lambda\to 0}\beta_\lambda(x)=1\mbox{,}\ $ for every $x$ in $\mathcal{M}\setminus X$, 
			\end{enumerate}
		and a family of distributions
			\begin{equation*}
				(c_\lambda)_{\lambda\in(0,1]}
			\end{equation*} 
		 on $\mathcal{M}$ supported on $X$ such that the following limit
			\begin{equation*}
				\lim\limits_{\lambda\to 0}t\beta_\lambda -c_\lambda
			\end{equation*}
		exists and defines a continuous extension of $t$ to the  manifold $\mathcal{M}$.
	\end{itemize}
\end{thm*}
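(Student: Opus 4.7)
My plan is to prove the cyclic chain $(iii) \Rightarrow (ii) \Rightarrow (i) \Rightarrow (iii)$, with the last implication being by far the most substantial.

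The implication $(iii) \Rightarrow (ii)$ is essentially formal. By hypothesis the limit $\tilde{t} := \lim_{\lambda\to 0}(t\beta_\lambda - c_\lambda)$ exists as a continuous linear functional on the space of test functions on $\mathcal{M}$, so it lies in $\mathscr{D}'(\mathcal{M})$. To check that it extends $t$, I take $\varphi$ with compact support in $\mathcal{M}\setminus X$: then $\operatorname{supp}(c_\lambda) \subseteq X$ is disjoint from $\operatorname{supp}(\varphi)$, so $\langle c_\lambda,\varphi\rangle = 0$; and for $\lambda$ small enough $\beta_\lambda \equiv 1$ on a neighbourhood of $\operatorname{supp}(\varphi)$, whence $\langle t\beta_\lambda,\varphi\rangle = \langle t,\varphi\rangle$. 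Passing to the limit yields $\langle \tilde{t},\varphi\rangle = \langle t,\varphi\rangle$.

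For $(ii) \Rightarrow (i)$, given an extension $\tilde{t}\in \mathscr{D}'(\mathcal{M})$, the local continuity of $\tilde{t}$ provides seminorm estimates of the form $|\langle\tilde{t}, \varphi\rangle| \leq C\sum_{|k|\leq N}\sup|\partial^k\varphi|$ for $\varphi$ supported in a fixed compact subset of a chart around a point of $X$. Specialising to test functions $\varphi$ whose support lies at distance of order $\lambda$ from $X$, a standard scaling and chain-rule argument converts these seminorm bounds into the polynomial dependence on $1/\lambda$ which is the content of moderate growth (in the sense of the definition recalled in the last section of Chapter \ref{macri} and of Proposition \ref{viir}).

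The heart of the proof is $(i) \Rightarrow (iii)$. Fix a smooth distance function $\rho(\cdot,X)$ obtained from the Riemannian metric and a cutoff $\chi \in \mathscr{E}(\mathbb{R})$ with $\chi(s) = 0$ for $s \leq 1$ and $\chi(s) = 1$ for $s \geq 2$; set $\beta_\lambda(x) := \chi(\rho(x,X)/\lambda)$, so that $\beta_\lambda$ vanishes near $X$, tends pointwise to $1$ on $\mathcal{M}\setminus X$, and $t\beta_\lambda$ is well defined on $\mathcal{M}$ (extended by zero). The construction of $c_\lambda$ rests on a Taylor-type decomposition along $X$,
\[
\varphi = T_N(\varphi) + R_N(\varphi),
\]
provided by Whitney's Extension Theorem (Theorem \ref{whitney}) together with a partition of unity: $T_N(\varphi)$ is smooth and has the same $N$-jet along $X$ as $\varphi$, while $R_N(\varphi)$ vanishes to order $N$ along $X$. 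Choosing $N$ strictly larger than the order of moderate growth of $t$, the bound on $t$ combined with the order of vanishing of $R_N(\varphi)$ guarantees that $\langle t,\beta_\lambda R_N(\varphi)\rangle$ has a finite limit as $\lambda\to 0$. Defining
\[
\langle c_\lambda,\varphi\rangle := \langle t,\beta_\lambda T_N(\varphi)\rangle,
\]
the fact that $c_\lambda(\varphi)$ depends only on the $N$-jet of $\varphi$ along $X$ forces $c_\lambda$ to be supported on $X$, and by construction $\langle t\beta_\lambda - c_\lambda,\varphi\rangle = \langle t,\beta_\lambda R_N(\varphi)\rangle$ converges.

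I expect the main obstacle to be the quantitative part of $(i) \Rightarrow (iii)$: one must bound $\partial^k(\beta_\lambda R_N(\varphi))$ uniformly on each distance-slab around $X$ and then combine these bounds with the moderate growth estimate to obtain absolute convergence of $\langle t,\beta_\lambda R_N(\varphi)\rangle$. This amounts to a careful accounting of the order $N$ of vanishing of the Whitney remainder against the polynomial blow-up of the seminorms of $\beta_\lambda$ as $\lambda \to 0$, and it is precisely here that the choice of $N$ relative to the moderate growth exponent becomes crucial.
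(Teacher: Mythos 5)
Your overall architecture coincides with the paper's: the chain $(iii)\Rightarrow(ii)\Rightarrow(i)$ is handled the same way (and note that $(ii)\Rightarrow(i)$ is even simpler than you suggest: the continuity estimate of Proposition \ref{rachmaninov2} for the extension \emph{is} already the moderate-growth estimate with $s=0$, so no scaling argument is needed), and your decomposition $\varphi=T_N(\varphi)+R_N(\varphi)$ is exactly the paper's splitting $\mathscr{E}^m(\mathbb{R}^d)=\mathcal{I}^m(X,\mathbb{R}^d)\oplus D$ furnished by Whitney's Extension Theorem (Theorem \ref{janis}), with $T_N=P^m_D$, $R_N=I^m_D$ and the same counterterm $\langle c_\lambda,\varphi\rangle=\langle t,\beta_\lambda T_N(\varphi)\rangle$. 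The ``careful accounting'' you defer is precisely the content of Lemma \ref{technical}(iii) and the dyadic telescoping of Proposition \ref{mendelsshon}: the point is not merely a uniform bound on $\langle t,\beta_\lambda R_N(\varphi)\rangle$ but a Cauchy estimate in $\lambda$, obtained by decomposing the transition region into shells $\{2^{-j-1}\lambda\lesssim d(\cdot,X)\lesssim 2^{-j}\lambda\}$ on which the order-$N$ vanishing of $R_N(\varphi)$ contributes a factor $2^{-jN}$ that beats the factor $2^{js}$ coming from moderate growth once $N>s$.

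The one concrete gap is your construction of $\beta_\lambda$. For a general closed set $X$ the Riemannian distance $d(\cdot,X)$ is only Lipschitz (it fails to be differentiable at points with several nearest points in $X$, which occur inside the transition region $\lambda<d(\cdot,X)<2\lambda$ where $\beta_\lambda$ is non-constant), so $\chi(d(x,X)/\lambda)$ need not belong to $\mathscr{E}(\mathcal{M})$ and, more importantly, you lose the derivative bounds $|\partial^k\beta_\lambda|\leq C_k\lambda^{-|k|}$ on which the whole quantitative argument rests. You must either replace $d(\cdot,X)$ by a regularized distance with $|\partial^k\rho|\lesssim d(\cdot,X)^{1-|k|}$, or do as the paper does in Lemma \ref{technical} and define $\chi_\lambda=\phi_\lambda\ast\alpha_\lambda$ by mollifying the characteristic function of a $\lambda/2$-neighbourhood of $X$, which yields both the support properties and the bounds $|\partial^i\chi_\lambda|\leq C_i\lambda^{-|i|}$ in one stroke. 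A second, minor point: $T_N(\varphi)$ and $R_N(\varphi)$ are in general only of class $\mathcal{C}^N$, so your pairing $\langle t,\beta_\lambda R_N(\varphi)\rangle$ is legitimate only because $\beta_\lambda$ kills a neighbourhood of $X$ (off which the Whitney extension is smooth); this is the reason the paper interposes the Dirac-sequence regularization of Lemma \ref{zappa} when defining the extension on all of $\mathcal{I}^m(X,\mathbb{R}^d)$.
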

\noindent The whole chapter is dedicated to the proof of the previous  extension theorem.
It  gives an explicit way to define a continuous extension of a given distribution with moderate growth along a closed subset of a manifold.
On the other hand, it is fairly long and rather technical, but we provide it completely.
To do so, we follow the proof given by Dang in \cite{vietdang}.
It has to be noted that the author makes some nontrivial omissions in his demonstration. 
For instance, he relies on the result in Theorem 1.2 of  \cite{vietdang}, where the author does not give a definition of splitting of a short exact sequence of locally convex spaces, nor does he specify the topology of the dual spaces that appear in it.
In this work we restate the result in Theorem 1.2 of \cite{vietdang} in a clearer way.
Its proof relies on Whitney's Extension Theorem, given in Chapter \ref{joann3}, and our version of it is complete and well-organized.

In Chapter \ref{joann6} the whole analytical machinery developed in the previous chapters  is used to extend the so called \emph{Feynman amplitudes} that usually appear in QFT and are given by
	\begin{equation*}
		\prod_{1\leq i<j\leq n } G^{n_{ij}}\left( x_i,x_j\right). 
	\end{equation*}
First, in \S\ref{secrenprod} we introduce the notion of tempered function along a closed set $X$.
It is a natural growth condition for a function $f$ defined on $\mathcal{M}\setminus X$, that implies the moderate growth condition for the distribution $T_f$, defined on a compactly supported function 
	\begin{equation*}
		\psi:\mathcal{M}\setminus X\rightarrow \mathbb{C}
	\end{equation*}
by
	\begin{equation*}
		\left\langle T_f, \psi  \right\rangle =\int_{\mathcal{M}\setminus X}f\psi\ \mbox{d}\mu_g\mbox{,}
	\end{equation*}
where $\mbox{d}\mu_g$ is the volume form associated to a metric $g$ defined on $\mathcal{M}$.
We then apply the extension techniques developed in Chapter \ref{cucuu} to
establish in Theorem \ref{maggotbrain}  that the product  of a tempered function and a distribution has a continuous  extension to the closed set $X$.
This means that the space of extendible distributions $\mathcal{T}_{\mathcal{M}\setminus X}(\mathcal{M})$ is a $\mathcal{T}(X,\mathcal{M})$-module, if we denote by $\mathcal{T}(X,\mathcal{M})$ the space of tempered functions.
In \S\ref{expliFA}, we show that the  Green functions $G(x,y)$ associated to the Riemannian structure are tempered along the diagonal $\left\lbrace x=y\right\rbrace$ contained in $\mathcal{M}^2$ (see Lemma \ref{silvia}).
Both Theorem \ref{maggotbrain} and Lemma \ref{silvia} appear in \cite{vietdang}, but we give  an exhaustive proof of them.
We use Lemma \ref{silvia} together with the results of the previous section  in the proof of  Theorem \ref{dinsky}, where we show that Feynman amplitudes are extendible.

As explained above, in QFT, the renormalization procedure is not only involved with the extension of Feynman amplitudes.
Renormalization is intended to be applicable over the algebra which they generate along with $\mathcal{C}^\infty$ functions, namely
	\begin{equation}\label{moustache}
		\mathcal{O}(D_I,\Omega):=\left\langle f\prod_{i<j\in I}\left.G^{n_{ij}}\left(x_i,x_j \right)\right|_{\Omega\setminus D_I}: n_{ij}\in\mathbb{N} \ \forall i<j\in I, \ f\in\mathscr{E}\left( \Omega\right)  \right\rangle _{\mathbb{C}}\mbox{,}
	\end{equation}
where $I$ is a finite subset of $\mathbb{N}$, $\Omega$ is an open subset of the product $\mathcal{M}^I$ of $\left| I\right|$ copies of the Riemannian manifold $\mathcal{M}$, and
	\begin{equation*}
		D_I=\Big\lbrace \left( x_i \right)_{i\in I}:\exists i,j\in I,  i\neq j, x_i=x_j \Big\rbrace.
	\end{equation*}
This is where the need of a system of renormalization maps as a collection of linear extension operators
	\begin{equation*}
		\mathcal{R}_{\Omega}^I: \mathcal{O}\left( D_I,\Omega\right) \rightarrow \mathscr{D}\left( \Omega\right)' 
	\end{equation*}
becomes apparent\footnote{$\mathscr{D}\left( \Omega\right)'$ is the symbol generally used to denote the space of distributions defined on an open set $\Omega$ (see Definitions \ref{defid1} and \ref{mariacallas}).}.

In the first section of Chapter \ref{tengofiebree}, we list the axioms the renormalization maps should satisfy in order to define a coherent renormalization procedure, following \cite{nikolov}.
The most important one is the \emph{factorization axiom}, which guarantees the compatibility with the fundamental requirement of locality.
The factorization axiom appears in \cite{vietdang}, but we chose to follow \cite{todorov} where it is defined in a clearer manner.
In Theorem \ref{exrenmap} we prove  the existence of a family of renormalization maps satisfying our adequate version of renormalization axioms (\textit{cf.} Theorem 4.2 of \cite{vietdang}):
\begin{thm*}
There exists a collection of renormalization maps 
	\begin{equation}\label{genoa}
		\mathscr{R}=\left\lbrace \mathcal{R}_{\Omega}^{I}: \mathcal{O}\left( D_I,\Omega\right) \rightarrow \mathscr{D}\left( \Omega\right)'  :I\subseteq \mathbb{N}\mbox{,} |I|<\infty\mbox{,} \ \Omega\subseteq \mathcal{M}^I\ open \right\rbrace\mbox{,}
	\end{equation}
that satisfies the renormalization axioms listed in \S \ref{axioms}.
\end{thm*}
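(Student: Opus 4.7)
The plan is to proceed by induction on the cardinality $|I|$ of the index set. When $|I|=1$ the big diagonal $D_I$ is empty, so $\mathcal{O}(D_I,\Omega)=\mathscr{E}(\Omega)$ and we may take $\mathcal{R}_\Omega^{\{i\}}$ to be the canonical embedding $f\mapsto T_f$ into $\mathscr{D}(\Omega)'$ defined by integration against the Riemannian volume form; all axioms of \S\ref{axioms} hold trivially. Fix now $n\geq 2$, assume that renormalization maps $\mathcal{R}_U^J$ are available and satisfy the axioms for every open $U\subseteq\mathcal{M}^J$ and every $J$ with $|J|<n$, and fix $I$ with $|I|=n$ together with an open $\Omega\subseteq\mathcal{M}^I$.

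The first step is to construct $\mathcal{R}^I_\Omega$ on the open subset $\Omega\setminus\Delta_I$, where $\Delta_I=\{(x_i)_{i\in I}:x_i=x_j\text{ for all }i,j\}$ is the thin diagonal. Every point $p\in\Omega\setminus\Delta_I$ admits a nontrivial bipartition $I=I_1\sqcup I_2$ and disjoint open sets $U_1,U_2\subseteq\mathcal{M}$ with $p\in V_p:=(U_1^{I_1}\times U_2^{I_2})\cap\Omega$ and $\overline{U_1}\cap\overline{U_2}=\emptyset$. On $V_p$, every Green function $G(x_i,x_j)$ with $i\in I_1$ and $j\in I_2$ is smooth, so each generator of $\mathcal{O}(D_I,\Omega)$ restricts to a finite sum of elements of $\mathcal{O}(D_{I_1},U_1^{I_1})\otimes\mathcal{O}(D_{I_2},U_2^{I_2})$ multiplied by an element of $\mathscr{E}(V_p)$. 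The factorization axiom then forces $\mathcal{R}_{V_p}^I$ to coincide with $\mathcal{R}_{U_1^{I_1}}^{I_1}\otimes\mathcal{R}_{U_2^{I_2}}^{I_2}$ on this piece, and this prescription is consistent: two different choices of separating bipartition at $p$ produce the same germ, as follows from the precise formulation of the factorization axiom of \cite{todorov} combined with the inductive hypothesis applied on each resulting intersection. The local definitions then glue, by Theorem \ref{gluedistrib}, to a well-defined distribution $\mathcal{R}_{\Omega\setminus\Delta_I}^I$ on the complement of the thin diagonal.

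The analytic heart of the argument is the extension of this distribution across $\Delta_I$. Lemma \ref{silvia} states that every Green function is tempered along the coincidence set, so by Theorem \ref{dinsky} every Feynman amplitude is tempered along $\Delta_I$; multiplying by the smooth factor $f$ and invoking Theorem \ref{maggotbrain} together with Proposition \ref{viir}, the distribution $\mathcal{R}_{\Omega\setminus\Delta_I}^I(\cdot)$ has moderate growth along $\Delta_I$ in the sense of Chapter \ref{macri}. The extension theorem (Theorem \ref{elteo}) then provides a continuous extension to $\Omega$. To convert existence of an extension into a \emph{linear} operator $\mathcal{R}_\Omega^I$, I would fix once and for all, as in the restated version of Theorem 1.2 of \cite{vietdang}, a continuous linear splitting of the short exact sequence relating distributions on $\Omega$, distributions on $\Omega\setminus\Delta_I$ and distributions supported on $\Delta_I$; the image of $\mathcal{R}_{\Omega\setminus\Delta_I}^I$ under this splitting is the desired $\mathcal{R}_\Omega^I$.

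The main obstacle is not any single step but the simultaneous verification that the inductive construction, performed for all pairs $(I,\Omega)$, satisfies \emph{every} axiom of \S\ref{axioms}, not only the factorization axiom used in the local construction. Linearity and $\mathscr{E}(\Omega)$-compatibility follow from the corresponding properties of the splitting produced by Theorem \ref{elteo}; the factorization axiom for an arbitrary bipartition of $I$ reduces, on $\Omega\setminus\Delta_I$, to the one hard-wired into the construction and, across $\Delta_I$, is forced by the inductive hypothesis on the smaller strata of $D_I$ together with the fact that two extensions differ by a distribution supported on $\Delta_I$. The genuinely delicate point is to make the family of splittings coherent across all $(\Omega,I)$ and compatible with pullback under diffeomorphisms of $\mathcal{M}$; this is precisely the step for which the machinery assembled in Chapters \ref{joann3} and \ref{cucuu} is indispensable.
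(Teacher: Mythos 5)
Your overall strategy coincides with the paper's: induct on $|I|$, observe that off the small (thin) diagonal the factorization axiom \emph{forces} the value $\left(\mathcal{R}_{I_1}(G_{I_1})\otimes\mathcal{R}_{I_2}(G_{I_2})\right)G_{\{I_1,I_2\}}$ on each separating piece, check consistency on overlaps (Lemma \ref{consistency}), glue by Theorem \ref{gluedistrib}, and then extend across the small diagonal. The reduction to generators via axioms 1--3 and to $\Omega=\mathcal{M}^I$ via axiom 2 (Remark \ref{eldato}) also disposes of your closing worry about making splittings ``coherent across all $(\Omega,I)$'': no global coherence of splittings is needed, and any extension automatically satisfies factorization because each $C_{\{I_1I_2\}}$ is disjoint from $d_n$.

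The genuine gap is in your extension step. You assert that the glued distribution on $\Omega\setminus\Delta_I$ has moderate growth along $\Delta_I$ ``by Theorem \ref{maggotbrain} together with Proposition \ref{viir}'' and the temperedness of the Feynman amplitude. Neither result applies: near the partial diagonals the glued object is \emph{not} the regular distribution $T_{fG_I}$ but the renormalized expression $\left(\mathcal{R}_{I_1}(G_{I_1})\otimes\mathcal{R}_{I_2}(G_{I_2})\right)G_{\{I_1,I_2\}}$, whose tensor factors are distributions rather than functions, and this formula is only valid on the single chart $C_{\{I_1I_2\}}$; moreover $G_{\{I_1,I_2\}}$ blows up on $\partial C_{\{I_1I_2\}}$, not merely on $\Delta_I$. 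To estimate $\langle u,\varphi\rangle$ for $\varphi$ supported near $\Delta_I$ one must decompose $\varphi=\sum\chi_{I_1I_2}\varphi$ over the cover, and the resulting bounds only take the moderate-growth form (seminorms of $\varphi$ itself times a power of $d(\operatorname{Supp}\varphi,\Delta_I)^{-1}$) if the cut-offs $\chi_{I_1I_2}$ are themselves tempered along $d_n$. This is exactly what the paper supplies in Lemma \ref{parlemma}, via the tubular-neighborhood/conical-cover construction, and it is the analytic heart of the inductive step: with it, each piece $\left(\mathcal{R}_{I_1}(G_{I_1})\otimes\mathcal{R}_{I_2}(G_{I_2})\right)\left(\chi_{I_1I_2}G_{\{I_1,I_2\}}\right)$ becomes a globally defined distribution times a function tempered along $\partial C_{\{I_1I_2\}}$, so Theorem \ref{maggotbrain} applies legitimately term by term and the extension is $\sum_{I_1,I_2}\bar{t}_{I_1I_2}$. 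Without the tempered partition of unity (or an equivalent device) your claim of moderate growth along $\Delta_I$ is unsubstantiated, and Theorem \ref{elteo} cannot be invoked.
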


\clearpage
\chapter {Preliminaries on functional analysis}\label{joann1}

\renewcommand{\headrulewidth}{.4pt}
\fancyhf{}
\fancyhead[LE]{\small\bfseries\leftmark}
\fancyhead[RO]{\small\bfseries\rightmark}
\fancyfoot[LE,RO]{\small\bfseries\thepage}
\pagenumbering{arabic}

In what follows, an introduction to some basic concepts and notations associated with vector spaces is given.
Vector spaces equipped with a topology compatible with the vector space operations are defined next, swiftly turning to the particular case of locally convex topological vector spaces, in particular those which are Fr\'echet. 
The reader familiarized with general aspects of topology and functional analysis can skip this chapter.
\section{Vector spaces}

Let $E$ be a vector space over $\mathbb{C}$. If $A$ and $B$ are subsets of $E$, $x$ belongs to $E$ and $\lambda$ is a complex number, the following notations will be used:
	\begin{equation*}
		\begin{split}
			x+A&=\left\lbrace x+a: a\in A \right\rbrace\mbox{,}\\ 
			x-A&=\left\lbrace x-a: a\in A \right\rbrace\mbox{,}\\
			A+B&=\left\lbrace a+b: a\in A\mbox{,}\ b\in B \right\rbrace\mbox{,}\\
			\lambda A&=\left\lbrace \lambda a: a\in A \right\rbrace.
		\end{split}
	\end{equation*}
Some types of subsets of $E$ are introduced in the following definition.
\begin{defi}[See \cite{rudin}, Ch. 1, \S 1.4]
Let $E$ be a $\mathbb{C}$-vector space.
A subset $B$ of $E$ is called \emph{balanced}\index{balanced set} if 
	\begin{equation*}
		B=\left\lbrace \lambda b: \lambda \in \mathbb{C}\mbox{, } |\lambda|\leq 1\mbox{, } b \in 	B\right\rbrace.
	\end{equation*}
In other words, $B$ is balanced if 
	\begin{equation*}
		\lambda B\subseteq B\mbox{,}
	\end{equation*}
for every complex number $\lambda$ with $|\lambda|\leq 1$.

A subset A of $E$ is called \emph{absorbing}\index{absorbing set} if for every  $x$ in $E$ there is a positive number $\varepsilon$ such that 
	\begin{equation*}
	tx \in A\ 
	\end{equation*}
for every $t$ such that $0 \leq t < \varepsilon$; or equivalently, if
	\begin{equation*}
		E=\bigcup_{n\in\mathbb{N}}nA.
	\end{equation*}
	
A subset $C$ of a $\mathbb{C}$-vector space $E$ is \emph{convex}\index{convex set} if 
	\begin{equation*}
		\lambda x   +  (1  -  \lambda)y \in C
	\end{equation*}
for all elements $x\mbox{, }y$ in   $C$, and every $\lambda$ in  $[0,1]$.
\end{defi}
\section{Topological vector spaces}

It is of special interest the class of vector spaces equipped with a certain topological structure.
\begin{defi}[See \cite{meisevogt}, Ch. 22 or \cite{rudin}, Ch. 1, \S 1.6]\label{trutru}
A \emph{topological vector space (TVS)}\index{topological vector space}  $E$ is a $\mathbb{C}$-vector space equipped with a topology for which points are closed sets, and addition $+:E\times E\rightarrow E$ and scalar multiplication $\cdot:\mathbb{C}\times E \rightarrow E$ are continuous. 
A topology $\tau$ on a $\mathbb{C}$-vector space $E$ is called a \emph{vector space topology}\index{vector space topology} if $\left( E,\tau\right) $ is a topological vector space.

A \emph{neighborhood}\index{neighborhood} of a point $x$ is an open subset of $E$ that contains $x$.
A collection $\mathscr{B}$ of neighborhoods of a point $x$ is a \emph{local base at} $x$ \index{local base at a point}if every neighborhood of $x$ contains an element of $\mathscr{B}$.

\end{defi}
\begin{rem}[Invariance of the topology and the metric under translations. See  \cite{rudin}, Ch. 1, \S 1.7]\label{invtop}
Let $E$ be a TVS. Associate to each $h$ in $E$ the \emph{translation operator}\index{translation operator} $T_{x}$, defined by
	\begin{align}\label{trasla}
		T_{h}: E&\rightarrow E\\
		x& \mapsto x+h. \nonumber
	\end{align}
To each $\lambda\neq 0$ in $\mathbb{C}$ associate the \emph{multiplication operator}\index{multiplication operator} $M_{\lambda}$, defined by
	\begin{align}\label{multi}
		M_{\lambda}: E&\rightarrow E\\
		x& \mapsto \lambda x. \nonumber
	\end{align}

The vector space axioms together with the fact that the vector space operations are continuous imply that $T_{h}$ and $M_{\lambda}$ are homeomorphisms.

The fact that $T_{h}$ is a homeomorphism imply that every vector space topology $\tau$ is \emph{translation-invariant}\index{translation"-invariant topology}: a subset $V$ of a TVS $E$ is open if and only if $T_{h}(V)=h+V$ is open for every $h$  in  $E$. 
Thus, $\tau$ is determined by a local base at any point we choose.
For instance, if $E$ is a TVS, the neighborhoods of some $x$ in $E$ may be expressed in the following form:
	\begin{equation*}
		x+V:=\left\lbrace x+v: v\in V \right\rbrace\mbox{,} \ \mbox{ V a zero neighborhood}.
	\end{equation*} 
In particular, a local base at any element $x$ is determined by a local base at zero.
In the vector  space context, the term \emph{local base}\index{local base} will allways mean a local base at zero.
\end{rem}
Throughout this thesis we are going to work with the category of topological vector spaces with continuous linear morphisms.
This is not an abelian category \index{abelian category}(see \cite{grothe}).
Nevertheless it is an exact category \index{exact category} (see \cite{quillen}) where the notion of exact sequence is the usual set theoretic definition. 
However, the notion of a splitting of a short exact sequence may be ambiguous in principle, so we will provide it to be clear.
\begin{defi} [See \cite{meisevogt}, Ch. 9] \label{galletitasdelimon}
Let $\left(A_i,f_i \right) _{i\in\mathbb{Z}}$ be a sequence of TVS  $A_i$ and linear continuous maps 
	\begin{equation*}
		f_i:A_i\rightarrow A_{i+1}.
	\end{equation*}
The sequence is \emph{exact at the $i$-th position}\index{exactness of a sequence at a position} if $\operatorname{Im}\left( f_{i-1}\right) =\operatorname{Ker}\left(f_i \right)$; this means that it satisfies this condition for the underlying vector spaces.
The sequence is \emph{exact} \index{exact sequence} if it is exact at each position.

The \emph{support}\index{support of a sequence} of the sequence $\left(A_i,f_i \right) _{i\in\mathbb{Z}}$ is the set of integers $i$ such that $A_i$ is not zero.

A \emph{short exact sequence} \index{short exact sequence} is an exact  sequence whose support is contained in a set of the form
	\begin{equation*}
		\left\lbrace i-1,i,i+1\right\rbrace
	\end{equation*}
for some $i$ in $\mathbb{Z}$.
We then write
	\begin{equation} \label{george12}
		0\rightarrow A \overset{f}{\rightarrow} B \overset{g}{\rightarrow} C \rightarrow 0. 
	\end{equation}
The sequence \eqref{george12}  \emph{splits} \index{split} if there exists an isomorphism $\Omega: A\oplus C\rightarrow B$ of TVS such that the following diagram commutes:
	\begin{equation}\label{diag}
		\xymatrix{A
		\ar@{=}[d]_{} \ar[r]^{f}& B\ar[r]^{g}& C\ar@{=}[d]_{}\\
		A\ar@{^{(}->}[r]^{i{\hspace*{0.5 cm}}}& A\oplus C\ar[u]^{\Omega} \ar@{->>}[r]^{\pi {\hspace*{-0.2 cm}}}& B
 		}
	\end{equation} 
where $i:A\hookrightarrow A\oplus C$ is the canonical inclusion and $\pi: A\oplus C \twoheadrightarrow C $ is the canonical projection.
\end{defi}
\section{Locally convex spaces and the Hahn-Banach Theorem}

An important class of TVS is given by the following family.
\begin{defi} [See \cite{rudin}, Ch. 1, \S 1.8]\label{triojgh}
A \emph{locally convex space}\index{locally convex space} $($LCS$)$ is a TVS $E$ which has a zero  neighborhood basis whose members are  convex sets. Such a basis will be called \emph{convex basis}\index{convex basis}, for short.

A \emph{locally convex topology}\index{locally convex topology} on a $\mathbb{C}$-vector space $E$ is a topology $\tau$, for which $\left(E, \tau \right) $ is a LCS.   
\end{defi}
\begin{propo}[See \cite{rudin}, Ch. 1, Thm. 1.14]\label{kingcrimson}
If $E$ is a TVS, then the following hold:
	\begin{itemize}
		\item  [(i)] Every zero neighborhood contains a balanced zero neighborhood.
		\item  [(ii)] Every convex zero neighborhood contains a convex balanced zero neighborhood.
	\end{itemize}
In particular, every TVS has a \emph{balanced local base}\index{balanced local base} (\textit{i.e.} a zero neighborhood basis consisting of balanced sets); and every LCS has a \emph{convex balanced local base}\index{convex balanced local base} (\textit{i.e.} a zero neighborhood basis consisting of convex balanced  sets).
\end{propo}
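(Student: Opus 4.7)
The plan is to establish (i) as a direct consequence of continuity of scalar multiplication at the origin, and then to reduce (ii) to (i) via a symmetrisation trick over the unit circle.

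For (i), given a zero neighborhood $U$, the continuity of $\cdot:\mathbb{C}\times E\to E$ at $(0,0)$ yields a $\delta > 0$ and a zero neighborhood $V$ such that $\lambda V\subseteq U$ whenever $|\lambda|<\delta$. Setting
\[
W = \bigcup_{0 < |\lambda| < \delta} \lambda V,
\]
each $\lambda V$ with $\lambda\neq 0$ is open because the multiplication operator $M_\lambda$ of Remark \ref{invtop} is a homeomorphism, hence $W$ is open. It plainly contains $0$, is contained in $U$, and for $w=\lambda v\in W$ and $|\mu|\leq 1$ one has $\mu w = (\mu\lambda) v\in W$ since $|\mu\lambda|<\delta$, so $W$ is balanced.

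For (ii), given a convex zero neighborhood $U$, I would first apply (i) to produce a balanced zero neighborhood $W\subseteq U$, and then consider the symmetrised set
\[
A = \bigcap_{|\alpha|=1}\alpha U.
\]
Because $W$ is balanced, $W=\alpha W\subseteq\alpha U$ for every $|\alpha|=1$, so $W\subseteq A$. As an intersection of convex sets $A$ is convex. For balancedness, decompose any scalar of modulus at most $1$ as $r\beta$ with $0\leq r\leq 1$ and $|\beta|=1$, reindex $\alpha'=\beta\alpha$, and use that $rU\subseteq U$ (which follows from the convexity of $U$ together with $0\in U$) to obtain $r\beta A\subseteq A$. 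Finally, passing to the interior yields an open convex balanced zero neighborhood contained in $U$: the interior of a convex set is convex; it is a zero neighborhood because it contains the open set $W\subseteq A$; and balancedness survives since $\lambda\operatorname{int}(A)=\operatorname{int}(\lambda A)\subseteq\operatorname{int}(A)$ for $0<|\lambda|\leq 1$, while $0\cdot\operatorname{int}(A)=\{0\}\subseteq\operatorname{int}(A)$.

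The heart of the argument is the symmetrisation in (ii), and the delicate point is the transition from $A$ to $\operatorname{int}(A)$: one has to verify that balancedness is preserved, which relies both on the homeomorphism property of $M_\lambda$ for $\lambda\neq 0$ and on $0$ being an interior point of $A$. Convexity of $U$ enters in exactly one place, namely to ensure $A$ itself is balanced, so the hypothesis cannot be removed from (ii). Part (i) is essentially a cosmetic unpacking of what continuity of scalar multiplication already asserts.
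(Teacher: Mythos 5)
Your proof is correct and follows essentially the same route as the paper's: part (i) via the continuity of scalar multiplication at the origin and a union of scaled neighborhoods, and part (ii) via the symmetrised intersection $A=\bigcap_{|\alpha|=1}\alpha U$ followed by passage to the interior, with the same checks that balancedness and convexity survive. The only cosmetic difference is that you quote the convexity of the interior of a convex set as a known fact where the paper spells out the one-line argument.
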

\begin{proof}
We begin by proving \textit{(i)}.
Consider a zero neighborhood $U$.
Since 
	\begin{equation*}
		\cdot:\mathbb{C}\times E \rightarrow E
	\end{equation*}
is continuous and $U$ is a neighborhood of $0=\cdot\left( 0,0\right) $, there exist a positive number $\varepsilon$ and a zero neighborhood $W$ such that
	\begin{equation*}
		\delta W\subseteq U  
	\end{equation*} 
for every $\delta$ such that $|\delta|<\varepsilon$.
Define $V$ by 
	\begin{equation*}
		V=\bigcup_{|\delta|<\varepsilon}\delta W.
	\end{equation*}
Clearly, $V$ is a zero neighborhood,  it is contained in $U$ and $\lambda V$ is contained in $V$ for every $\lambda$ such that $|\lambda|\leq 1$.

To prove \textit{(ii)}, suppose $U$ is a convex neighborhood of zero. 
Define $A$ by
	\begin{equation*}
		A=\bigcap_{|\alpha|=1} \alpha U.
	\end{equation*}
By \textit{(i)} there exists a balanced zero neighborhood $V$ contained in $U$.
Since $V$ is balanced,
	\begin{equation*}
		\begin{split}
			& V\subseteq\alpha^{-1} \left( \alpha V\right) \subseteq\alpha^{-1}V \subseteq V \ \mbox{for every } \alpha \mbox{ such that }|\alpha|=1.\\
			\implies &\alpha^{-1}V=V \subseteq U\ \mbox{for every } \alpha \mbox{ such that }|\alpha|=1.\\
			\implies& V \subseteq \alpha U \ \mbox{for every } \alpha \mbox{ such that }|\alpha|=1.
		\end{split}
	\end{equation*}
Thus $V$ is contained in $A$ which implies that the interior of $A$ is a neighborhood of zero.
Clearly $A^{\circ}$ is contained in $U$. 
Being an intersection of convex sets, $A$ is convex. 
It follows that $A^{\circ}$ is convex. 
Indeed, since $A^\circ$ is contained in $ A$ and $A$ is convex, we have
	\begin{equation*}
		t\, A^\circ+(1-t)A^\circ\subseteq A
	\end{equation*}
if $0<t<1$. The two sets on the left are open; hence so is their sum. 
Then, since every open subset of $A$ is a subset of $A^\circ$, $A^\circ$ is convex.

To prove that $A^{\circ}$ is a neighborhood with the desired properties it remains to see that $A^{\circ}$ is balanced. 
We begin by showing that $A$ is balanced: choose $r$ and $\beta$ such that $0\leq r \leq 1$, $|\beta|=1$. 
Then 
	\begin{equation*}
		r\beta A= \bigcap_{|\alpha|=1} r\beta \alpha U = \bigcap_{|\alpha|=1} r\alpha U .
	\end{equation*}
Since $\alpha U$ is a convex set that contains zero, we have 
	\begin{equation*}
		r \alpha U\subseteq \alpha U.
	\end{equation*}
Thus, 
	\begin{equation*}
		r\beta A\subseteq A\mbox{,}
	\end{equation*}
and $A$ is balanced.
Next, let $0<|\alpha|\leq 1$. 
Then, 
	\begin{equation*}
		\alpha A^\circ=\left( \alpha A \right)^\circ\mbox{,}
	\end{equation*}
since the map $M_\alpha$ (defined in \eqref{multi})  is an homeomorphism for every $\alpha$ such that  $\alpha\neq 0$.
Hence,
	\begin{equation*}
		\alpha A^\circ \subseteq \alpha A \subseteq A\mbox{,}
	\end{equation*}
since $A$ is balanced.
But  $\alpha A^\circ$ is open, so $\alpha A^\circ$ is contained in $ A^\circ $, and this holds for every $\alpha$ such that  $0<|\alpha|\leq 1$.

If $\alpha=0$, then
	\begin{equation*}
		\alpha A^\circ=\left\lbrace 0 \right\rbrace \subseteq A^\circ\mbox{,}
	\end{equation*}
as $A^\circ$ is a zero neighborhood.
Then, $\alpha A^\circ$ is contained in $ A^\circ$ for every $\alpha$ such that $0\leq |\alpha| \leq 1$, which means that $A^\circ$ is  balanced.   
\end{proof}
The following result is of fundamental importance in functional analisys and in all that follows. 
\begin{thm}[Hahn-Banach. See \cite{rudin}, Thm. 3.6]\label{hahnbanach} \index{Hahn"-Banach Theorem}
If $\lambda$ is a continuous linear functional on a subspace $M$ of a LCS $E$, then there exists a continuous linear functional $\Lambda$ such that $\Lambda|_M = \lambda$.
\end{thm}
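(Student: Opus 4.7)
The plan is to reduce the topological extension statement to the analytic form of the Hahn–Banach theorem, using local convexity to produce a continuous dominating seminorm. Although the statement is phrased for a complex LCS, the strategy works uniformly over $\mathbb{R}$ or $\mathbb{C}$ once one passes through a seminorm bound.

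First I would exploit the continuity of $\lambda$ at $0 \in M$. Since $E$ is a LCS, Proposition \ref{kingcrimson} guarantees a convex balanced zero neighborhood basis, so I can pick a convex balanced zero neighborhood $U \subseteq E$ with $|\lambda(x)| \leq 1$ for every $x \in U \cap M$. Because $U$ is balanced, convex and absorbing, its Minkowski functional
\begin{equation*}
  p(x) = \inf\{\, t > 0 : x \in t U\,\}
\end{equation*}
is a well-defined seminorm on $E$. Continuity of scalar multiplication together with the fact that $U$ itself is a zero neighborhood gives $\{p < \varepsilon\} \supseteq \varepsilon U$ for every $\varepsilon > 0$, so $p$ is continuous on $E$. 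The normalization chosen at $U \cap M$ then propagates to $|\lambda(x)| \leq p(x)$ for all $x \in M$, by the homogeneity of both sides.

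Next I would invoke the analytic Hahn–Banach theorem: every $\mathbb{C}$-linear functional on the subspace $M$ bounded in modulus by a seminorm $p$ on $E$ extends to a $\mathbb{C}$-linear functional $\Lambda$ on all of $E$ with $|\Lambda(x)| \leq p(x)$ for every $x \in E$. This step is usually carried out first for the real part of $\lambda$ via the classical Zorn's-lemma argument dominated by $p$, and then the complex extension is recovered by the formula $\Lambda(x) = \operatorname{Re}\Lambda(x) - i\,\operatorname{Re}\Lambda(ix)$; I would cite this as the purely algebraic core of the theorem.

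Finally, I would check that $\Lambda$ is continuous on $E$. Since $p$ is continuous at $0$ and $|\Lambda| \leq p$, for every $\varepsilon > 0$ the set $p^{-1}([0,\varepsilon))$ is a zero neighborhood contained in $\{\,x \in E : |\Lambda(x)| < \varepsilon\,\}$, so $\Lambda$ is continuous at $0$; by translation invariance of the topology (Remark \ref{invtop}) continuity on all of $E$ follows. The restriction $\Lambda|_M = \lambda$ is built into the extension, completing the proof. The main obstacle in this route is the analytic extension step itself, which is non-constructive and requires Zorn's lemma; the rest of the argument is routine once the dominating seminorm $p$ is produced.
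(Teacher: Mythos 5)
Your proof is correct. Note that the paper itself gives no proof of Theorem \ref{hahnbanach} at all: it is stated as a black box with a citation to Rudin, so there is nothing internal to compare against. Your argument is precisely the standard derivation of the topological statement from the analytic (dominated-extension) form of Hahn--Banach, and it fits cleanly with the machinery the paper does develop: the convex balanced zero neighborhood comes from Proposition \ref{kingcrimson}, and the fact that its Minkowski functional $p=\mu_U$ is a continuous seminorm with $U=\{x: p(x)<1\}$ is exactly Theorem \ref{gaugin}, which makes the passage from $|\lambda|\leq 1$ on $U\cap M$ to $|\lambda|\leq p$ on $M$ and the final continuity check $\{|\Lambda|<\varepsilon\}\supseteq \varepsilon U$ immediate. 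The only step you take on faith is the Zorn's-lemma extension dominated by a seminorm (together with the Bohnenblust--Sobczyk trick $\Lambda(x)=\operatorname{Re}\Lambda(x)-i\operatorname{Re}\Lambda(ix)$ for the complex case), and you correctly flag it as the non-constructive algebraic core; that is the same division of labor Rudin uses (his Theorem 3.6 resting on Theorem 3.3). No gaps.
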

\section{Metrization}

A topology $\tau$ on a set $E$ is \emph{metrizable}\index{metrizable topology} if there is a metric $d$ on $E$ which is compatible with $\tau$.
In that case, the collection of balls with radius $1/n$ centered at $x$ is a local base at $x$.
This gives a necessary condition for a topological space $E$ to be metrizable. In Theorem \ref{liszt} we  show that if $E$ is a TVS, it turns out to be also sufficient.
First we give the following definition.
\begin{defi}
A metric $d$ defined on a vector space $E$ is  \emph{invariant} \index{invariant metric} if
	\begin{equation*}
		d(x+z,y+z)=d(x,y)
	\end{equation*}
for all elements $x\mbox{, }y\mbox{, }z$ in $E$.
\end{defi}
\begin{thm}[See \cite{rudin}, Ch. 1, Thm. 1.24] \label{liszt}
If $E$ is a TVS with a countable local base at zero $( $or by invariance, at any point $x$; see Remark \ref{invtop}$)$, then there is a metric $d$ on $E$ such that
	\begin{itemize}
		\item[(i)] $d$ is compatible with the topology of $E$,
		\item[(ii)] the open balls centered at zero are balanced, and 
		\item[(iii)] d is invariant.
	\end{itemize}  
If, in addition, $E$ is a LCS, then $d$ can be chosen to be so as to satisfy \textit{(i)}, \textit{(ii)}, \textit{(iii)} and also 
	\begin{itemize}
		\item[(iv)] all open balls are convex.
	\end{itemize}
\end{thm}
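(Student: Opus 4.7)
The plan is to follow Rudin's classical construction of an invariant metric from a countable balanced local base, using dyadic sums. First I would invoke Proposition \ref{kingcrimson} to produce a countable balanced local base $\{V_n\}_{n \in \mathbb{N}}$ at zero (balanced and convex in the LCS case), and then refine it: using the continuity of addition at the origin, choose the $V_n$ inductively so that
\begin{equation*}
V_{n+1}+V_{n+1}\subseteq V_n \quad \text{for every } n\in\mathbb{N}.
\end{equation*}
This telescoping condition is the essential ingredient, since by induction it yields $V_{n_1}+\dots+V_{n_k}\subseteq V_{n_0}$ whenever $2^{-n_1}+\dots+2^{-n_k}\leq 2^{-n_0}$.

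Next I would encode finite dyadic rationals in $[0,1)$ via their unique finite binary expansions $r=\sum_{n\geq 1}c_n(r)2^{-n}$ with $c_n(r)\in\{0,1\}$, and define
\begin{equation*}
A(r)=\sum_{n\geq 1} c_n(r)\, V_n \quad\text{for } r\in[0,1)\cap D,\qquad A(r)=E \quad \text{for } r\geq 1,
\end{equation*}
where $D$ denotes the set of dyadic rationals. Then I would set
\begin{equation*}
f(x)=\inf\{r\in D : x\in A(r)\},\qquad d(x,y)=f(x-y).
\end{equation*}
Translation invariance (iii) is immediate from the definition $d(x,y)=f(x-y)$. The balancedness of each $V_n$ transfers to each $A(r)$ (sums of balanced sets are balanced), which gives $f(\lambda x)\leq f(x)$ for $|\lambda|\leq 1$ and hence (ii). In the LCS case each $V_n$ can be chosen convex balanced, so each $A(r)$ is convex, and from this one deduces that every open ball $\{x:f(x)<r\}$ is convex, giving (iv).

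The main technical step — and the one I expect to be the principal obstacle — is the subadditivity lemma
\begin{equation*}
A(r)+A(s)\subseteq A(r+s) \quad \text{for all } r,s\in D.
\end{equation*}
This is proved by comparing the binary expansions of $r$, $s$ and $r+s$ and using the telescoping inclusion $V_{n+1}+V_{n+1}\subseteq V_n$ to absorb every ``carry'' produced when adding the expansions digit by digit. Once this lemma is in hand, the triangle inequality $f(x+y)\leq f(x)+f(y)$ follows, and thus $d$ is a pseudometric; the Hausdorff condition in Definition \ref{trutru} (points are closed) together with the fact that $\bigcap_n V_n=\{0\}$ gives $d(x,y)=0\iff x=y$, so $d$ is a metric.

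Finally, for (i) I would verify that the $d$-topology coincides with the original one by showing that the two zero-neighborhood bases refine each other: on the one hand, $V_n\supseteq A(2^{-n})\supseteq \{x:f(x)<2^{-n-1}\}$, so every $d$-ball is contained in some $V_n$; on the other hand, $\{x:f(x)<2^{-n}\}\subseteq A(2^{-n})\subseteq V_n$ by the very definition of $f$. Since translations and scalar multiplications by fixed scalars are homeomorphisms for both topologies (Remark \ref{invtop}), coincidence of zero-neighborhood bases is enough to conclude that the two topologies agree, completing the proof.
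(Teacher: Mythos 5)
Your proposal is correct and follows essentially the same route as the paper's proof (Rudin's construction: a telescoping balanced local base, the dyadic sums $A(r)$, the key lemma $A(r)+A(s)\subseteq A(r+s)$, and $d(x,y)=f(x-y)$ with $f(x)=\inf\left\lbrace r: x\in A(r)\right\rbrace$). One caveat: the paper imposes the stronger condition $V_{n+1}+V_{n+1}+V_{n+1}+V_{n+1}\subseteq V_n$, which is exactly what lets Rudin's proof of the key lemma close its final step (four copies of $V_{N+1}$ must be absorbed into $V_N$); with only your two-fold condition $V_{n+1}+V_{n+1}\subseteq V_n$ that step fails as written, so you genuinely need the digit-by-digit carry bookkeeping you allude to, which does work because at each position the count is at most three and $V_n+V_n+V_n\subseteq V_n+V_{n-1}$.
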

\begin{proof}
By Theorem  \ref{kingcrimson}, $E$ has a balanced local base $\left\lbrace V_n \right\rbrace_{n\in\mathbb{N}}$  at zero such that 
	\begin{equation}\label{arbol}
		V_{n+1}+V_{n+1}+V_{n+1}+V_{n+1}\subseteq V_n
	\end{equation} 
for every $ n$ in $\mathbb{N}$.
When $E$ is a LCS, this local base can be chosen so that each $V_n$ is also convex.

Let $D$ be the set of all rational numbers $r$ of the form 
	\begin{equation*}
		r=\sum_{n=1}^\infty c_n(r)2^{-n}\mbox{,}
	\end{equation*}
where each $c_n(r)$ is $0$ or $1$ and only finitely many are $1$.
Thus, each $r$ in $D$ satisfies the inequalities $0\leq r< 1$.
Put $A(r)=E$ if $r\geq 1$. 
For every $r$ in $D$, define 
 	\begin{equation}\label{xaleo}
		A(r)=c_1(r)V_1+c_2(r)V_2+c_3(r)V_3+\dotsb.
	\end{equation}
Note that each of these sums is actually finite. 
Define 
	\begin{equation*}
		f(x)=\inf\left\lbrace r: x\in A(r)\right\rbrace 
	\end{equation*}
for $x$ in $E$, and 
	\begin{equation*}
		d(x,y)=f(x-y) 
	\end{equation*}
for $x\mbox{, }y$ in $E$.
To show that $d$ is a metric, we will use the following property, to be proven later:
	\begin{equation}\label{property}
		A(r)+A(s)\subseteq A(r+s) 
	\end{equation}
for all elements $r\mbox{, }s$ in $D$.
Since every $A(r)$ contains $0$,  \eqref{property} implies 
	\begin{equation*}
		A(r)\subseteq A(r)+A(t-r)\subseteq A(t)
	\end{equation*}
if $r<t$.
Thus, $\left\lbrace A(r)\right\rbrace$ is totally ordered by set inclusion. 

We claim that
	\begin{equation}\label{sombrero}
		f(x+y)\leq f(x)+f(y)
	\end{equation}
for all elements $x\mbox{, }y$ in $E$.
As $f\leq 1$, in the proof of \eqref{sombrero} we may assume that the right side is less than unity.
Fix $\varepsilon>0$.
There exist $r$ and $s$ in $D$ such that 
	\begin{equation*}
		f(x)<r\mbox{,} \quad f(y)<s
		\mbox{,}\quad 
		r+s<f(x)+f(y)+\varepsilon.
	\end{equation*}
Thus, by definition of $f$, $x$ belongs to $A(r)$ and  $y$ belongs to $A(s)$; which in turn implies by  \eqref{property} that $x+y$ belongs to $A(r+s)$.     
Now \eqref{sombrero} follows, because 
	\begin{equation*}
		f(x+y)\leq  r+s\leq f(x)+f(y)+\varepsilon\mbox{,}
	\end{equation*}
and $\varepsilon$ was arbitrary.

Since each $A(r)$ is balanced, $f(x)=f(-x)$.
It is obvious that $f(0)=0$.
If $x\neq 0$, then $x$ is not in $V_n=A(1/2^n)$, for some $n$ in $\mathbb{N}$, and so $f(x)\geq 1/2^n>0$.

These properties of $f$ show that $d$ is an invariant metric on $E$.
The open balls centered at zero are the open sets 
	\begin{equation*}
		B_{\delta}(0)=\left\lbrace x: d(x,0)=f(x)<\delta \right\rbrace=\bigcup_{r<\delta} A(r). 
	\end{equation*}
If $\delta<1/2^n$, then $B_{\delta}(0)$ is contained in $V_n$.
Hence $\left\lbrace B_{\delta}(0)\right\rbrace $ is a local base for the topology of $E$.

This proves \textit{(i)}.
Since each $A(r)$ is balanced, so is each  $B_{\delta}(0)$.
If each $V_n$ is convex, so is each $A(r)$,  and therefore the same is true for each $B_{\delta}(0)$.

We turn to the proof of \eqref{property}.
If $r+s\geq 1$, then $A(r+s)=E$ and \eqref{property} is obvious. 
We may therefore assume that $r+s<1$, and we will use the following simple proposition about addition in the binary system of notation:
\begin{changemargin}{0.75cm}{0.75cm}
\textit{If r, s and r+s are in D and $c_n(r)+c_n(s)\neq c_n(r+s)$ for some n, then at the smallest n where this happens we have $c_n(r)=c_n(s)=0$, $c_n(r+s)=1$. }   
\end{changemargin}
Put $\alpha_n=c_n(r)$, $\beta_n=c_n(s)$, $\gamma_n=c_n(r+s)$.
If $\alpha_n+\beta_n=\gamma_n$ for every $n$ in $\mathbb{N}$ then \eqref{xaleo} shows that $A(r)+A(s)=A(r+s).$
In the other case, let $N$ be the smallest integer for which  $\alpha_N+\beta_N\neq\gamma_N$.
Then, $ \gamma_N=1, \ \alpha_N=\beta_N=0$, as mentioned above.
Hence, 
	\begin{equation*}
		\begin{split}
			A(r)&\subseteq \alpha_1V_1+\dotsb+ \alpha_{N-1}V_{N-1}+V_{N+1}+V_{N+2}+\dotsb\\
			&\subseteq \alpha_1V_1+\dotsb+ \alpha_{N-1}V_{N-1}+V_{N+1}+V_{N+1}.
		\end{split}
	\end{equation*}      
Likewise, 
	\begin{equation*}
		A(s)\subseteq \beta_1V_1+\dotsb + \beta_{N-1}V_{N-1}+V_{N+1}+V_{N+1}.
	\end{equation*}
Since $\alpha_n+\beta_n=\gamma_n$ for every $n<N$, \eqref{arbol} now leads to 
	\begin{equation*}
		A(r)+A(s)\subseteq \gamma_1V_1+\dotsb + \gamma_{N-1}V_{N-1}+V_{N}\subseteq A(r+s)\mbox{,}
	\end{equation*}
because $\gamma_N=1$.
\end{proof}
\begin{thm}[See \cite{rudin}, Ch. 1, Thm. 1.28]\label{sarasa} 
$\quad$
\begin{itemize}
		\item[(i)] If d is an invariant metric on a vector space $E$ then
			\begin{equation*}
				d(nx,0)\leq nd(x,0)\mbox{,}
			\end{equation*}
		for every $x$ in   $E$ and every $n$  in  $\mathbb{N}$. 
		\item[(ii)] If $\left(  x_n\right)_{n\in\mathbb{N}} $ is a sequence in a metrizable TVS $E$ and if $x_n\rightarrow 0$ as $n\rightarrow \infty$, then there are positive scalars $\gamma_n$ such that $\gamma_n\rightarrow \infty$ as $n\rightarrow\infty$ and $\gamma_nx_n\rightarrow 0$.
	\end{itemize}
\end{thm}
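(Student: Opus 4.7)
My plan is to treat the two items in turn, using the triangle inequality together with translation invariance for \textit{(i)}, and then using \textit{(i)} as a quantitative tool to construct the scalars in \textit{(ii)}.

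For \textit{(i)}, I would proceed by induction on $n$. The case $n=1$ is trivial. For the inductive step, I write
\begin{equation*}
d(nx,0) = d((n-1)x + x, 0) \leq d((n-1)x + x, x) + d(x,0),
\end{equation*}
and then invoke invariance of $d$ (subtracting $x$ from both entries) to get $d((n-1)x + x, x) = d((n-1)x, 0)$. Combining with the inductive hypothesis $d((n-1)x, 0) \leq (n-1)d(x,0)$ yields $d(nx,0) \leq n\,d(x,0)$, as required.

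For \textit{(ii)}, since $E$ is a metrizable TVS, Theorem \ref{liszt} allows me to choose an invariant metric $d$ compatible with the topology, and $x_n \to 0$ becomes $d(x_n, 0) \to 0$. The key idea is to pick positive integers $\gamma_n$ tending to infinity \emph{slowly enough} so that the bound from \textit{(i)} is still good. Explicitly, set
\begin{equation*}
\gamma_n = \left\lfloor \frac{1}{\sqrt{d(x_n,0)}} \right\rfloor \quad \text{whenever } x_n \neq 0,
\end{equation*}
and set $\gamma_n = n$ whenever $x_n = 0$. Since $d(x_n,0) \to 0$, we have $\gamma_n \to \infty$ in either case. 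Applying part \textit{(i)} gives
\begin{equation*}
d(\gamma_n x_n, 0) \leq \gamma_n\, d(x_n,0) \leq \sqrt{d(x_n,0)} \to 0
\end{equation*}
for the nonzero entries, while the zero entries contribute $d(\gamma_n x_n, 0) = 0$. Hence $\gamma_n x_n \to 0$.

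The only delicate point is the choice of $\gamma_n$: a naive choice like $\gamma_n = n$ could fail because one has no \textit{a priori} control over how fast $d(x_n,0)$ decays relative to $n$. The trick of calibrating $\gamma_n$ by $1/\sqrt{d(x_n,0)}$ is what ensures that the product $\gamma_n d(x_n,0)$ still goes to zero while $\gamma_n$ itself diverges, and this is what I would flag as the main (though mild) obstacle.
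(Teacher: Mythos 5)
Your argument is correct and follows essentially the same route as the paper: part \textit{(i)} is the telescoping use of invariance (your induction and the paper's sum $\sum_{k=1}^{n}d(kx,(k-1)x)$ are the same computation), and part \textit{(ii)} rests on the identical inequality $d(\gamma_n x_n,0)\leq \gamma_n d(x_n,0)$ with $\gamma_n$ calibrated at roughly $d(x_n,0)^{-1/2}$, which the paper merely packages as $\gamma_n=k$ on blocks $n_k\leq n<n_{k+1}$ where $d(x_n,0)<k^{-2}$. The only cosmetic point is that $\left\lfloor d(x_n,0)^{-1/2}\right\rfloor$ equals $0$ for the finitely many $n$ with $d(x_n,0)>1$, so those terms should be reset to $1$ to keep every $\gamma_n$ positive.
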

\begin{proof}
Statement \textit{(i)} follows from 
	\begin{equation*}
		d(nx,0)\leq  \sum_{k=1}^{n}d\left( kx,(k-1)x\right) =nd(x,0).
	\end{equation*}
To prove \textit{(ii)}, let $d$ be as in \textit{(i)}, compatible with the topology of $E$.
Since $d(x_n,0)\rightarrow 0$, there is an increasing sequence of positive integers $n_k$ such that $d(x_n,0)<k^{-2}$ if $n>n_k$.
Put 
	\begin{equation*}
	\gamma_n=\left\{ 
		\begin{array}{c c l}
			1\mbox{,}  &\mbox{ if } & n  <  n_1\mbox{,} \\
			k\mbox{,}   &\mbox{ if } & n_k  \leq n<  n_{k+1}.
		\end{array}
	\right.
	\end{equation*}
For every $n$ in $\mathbb{N}$, 
	\begin{equation*}
		d(\gamma_n x_n,0)=d(k x_n,0)\leq kd(x_n,0)<\frac{1}{k}.
	\end{equation*}
Hence $\gamma_n x_n\rightarrow 0$ as $n\rightarrow\infty$.
\end{proof}
\section{Boundedness and continuity of linear mappings}
 
Within the context of Banach spaces, boundedness and continuity of linear mappings are two strongly related concepts.
With the objective of generalizing these ideas a notion of what is meant by a bounded subset of a TVS is now introduced, followed by the definition of bounded linear mappings between TVS and the relation between both notions.
\begin{defi}[See \cite{rudin}, Ch. 1, \S 1.6]
A subset $B$ of a TVS $E$ is \emph{bounded}\index{bounded set} if for every zero neighborhood  $U$ in $E$ there is a positive  number $\varepsilon$ such that 
	\begin{equation*}
		\lambda B \subseteq   U\mbox{,}
	\end{equation*}
for every $\lambda <\varepsilon$.  
\end{defi}
\begin{rem}[See \cite{rudin}, Ch. 1, \S 1.29]
If $E$ is a TVS with a compatible metric $d$, bounded sets in $E$ do not generally coincide with the sets which have bounded diameter. 
 
When $E$ is a Banach space and $d$ is the metric induced by the norm, then bounded sets in $E$, as defined above, coincide with the sets that have bounded diameter.  
\end{rem}
\begin{defi}[See \cite{rudin}, Ch. 1, \S 1.8]
Let $E$ ba a TVS with topology $\tau$.
$E$ is \emph{locally bounded }\index{locally bounded topological vector space}if $0$ has a bounded neighborhood.
\end{defi}
\begin{defi}[See \cite{rudin}, Ch. 1, \S 1.31]
Let $\Lambda:E\rightarrow F$ be a linear map  between TVS. The function $\Lambda$ is \emph{bounded}\index{bounded function} if it maps bounded sets into bounded sets, \textit{i.e.} if $\Lambda(B)$ is a bounded subset of $F$  for every bounded subset $B$ of $E$.
\end{defi}
 
The following result relates the notions presented above, in the special case of linear functionals on a TVS.
\begin{thm} [See \cite{rudin}, Ch. 1, Thm. 1.18] \label{pepitos}
Let  $\Lambda:E\rightarrow \mathbb{C}$ be a  nonzero linear functional whose domain is a TVS. 
Then, the following properties are equivalent:
	\begin{itemize}
		\item[(i)] $\Lambda$ is continuous.
		\item[(ii)] $\operatorname{Ker}\left( \Lambda\right) $ is closed.
		\item[(iii)] $\operatorname{Ker}\left( \Lambda\right) $ is not dense in $E$.
		\item[(iv)] $\Lambda$ is bounded in some neighborhood $V$ of zero (\textit{i.e.} there is some neighborhood $V$ of zero such that the subset $\Lambda\left(V \right)$ of $\mathbb{C}$ is bounded).
	\end{itemize}
\end{thm}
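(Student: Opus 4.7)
The plan is to establish the cyclic chain of implications (i)$\Rightarrow$(ii)$\Rightarrow$(iii)$\Rightarrow$(iv)$\Rightarrow$(i), since this is the most economical way to prove a four-way equivalence.

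First I would do the easy implications. For (i)$\Rightarrow$(ii), since $\{0\}$ is closed in $\mathbb{C}$ and $\Lambda$ is continuous, $\operatorname{Ker}(\Lambda) = \Lambda^{-1}(\{0\})$ is closed. For (ii)$\Rightarrow$(iii), because $\Lambda$ is nonzero we have $\operatorname{Ker}(\Lambda) \subsetneq E$; if $\operatorname{Ker}(\Lambda)$ is also closed, then its complement is a nonempty open set disjoint from $\operatorname{Ker}(\Lambda)$, so $\operatorname{Ker}(\Lambda)$ cannot be dense.

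The interesting step is (iii)$\Rightarrow$(iv). If $\operatorname{Ker}(\Lambda)$ is not dense, there exists $x \in E$ and, by translation invariance (Remark \ref{invtop}), a zero neighborhood $V$ such that $(x+V) \cap \operatorname{Ker}(\Lambda) = \emptyset$. By Proposition \ref{kingcrimson}(i) I may shrink $V$ to be balanced. Then $\Lambda(V) \subseteq \mathbb{C}$ is a balanced subset of $\mathbb{C}$, and the key observation is that a balanced subset of $\mathbb{C}$ is either bounded or equal to all of $\mathbb{C}$ (because it must contain the full disk $|z|\le |w|$ whenever it contains $w$). If $\Lambda(V)$ were unbounded, it would equal $\mathbb{C}$, so in particular it would contain $-\Lambda(x)$, producing some $v \in V$ with $\Lambda(x+v)=0$, contradicting $(x+V)\cap \operatorname{Ker}(\Lambda)=\emptyset$. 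Hence $\Lambda(V)$ is bounded, which is precisely (iv).

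Finally, for (iv)$\Rightarrow$(i), suppose $|\Lambda(v)| \leq M$ for all $v$ in some zero neighborhood $V$. For $\varepsilon>0$, the set $W=(\varepsilon/M)V$ is a zero neighborhood because $M_{\varepsilon/M}$ is a homeomorphism (Remark \ref{invtop}), and $|\Lambda(w)| \leq \varepsilon$ for every $w \in W$. Thus $\Lambda$ is continuous at zero, and by linearity together with translation invariance it is continuous on all of $E$. The main obstacle is the argument in (iii)$\Rightarrow$(iv), where one must exploit the fact that the image of a balanced neighborhood under a linear functional is a balanced subset of $\mathbb{C}$ and combine this with the non-density hypothesis to rule out unboundedness; the other three implications are mostly bookkeeping.
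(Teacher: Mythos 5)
Your proof is correct and follows essentially the same route as the paper: the same cyclic chain (i)$\Rightarrow$(ii)$\Rightarrow$(iii)$\Rightarrow$(iv)$\Rightarrow$(i), with the key step (iii)$\Rightarrow$(iv) handled identically via a balanced zero neighborhood $V$ disjoint (after translation) from the kernel and the dichotomy that a balanced subset of $\mathbb{C}$ is either bounded or all of $\mathbb{C}$. No gaps.
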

\begin{proof}
Since $\operatorname{Ker}\left( \Lambda\right) =\Lambda^{-1}\left( \left\lbrace 0\right\rbrace \right) $ and $\left\lbrace 0\right\rbrace$ is a closed subset of $\mathbb{C}$, \textit{(i)} implies \textit{(ii)}.
By hypothesis, $\operatorname{Ker}\left( \Lambda\right) \neq E$. 
Hence \textit{(ii)} implies \textit{(iii)}.

Assume \textit{(iii)} holds, so that the complement of $\operatorname{Ker}\left( \Lambda\right) $ has nonempty interior. 
By Theorem \ref{kingcrimson},
	\begin{equation}\label{babysnakes}
		\left( x+V\right) \cap \operatorname{Ker}\left( \Lambda\right)=\emptyset\mbox{,}
	\end{equation}
for some $x$ in $E$ and some balanced neighborhood $V$ of zero.
As $V$ is balanced and $\Lambda$ is linear, $\Lambda \left( V\right) $ is a balanced subset of $\mathbb{C}$. 
Thus, either $\Lambda\left(  V\right) $ is bounded, in which case \textit{(iv)} holds, or $\Lambda \left( V\right) =\mathbb{C}$.
In the latter case, there exists $y$ in $V$ such that $\Lambda \left( y\right) =- \Lambda \left( x\right) $, and so $x+y$ belongs to $\operatorname{Ker}\left( \Lambda\right) $, in contradiction with \eqref{babysnakes}. 
Thus, \textit{(iii)} implies \textit{(iv)}.
 
Finally, if \textit{(iv)} holds, there exists a positive constant $M$ such that, for every $x$ in $V$, $\left| \Lambda \left( x\right)  \right|< M $.
If $r>0$ and if $W=(r/M)V$, then $\left| \Lambda \left( x \right) \right|< r $ for every $x$ in $W$.
Hence $\Lambda$ is continuous at the origin. 
As the topology on $E$ is translation invariant, $\Lambda$ is continuous at every point of $E$.  
\end{proof}
\begin{thm}[See \cite{rudin}, Ch. 1, Thm. 1.30]\label{mafi12}
The following two properties of a set $B$ in a TVS $E$ are equivalent:
	\begin{itemize}
		\item[(i)] $B$ is bounded. 
		\item[(ii)] If $\left(  x_n \right)_{n\in\mathbb{N}}$  is a sequence in $B$  and $\left(  \alpha_n \right)_{n\in\mathbb{N}}$ is a sequence of scalars such that $\alpha_n\rightarrow 0$ as $n\rightarrow \infty$, then $\alpha_n x_n\rightarrow 0$ as $n\rightarrow \infty$.
	\end{itemize}
\end{thm}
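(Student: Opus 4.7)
The plan is to prove the two implications directly from the definitions, using the existence of a balanced local base guaranteed by Proposition \ref{kingcrimson}.

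For the implication \textit{(i)} $\Rightarrow$ \textit{(ii)}, I would start by fixing a zero neighborhood $U$ in $E$ and, using Proposition \ref{kingcrimson}\textit{(i)}, choosing a balanced zero neighborhood $V \subseteq U$. The boundedness of $B$ gives an $\varepsilon > 0$ with $\lambda B \subseteq V$ whenever $|\lambda| < \varepsilon$. Since $\alpha_n \to 0$, there is an index $N$ such that $|\alpha_n| < \varepsilon$ for all $n \geq N$, and therefore $\alpha_n x_n \in \alpha_n B \subseteq V \subseteq U$ for every $n \geq N$. Since $U$ was an arbitrary zero neighborhood, this shows $\alpha_n x_n \to 0$.

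For the converse \textit{(ii)} $\Rightarrow$ \textit{(i)}, the cleanest approach is the contrapositive. Suppose $B$ is not bounded, so there exists a zero neighborhood $U$ such that for each $n \in \mathbb{N}$ one can find a scalar $\lambda_n$ with $|\lambda_n| < 1/n$ and some $x_n \in B$ with $\lambda_n x_n \notin U$. The sequence $(\lambda_n)_{n \in \mathbb{N}}$ satisfies $\lambda_n \to 0$ while $(\lambda_n x_n)_{n \in \mathbb{N}}$ never enters $U$, so $\lambda_n x_n \not\to 0$; this contradicts \textit{(ii)}.

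I do not anticipate a serious obstacle: the only subtle point is reading the paper's definition of boundedness correctly (the condition is on $|\lambda|$ small rather than on $\lambda$ large), and remembering to invoke the balanced neighborhood from Proposition \ref{kingcrimson} so that $\lambda B \subseteq V$ is equivalent to $\alpha_n B \subseteq V$ for all $|\alpha_n| \leq |\lambda|$. Everything else is a straightforward translation between the ``$\lambda B \subseteq U$ for all small $\lambda$'' formulation of boundedness and the sequential formulation in \textit{(ii)}.
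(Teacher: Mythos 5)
Your proof is correct and follows essentially the same route as the paper's: for \textit{(i)} $\Rightarrow$ \textit{(ii)} both arguments absorb $\alpha_n x_n$ into a balanced neighborhood once $|\alpha_n|$ falls below the threshold from the definition of boundedness, and for \textit{(ii)} $\Rightarrow$ \textit{(i)} both take the contrapositive and extract scalars $r_n < 1/n$ with $r_n x_n$ escaping a fixed zero neighborhood. Your closing remark about using balancedness to pass from $\lambda B \subseteq V$ for small positive $\lambda$ to $\alpha_n B \subseteq V$ is exactly the point the paper handles with the phrase ``since $tB$ is contained in $V$ and $V$ is balanced.''
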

\begin{proof}
Suppose $B$ is bounded.
Then, for every balanced neighborhood of zero $V$ of $E$, $tB$ is contained in $V $ for some $t$.
If
	\begin{equation*}
		\left\lbrace x_n : n\in\mathbb{N}\right\rbrace \subseteq B
	\end{equation*}
and $\alpha_n\rightarrow 0$, there exists $N$ such that $|\alpha_n|<t$ if $n> N$.
Since $tB$ is contained in $ V$ and $V$ is balanced, $\alpha_n x_n$ belongs to $V$ for every $n>N$.
Thus $\alpha_n x_n\rightarrow 0$. 

Conversely, if $B$ is not bounded, there is a neighborhood $V$ of zero such that for every $n$ in  $\mathbb{N}$ there exists $r_n<1/n$ and $x_n$ in $B$, such that $r_nx_n$ is not in $V$.
Then, $\left(  r_n\right)_{n\in\mathbb{N}}  $ is a sequence of scalars with $r_n\rightarrow 0$ but the sequence $\left(  r_nx_n \right)_{n\in\mathbb{N}}  $ does not converge to zero.
\end{proof}
\begin{thm} [See  \cite{rudin}, Ch. 1, \S 1.32]\label{mielitas}
Suppose $E$ and $F$ are TVS and $\Lambda:E\rightarrow F$ is linear.
Among the following four properties of $\Lambda$, the implications
	\begin{equation*}
		(i)\implies (ii) \implies (iii)
	\end{equation*}
hold. 
If $E$ is metrizable, then also
	\begin{equation*}
		(iii)\implies (iv) \implies (i)\mbox{,}
	\end{equation*}
so that all four properties are equivalent:
	\begin{itemize}
		\item[(i)] $\Lambda$ is continuous.
		\item[(ii)] $\Lambda$ is bounded.
		\item[(iii)] If $x_n\rightarrow 0$ then  $\left\lbrace\Lambda x_n : n\in\mathbb{N}\right\rbrace $ is bounded.
		\item[(iv)]If $x_n\rightarrow 0$ then $\Lambda x_n\rightarrow 0$. 
	\end{itemize}
\end{thm}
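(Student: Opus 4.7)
I would split the proof into the four implications listed and use the metrizability hypothesis only where stated.

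\textit{Step 1: (i) $\Rightarrow$ (ii).}  Let $B\subseteq E$ be bounded and $W\subseteq F$ a zero neighborhood. By continuity of $\Lambda$ at $0$, there is a zero neighborhood $V$ of $E$ with $\Lambda(V)\subseteq W$. Boundedness of $B$ gives some $\varepsilon>0$ such that $\lambda B\subseteq V$ for $|\lambda|<\varepsilon$, and linearity of $\Lambda$ then yields $\lambda\Lambda(B)=\Lambda(\lambda B)\subseteq W$ for all such $\lambda$. Hence $\Lambda(B)$ is bounded.

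\textit{Step 2: (ii) $\Rightarrow$ (iii).}  I first argue that if $x_n\to 0$ then $\{x_n:n\in\mathbb{N}\}$ is bounded. Given a balanced zero neighborhood $V$ (which exists by Proposition \ref{kingcrimson}), there is $N$ with $x_n\in V$ for $n>N$. The finite set $\{x_1,\dots,x_N\}$ may be absorbed by $V$ because each vector is absorbed and $V$ is balanced; combining, some $\varepsilon>0$ makes $\lambda\{x_n\}\subseteq V$ for $|\lambda|<\varepsilon$. Now (ii) transports this bounded set to the bounded set $\{\Lambda x_n\}$.

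\textit{Step 3 (metrizable case): (iii) $\Rightarrow$ (iv).}  This is the crux and leverages metrizability twice. Given $x_n\to 0$, Theorem \ref{sarasa}\textit{(ii)} provides positive scalars $\gamma_n\to\infty$ with $\gamma_n x_n\to 0$. By (iii), the set $\{\Lambda(\gamma_n x_n)\}=\{\gamma_n\Lambda x_n\}$ is bounded in $F$. Writing $\Lambda x_n=\gamma_n^{-1}(\gamma_n\Lambda x_n)$ with $\gamma_n^{-1}\to 0$, Theorem \ref{mafi12} yields $\Lambda x_n\to 0$.

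\textit{Step 4 (metrizable case): (iv) $\Rightarrow$ (i).}  Since $E$ is metrizable it has a countable local base at $0$, so continuity of $\Lambda$ at $0$ is equivalent to sequential continuity at $0$, which is exactly (iv). Translation invariance (Remark \ref{invtop}) plus linearity then upgrade continuity at $0$ to continuity everywhere.

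The only genuine difficulty is Step 3, where one must produce a diverging sequence of scalars that still drives $x_n$ to $0$; this is precisely the content of Theorem \ref{sarasa}\textit{(ii)} and explains why metrizability cannot be dispensed with here. The other implications are formal consequences of the definitions together with the balanced local base of Proposition \ref{kingcrimson} and the sequential characterization of bounded sets in Theorem \ref{mafi12}.
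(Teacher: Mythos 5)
Your proposal is correct and follows essentially the same route as the paper's proof: the same chain of implications, with Theorem \ref{sarasa}\textit{(ii)} and Theorem \ref{mafi12} doing the work in the step $(iii)\implies(iv)$, and first countability yielding $(iv)\implies(i)$. The only differences are cosmetic — you spell out why convergent sequences are bounded (which the paper merely asserts) and phrase $(iv)\implies(i)$ via the equivalence of continuity and sequential continuity rather than by the explicit contrapositive construction, but these are the same arguments at different levels of detail.
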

\begin{proof}
Assume \textit{(i)}, let $B$ be a bounded set in $E$, and let $W$ be a neighborhood of $0$ in $F$.
Since $\Lambda $ is continuous, there is a neighborhood $V$ of $0$ in $E$ such that $\Lambda (V)$ is contained in $W$.
Since $B$ is bounded, $B$ is contained in $ tV$ for all large values of $t$.
Then,  
	\begin{equation*}
		\Lambda (B)\subseteq \Lambda (tV) =t\Lambda(V) \subseteq t W.
	\end{equation*}
This shows that $\Lambda(B)$ is a bounded set in $F$. Thus \textit{(i)} implies \textit{(ii)}.
Since convergent sequences are bounded, \textit{(ii)} implies \textit{(iii)}.

Assume now that $E$ is metrizable, that $\Lambda$ satisfies \textit{(iii)} and that $x_n\rightarrow 0$.
By Theorem \ref{sarasa}, there are positive scalars $\gamma_n\rightarrow \infty$ such that $\gamma_n x_n \rightarrow 0$.
Hence, 
	\begin{equation*}
		\left\lbrace \Lambda (\gamma_n x_n): n\in\mathbb{N}\right\rbrace 
	\end{equation*}
is a bounded set in $F$, and now Theorem \ref{mafi12} implies that
	\begin{equation*}
		\Lambda\left(  x_n\right) = \gamma_n^{-1}\Lambda(\gamma_n x_n)\rightarrow 0\mbox{,} \qquad \mbox{ as } n\rightarrow\infty.
	\end{equation*}
Finally, assume that \textit{(i)} fails.
Then there is a neighborhood $W$ of zero in $F$ such that $\Lambda^{-1}(W) $ contains no neighborhood of zero in $E$.
As $E$ has a countable local base, there is a sequence $\left(  x_n \right) _{n\in\mathbb{N}} $ in $E$ such that $x_n\rightarrow 0$ but $\Lambda x_n$ is not in $W$.
Thus \textit{(iv)} fails.
\end{proof}
\section{Seminorms and local convexity}

\begin{defi}[See \cite{rudin}, Ch. 1, Def. 1.33]\label{defisn}
A \emph{seminorm}\index{seminorm} on  a vector space $E$ is a real valued function $p$ on $E$ such that
	\begin{itemize}
		\item[1.] $p(x+y)\leq p(x)+p(y)\mbox{,}\  $ and
		\item[2.] $p(\alpha x) =|\alpha| p(x)$
	\end{itemize} 
for all elements $x$ and $y$ in $E$ and every scalar $\alpha$ in $\mathbb{C}$.
$p$ is called a \emph{norm}\index{norm} if it also  satisfies
	\begin{itemize}
		\item[3.] $p(x)\neq 0$ if $x\neq 0$.
	\end{itemize}
A family $\mathscr{P}$ of seminorms is  \emph{separating} \index{separating family of seminorms}if to each $x\neq 0$ corresponds at least one $p$ in $\mathscr{P}$ with $p(x)\neq 0$. 
\end{defi}
Seminorms are closely related to local convexity in two ways: in every LCS there exists a separating family of continuous seminorms. 
Conversely, if  $\mathscr{P}$ is a separating family of seminorms on a vector space $E$, then $\mathscr{P}$ can be used to define a locally convex topology on $E$ with the property that every $p$ in $\mathscr{P}$ is continuous. 
\begin{thm}[See \cite{rudin}, Ch. 1, Thm. 1.36] \label{gaugin} Let $E$ be a TVS and suppose $\mathscr{B}$ is a zero neighborhood basis whose members are convex balanced sets. 
Associate to every $V$ in $\mathscr{B}$ its \emph{Minkowski functional}\index{Minkowski functional} $\mu_V$ defined by:
	\begin{equation} \label{minki}
		\mu_V(x)=\inf\left\lbrace t>0:t^{-1}x\in V \right\rbrace\mbox{,}
	\end{equation} 
for every $x$ in $E$.
Then
	\begin{itemize}
		\item[(i)] $V=\left\lbrace x\in E : \mu_V(x)<1 \right\rbrace$\mbox{,} and  
		\item[(ii)] $\left\lbrace \mu_V:V\in\mathscr{B}\right\rbrace$ is a separating family of continuous seminorms on $E$.
	\end{itemize}
\end{thm}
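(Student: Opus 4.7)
The plan is to first establish (i) and then deduce (ii) by verifying the seminorm axioms, continuity, and the separating property in succession.

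For part (i), I would show both inclusions by exploiting the fact that $V$ is open and balanced. If $x \in V$, the continuity of scalar multiplication at $(1,x)$ gives some $\delta > 0$ such that $sx \in V$ for all $s$ with $|s-1| < \delta$; picking $s$ slightly larger than $1$ yields $t = 1/s < 1$ with $t^{-1}x \in V$, hence $\mu_V(x) < 1$. Conversely, if $\mu_V(x) < 1$, there is $t \in (0,1)$ with $t^{-1}x \in V$, and since $V$ is balanced and $|t| \le 1$, we have $x = t(t^{-1}x) \in tV \subseteq V$.

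For part (ii), I would verify the seminorm axioms. Positive homogeneity $\mu_V(\alpha x) = |\alpha|\mu_V(x)$ for $\alpha \neq 0$ follows from writing $\alpha = |\alpha|\beta$ with $|\beta|=1$ and using that $\beta V = V$ because $V$ is balanced (so $\beta V \subseteq V$ and $\beta^{-1}V \subseteq V$); a change of variable $s = t/|\alpha|$ in the infimum then completes the argument. The case $\alpha = 0$ is trivial. Subadditivity is the key convex-combination step: given $\varepsilon > 0$, pick $s > \mu_V(x)$ and $t > \mu_V(y)$ with $s+t < \mu_V(x)+\mu_V(y)+\varepsilon$ so that $s^{-1}x, t^{-1}y \in V$; then
\begin{equation*}
\frac{x+y}{s+t} \;=\; \frac{s}{s+t}\cdot \frac{x}{s} \;+\; \frac{t}{s+t}\cdot \frac{y}{t} \;\in\; V
\end{equation*}
by convexity, so $\mu_V(x+y) \le s+t$, and letting $\varepsilon \to 0$ gives $\mu_V(x+y) \le \mu_V(x)+\mu_V(y)$.

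For continuity, the subadditivity together with $\mu_V(-x) = \mu_V(x)$ yields $|\mu_V(x) - \mu_V(y)| \le \mu_V(x-y)$, so it suffices to prove continuity at $0$. Given $\varepsilon > 0$, the set $\varepsilon V$ is a zero neighborhood, and for any $x \in \varepsilon V$ one has $\mu_V(x) \le \varepsilon$ directly from the definition, establishing continuity. Finally, separation follows from the Hausdorff property implicit in Definition \ref{trutru} (points are closed): given $x \neq 0$, there is some $V \in \mathscr{B}$ with $x \notin V$, so by (i) we obtain $\mu_V(x) \ge 1 > 0$.

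The main obstacle I anticipate is being careful with the homogeneity argument over $\mathbb{C}$, since the definition \eqref{minki} uses only positive real scalars $t$; the decomposition $\alpha = |\alpha|\beta$ together with the balanced property of $V$ is what bridges this gap, and it is crucial that $\beta V = V$ holds as an equality (not just inclusion) for unit-modulus $\beta$, which requires applying the balanced condition to both $\beta$ and $\beta^{-1}$.
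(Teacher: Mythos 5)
Your proof is correct and follows essentially the same route as the paper's: (i) via openness and the balanced property, subadditivity via the convex-combination identity, homogeneity via the decomposition $\alpha=|\alpha|\beta$ with $\beta V=V$, continuity via the reverse triangle inequality together with (i), and separation from the fact that points are closed. The only cosmetic difference is that you spell out the continuity-of-scalar-multiplication justification for the forward inclusion in (i), which the paper compresses into "because $V$ is open".
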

\begin{proof}
If $x$ belongs to $V$, then $x/t$ is in $V$ for some $t<1$, because $V$ is open; hence $\mu_V(x)<1$.
If $x$ is not in $V$, then the fact that $x/t$ is in $V$ implies $t\geq 1$, because $V$ is balanced; hence $\mu_V(x)\geq 1$.
This proves \textit{(i)}.

Next, we show that each $\mu_V$ is a seminorm (which is a consequence of the convexity of $V$ and the fact that $V$ is balanced and absorbing).
We begin by proving the triangular inequality.

Given $x$, $y$ in $E$ and $\varepsilon>0$, there exists $t>0$ such that  $t^{-1}x$ belongs to $ V$ and $t<\mu_V(x)+\varepsilon$.
Likewise,  there exists $s>0$ such that $s^{-1}y$ belongs to $V$ and $s<\mu_V(y)+\varepsilon$.
As $V$ is convex,
	\begin{equation*}
		\frac{x+y}{t+s}=\frac{t}{t+s}\frac{x}{t}+\frac{s}{t+s}\frac{y}{s}\in V\mbox{,}
	\end{equation*}
so that $\mu_V(x+y)\leq t+s <\mu_V(x)+\mu_V(y)+2\varepsilon$.
As $\varepsilon>0$ is arbitrary, we get $\mu_V(x+y)\leq \mu_V(x)+\mu_V(y)$.

It remains to show that for any $\lambda$ in $\mathbb{C}$ and $x$ in $E$, $\mu_V(\lambda x)=|\lambda|\mu_V(x)$.
It is clear from the definition \eqref{minki}  of $\mu_V$ that $\mu_V(tx)=t\mu_V(x)$ for every $x$ in $E$ and $t\geq 0$. 
Now, if $\lambda\neq 0$,
	\begin{equation*}
		\begin{split}
			\mu_V(\lambda x)&=\inf \left\lbrace t>0: t^{-1}\lambda x\in V\right\rbrace=\inf \left\lbrace t>0:  \left( \frac{t}{\lambda} \right) ^{-1} x\in  V\right\rbrace \\ 
			& =\inf \left\lbrace t>0:  \left( \frac{t}{|\lambda|} \right) ^{-1} x\in  V\right\rbrace =|\lambda|\inf \left\lbrace \frac{t}{|\lambda|}>0:  \left( \frac{t}{|\lambda|} \right) ^{-1} x\in  V\right\rbrace \\
			&=|\lambda|\inf \left\lbrace s>0:  s^{-1} x\in  V\right\rbrace=|\lambda|\mu_V(x)\mbox{,}
		\end{split}
	\end{equation*}
where in the third equality we have used the fact that $V$ is balanced.
Thus, $\mu_V$ is a seminorm.

The fact that $\mu_V$ is continuous follows from the triangular inequality and \textit{(i)}: given $\varepsilon>0$, whenever $x-y$ belongs to $\varepsilon V$, we have
	\begin{equation*}
		\left| \mu_V(x)-\mu_V(x)\right| \leq \mu_V(x-y)=\varepsilon\mu_V\left( {\varepsilon}^{-1}(x-y)\right) <\varepsilon.
	\end{equation*}
Finally, if $x$ belongs to $E$ and $x\neq 0$, then $x$ does not belong to $V$ for some $V$ in $\mathscr{B}.$
For this $V$, $\mu_V(x)\geq 1$.
Thus, the set $\left\lbrace \mu_V:V\in\mathscr{B}\right\rbrace $ is separating.
\end{proof}
\begin{thm}[See \cite{rudin}, Ch. 1, Thm. 1.37] \label{degas}
Suppose $\mathscr{P}$ is a separating family of seminorms on a vector space $E$.
Associate to each $p$ in $\mathscr{P}$ and to each $n$ in  $\mathbb{N}$ the set 
	\begin{equation*}\label{vpn}
		V(p,n)=\left\lbrace x: p(x)<\frac{1}{n}\right\rbrace .
	\end{equation*}
Let $\mathscr{B}$ be the collection of all finite intersections of sets of the form $V(p,n)$.
Then, $\mathscr{B}$ is a convex balanced local base for a topology $\tau$ on $E$, which turns $E$ into a LCS such that 
	\begin{itemize}
		\item[(i)] every $p$ in $\mathscr{P}$ is continuous, and 
		\item[(ii)] A subset $B$ of $E$ is bounded  if and only if every $p$ in $\mathscr{P}$ is bounded on $B$.
	\end{itemize}
\end{thm}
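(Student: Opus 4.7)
The strategy is to construct the topology $\tau$ by declaring a set $U\subseteq E$ to be open precisely when, for every $x\in U$, there exists $V\in \mathscr{B}$ with $x+V\subseteq U$. This makes $\tau$ translation-invariant by design, so $\mathscr{B}$ is automatically a local base at $0$ once we verify the standard axioms. The plan has four blocks: (a) verify each element of $\mathscr{B}$ is convex, balanced, and absorbing; (b) verify $\tau$ is a vector space topology in the sense of Definition \ref{trutru}; (c) derive the continuity of the seminorms (item (i)); and (d) derive the characterization of bounded sets (item (ii)).

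First, each $V(p,n)$ is convex: if $p(x),p(y)<1/n$ and $t\in[0,1]$, then $p(tx+(1-t)y)\leq tp(x)+(1-t)p(y)<1/n$. It is balanced because $p(\alpha x)=|\alpha|p(x)$, and absorbing since for any $x\in E$ and $t>np(x)$ we have $p(t^{-1}x)<1/n$. Finite intersections inherit all three properties, so $\mathscr{B}$ is a convex balanced local base of absorbing sets. I would then check the two compatibility conditions for $\mathscr{B}$ to generate a topology: any two members of $\mathscr{B}$ have an intersection in $\mathscr{B}$ (immediate), and for any $V\in\mathscr{B}$ and $x\in V$ there is $W\in\mathscr{B}$ with $x+W\subseteq V$ (use the triangle inequality seminorm by seminorm, with $W$ a scaled-down intersection).

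Next I would check the three conditions in Definition \ref{trutru}. That points are closed uses that $\mathscr{P}$ is separating: if $y\neq x$, pick $p\in\mathscr{P}$ with $p(y-x)>0$ and $n>1/p(y-x)$, so $x\notin y+V(p,n)$, hence $E\setminus\{x\}$ is open. Continuity of addition at $(x_0,y_0)$ reduces, by translation invariance, to the fact that $V(p,2n)+V(p,2n)\subseteq V(p,n)$, applied to each factor in a finite intersection. For scalar multiplication at $(\alpha_0,x_0)$ one expands
\begin{equation*}
\alpha x-\alpha_0 x_0=(\alpha-\alpha_0)x_0+\alpha_0(x-x_0)+(\alpha-\alpha_0)(x-x_0),
\end{equation*}
and controls each term: the first using that $V(p,n)$ is absorbing (so $(\alpha-\alpha_0)x_0\in V(p,3n)$ for $|\alpha-\alpha_0|$ small), the second using that each $V(p,m)$ is balanced, and the third by shrinking both $|\alpha-\alpha_0|$ and the neighborhood of $x_0$. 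This gives $\tau$ a vector space topology; combined with the convex balanced base $\mathscr{B}$, $(E,\tau)$ is a LCS.

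For (i), the reverse triangle inequality $|p(x)-p(y)|\leq p(x-y)$ shows $p$ is uniformly continuous: $x-y\in V(p,n)$ implies $|p(x)-p(y)|<1/n$. For (ii), suppose $B$ is bounded; given $p\in\mathscr{P}$, pick $t>0$ with $tB\subseteq V(p,1)$, so $p(x)<1/t$ for every $x\in B$. Conversely, if each $p\in\mathscr{P}$ is bounded on $B$, take any basic neighborhood $V=\bigcap_{i=1}^{k}V(p_i,n_i)$ of $0$; choose $M_i:=\sup_{x\in B}p_i(x)<\infty$ and any $t>\max_i n_iM_i$, so that $p_i(t^{-1}x)\leq M_i/t<1/n_i$ for every $x\in B$ and every $i$, i.e.\ $t^{-1}B\subseteq V$. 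Since the finite intersections form a local base, $B$ is bounded. The main obstacle is organizing the scalar multiplication argument cleanly enough to apply it to a finite intersection of $V(p_i,n_i)$ simultaneously; once that telescoping estimate is set up, the rest reduces to routine bookkeeping with the seminorm inequalities.
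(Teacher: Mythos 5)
Your proposal is correct and follows essentially the same route as the paper: the topology is generated by translates of $\mathscr{B}$, points are separated via the separating family, addition is handled by $V(p,2n)+V(p,2n)\subseteq V(p,n)$, and both parts of (ii) are proved by the same scaling arguments. The only cosmetic differences are your three-term telescoping for scalar multiplication (the paper uses the two-term identity $\beta y-\alpha_0 x_0=\beta(y-x_0)+(\beta-\alpha_0)x_0$) and your use of the reverse triangle inequality for (i), which is in fact a slightly more complete justification of continuity of each $p$ than the paper gives.
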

\begin{proof}
Declare a subset $V$ of $E$ to be open if and only if $V$ is a (possibly empty) union of translates of members of $\mathscr{B}$.
This clearly defines a translation-invariant topology $\tau$ on $E$. 
It is also clear that each set  $V(p,n)$, and therefore each member of $\mathscr{B}$, is convex and balanced. 
Thus, $\mathscr{B}$ is a convex balanced local base for $\tau$.

Suppose $x$ is an element of $E$, such that  $x\neq 0$.
As $\mathscr{P}$ is a separating family of seminorms, $p(x)>0$ for some $p$ in $\mathscr{P}$.
Then, for some $n$ in $\mathbb{N}$, $p(nx)>1$ and therefore $x$ is not in $V(p,n)$.
Then, $0$ is not in the neighborhood $x-V(p,n)$ of $x$, so that $x$ is not in the closure of $\left\lbrace 0\right\rbrace $.
Thus, the set $\left\lbrace 0\right\rbrace $ is closed and, since $\tau$ is translation-invariant, every point of $E$ is a closed set.

Next, we show that addition and scalar multiplication are continuous.
Let $U$ be a neighborhood of $0$ in $E$.
Then,
	\begin{equation}\label{i}
		\bigcap_{i=1}^m V(p_i,n_i)\subseteq U
	\end{equation}
for some $p_1\mbox{,}\dots\mbox{, }p_m$ in $\mathscr{P}$ and some positive integers $n_1\mbox{,}\dots\mbox{, }n_m$.
Put
	\begin{equation}\label{v}
		V=\bigcap_{i=1}^m V(p_i,2n_i).
	\end{equation}
Since every $p$ in $\mathscr{P}$ is subadditive, $V+V\subseteq U$.
This proves that addition is continuous.

Suppose now that $x$ is an element of $E$, $\alpha$ is a scalar, and $U$ and $V$ are as above.
As any open set is absorbing,  $x$ belongs to $sV$ for some $s>0$.
Put $t=s/(1+|\alpha|s)$.
If $y$ is in $x+tV$ and $|\beta-\alpha|<1/s$, then
	\begin{equation*}
		\beta y-\alpha x=\beta (y-x)+(\beta-\alpha)x\mbox{,}
	\end{equation*}   
which lies in 
	\begin{equation*}
		|\beta|t V+|\beta - \alpha|sV\subseteq V+V\subseteq U\mbox{,}
	\end{equation*}
since
	\begin{equation*}
		|\beta|t=|\beta|\frac{s}{1+|\alpha|s}\leq \frac{\left( |\beta-\alpha|+|\alpha|\right)s}{1+|\alpha|s}< \frac{\left( 1/s+|\alpha|\right)s}{1+|\alpha|s}=1\mbox{,}
	\end{equation*}
and $V$ is balanced.
This proves that scalar multiplication is continuous.

Thus, $E$ is a LCS. 
The definition of $V(p,n)$ shows that every $p$ in $\mathscr{P}$ is continuous at zero.                                           

Finally, suppose a subset  $B$ of $E$ is bounded.
Fix $p$ in $\mathscr{P}$. 
Since $V(p,1)$ is a neighborhood of $0$, 
	\begin{equation*}
		B\subseteq kV(p,1)
	\end{equation*}
for some $k<\infty$.
Hence, $p(x)<k$ for every $x$ in $B$. 
It follows that every $p$ in $\mathscr{P}$ is bounded on $B$.

Conversely, suppose $B$ satisfies this condition, $U$ is a neighborhood of $0$, and \eqref{i} holds.
There are numbers $M_i<\infty$ such that $p_i<M_i$ on $B$ for every $i$ such that $1\leq i \leq m$.
If $n>M_in_i$ for $1\leq i \leq m$, it follows that $B$ is contained in $nU$, so that $B$ is bounded.
\end{proof}
\begin{propo}[See \cite{rudin}, Ch. 1, Rmk. 1.38]\label{aylan}
If $\mathscr{B}$ is a convex balanced local base for the topology $\tau$ of a LCS $E$, then $\mathscr{B}$ generates a separating family,
	\begin{equation*}
		\mathscr{P}=\left\lbrace \mu_V:V\in\mathscr{B}\right\rbrace
	\end{equation*}
of continuous seminorms on $E$, as in Theorem \ref{gaugin}.  
This family $\mathscr{P}$ in turn induces a topology $\tau_1$ on $E$, by the process described in Theorem \ref{degas}.
It turns out that $\tau=\tau_1$.
\end{propo}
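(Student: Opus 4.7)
The strategy is the standard mutual-inclusion argument for topologies: by construction $\mathscr{B}$ is a $\tau$-local base of $0$ and Theorem \ref{degas} gives an explicit convex balanced local base of $0$ for $\tau_1$ in terms of the seminorms $\mu_V$. Since both topologies are translation-invariant (they are vector space topologies by Definition \ref{trutru}), it suffices to prove $\tau$ and $\tau_1$ have the same zero neighborhoods, and for that it is enough to compare local bases.

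First I would show $\tau_1\subseteq\tau$. The local base for $\tau_1$ supplied by Theorem \ref{degas} consists of finite intersections of sets of the form
	\begin{equation*}
		V(\mu_V,n)=\left\{x\in E:\mu_V(x)<\frac{1}{n}\right\},\quad V\in\mathscr{B},\ n\in\mathbb{N}.
	\end{equation*}
By part \textit{(ii)} of Theorem \ref{gaugin}, each $\mu_V$ is $\tau$-continuous, so each $V(\mu_V,n)$ is $\tau$-open. A finite intersection of $\tau$-open sets is $\tau$-open, so every element of the $\tau_1$-local base is $\tau$-open. Since $\tau_1$ is translation-invariant, this gives $\tau_1\subseteq\tau$.

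For the reverse inclusion $\tau\subseteq\tau_1$, the key observation is that part \textit{(i)} of Theorem \ref{gaugin} identifies
	\begin{equation*}
		V=\left\{x\in E:\mu_V(x)<1\right\}=V(\mu_V,1)
	\end{equation*}
for every $V\in\mathscr{B}$. Hence every element of the $\tau$-local base $\mathscr{B}$ already appears as an element of the $\tau_1$-local base constructed in Theorem \ref{degas}, and in particular is $\tau_1$-open. Because $\mathscr{B}$ is a local base for $\tau$ at $0$ and $\tau$ is translation-invariant, any $\tau$-open set can be written as a union of translates of members of $\mathscr{B}$; since $\tau_1$ is also a vector space topology (Theorem \ref{degas}), translations are $\tau_1$-homeomorphisms, so each such translate is $\tau_1$-open, and therefore so is the union. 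This yields $\tau\subseteq\tau_1$, completing the proof.

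I do not expect any genuine obstacle here: the content is essentially bookkeeping between two equivalent descriptions of a locally convex topology (by a convex balanced local base versus by a separating family of seminorms). The only point that needs a little care is remembering that both $\tau$ and $\tau_1$ are translation-invariant, so that equality of zero-neighborhood bases up to refinement forces equality of topologies—this is exactly the content of Remark \ref{invtop}.
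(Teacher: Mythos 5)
Your proof is correct and follows essentially the same route as the paper's: $\tau_1\subseteq\tau$ from the $\tau$-continuity of each $\mu_V$ (Theorem \ref{gaugin}\textit{(ii)}), and $\tau\subseteq\tau_1$ from the identity $W=V(\mu_W,1)$ (Theorem \ref{gaugin}\textit{(i)}). The only difference is that you spell out the translation-invariance bookkeeping that the paper leaves implicit.
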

\begin{proof}
Every $p$ in $\mathscr{P}$ is $\tau$ continuous, so that the sets $V(p,n)$ of Theorem \ref{degas} are in $\tau$.
Hence, $\tau_1\subseteq \tau$.

Conversely, if $W$ belongs to $\mathscr{B}$ and $p=\mu_W$, then
	\begin{equation*}
		W=\left\lbrace x\in E : \mu_W(x)<1 \right\rbrace=V(p,1).
	\end{equation*} 
Thus, $W$ belongs to $\tau_1$ for every $W$ in $\mathscr{B}$. 
This implies that $\tau\subseteq \tau_1$.
\end{proof}
\begin{rem}[See \cite{rudin}, Ch. 1, Rmk. 1.38]\label{pinocho}
If $\mathscr{P}=\left\lbrace p_i \right\rbrace_{i\in\mathbb{N}}$ is a countable separating family of seminorms on a vector space $E$ then, by Theorem  \ref{degas},  $\mathscr{P}$ induces a topology $\tau$ on $E$ with a countable convex balanced local base $\mathscr{B}$, turning $E$ into a LCS.
As $\mathscr{B}$ is countable, by Theorem \ref{liszt} we have that there exists an invariant metric $d$ on $E$ which induces the same topology $\tau$ on $E$, and the open balls centered at zero are convex and balanced.
  
Another compatible invariant metric $d'$ can be defined in terms of  $\mathscr{P}$ by the formula
	\begin{equation}\label{metric}
		d'(x,y)=\sum_{i=1}^{\infty} \frac{1}{2^i} \frac{p_i(x-y)}{\left[ 1+p_i(x-y)\right] }.
	\end{equation}
It has to be noted that the balls defined by the metric in equation \eqref{metric} need not be convex, while the balls defined by the metric $d$ given by Theorem \ref{liszt}  satisfy this condition.   
Another invariant metric $d''$ can be defined in order to correct this, namely
	\begin{equation}
		d''(x,y)=\max_{i}  \frac{c_i p_i(x-y)}{\left[ 1+p_i(x-y)\right] }\mbox{,}
	\end{equation} 
where $\left(  c_i\right)_{i\in\mathbb{N}}$ is some fixed sequence of positive numbers which converges to zero as $i\rightarrow \infty$.
The balls defined by $d''$ form a convex balanced local base. 
\end{rem}
\begin{defi}
A LCS $E$ is a \emph{Fr\'echet space}\index{Fr\'echet space} if its topology is induced by a complete  invariant  metric.
\end{defi}
\section{Open Mapping Theorem}
 
Let $\Lambda: E\rightarrow F$ be a map between TVS.
We say that $\Lambda$ is \emph{open at a point} $p$ in $E$ if $\Lambda(V)$ contains a  neighborhood of $\Lambda(p)$ whenever $V$ is a neighborhood of $p$.
We say that $\Lambda$ is \emph{open}\index{open map} if $\Lambda(U)$ is open in $F$ whenever $U$ is open in $E$.
 
It is clear that $\Lambda$ is open if and only if $\Lambda$ is open at every point of $E$.
Because of the invariance of vector topologies, it follows that a linear mapping of one TVS into another is open if and only if it is open at the origin.
Let us also note that a one-to-one continuous mapping $\Lambda$ of $E$ onto $F$ is a homeomorphism precisely when $\Lambda$ is open.
\begin{thm}[Open Mapping Theorem. See  \cite{rudin}, \S 2.12, Corollaries (a) and (b)]\index{Open Mapping Theorem}
Let  $\Lambda: E\rightarrow F$ be a continuous surjective linear mapping between Fr\'echet spaces. 
The following statements are true:
	\begin{itemize}
		\item[(i)] The map $\Lambda$ is open.
		\item[(ii)] If the map $\Lambda$ satisfies \textit{(i)} and is one-to-one, then $\Lambda^{-1}:E\rightarrow F$ is continuous.
	\end{itemize}
\end{thm}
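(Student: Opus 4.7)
The plan is to prove \textit{(i)} using the completeness of both Fr\'echet spaces together with the Baire category theorem, and then derive \textit{(ii)} as an immediate formal consequence. By translation invariance of vector topologies, it suffices for \textit{(i)} to show openness at $0$, i.e.\ that for every zero neighborhood $V$ in $E$, the image $\Lambda(V)$ is a zero neighborhood in $F$. Because $F$ is a Fr\'echet space (hence completely metrizable, hence Baire), and because $E$ has a countable balanced local base, we can fix a base $\{V_n\}_{n\in\mathbb{N}}$ of balanced zero neighborhoods in $E$ with $V_{n+1}+V_{n+1}\subseteq V_n$ and $V_0\subseteq V$.

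The first key step is to show that $\overline{\Lambda(V_n)}$ contains a zero neighborhood of $F$ for each $n$. Since $V_{n+1}$ is absorbing, $E=\bigcup_{k\in\mathbb{N}} kV_{n+1}$, and by surjectivity and linearity $F=\bigcup_{k\in\mathbb{N}} k\Lambda(V_{n+1})$. The Baire category theorem forces some $k\overline{\Lambda(V_{n+1})}$, hence $\overline{\Lambda(V_{n+1})}$ itself, to have nonempty interior. Using that $V_{n+1}$ is balanced and symmetric, the standard trick
\begin{equation*}
\overline{\Lambda(V_{n+1})}-\overline{\Lambda(V_{n+1})}\subseteq \overline{\Lambda(V_{n+1}+V_{n+1})}\subseteq\overline{\Lambda(V_n)}
\end{equation*}
produces a zero neighborhood $W_n\subseteq\overline{\Lambda(V_n)}$, and because $\Lambda$ is continuous we may arrange $W_n\to 0$ in $F$.

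The main obstacle is upgrading this to $\Lambda(V_0)\supseteq W$ for some neighborhood $W$ of $0$ in $F$, i.e.\ replacing $\overline{\Lambda(V_n)}$ by $\Lambda(V_n)$. This is where completeness of $E$ enters. Given $y\in W_1$, by density we pick $x_1\in V_1$ with $y-\Lambda(x_1)\in W_2$; then $x_2\in V_2$ with $y-\Lambda(x_1)-\Lambda(x_2)\in W_3$, and so on inductively. The partial sums $s_N=x_1+\dots+x_N$ form a Cauchy sequence in $E$ with respect to the invariant metric provided by Theorem \ref{liszt}, because $x_n\in V_n$ and the $V_n$ shrink to $0$. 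Completeness of the Fr\'echet space $E$ yields a limit $x\in E$, with $x\in V_0\subseteq V$ thanks to the telescoping $V_1+V_2+\dots\subseteq V_0$ built from the relation $V_{n+1}+V_{n+1}\subseteq V_n$, and continuity of $\Lambda$ gives $\Lambda(x)=y$. Therefore $W_1\subseteq\Lambda(V)$, proving \textit{(i)}.

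For \textit{(ii)}, the statement is a purely formal consequence: if $\Lambda$ is a continuous linear bijection between Fr\'echet spaces, then \textit{(i)} tells us $\Lambda$ is open, and the preimage of any open $U\subseteq E$ under $\Lambda^{-1}$ equals $\Lambda(U)$, which is open in $F$. Hence $\Lambda^{-1}$ is continuous. The entire difficulty is concentrated in the series construction in step three, where the Baire-theoretic information about closures must be converted into an honest preimage statement.
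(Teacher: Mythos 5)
The paper gives no proof of this theorem at all: it is stated with a bare citation to \cite{rudin}, \S 2.12, so there is nothing internal to compare against. Your argument is precisely the standard Baire-category proof from that reference — reduction to openness at the origin by translation invariance, Baire to show each $\overline{\Lambda(V_n)}$ is a zero neighborhood, and the telescoping series together with completeness of $E$ to upgrade closures to honest images — and it is essentially correct, with \textit{(ii)} following formally as you say. Two small repairs: the shrinking $W_n\to 0$ is arranged by intersecting each $W_n$ with a small metric ball in $F$ (continuity of $\Lambda$ plays no role there), and the limit $x=\lim s_N$ a priori lies only in $\overline{V_0}$, so you should start from a base with $V_0+V_0\subseteq V$ (or shift the index by one) before concluding $x\in V$.
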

\section{Quotient spaces}

\begin{defi}[See \cite{rudin}, Ch. 1, Def. 1.40]\label{espcoc1} Let $F$ be a subspace of a vector space $E$.
For each $x$ in $E$ let $\pi(x)$ be the coset of $F$ that contains $x$:
	\begin{equation*}
		\pi(x)=x+F.
	\end{equation*}
These cosets are elements of a vector space $E/F$ called the \emph{quotient space of $E$ modulo $F$}\index{quotient space}, in which addition and scalar multiplication is such that
	\begin{equation}\label{quotientoperations}
		\pi(x)+\pi(y)=\pi(x+y) \mbox{ and } \lambda\pi(x)=\pi(\lambda x).
	\end{equation}
Since $F$ is a vector space, the operations \eqref{quotientoperations} are well-defined.
By \eqref{quotientoperations}
	\begin{equation}
		\pi: E \rightarrow E/F
	\end{equation}
is a linear mapping with $F$ as its null space, which we call the \emph{quotient map of $E$ onto $E/F$}\index{quotient map}.

If $\tau$ is a vector topology on $E$, and $F$ is a closed subspace of $E$, let $\tau_F$ be the collection of all subsets $V$  of  $E/F$ for which $\pi^{-1}\left(V \right) $ belongs to $\tau$.
Then $\tau_F$ turns out to be a topology on $E/F$, called the \emph{quotient topology}\index{quotient topology}.
Some of its properties are listed in the following theorem.
\end{defi}
\begin{thm}[See \cite{rudin}, Ch. 1, Thm. 1.41]\label{espcoc2}
Let $F$ be a closed subspace of a TVS E.
Let $\tau$ be a topology on $E$ and define $\tau_F$ as above.
	\begin{itemize}
		\item[(i)]$\tau_F$ is a vector space topology on $E/F$; the quotient map $\pi: E \rightarrow E/F$ is linear continuous and open.
		\item[(ii)]If $\mathscr{B}$ is a local base for $\tau$,  then the collection of all sets $\pi\left( V \right)$ with $V$ in $\mathscr{B}$ is a base for $\tau_F$.
		\item[(iii)]Each of the following properties of $E$ is inherited by $E/F$: local convexity, local boundedness, metrizability.
		\item[(iv)]If $E$ is a Fréchet space or Banach space, so is $E/F$.
	\end{itemize}
\end{thm}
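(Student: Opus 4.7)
\textbf{Proof proposal for Theorem \ref{espcoc2}.}

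For (i), I would first verify that $\tau_F$ is a topology on $E/F$ by observing that $\pi^{-1}$ commutes with arbitrary unions and finite intersections, so the collection of $V\subseteq E/F$ with $\pi^{-1}(V)\in\tau$ is closed under those operations. Continuity of $\pi$ is then immediate from the definition. For openness, the key identity is
\begin{equation*}
\pi^{-1}\bigl(\pi(U)\bigr)=U+F=\bigcup_{f\in F}(U+f),
\end{equation*}
which is open in $E$ whenever $U$ is open, by translation invariance (Remark \ref{invtop}); hence $\pi(U)\in\tau_F$. To see that singletons are closed in $E/F$, note that $\pi^{-1}(\{\pi(x)\})=x+F$ is closed because $F$ is closed and translations are homeomorphisms. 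Continuity of addition and scalar multiplication then follows from the corresponding properties in $E$ together with the fact that $\pi\times\pi$ is an open continuous surjection onto $E/F\times E/F$: given a zero neighborhood $W$ in $E/F$, pick $U\in\tau$ with $\pi(U)\subseteq W$ and choose a zero neighborhood $V\in\tau$ with $V+V\subseteq U$; then $\pi(V)+\pi(V)\subseteq W$. Scalar multiplication is handled analogously.

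Part (ii) is essentially a restatement of openness of $\pi$. Each $\pi(V)$ with $V\in\mathscr{B}$ is a zero neighborhood in $E/F$ by (i). Conversely, given a zero neighborhood $W\in\tau_F$, its preimage $\pi^{-1}(W)$ is a zero neighborhood in $E$, hence contains some $V\in\mathscr{B}$, and so $\pi(V)\subseteq W$.

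For (iii), local convexity is preserved because the image of a convex balanced set under the linear map $\pi$ is convex and balanced, so a convex balanced local base $\mathscr{B}$ in $E$ yields one in $E/F$ via (ii). Local boundedness follows by the same token: the image under $\pi$ of a bounded zero neighborhood $U$ is absorbed by any $\pi(V)$ (using that $U$ is absorbed by $V$ in $E$), hence bounded in $E/F$. Metrizability follows from Theorem \ref{liszt}: if $E$ has a countable local base, so does $E/F$ by (ii), and we invoke Theorem \ref{liszt} on $E/F$.

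The main work is in (iv), which requires completeness. The expected step is to show that if $d$ is a complete invariant metric on $E$, then the induced invariant metric on $E/F$, defined by
\begin{equation*}
\tilde{d}\bigl(\pi(x),\pi(y)\bigr)=\inf_{z\in F}d(x-y,z),
\end{equation*}
is complete. Given a Cauchy sequence $(\pi(x_n))$ in $E/F$, I would pass to a subsequence with $\tilde{d}(\pi(x_{n_{k+1}}),\pi(x_{n_k}))<2^{-k}$, then inductively choose representatives $y_k\in x_{n_k}+F$ so that $d(y_{k+1},y_k)<2^{-k}$; this is possible by the definition of $\tilde{d}$. The sequence $(y_k)$ is Cauchy in $E$, hence convergent to some $y\in E$, and then $\pi(y_k)=\pi(x_{n_k})\to\pi(y)$ in $E/F$, so the original Cauchy sequence converges as well. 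Combined with (iii), this gives the Fréchet case. The Banach case is the same argument plus the observation that if $E$ is normed, the quotient norm $\|\pi(x)\|=\inf_{f\in F}\|x+f\|$ is a genuine norm (here the closedness of $F$ guarantees positive definiteness) whose induced metric agrees with $\tilde{d}$. The main obstacle is the representative-selection step in the completeness argument; everything else is formal bookkeeping with the universal property of $\pi$ and openness.
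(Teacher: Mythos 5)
Your proposal is correct and follows essentially the same route as the paper's proof: the same verification of the topology axioms and openness of $\pi$ via $\pi^{-1}(\pi(U))=U+F$ in (i)--(iii), and the same quotient metric $\inf_{z\in F}d(x-y,z)$ with the subsequence-plus-representative-lifting argument for completeness in (iv). No gaps.
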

\begin{proof}
Since
	\begin{equation*}
		\pi^{-1}\left( V\cap W\right)=\pi^{-1}\left( V\right) \cap \pi^{-1}\left( W\right)
	\end{equation*}
and
	\begin{equation*}
		\pi^{-1}\left(\bigcup_{i\in I}V_i\right) =\bigcup_{i\in I}\pi^{-1}\left( V_i\right)\mbox{,}
	\end{equation*}
$\tau_F$ is a topology.
A subset $H$ of $E/F$ is $\tau_F$-closed if and only if $\pi^{-1}\left(H \right) $ is $\tau$-closed.
In particular, every point of $E/F$ is closed since
	\begin{equation*}
		\pi^{-1}\left(\pi(x) \right)=x+ F 
	\end{equation*}
and $F$ is closed.

The continuity of $\pi$ follows directly from the definition of $\tau_F$.
Next, suppose $V$ belongs to $\tau$.
Since
	\begin{equation*}
		\pi^{-1}\left(\pi\left( V\right)  \right)=V+ F
	\end{equation*}
and $V+F$ belongs to $\tau$, it follows that $\pi(V)$ belongs to $\tau_F$.
Thus $\pi$ is an open mapping.

If now $W$ is a neighborhood of $0$ in $E/F$, there is a neighborhood $V$ of $0$ in $E$ such that
	\begin{equation*}
		V+V\subseteq \pi^{-1} \left( W\right). 
	\end{equation*} 
Hence, 
	\begin{equation*}
		\pi\left( V\right) + 	\pi\left( V\right)\subseteq W. 
	\end{equation*}
Since $\pi$ is open, $\pi(V)$ is a neighborhood of $0$ in $E/F$.
Addition is therefore continuous in $E/F$.

The continuity of scalar multiplication in $E/F$ is proved in the same manner.
This proves \textit{(i)}.
It is clear that \textit{(i)} implies \textit{(ii)}.
With the aid of Theorem  \ref{liszt}, it follows that \textit{(ii)} implies \textit{(iii)}.

Suppose next that $d$ is an invariant metric on $E$, compatible with $\tau$.
Define $\rho$ by
	\begin{equation*}
		\rho\left(\pi(x),\pi(y) \right) =\inf\left\lbrace d(x-y,z): z\in F \right\rbrace. 
	\end{equation*}
This may be interpreted as the distance from $x-y$ to $F$.
It can be verified that $\rho$ is well-defined and that it is an invariant metric on $E/F$.
Since
	\begin{equation*}
		\pi\left( \left\lbrace x: d(x,0)<r \right\rbrace \right) =\left\lbrace u: \rho(u,0)<r \right\rbrace\mbox{,} 
	\end{equation*}
it follows from \textit{(ii)} that $\rho$ is compatible with $\tau_F$.

If $E$ is normed, this definition of $\rho$ specialices to shield what is usually called the \emph{quotient norm}\index{quotient norm} of $E/F$:
	\begin{equation*}
		\left\| \pi(x)\right\|=\inf \left\lbrace \left\|x-z \right\| : z \in F \right\rbrace .
	\end{equation*}

To prove \textit{(iv)} we have to show that $\rho$ is a complete metric whenever $d$ is complete.

Suppose $\left(u_n \right)_{n\in\mathbb{N}} $ is a Cauchy sequence in $E/F$, relative to $\rho$.
There is a subsequence $\left(u_{n_i} \right)_{i\in\mathbb{N}} $ with 
	\begin{equation*}
		\rho\left( u_{n_i},u_{n_{i+1}}\right)<\frac{1}{2^i}. 
	\end{equation*}
One can then inductively choose $x_i$ in $E$ such that  $\pi(x_i)=u_{n_i}$ and
	\begin{equation*}
		d\left( x_{i},x_{i+1}\right)<\frac{1}{2^i}. 
	\end{equation*} 
If $d$ is complete, the Cauchy sequence $\left( x_i\right)_{i\in\mathbb{N}}$ converges to some $x$ in $E$.
The continuity of $\pi$ implies that $u_{n_{i}}\rightarrow \pi(x)$ as $i\rightarrow \infty$.
But if a Cauchy sequence has a convergent subsequence then the full sequence must converge.
Hence $\rho$ is complete and this concludes the proof of Theorem \ref{espcoc2}.
\end{proof}
\section{Topologies on the dual}\label{kutiii}

Let $E$ be a topological vector space over $\mathbb{C}$, and $E'$ its \emph{continuous dual space}\index{continuous dual of a topological vector space}; that is to say, the vector space of all continuous linear forms on $E$. If $x'$ is in $E'$, we shall denote by $\braket{x',x}$ its value at the point $x$ in $E$ (see \cite{treves}, Ch. 19).
\subsubsection{Weak star topology}\label{wstdef}
The \emph{weak star topology}\index{weak star topology} on $E'$, denoted by $\sigma(E',E)$ or $\sigma^\ast$, is the topology which has as a basis of neighborhoods of zero the family of the sets of the form
	\begin{equation*}
		W_\varepsilon\left( x_1,\dots,x_n\right)=\left\lbrace x'\in E : \left| \braket{x',x_j}\right|\leq \varepsilon\mbox{, }j=1 \dots n\right\rbrace. 
	\end{equation*}
Here, $\left\lbrace x_1, \dots ,x_n\right\rbrace$ is a finite subset of $E$ and $\varepsilon >0$.

The weak star topology on $E'$ is the topology of pointwise convergence in $E$.
Thus, a net $\left( x'_\lambda\right)_{\lambda\in\Lambda}$ of continuous linear functionals  converges to zero if at each point $x$ of $E$, their values $\braket{x'_\lambda,x}$ converge to zero in the complex plane (see \cite{treves}, Ch. 19, Example I). 
\subsubsection{Strong topology}\label{stdef}
The \emph{strong topology}\index{strong topology} on $E'$, denoted by $b^\ast$, is the topology which has as a basis of neighborhoods of zero  the family of the sets of the form
	\begin{equation*}
		W_{\varepsilon}(B)=\left\lbrace x'\in E' : \sup\limits_{x\in B}\left|\braket{x',x} \right| \leq \varepsilon\right\rbrace\mbox{,}
	\end{equation*}
with $B$ a bounded subset of $E$ and $\varepsilon >0$.

The strong topology on $E'$ is the topology of uniform convergence on bounded subsets of $E$.
Thus, a net $\left( x'_\lambda\right)_{\lambda\in\Lambda}$ of continuous linear functionals  converges to zero if for every bounded set $B$, $x'_{\lambda}$ converges uniformly to zero over $B$ (see \cite{treves}, Ch. 19, Example IV).\\

Let $\Lambda:E\rightarrow F$ be a continuous linear mapping between two TVS.  Given $\varphi$ in $F'$, it is clear that the composition $\varphi\circ \Lambda$ is a continuous linear functional on $E$.
This defines a map
	\begin{align}\label{trdef}
	   \Lambda': F' &\rightarrow E'
	   \\
	   \varphi &\mapsto \varphi \circ \Lambda\nonumber
	\end{align}
known as the \emph{transpose}\index{transpose of a map} of $\Lambda$, which is obviously linear.

There is an important result concerning the continuity of the transpose of a linear map $\Lambda$.
\begin{propo}[See \cite{treves}, Corollary of Prop. 19.5]\label{siqueiros}
Let $E,F$ be two TVS and $\Lambda$ a continuous linear map from $E$ to $F$. 
Then, 
	\begin{equation*}
		\Lambda': F' \rightarrow E' 
	\end{equation*}
is continuous when the duals $E'$ and $F'$ carry the weak star topology 
(resp. the strong topology).
\end{propo}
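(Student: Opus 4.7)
The plan is to verify continuity of $\Lambda'$ at the origin in each case, since $\Lambda'$ is linear and the dual topologies are translation-invariant, so continuity at zero implies continuity everywhere. The strategy is to directly compute the preimage under $\Lambda'$ of a basic neighborhood of zero (in the topology at hand on $E'$) and show that it is itself a basic neighborhood of zero in $F'$ — indeed, it turns out to be an equality, not just an inclusion.

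For the weak star case, fix a basic $\sigma(E',E)$-neighborhood of zero
\begin{equation*}
W_\varepsilon(x_1,\dots,x_n)=\left\{x'\in E': |\langle x',x_j\rangle|\le\varepsilon,\ j=1,\dots,n\right\}.
\end{equation*}
For $\varphi\in F'$ we have $\Lambda'(\varphi)=\varphi\circ\Lambda$, so $|\langle \Lambda'(\varphi),x_j\rangle|=|\langle \varphi,\Lambda(x_j)\rangle|$. Hence $(\Lambda')^{-1}\bigl(W_\varepsilon(x_1,\dots,x_n)\bigr)=W_\varepsilon(\Lambda(x_1),\dots,\Lambda(x_n))$, which is a basic $\sigma(F',F)$-neighborhood of zero in $F'$. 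This gives continuity at zero.

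For the strong case, fix a basic $b^\ast$-neighborhood of zero
\begin{equation*}
W_\varepsilon(B)=\left\{x'\in E': \sup_{x\in B}|\langle x',x\rangle|\le\varepsilon\right\}
\end{equation*}
with $B$ bounded in $E$. The same computation gives
\begin{equation*}
(\Lambda')^{-1}\bigl(W_\varepsilon(B)\bigr)=\left\{\varphi\in F': \sup_{y\in\Lambda(B)}|\langle\varphi,y\rangle|\le\varepsilon\right\}=W_\varepsilon(\Lambda(B)).
\end{equation*}
To conclude that this is a basic strong neighborhood of zero in $F'$, I need $\Lambda(B)$ to be bounded in $F$; this is exactly the implication $(i)\Rightarrow(ii)$ of Theorem \ref{mielitas}, applied to the continuous linear map $\Lambda:E\to F$.

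There is essentially no obstacle here: the argument is a formal unwinding of definitions, with the only non-tautological input being the fact that continuous linear maps between TVS send bounded sets to bounded sets. The mild subtlety — which is worth stating explicitly in the write-up — is that one must take preimages of subbasic neighborhoods, not images, so the dualization of the indexing data (points $x_j$ or bounded sets $B$) goes through $\Lambda$ in the right direction.
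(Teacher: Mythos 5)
Your proof is correct. Note that the paper does not actually prove this proposition at all: its "proof" consists of a pointer to the discussion preceding Proposition 19.5 in Tr\`eves, so your direct computation is strictly more complete than what the thesis provides. The argument you give is the standard one and is sound in both cases: the reduction to continuity at zero via linearity and translation invariance is legitimate, the identity $(\Lambda')^{-1}\bigl(W_\varepsilon(x_1,\dots,x_n)\bigr)=W_\varepsilon(\Lambda(x_1),\dots,\Lambda(x_n))$ is an exact equality as you claim, and in the strong case the only genuinely non-formal ingredient is that $\Lambda(B)$ is bounded, which you correctly source from the implication \textit{(i)} $\Rightarrow$ \textit{(ii)} of Theorem \ref{mielitas} (valid for arbitrary TVS, with no metrizability hypothesis needed). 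Your closing remark about the direction in which the indexing data is transported is exactly the point a careless write-up would fumble, so it is worth keeping.
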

\begin{proof} For a sketch of the proof, see the explanation on page 199 of \cite{treves}, just before Proposition 19.5. 
The proof relies on many concepts and partial results, discussed previously  in Tr{\`e}ves' book.   
\end{proof}
\section{Orthogonality relations} 

In working with dual spaces of TVS some relations of orthogonality are going to be necessary
(see \cite{langlang93}, Ch. XV, \S 2).
We first give the following definitions and notations.
\begin{defi}\label{orthdef}
Let $E$ be a TVS. 
If $M$ is a subspace of $E$, we write
	\begin{equation*}
		M^{\perp}=\left\lbrace \varphi\in E' : \braket{\varphi,x}=0\quad\forall x \in M \right\rbrace .
	\end{equation*}
If $N$  is a subspace of $ E'$, we write
	\begin{equation*}
		^{\perp}N=\left\lbrace x\in E : \braket{\varphi,x}=0\quad\forall \varphi \in N \right\rbrace .
	\end{equation*}
We say that $M^{\perp}$ $($resp. $^{\perp}N$$)$ is the \emph{orthogonal} \index{orthogonal of a subspace}of $M$ $($resp. $N$$)$.
\end{defi}
Note that $^{\perp}N$ is a closed subspace of $E$, being the intersection of the kernels of a family of continuous linear functionals, namely $\left\lbrace\varphi:\varphi\in N  \right\rbrace $.
In addition, when $E'$ is given the weak star topology, $M^{\perp}$ is a closed subspace of $E'$.
Indeed, let $\left(  x'_{\lambda} \right) _{\lambda\in\Lambda}$ be a net in $M^{\perp}$ which converges to some element $x'$, \textit{i.e.} $\braket{x'_\lambda,x}$ converges to $\braket{x',x}$ for every $x$ in $E$. 
By definition of $M^{\perp}$, $\braket{x'_\lambda,x}=0$ for every $x$ in $M$. 
Then, for every $x$ in $M$, $\braket{x',x}=0$, thus $x'$ belongs to $M^{\perp}$. 
When $E'$ is given the strong topology, $M^{\perp}$ is also a closed subspace of $E'$.
To see this, let $\left( x'_{\lambda} \right)_{\lambda\in\Lambda}$ be a net in $M^{\perp}$ which converges to some element $x'$, that is, $\braket{x'_\lambda,\cdot}$ converges uniformly to $\braket{x',\cdot}$ on bounded sets. 
As $\left\lbrace x \right\rbrace $  is bounded for every $x$ in $E$, $0=\braket{x'_\lambda,x}$ converges to $\braket{x',x}$ for every $x$ in $M$.
Thus $x'$ belongs to $M^{\perp}$. 

An alternative way to show that $M^{\perp}$ is a closed subspace of $E'$ is contained in the following argument.
When the duals $M'$ and $E'$ carry the weak star topology or the strong topology, by Proposition \ref{siqueiros}, 
the inclusion
	\begin{equation*}
		i:M\hookrightarrow E
	\end{equation*}
induces a continuous linear map by restriction
	\begin{equation*}
		i':E'\hookrightarrow M'
	\end{equation*}
whose kernel is precisely $M^{\perp}$, and therefore $M^{\perp}$ is closed in $E'$.
\begin{defi}[See \cite{conway}, Ch. 5, Def. 1.6]\label{poldef}
Let $E$ be a LCS. 
If $M\subseteq E$, the  \emph{polar} of $M$, denoted by $M^{\circ}$, is the subset of $E'$ defined by 
	\begin{equation*}
		M^{\circ}=\left\lbrace\varphi \in E':\left|  \braket{\varphi,x}\right| \leq 1\ \forall x\in M \right\rbrace\mbox{.}
	\end{equation*}
If $N\subseteq E'$, the \emph{polar} of $N$, denoted by $^{\circ}N$, is the subset of $E$ defined by 
	\begin{equation*}
		^{\circ}N=\left\lbrace x \in E:\left|  \braket{\varphi,x}\right| \leq 1\ \forall \varphi\in N \right\rbrace.
	\end{equation*}
\end{defi}
The next result, the Bipolar Theorem, is fundamental in functional analysis.
Corollary \ref{egmont}, derived from it, is  very useful in what follows.
\begin{thm}[Bipolar Theorem. See \cite{conway}, Ch. 5, Thm. 1.8]\index{Bipolar Theorem}
Let $E$ be a LCS and $F\subseteq E$.
Then, $^\circ \left( F^\circ\right)$ is the intersection of all closed convex balanced subsets of $E$ that contain $F$; \textit{i.e.} the set $^\circ \left( F^\circ\right)$ is the \emph{closed convex balanced hull} of $F$.
\end{thm}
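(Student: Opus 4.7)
The plan is to prove the two inclusions
\[ F \;\subseteq\; {}^\circ(F^\circ) \;\subseteq\; \overline{\operatorname{co}}_b(F), \]
where $\overline{\operatorname{co}}_b(F)$ denotes the closed convex balanced hull of $F$, and simultaneously verify that $^\circ(F^\circ)$ is itself closed, convex and balanced. Combined, these facts identify $^\circ(F^\circ)$ with $\overline{\operatorname{co}}_b(F)$, and by the definition of the closed convex balanced hull as the intersection of all closed convex balanced sets containing $F$, this yields the theorem.

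First I would handle the easy half. Writing
\[ {}^\circ(F^\circ) \;=\; \bigcap_{\varphi\in F^\circ}\bigl\{x\in E : |\langle\varphi,x\rangle|\le 1\bigr\}, \]
each set on the right is closed (by continuity of $\varphi$), convex (by the triangle inequality for $|\langle\varphi,\cdot\rangle|$) and balanced (since $|\langle\varphi,\lambda x\rangle|=|\lambda||\langle\varphi,x\rangle|\le|\langle\varphi,x\rangle|$ whenever $|\lambda|\le 1$), so the intersection ${}^\circ(F^\circ)$ inherits all three properties. Moreover $F\subseteq{}^\circ(F^\circ)$ directly from the defining inequality of $F^\circ$. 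Consequently ${}^\circ(F^\circ)$ is one of the closed convex balanced sets containing $F$, hence contains the intersection $\overline{\operatorname{co}}_b(F)$.

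The main difficulty, as expected, lies in the reverse inclusion ${}^\circ(F^\circ)\subseteq\overline{\operatorname{co}}_b(F)$. I would argue by contrapositive: suppose $x_0\notin C:=\overline{\operatorname{co}}_b(F)$ and produce some $\varphi\in F^\circ$ with $|\langle\varphi,x_0\rangle|>1$, so that $x_0\notin{}^\circ(F^\circ)$. Since $C$ is closed, convex and balanced in a LCS, local convexity provides a convex balanced open neighborhood $U$ of $0$ with $(x_0+U)\cap C=\emptyset$; then $C$ and $\{x_0\}$ can be separated by a continuous linear functional. Concretely, one applies the Hahn--Banach Theorem \ref{hahnbanach} in its geometric form: defining the Minkowski functional $\mu_{C+U}$ on the convex balanced absorbing set $C+U$ (which does not contain $x_0$ since $x_0+U$ misses $C$ and both sets are balanced so $C+U$ is balanced), one extends a suitable linear functional on $\operatorname{span}\{x_0\}$ dominated by $\mu_{C+U}$ to a continuous linear $\varphi$ on $E$ with $|\langle\varphi,y\rangle|\le 1$ for all $y\in C$ and $|\langle\varphi,x_0\rangle|>1$. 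Since $F\subseteq C$, the first inequality says $\varphi\in F^\circ$, and the second says $x_0\notin{}^\circ(F^\circ)$.

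The hard part is essentially bootstrapping the Hahn--Banach extension theorem stated in the paper into the strict-separation statement for a closed convex balanced set and an exterior point, with control on the balanced condition so that $|\langle\varphi,\cdot\rangle|$ (not just $\operatorname{Re}\langle\varphi,\cdot\rangle$) is bounded by $1$ on $C$; this is where the balancedness of $C$ and of the separating neighborhood $U$ are used crucially, because for a balanced set $B$ the condition $\operatorname{Re}\langle\varphi,y\rangle\le 1$ for all $y\in B$ upgrades to $|\langle\varphi,y\rangle|\le 1$ for all $y\in B$ (apply it to $\lambda y$ with $|\lambda|=1$ chosen so that $\lambda\langle\varphi,y\rangle=|\langle\varphi,y\rangle|$). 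Once this geometric consequence of Hahn--Banach is in hand, the rest of the argument is bookkeeping.
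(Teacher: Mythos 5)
The paper gives no proof of this theorem: it is quoted from Conway and used as a black box, so there is no in-paper argument to compare yours against. Your proof is the standard one (indeed it is essentially Conway's) and its structure is sound. The easy inclusion $\overline{\operatorname{co}}_b(F)\subseteq{}^\circ\left(F^\circ\right)$ follows exactly as you say, since ${}^\circ\left(F^\circ\right)$ is an intersection of closed convex balanced sets each containing $F$; the converse is Hahn--Banach separation of a point from a closed convex balanced set, with balancedness of $C$ used to upgrade a bound on $\operatorname{Re}\langle\varphi,y\rangle$ to a bound on $|\langle\varphi,y\rangle|$.

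Two details in your sketch need care. First, with $V=C+U$ you can only conclude $\mu_V(x_0)\geq 1$ (take $E=\mathbb{R}$, $C=[-1,1]$, $U=(-1,1)$, $x_0=2$: then $C+U=(-2,2)$ and $\mu_{C+U}(x_0)=1$), which yields $|\langle\varphi,x_0\rangle|\geq 1$ rather than the strict inequality you need. The fix is to work with $V=C+\tfrac12 U$: this is still open, convex, balanced and absorbing, and if $x_0+\tfrac12 u_1=c+\tfrac12 u_2$ then $c\in x_0+U$ (using $\tfrac12U+\tfrac12U=U$ and $-U=U$), contradicting $(x_0+U)\cap C=\emptyset$; hence $x_0+\tfrac12 U$ misses $V$, so $x_0\notin\overline{V}$ and $\mu_V(x_0)>1$. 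Second, the Hahn--Banach statement actually available in the paper (Theorem \ref{hahnbanach}) is the bare extension theorem with no domination by a seminorm, so the dominated-extension (equivalently, the geometric separation) form you invoke must be supplied separately; you correctly flag this as the real content of the argument, but as written it is an appeal to a result the paper never states.
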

\begin{cor}[See \cite{conway}, Ch. 5, Corollary 1.9]\label{egmont}
If $E$ is a LCS and $F\subseteq \left( E',\sigma^\ast\right)$, then  $\left(  ^\circ F\right) ^\circ$ is the closed convex balanced hull of $F$ with respect to the weak star topology in $E'$.
\end{cor}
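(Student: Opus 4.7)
The plan is to deduce this corollary from the Bipolar Theorem applied not to the original LCS $E$, but to the LCS obtained by endowing $E'$ with the weak star topology $\sigma^{\ast}=\sigma(E',E)$. First I would observe that $(E',\sigma^{\ast})$ is indeed a LCS, because, by its very definition in \S\ref{wstdef}, a basis of neighborhoods of zero is given by the convex balanced sets $W_\varepsilon(x_1,\dots,x_n)$, so Proposition \ref{aylan} (or the characterization of locally convex topologies via separating families of seminorms in Theorem \ref{degas}) applies with the seminorms $p_x(x')=|\braket{x',x}|$ for $x\in E$.

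The key input is the classical identification of the continuous dual of $(E',\sigma^{\ast})$ with $E$ itself: a linear functional $\Phi$ on $E'$ is $\sigma^{\ast}$-continuous if and only if $\Phi(x')=\braket{x',x}$ for a unique $x\in E$. This is a standard consequence of the description of $\sigma^{\ast}$-neighborhoods of zero and the continuity criterion of Theorem \ref{pepitos}: a $\sigma^{\ast}$-continuous linear functional must be bounded on some $W_\varepsilon(x_1,\dots,x_n)$, hence vanish on $\bigcap_j \operatorname{Ker}\braket{\cdot,x_j}$, which forces it to lie in the span of the evaluations at $x_1,\dots,x_n$. Under this identification, and since the pairing between $E'$ and $E$ is symmetric in the two polarities of Definition \ref{poldef}, the polar of $F\subseteq E'$ computed in the continuous dual of $(E',\sigma^{\ast})$ is precisely $^{\circ}F\subseteq E$, and the polar of $^{\circ}F\subseteq E$ computed back in $E'$ is precisely $(^{\circ}F)^{\circ}$.

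With these identifications in hand, the Bipolar Theorem applied to the LCS $(E',\sigma^{\ast})$ and the subset $F$ gives directly that $(^{\circ}F)^{\circ}$ coincides with the intersection of all $\sigma^{\ast}$-closed convex balanced subsets of $E'$ containing $F$, which is by definition the closed convex balanced hull of $F$ with respect to the weak star topology.

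The only delicate point—and the step I expect to be the main obstacle in a fully rigorous write-up—is establishing the identification of the $\sigma^{\ast}$-continuous dual of $E'$ with $E$, together with the verification that the two notions of polar in Definition \ref{poldef} correspond to the single Bipolar-Theorem notion under this identification. Once that bookkeeping is done, the corollary follows with no further computation.
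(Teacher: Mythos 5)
Your argument is correct, and there is nothing to compare it against: the paper states Corollary \ref{egmont} without proof, simply citing Conway. Your route --- observing that $(E',\sigma^\ast)$ is itself a LCS for the separating seminorms $p_x(x')=\left|\braket{x',x}\right|$, identifying its continuous dual with $E$ via the standard argument (a $\sigma^\ast$-continuous functional is bounded on some $W_\varepsilon(x_1,\dots,x_n)$, hence vanishes on $\bigcap_j\operatorname{Ker}\braket{\cdot,x_j}$, hence is a linear combination of the evaluations), and then applying the Bipolar Theorem to $(E',\sigma^\ast)$ so that the two polars of Definition \ref{poldef} match up as $F\mapsto{}^{\circ}F\mapsto\left({}^{\circ}F\right)^{\circ}$ --- is exactly the standard proof given in the cited reference, and the ``delicate point'' you flag is handled correctly by your sketch.
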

Now we can state and prove the following useful result.
\begin{propo}\label{bipo}
Let $E$ be a LCS.
Then
	\begin{itemize}
		\item[(i)] $\bar{F}=\; ^\perp \left( F^\perp\right)$ if $F$ is a subspace of $E$.
		\item[(ii)]$\bar{F}=\;\left(  ^\perp F\right) ^\perp$ if $F$ is a subspace of $\left( E',\sigma^\ast\right) $.
	\end{itemize}
\end{propo}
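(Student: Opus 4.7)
The plan is to reduce both statements to the Bipolar Theorem (and its Corollary \ref{egmont}) via two elementary observations: for a linear subspace, the polar coincides with the orthogonal, and the closed convex balanced hull coincides with the closure.

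First I would prove the following auxiliary fact: if $F$ is a linear subspace of $E$, then $F^{\perp}=F^{\circ}$; and if $F$ is a linear subspace of $E'$, then ${}^{\perp}F={}^{\circ}F$. The inclusions $F^{\perp}\subseteq F^{\circ}$ and ${}^{\perp}F\subseteq {}^{\circ}F$ are immediate from the definitions. For the reverse, take $\varphi\in F^{\circ}$ and $x\in F$; since $F$ is a subspace, $\lambda x\in F$ for every $\lambda\in\mathbb{C}$, so $|\braket{\varphi,\lambda x}|=|\lambda|\,|\braket{\varphi,x}|\le 1$ for every $\lambda$, which forces $\braket{\varphi,x}=0$; hence $\varphi\in F^{\perp}$. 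The other inclusion is analogous.

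Next I would observe that for a subspace $F$, the closure $\bar F$ (in the relevant topology) coincides with the closed convex balanced hull of $F$. Indeed, $F$ itself is convex (a subspace is closed under convex combinations) and balanced (it is closed under scalar multiplication, hence under multiplication by scalars of modulus at most one), so the closed convex balanced hull certainly contains $\bar F$. Conversely, $\bar F$ is closed and is itself convex and balanced, since in a TVS continuity of addition and of scalar multiplication forces the closure of a convex (resp.\ balanced) set to be convex (resp.\ balanced); this applies both to the original topology on $E$ and to the weak star topology on $E'$, which are vector space topologies. Therefore $\bar F$ equals the closed convex balanced hull of $F$.

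Combining both observations, for $F$ a subspace of $E$ the Bipolar Theorem gives
\begin{equation*}
{}^{\perp}(F^{\perp})={}^{\circ}(F^{\circ})=\text{closed convex balanced hull of }F=\bar F,
\end{equation*}
proving (i). For $F$ a subspace of $(E',\sigma^{\ast})$, Corollary \ref{egmont} gives
\begin{equation*}
({}^{\perp}F)^{\perp}=({}^{\circ}F)^{\circ}=\text{weak star closed convex balanced hull of }F=\bar F,
\end{equation*}
proving (ii).

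The only genuinely delicate point is verifying that $\bar F$ is balanced in the weak star topology for part (ii); but this follows because scalar multiplication by any fixed $\lambda\in\mathbb{C}$ is weak star continuous on $E'$ (pre-composing the evaluation maps $\braket{\,\cdot\,,x}$ by $M_{\lambda}$ yields again evaluation maps, up to a scalar), so $\lambda\bar F=\overline{\lambda F}\subseteq\bar F$ whenever $|\lambda|\le 1$. Apart from this bookkeeping, the argument is essentially a direct unpacking of the Bipolar Theorem.
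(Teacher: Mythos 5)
Your proof is correct and follows essentially the same route as the paper's: identify the polar of a subspace with its orthogonal (the paper uses $|\braket{\varphi,nx}|\le 1$ for all $n\in\mathbb{N}$ where you use all scalars $\lambda$, which is the same idea) and then invoke the Bipolar Theorem and Corollary \ref{egmont}. Your write-up is in fact slightly more careful than the paper's, since you explicitly verify that the closed convex balanced hull of a subspace is its closure — a step the paper leaves implicit.
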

\begin{proof}\label{polarsdef}
Suppose $F$ is a subspace of $E$.
Then, given $\varphi$ in $F^\circ$, we know that $\left|  \braket{\varphi,x}\right| \leq 1$ for every $x$ in $F$.
As $F$ is a subspace of $E$, we have that $\left|  \braket{\varphi,nx}\right| \leq 1$ for every $x$ in $F$ and for every $n$ in $\mathbb{N}$, which in turn implies that $\left|  \braket{\varphi,x}\right| \leq n^{-1}$ for every $x$ in 
 $F$ and for every $n$ in $\mathbb{N}$. 
Therefore,  $\braket{\varphi,x}=0$ for every $x$ in $F$ and we conclude that $F^\circ= F^\perp$. 
In particular $F^\circ$ is convex.
In a similar way, if $F$ is a subspace of $E'$ we have that $^\circ F=^\perp F$, which is also convex.
By the Bipolar Theorem and Corollary \ref{egmont} we conclude the assertion of the Proposition.
\end{proof}
\begin{propo}\label{snoopy}
\textit{(ii)} in Proposition \ref{bipo} is not true when $E'$ is given the strong topology $b^\ast$, not even if $E$ is a Banach space.
\end{propo}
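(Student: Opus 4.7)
The plan is to exhibit a concrete counterexample, namely a Banach space $E$ and a subspace $F$ of $E'$ which is closed in the strong topology $b^\ast$ but for which $\left(\phantom{}^\perp F\right)^\perp$ is strictly larger than $F$. The natural candidate is $E=\ell^1$, so that $E'=\ell^\infty$ with its usual norm (which is exactly the strong topology $b^\ast$), and $F = c_0$, the closed subspace of sequences converging to zero.

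First I would recall why it suffices to exhibit such a pair: the strong topology $b^\ast$ is finer than the weak star topology $\sigma^\ast$, so every $\sigma^\ast$-closed set (in particular $\left(\phantom{}^\perp F\right)^\perp$, which is an intersection of kernels of evaluation functionals on $E'$) is $b^\ast$-closed, and thus $\overline{F}^{b^\ast}\subseteq \overline{F}^{\sigma^\ast}=\left(\phantom{}^\perp F\right)^\perp$ by Proposition \ref{bipo}(ii). So the claim reduces to producing an $F$ for which this containment is strict.

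Then I would carry out the computation for $F=c_0\subset \ell^\infty=E'$. The subspace $c_0$ is norm-closed in $\ell^\infty$, hence closed in the strong topology, so $\overline{F}^{b^\ast}=c_0$. To compute $^\perp c_0$, observe that the canonical basis vectors $e_n\in \ell^1$ admit as pairing with any $x=(x_k)\in\ell^1$ the value $\langle e_n,x\rangle = x_n$; but since each $e_n$ lies in $c_0$ (viewed instead on the $E'=\ell^\infty$ side as the sequence with a single $1$ at position $n$), any $x\in\,^\perp c_0$ must satisfy $x_n=0$ for all $n$, hence $^\perp c_0=\{0\}$. Therefore $\left(\phantom{}^\perp c_0\right)^\perp=\{0\}^\perp=\ell^\infty$, which is strictly larger than $c_0$. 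This contradicts the analogue of (ii) for the strong topology.

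The main (and only mild) obstacle is making sure the identifications are correct: that the strong dual of $\ell^1$ really is $(\ell^\infty,\|\cdot\|_\infty)$, and that $c_0$ is genuinely norm-closed but weak-star dense in $\ell^\infty$ (a classical fact). Once this is in place the counterexample is immediate and no further technicalities are required.
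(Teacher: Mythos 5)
Your counterexample is correct, and it is a genuinely different (and more concrete) route than the paper's. The paper also works with $E=\ell_1$, but it argues abstractly: since $\ell_1$ is not reflexive, there is a functional $L$ on $E'=\ell_\infty$ that is $b^\ast$-continuous but not $\sigma^\ast$-continuous, and its kernel $S=\operatorname{Ker}(L)$ is then a $b^\ast$-closed hyperplane that is not $\sigma^\ast$-closed; combined with Proposition \ref{bipo}\textit{(ii)} (which identifies $\left(^\perp S\right)^\perp$ with the $\sigma^\ast$-closure of $S$), this shows the failure. That argument leans on the duality fact that the $\sigma^\ast$-continuous functionals on $E'$ are exactly the evaluations at points of $E$, and it only asserts the existence of the offending subspace. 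You instead exhibit the subspace explicitly as $F=c_0\subseteq\ell_\infty$, verify by a one-line computation with the basis vectors that $^\perp c_0=\{0\}$, hence $\left(^\perp c_0\right)^\perp=\ell_\infty\neq c_0=\overline{c_0}^{\,b^\ast}$. What your version buys is complete explicitness and elementarity — no appeal to non-reflexivity or to the characterization of $(E',\sigma^\ast)'$ — at the mild cost of needing the standard identification $\ell_1'\cong(\ell_\infty,\|\cdot\|_\infty)$ and the norm-closedness of $c_0$, both of which you correctly flag. (As a bonus, your computation also recovers the classical fact that $c_0$ is $\sigma^\ast$-dense in $\ell_\infty$, which is exactly the phenomenon the paper's Example \ref{seqspa} and Proposition \ref{baca} revolve around.) Both proofs are valid; yours could be substituted for the paper's without loss.
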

\begin{proof}
It suffices to see that there exist a Banach space $E$ and a closed subspace $S$ of $\left(E',b^\ast \right)$ such that $S$ is not closed in the space $\left(E',\sigma^\ast\right)$.
For this, it suffices to find a Banach space $E$ which is not reflexive when considering the strong topology on the sucessive duals,
\textit{i.e.} a Banach space $E$ such that the canonical inclusion 
	\begin{equation*}
		E\hookrightarrow \left(\left( E',b^\ast\right)', b^\ast \right) 
	\end{equation*}
is not surjective.
For if such a space $E$ were found, and as it is always true that a LCS is reflexive when considering the weak star topology on the sucessive duals, \textit{i.e.}
	\begin{equation*}
		E = \left(\left( E',\sigma^\ast\right)', \sigma^\ast \right)\mbox{,}
	\end{equation*}
then we could assert that there exists a functional $L:E'\rightarrow \mathbb{C}$ which is continuous when $E'$ has the strong topology but not when it is given the weak star topology (in this case $L$ cannot be given  by the evaluation on some fixed $v$ in $E$, which is always true for the maps of $\left(\left( E',\sigma^\ast\right)', \sigma^\ast \right)$).

Taking $S$ to be the kernel of such a functional $L$, we then have that $S$ is closed  in the space  $\left(\left( E',b^\ast\right)', b^\ast \right)$ but not in the space $\left(\left( E',\sigma^\ast\right)', \sigma^\ast \right)$.

It only remains to note that such a Banach space $E$, which is not reflexive, exists, namely $E=\ell_1$, where  
	\begin{equation}\label{elle12}
		\ell_1:=\left\lbrace\left(a_i \right)_{i\in\mathbb{N}} : a_i\in \mathbb{C}\  \forall i\in\mathbb{N}\mbox{,}\mbox{ and }\sum_{i=1}^\infty \left|a_i \right|<\infty \right\rbrace.
	\end{equation}
\end{proof}
For every continuous linear map between Fr\'echet spaces there is a duality associated, given in the following result.
\begin{propo}\label{evgeny}
Let $\Lambda:E\rightarrow F$ be a continuous linear map between Fr\'echet spaces.
Then 
	\begin{itemize}
			\item[(i)]$\operatorname{Ker}\left( \Lambda'\right) =\left(\operatorname{Im}\left( \Lambda\right)  \right)^{\perp}$ and
			\item[(ii)] if the image of $\Lambda$ is closed, then so is the image of $\Lambda'$ and  $\ \operatorname{Im}\left( \Lambda'\right) =\left( \operatorname{Ker} \left( \Lambda\right)  \right)^{\perp}$.
	\end{itemize} 
\end{propo}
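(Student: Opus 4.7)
For part \textit{(i)}, I would simply unwind the definitions. Given $\varphi \in F'$, we have $\varphi \in \operatorname{Ker}(\Lambda')$ if and only if $\Lambda'(\varphi) = \varphi \circ \Lambda = 0$, which means $\langle \varphi, \Lambda(x) \rangle = 0$ for every $x \in E$. This is precisely the statement that $\varphi$ vanishes on every element of $\operatorname{Im}(\Lambda)$, i.e.\ $\varphi \in (\operatorname{Im}(\Lambda))^{\perp}$.

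For part \textit{(ii)}, the plan breaks into three stages. The easy inclusion $\operatorname{Im}(\Lambda') \subseteq (\operatorname{Ker}(\Lambda))^{\perp}$ is immediate: if $\psi = \varphi \circ \Lambda$ and $x \in \operatorname{Ker}(\Lambda)$, then $\langle \psi, x \rangle = \langle \varphi, \Lambda(x) \rangle = \langle \varphi, 0 \rangle = 0$.

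For the nontrivial reverse inclusion, let $\psi \in (\operatorname{Ker}(\Lambda))^{\perp}$. Since $\operatorname{Ker}(\Lambda)$ is closed (as the preimage of $\{0\}$ under the continuous map $\Lambda$) and $\psi$ is continuous and vanishes on $\operatorname{Ker}(\Lambda)$, it factors through the quotient map $\pi: E \to E/\operatorname{Ker}(\Lambda)$ to yield a well-defined linear functional $\tilde{\psi}$; its continuity follows from the fact that $\pi$ is continuous and open, by Theorem \ref{espcoc2}. Next, $\Lambda$ factors as $\Lambda = j \circ \bar{\Lambda} \circ \pi$, where $j: \operatorname{Im}(\Lambda) \hookrightarrow F$ is the inclusion and $\bar{\Lambda}: E/\operatorname{Ker}(\Lambda) \to \operatorname{Im}(\Lambda)$ is a continuous linear bijection. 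Here is where the hypothesis on $\operatorname{Im}(\Lambda)$ enters: since $\operatorname{Im}(\Lambda)$ is closed in the Fréchet space $F$, it is itself Fréchet; and by Theorem \ref{espcoc2}\textit{(iv)}, so is $E/\operatorname{Ker}(\Lambda)$. Thus the Open Mapping Theorem applies to $\bar{\Lambda}$, guaranteeing that $\bar{\Lambda}^{-1}$ is continuous. Consequently $\tilde{\psi} \circ \bar{\Lambda}^{-1}$ is a continuous linear functional on $\operatorname{Im}(\Lambda)$, and by the Hahn-Banach Theorem \ref{hahnbanach} it extends to some $\varphi \in F'$. By construction $\varphi \circ \Lambda = \psi$, so $\psi \in \operatorname{Im}(\Lambda')$.

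For the closedness of $\operatorname{Im}(\Lambda')$, having established the identity $\operatorname{Im}(\Lambda') = (\operatorname{Ker}(\Lambda))^{\perp}$, the discussion following Definition \ref{orthdef} already shows that the orthogonal of a subspace of $E$ is closed in $E'$ in both the weak star and the strong topologies; this concludes the argument. The main technical hurdle is Stage 2, specifically the crucial invocation of the Open Mapping Theorem on $\bar{\Lambda}$, since this is the only place where the closedness hypothesis on $\operatorname{Im}(\Lambda)$ is genuinely needed: without it, $\operatorname{Im}(\Lambda)$ need not be complete, hence not Fréchet, and the extension via Hahn-Banach would not be available in a controlled way.
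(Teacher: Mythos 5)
Your proposal is correct and follows essentially the same route as the paper: part \textit{(i)} by unwinding definitions, and part \textit{(ii)} by factoring the functional through $E/\operatorname{Ker}(\Lambda)$, applying the Open Mapping Theorem to the induced bijection onto the (closed, hence Fréchet) image, and extending by Hahn--Banach, with closedness of $\operatorname{Im}(\Lambda')$ coming from the general fact that orthogonals are closed. Your version is in fact slightly more explicit than the paper's about where the closedness of $\operatorname{Im}(\Lambda)$ and the continuity of the induced functional on the quotient are used.
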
  
\begin{proof}
The proof of \textit{(i)} is straightforward:
	\begin{equation*}
		\begin{split}
			\varphi\in \operatorname{Ker}\left( \Lambda'\right)  & \iff \Lambda'(\varphi)=0 \iff \varphi\circ \Lambda=0 \iff \braket{\varphi,\Lambda(x)}=0 \ \forall x\in E \\
			& \iff  \braket{\varphi,y}=0 \ \forall y\in \operatorname{Im}\left( \Lambda\right)  \iff \varphi \in \left(\operatorname{Im}\left( \Lambda\right)  \right)^{\perp}. 
		\end{split}
	\end{equation*}

Now we prove \textit{(ii)}. 
Let $\phi$ be an element of $F'$ and $x$ in $\operatorname{Ker}\left( \Lambda\right) $.
Then,
	\begin{equation*}
		\braket{\Lambda'(\phi) ,x}=\braket{\phi,\Lambda(x)}=0.
	\end{equation*}
Hence, 
	\begin{equation*}
		\operatorname{Im}\left( \Lambda'\right) \subseteq\left( \operatorname{Ker} \left( \Lambda\right)  \right)^{\perp}.
	\end{equation*}
Conversely, let $\varphi$ be in $E'$ such that $\braket{\varphi, x}=0$ for every $x$ in $\operatorname{Ker}\left( \Lambda\right) $.
By the First Isomorphism Theorem there exists a unique continuous linear mapping
	\begin{equation*}
		\sigma:E/\operatorname{Ker}\left( \Lambda\right) \rightarrow \operatorname{Im}\left( \Lambda\right)\mbox{,}
	\end{equation*}
which is bijective. 
By the Open Mapping Theorem, $\sigma$ is an isomorphism.
We define $\overline{\varphi}$ on $E/\operatorname{Ker}\left( \Lambda\right) $, by   $\braket{\overline{\varphi},x+\operatorname{Ker}\left( \Lambda\right) }=\braket{\varphi,x}$.
$\overline{\varphi}$ is well-defined because  $\braket{\varphi, x}=0$ for every $x$ in $\operatorname{Ker}\left( \Lambda\right) $.
Then,  $\overline{\varphi}\circ\sigma^{-1}$ is a functional on $\operatorname{Im}\left( \Lambda\right) $, which can be extended to a functional $\psi$ on $F$ by the Hahn-Banach Theorem.
	\begin{equation*}
		\xymatrix{
		E\ar[d]_\pi \ar@{->}[r]^\Lambda& \operatorname{Im}\left( \Lambda\right)  \ar@{->}[r]^{i} & F \ar@{-->}[ddll]^\psi\\
		E/\operatorname{Ker}\left( \Lambda\right)  \ar@{->}[d]_{\overline{\varphi}}\ar@{-->}[ur]_{\sigma}&\\
		\mathbb{C}&}
	\end{equation*}
Then, it is clear that $\Lambda'(\psi)=\varphi$, because given $x$ in $E$
	\begin{equation*}
		\begin{split}
				\braket{\Lambda'(\psi),x}&=\braket{\psi,\Lambda(x)}= \left\langle\overline{\varphi}\circ\sigma^{-1}, \Lambda(x)\right\rangle= \left\langle\overline{\varphi},\left( \sigma^{-1}\circ \Lambda\right) (x) \right\rangle\\
				&=\braket{\overline{\varphi},x+\operatorname{Ker}\left( \Lambda\right) }=\braket{\varphi,x}.
		\end{split}
	\end{equation*} 
Thus $\varphi$ is in $\operatorname{Im}\left( \Lambda'\right) $.
This proves that $\operatorname{Im}\left( \Lambda'\right) =\left( \operatorname{Ker} \left( \Lambda\right)  \right)^{\perp}$, and in particular proves that $ \operatorname{Im}\left( \Lambda'\right) $ is closed.
\end{proof}
Notice that the previous results can be expressed as follows.
\begin{propo}[See \cite{meisevogt}, Prop. 26.4]\label{joann2}
A sequence of Fréchet spaces and continuous linear maps
	\begin{equation} \label{george1}
 		0\rightarrow A \overset{f}{\rightarrow} B \overset{g}{\rightarrow} C \rightarrow 0 
	\end{equation}
is  exact if and only if the dual sequence of TVS and continuous linear maps  
	\begin{equation} \label{george2}
 		0\rightarrow C' \overset{g'}{\rightarrow} B' \overset{f'}{\rightarrow} A' \rightarrow 0 
	\end{equation}
is exact, when its vector spaces are endowed  with either the weak star topology or the strong topology. 
\end{propo}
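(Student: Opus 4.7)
The plan is to obtain the equivalence by applying the duality relations of Proposition \ref{evgeny} directly for the forward direction, and by combining them with Hahn--Banach separation (Theorem \ref{hahnbanach}) together with a closed-range argument for the converse. For the forward direction, assume the primal sequence is exact and argue position by position. Surjectivity of $g$ combined with Proposition \ref{evgeny}(i) yields $\operatorname{Ker}(g') = (\operatorname{Im} g)^{\perp} = C^{\perp} = \{0\}$, proving injectivity of $g'$. Since $\operatorname{Im}(f) = \operatorname{Ker}(g)$ is closed and $f$ is injective, Proposition \ref{evgeny}(ii) yields $\operatorname{Im}(f') = (\operatorname{Ker} f)^{\perp} = A'$, so $f'$ is surjective. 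The inclusion $\operatorname{Im}(g') \subseteq \operatorname{Ker}(f')$ is the trivial identity $f' \circ g' = (g \circ f)' = 0$, while the reverse inclusion follows because both $\operatorname{Ker}(f')$ and $\operatorname{Im}(g')$ coincide with $(\operatorname{Ker} g)^{\perp}$: the first by Proposition \ref{evgeny}(i) applied to $f$ together with $\operatorname{Im}(f) = \operatorname{Ker}(g)$, and the second by Proposition \ref{evgeny}(ii) applied to $g$, whose image $C$ is trivially closed.

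For the converse, assume the dual sequence is exact. Two of the four required ingredients fall out directly. Injectivity of $f$ follows by Hahn--Banach: if $a \in \operatorname{Ker}(f)$, then any $\varphi \in A' = \operatorname{Im}(f')$ can be written $\varphi = f'(\psi)$, so $\varphi(a) = \psi(f(a)) = 0$, and since $A'$ separates points of $A$ by Theorem \ref{hahnbanach} we conclude $a = 0$. The inclusion $\operatorname{Im}(f) \subseteq \operatorname{Ker}(g)$ follows from $\operatorname{Im}(g') \subseteq \operatorname{Ker}(f')$: for any $a \in A$ and $\varphi \in C'$, one has $\varphi(g(f(a))) = (f'(g'\varphi))(a) = 0$, and since $C'$ separates points of $C$ we deduce $g(f(a)) = 0$. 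The remaining two inclusions are then obtained by Hahn--Banach contradiction arguments: if there existed $c \in C \setminus \operatorname{Im}(g)$, a functional $\varphi \in C'$ separating $c$ from the closed subspace $\operatorname{Im}(g)$ would satisfy $g'(\varphi) = 0$, contradicting injectivity of $g'$; and if there existed $b \in \operatorname{Ker}(g) \setminus \operatorname{Im}(f)$, a functional $\psi \in B'$ separating $b$ from the closed subspace $\operatorname{Im}(f)$ would lie in $\operatorname{Ker}(f') = \operatorname{Im}(g')$, whence $\psi = g'(\varphi)$ and so $\psi(b) = \varphi(g(b)) = 0$, a contradiction.

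The main obstacle is the tacit use, in both separation arguments of the previous paragraph, of the fact that $\operatorname{Im}(f)$ and $\operatorname{Im}(g)$ are closed. This is supplied by the Fréchet-space version of Banach's closed range theorem, which is the converse of Proposition \ref{evgeny}(ii). In the present situation $\operatorname{Im}(f') = A'$ is trivially closed, and $\operatorname{Im}(g') = \operatorname{Ker}(f')$ is closed in either the weak-$\ast$ or the strong topology by continuity of $f'$ (Proposition \ref{siqueiros}); hence $\operatorname{Im}(f)$ and $\operatorname{Im}(g)$ are closed. This closed-range step, which rests on the Open Mapping Theorem and is the only genuinely non-elementary input, unlocks both remaining inclusions simultaneously; all the rest of the argument is a transparent manipulation of the orthogonality relations collected in Propositions \ref{evgeny} and \ref{bipo}.
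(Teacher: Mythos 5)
The paper itself offers no proof of Proposition \ref{joann2}: it is stated as a reformulation of Proposition \ref{evgeny} with a bare citation to Meise--Vogt, so there is nothing to compare your argument against line by line. Judged on its own terms, your forward direction is correct and complete: all four facts ($g'$ injective, $f'$ surjective, and the two inclusions giving $\operatorname{Im}(g')=\operatorname{Ker}(f')$) do follow from Proposition \ref{evgeny} exactly as you say, since $\operatorname{Im}(f)=\operatorname{Ker}(g)$ and $\operatorname{Im}(g)=C$ are closed, which is what part \textit{(ii)} of that proposition requires. It is also worth making explicit a point you use tacitly: exactness of \eqref{george2} is, by Definition \ref{galletitasdelimon}, a condition on the underlying vector spaces only, and $f'$, $g'$ are continuous for both the weak star and the strong topology by Proposition \ref{siqueiros}; hence the ``either\dots or'' in the statement is automatic and your proof need not distinguish the two cases.

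The converse direction, however, contains a genuine gap. Your two Hahn--Banach separation arguments require $\operatorname{Im}(g)$ and $\operatorname{Im}(f)$ to be closed, and you obtain this from ``the converse of Proposition \ref{evgeny}(ii)'', i.e.\ the closed range theorem for Fr\'echet spaces: $\operatorname{Im}(T')$ weak star closed implies $\operatorname{Im}(T)$ closed. That implication is nowhere established in this thesis --- Proposition \ref{evgeny} proves only the direction from closedness of $\operatorname{Im}(\Lambda)$ to closedness of $\operatorname{Im}(\Lambda')$ --- and it is not a consequence of the Open Mapping Theorem by any routine manipulation; the standard proof (Meise--Vogt, Thm.\ 26.3) passes through the Bipolar Theorem and a separate surjectivity criterion, and is of essentially the same depth as the proposition you are proving. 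Without it, the orthogonality relations only yield $\overline{\operatorname{Im}(g)}=C$ and $\operatorname{Ker}(g)\subseteq\overline{\operatorname{Im}(f)}$ (via Proposition \ref{bipo}\textit{(i)}), which is strictly weaker than exactness. So either the closed range theorem must be stated and proved (or at least cited as an external input on the same footing as the proposition itself), or the converse direction remains unproved. Everything else in your converse argument --- injectivity of $f$ from surjectivity of $f'$, the inclusion $\operatorname{Im}(f)\subseteq\operatorname{Ker}(g)$ from $f'\circ g'=0$, and the two contradiction arguments once closedness is granted --- is correct.
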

\begin{propo}\label{ramiro}
Let
	\begin{equation} \label{george3}
		0\rightarrow A \overset{f}{\rightarrow} B \overset{g}{\rightarrow} C \rightarrow 0 
	\end{equation}
be a short exact sequence of Fr\'echet spaces.
The following statements are equivalent:
	\begin{itemize}
		\item[(i)]there exists a continuous linear section $s:C\rightarrow B$ of $g:B\rightarrow C$.
		\item[(ii)]there exists a continuous linear section $s:C\rightarrow B$ of $g:B\rightarrow C$ such that $\operatorname{Im}(s)$ is closed.
		\item[(iii)]there exists a continuous linear retraction $r:B\rightarrow A$ of $f:A\rightarrow B$.
		\item[(iv)]there exists a continuous linear retraction $r:B\rightarrow A$ of $f:A\rightarrow B$ such that $\operatorname{Im}(r)$ is closed.
		\item[(v)] There exists an isomorphism $\Omega: A\oplus C\rightarrow B$ such that the following diagram commutes
			\begin{equation}\label{dia}
				\xymatrix{A
				\ar@{=}[d]_{} \ar[r]^{f}& B\ar[r]^{g}& C\ar@{=}[d]_{}\\
				A\ar@{^{(}->}[r]^{i {\hspace*{0.5 cm}}}& A\oplus C\ar[u]^{\Omega} \ar@{->>}[r]^{\pi {\hspace*{-0.2 cm}}}& C}
			\end{equation} 
		where $i:A\hookrightarrow A\oplus C$ is the canonical inclusion and $\pi: A\oplus C \twoheadrightarrow C $ is the canonical projection.
	\end{itemize}
\end{propo}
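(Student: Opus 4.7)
The plan is to prove the equivalences in the cyclic pattern (v) $\Rightarrow$ (i) $\Rightarrow$ (v) and (v) $\Rightarrow$ (iii) $\Rightarrow$ (v), after first disposing of the cheap equivalences (i) $\Leftrightarrow$ (ii) and (iii) $\Leftrightarrow$ (iv). For these last two pairs, I would observe the following. Any continuous linear section $s\colon C\to B$ of $g$ automatically has closed image, because
\begin{equation*}
\operatorname{Im}(s)=\{b\in B: (s\circ g)(b)=b\}=\operatorname{Ker}\bigl(\mathrm{id}_B-s\circ g\bigr),
\end{equation*}
and the kernel of a continuous linear map between TVS is closed. Similarly, any continuous linear retraction $r\colon B\to A$ of $f$ is necessarily surjective (its right inverse $f$ exhibits every $a\in A$ as $r(f(a))$), so $\operatorname{Im}(r)=A$ is trivially closed in $A$.

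For (v) $\Rightarrow$ (i) and (v) $\Rightarrow$ (iii): given the isomorphism $\Omega\colon A\oplus C\to B$ making diagram (\ref{dia}) commute, I would set $s=\Omega\circ i_C$, where $i_C\colon C\to A\oplus C$ is the canonical inclusion, and $r=p_A\circ\Omega^{-1}$, where $p_A\colon A\oplus C\to A$ is the canonical projection. Commutativity of (\ref{dia}) yields $g\circ s=\mathrm{id}_C$ and $r\circ f=\mathrm{id}_A$, and both maps are continuous linear since $\Omega$ is an isomorphism of TVS.

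For (i) $\Rightarrow$ (v): starting from a continuous section $s\colon C\to B$ of $g$, I define
\begin{equation*}
\Omega\colon A\oplus C\to B,\qquad \Omega(a,c)=f(a)+s(c).
\end{equation*}
This is continuous and linear. For injectivity, if $f(a)+s(c)=0$, applying $g$ and using $g\circ f=0$, $g\circ s=\mathrm{id}_C$ gives $c=0$, whence $f(a)=0$ forces $a=0$ by injectivity of $f$. For surjectivity, given $b\in B$ set $c=g(b)$; then $g(b-s(c))=0$, so by exactness $b-s(c)=f(a)$ for a unique $a\in A$, and $\Omega(a,c)=b$. Commutativity of diagram (\ref{dia}) is immediate from the definitions. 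The remaining point is that $\Omega^{-1}$ is continuous: since $A\oplus C$ and $B$ are Fréchet spaces and $\Omega$ is a continuous linear bijection between them, the Open Mapping Theorem applies, giving continuity of $\Omega^{-1}$.

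For (iii) $\Rightarrow$ (v), I would first pass through (iii) $\Rightarrow$ (i). Given a retraction $r\colon B\to A$ of $f$, consider the restriction $g|_{\operatorname{Ker}(r)}\colon \operatorname{Ker}(r)\to C$. It is surjective because for any $b\in B$ the element $b-f(r(b))$ lies in $\operatorname{Ker}(r)$ and has the same image under $g$ as $b$; it is injective because any $b\in\operatorname{Ker}(r)\cap\operatorname{Ker}(g)=\operatorname{Ker}(r)\cap\operatorname{Im}(f)$ must equal $f(a)$ with $a=r(f(a))=r(b)=0$. Since $\operatorname{Ker}(r)$ is closed in the Fréchet space $B$, it is itself a Fréchet space, and the Open Mapping Theorem again yields that the continuous linear bijection $g|_{\operatorname{Ker}(r)}$ has continuous inverse, providing the desired section $s$, which by the previous step gives (v). The main obstacle throughout is not the algebra, which is formal, but ensuring that the inverse maps obtained are continuous; this is exactly where the Fréchet hypothesis enters, via the Open Mapping Theorem applied to $\Omega$ and to $g|_{\operatorname{Ker}(r)}$.
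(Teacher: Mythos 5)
Your proof is correct and, for the most part, follows the same route as the paper: the equivalence is driven by the explicit isomorphism $\Omega(a,c)=f(a)+s(c)$ for (i) $\Rightarrow$ (v), the formulas $s=\Omega\circ i_C$ and $r=\pi_A\circ\Omega^{-1}$ for the converse directions, the identity $\operatorname{Im}(s)=\operatorname{Ker}(\operatorname{Id}_B-s\circ g)$ for closedness of the image of a section, and the Open Mapping Theorem wherever a continuous inverse is needed. Two points where you genuinely diverge are worth recording. First, for (iii) $\Rightarrow$ (v) the paper constructs the isomorphism directly as $\Theta(b)=(r(b),g(b))$ and checks bijectivity by hand, whereas you instead extract a section by showing that $g|_{\operatorname{Ker}(r)}\colon\operatorname{Ker}(r)\to C$ is a continuous linear bijection from a closed (hence Fréchet) subspace and inverting it by the Open Mapping Theorem, then feeding the result into (i) $\Rightarrow$ (v); both arguments are valid and cost one application of the Open Mapping Theorem, but yours has the side benefit of exhibiting the complement $\operatorname{Ker}(r)$ of $f(A)$ explicitly. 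Second, you observe that \emph{every} continuous linear section automatically has closed image, so (i) $\Rightarrow$ (ii) holds unconditionally; the paper only establishes closedness for the particular section produced from (v), so your version is a slight strengthening of the bookkeeping, at no extra cost.
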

\begin{proof}
We first observe that item \textit{(iv)}, does not give further information as $r$, being a retraction, is surjective.
However, to emphasize the symmetry of the result given by the Proposition, we chose to include the item in its statement.

Two of the implications are trivial, namely that \textit{(ii)} implies \textit{(i)} and that \textit{(iv)} implies \textit{(iii)}.
Now we prove that \textit{(v)} implies \textit{(ii)}. 
Let $\Omega: A\oplus C\rightarrow B$ be an isomorphism such that  the diagram \eqref{dia} commutes and define $s:C\rightarrow B$ by $s=\Omega\circ i_C$ where $i_C:C\rightarrow A\oplus C$ is the inclusion given by $i_C(c)=(0,c)$ for every $c$ in $C$.
$s$ is clearly linear, continuous and satisfies $g\circ s= \operatorname{Id}_C$.
Therefore $s$ is a section.
To show that $\operatorname{Im}(s)$ is closed, notice that as $g$ is surjective, $\operatorname{Im}(s)=\operatorname{Im}(s\circ g)$, so we may as well prove that $\operatorname{Im}(s\circ g)$ is closed. 
On the other hand, as  $g\circ s= \operatorname{Id}_C$, we have $\operatorname{Im}(s\circ g)=\operatorname{Ker}\left( \operatorname{Id}_{B}-s\circ g\right) $ which is closed as $\operatorname{Id}_{B}-s\circ g$ is a continuous linear function.

To show that \textit{(v)} implies \textit{(iv)} let $\Omega: A\oplus C\rightarrow B$ be an isomorphism such that  the diagram \eqref{dia} commutes and define $r:B\rightarrow A$ by $r=\pi_A\circ\Omega^{-1}$, where $\pi_A:A\oplus C\rightarrow A$ is the projection given by $\pi_A(a,c)=a$ for every $(a,c)$ in $A\oplus C$.
$r$ is clearly linear, continuous and satisfies $r\circ f= \operatorname{Id}_A$.
Therefore, $r$ is a retraction.
Moreover, by the fact that $r\circ f= \operatorname{Id}_A$, $\operatorname{Im}(r)=A$, which  is closed.

We turn to the proof of \textit{(iii)} implies \textit{(v)}.
Let $r:B\rightarrow A$ be a continuous linear retraction of $f:A\rightarrow B$.
Define  $\Theta: B \rightarrow A\oplus C$ by $\Theta(b)=\left( r(b),g(b)\right) $.
It is clear that $\Theta$ is linear, continuous and it makes the diagram in \eqref{dia} commutative.
To check injectivity, suppose that $\Theta(b)=0$ for some $b$ in $B$.
In particular $g(b)=0$, thus $b$ belongs to $\operatorname{Im}(f)$. 
Let $a\in A$ be such that $f(a)=b$. 
Then, as $\Theta(b)=0$ also implies that $r(b)=0$ we have that $0=\left( r\circ f\right) (a)=a$.
We conclude that $b=0$ and therefore, $\Theta $ is injective.   
To see that it is surjective take an element $(a,c)$ in $A\oplus C$.
By surjectivity of $g$ there exists an element $b'$ in $B$ such that $g(b')=c$.
Then, by exactness of \eqref{george3}, applying $g$ to any element of the form $b=b'+f(a')$ with $a'$ in $A$ also gives $c$ as a result.
It remains to find an adequate element $a'$ in $A$ such that $r(b)=a$.
A simple calculation shows that $a'=a-r(b')$ satisfies this condition.
Therefore taking $b=b'+f\left(a-r(b') \right)$ we have:
	\begin{equation*}
		\Theta(b)= \left( r\left( b'\right) +r\circ f\left( a-r(b') \right) ,g\left( b'\right) +g\circ f\left( a' \right) \right) =\left( a,c\right)\mbox{,}
	\end{equation*} 
where we have used that $r\circ f=\operatorname{Id}_A$ and that $g\circ f= 0$.
Thus, $\Theta$ is surjective.
By the Open Mapping Theorem, we conlude thet $\Theta$ is an isomorphism.

Finally we prove \textit{(i)} implies \textit{(v)}.
Let  $s:C\rightarrow B$ be a continuous linear section of $g:B\rightarrow C$.
Define $\Omega: A\oplus C \rightarrow B$ by $\Omega(a,c)=f(a)+s(c)$.
It is clear that $\Omega$ is linear, continuous  and it makes the diagram in \eqref{dia} commutative.
To see that it is surjective take an element $b$ in $B$ and write
	\begin{equation*}	
		b=b-(s\circ g)(b)+ (s\circ g)(b)\mbox{,}
	\end{equation*}
where clearly $b_2:=(s\circ g)(b)$ belongs to $\operatorname{Im}(s)$. 
The element $b_1:=b-(s\circ g)(b)$ satisfies:
	\begin{equation*}
		g(b_1)=g(b-(s\circ g)(b))=g(b)-g(b)=0.
	\end{equation*}
Thus, $b_1$ belongs to $\operatorname{Ker}(g)=\operatorname{Im}(f)$.
Therefore, $\Omega$ is surjective.

To see that $\Omega$ is injective suppose that $\Omega(a,c)=f(a)+s(c)=0$.
We must show that $a=0$ and $c=0$.
But $f(a)+s(c)=0$ implies $s(c)=f(-a)$.
Applying $g$, we obtain $c=\left( g\circ s\right) (c)=\left( g\circ f\right)(-a)=0$ by exactness of \eqref{george3}.
Then, $c=0$ and so, $f(a)=s(-c)=0$, which in turn  implies, by injectivity of $f$,  that $a$ is also equal to zero.
   
Thus, $\Omega$ is bijective and by the Open Mapping Theorem we conclude that it is an isomorphism.
\end{proof}
\begin{rem}
Proposition \ref{ramiro} shows that a short exact sequence of Fr\'echet spaces splits if any, and therefore all, of the properties listed in the proposition holds.
  
In addition, in the proof of \textit{(v)} implies \textit{(ii)} and of \textit{(v)} implies \textit{(iv)} the fact that the spaces are complete has not been used.
Therefore, this implications also hold for TVS in general:

If the short exact sequence of TVS
	\begin{equation} 
		0\rightarrow A \overset{f}{\rightarrow} B \overset{g}{\rightarrow} C \rightarrow 0 
	\end{equation}
splits, then 
	\begin{itemize}
		\item[(i)]there exists a continuous linear section $s:C\rightarrow B$ of $g:B\rightarrow C$ such that $\operatorname{Im}(s)$ is closed.
		\item[(ii)]there exists a continuous linear retraction $r:B\rightarrow A$ of $f:A\rightarrow B$ such that $\operatorname{Im}(r)$ is closed.
	\end{itemize}
\end{rem}
\begin{lem} \label{Mullova}The following statements are equivalent:
	\begin{itemize}
		\item[(i)] The short exact sequence of Fr\'echet spaces splits:
			\begin{equation} \label{george}
 				0\rightarrow A \overset{f}{\rightarrow} B \overset{g}{\rightarrow} C \rightarrow 0. 
			\end{equation}
		\item[(ii)] The dual short exact sequence of TVS $($provided with the weak star topology$)$ splits:
			\begin{equation} \label{harrison}
 				0\rightarrow C' \overset{g'}{\rightarrow} B' \overset{f'}{\rightarrow} A' \rightarrow 0. 
			\end{equation}
	\end{itemize}
\end{lem}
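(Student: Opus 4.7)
The plan is to establish both implications by transposing the splitting data, using Proposition \ref{siqueiros} (continuity of the transpose under the weak star topology), Proposition \ref{ramiro} (the many equivalent formulations of splitting), and the basic fact of weak-star reflexivity $(E',\sigma^*)' \cong E$ that is already invoked in Proposition \ref{snoopy}.

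For \textit{(i)} $\Rightarrow$ \textit{(ii)}: Assuming \eqref{george} splits, Proposition \ref{ramiro} furnishes a continuous section $s\colon C\to B$ of $g$ and a continuous retraction $r\colon B\to A$ of $f$; since both come from a single isomorphism $\Omega\colon A\oplus C\to B$, they satisfy the extra relations $r\circ s=0$ and $f\circ r+s\circ g=\operatorname{Id}_B$. By Proposition \ref{siqueiros}, the transposes $s'\colon B'\to C'$ and $r'\colon A'\to B'$ are $\sigma^*$-continuous. The identities $(\alpha\circ\beta)'=\beta'\circ\alpha'$ turn the previous relations into $s'\circ g'=\operatorname{Id}_{C'}$, $f'\circ r'=\operatorname{Id}_{A'}$, $s'\circ r'=0$ and $g'\circ s'+r'\circ f'=\operatorname{Id}_{B'}$. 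I will then define
\[
\Omega^*\colon C'\oplus A'\longrightarrow B', \qquad (\phi,\psi)\longmapsto g'(\phi)+r'(\psi),
\]
with candidate inverse $\chi\mapsto(s'(\chi),f'(\chi))$. A short calculation using the four relations verifies that these are mutual inverses; both maps are $\sigma^*$-continuous, so $\Omega^*$ exhibits the splitting of \eqref{harrison}.

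For \textit{(ii)} $\Rightarrow$ \textit{(i)}: Suppose the dual sequence \eqref{harrison} splits. The arguments in Proposition \ref{ramiro} from (v) to (ii) and (v) to (iv) only invoke linear algebra plus continuity of the constituent maps, so they work in the category of TVS (as noted in the remark following that proposition). Hence we obtain a $\sigma^*$-continuous retraction $\rho\colon B'\to C'$ with $\rho\circ g'=\operatorname{Id}_{C'}$. Applying Proposition \ref{siqueiros} once more, the transpose $\rho'\colon (C',\sigma^*)'\to(B',\sigma^*)'$ is continuous for the respective $\sigma^*$-topologies. Invoking the weak-star reflexivity $(E',\sigma^*)'=E$ recalled in the proof of Proposition \ref{snoopy}, this transpose becomes a linear map $\rho'\colon C\to B$ characterized by $\chi(\rho'(c))=\rho(\chi)(c)$ for all $\chi\in B'$, $c\in C$. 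Dualizing the identity $\rho\circ g'=\operatorname{Id}_{C'}$ and using Hahn-Banach to separate points, one checks directly that $g\circ\rho'=\operatorname{Id}_C$.

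The main obstacle is then to upgrade $\rho'$ from a weakly continuous map between Fréchet spaces to a genuinely continuous one, since only then can Proposition \ref{ramiro} be used to conclude that \eqref{george} splits. I will handle this via the Closed Graph Theorem (a consequence of the Open Mapping Theorem stated earlier in the chapter): if $c_n\to c$ in $C$ and $\rho'(c_n)\to b$ in $B$, then for every $\phi\in B'$ the composition $\phi\circ\rho'$ is a weakly continuous linear functional on $C$, hence lies in $C'$ and is continuous; combining this with the continuity of $\phi$ on $B$ shows $\phi(\rho'(c))=\phi(b)$ for every $\phi\in B'$, so $\rho'(c)=b$ by Hahn-Banach. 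The graph is therefore closed, so $\rho'\colon C\to B$ is continuous, and Proposition \ref{ramiro} yields the splitting of \eqref{george}, completing the proof.
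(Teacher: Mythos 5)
Your proof is correct, and while your direction \textit{(i)} $\Rightarrow$ \textit{(ii)} coincides in substance with the paper's (transpose the section and retraction coming from the isomorphism $A\oplus C\cong B$ and assemble them into an isomorphism $C'\oplus A'\cong B'$, with continuity supplied by Proposition \ref{siqueiros}), your argument for \textit{(ii)} $\Rightarrow$ \textit{(i)} is genuinely different. The paper builds an internal complement of $f(A)$ inside $B$: it sets $\tilde C={}^\perp R(A')$ for a section $R$ of $f'$, proves $f(A)={}^\perp g'(C')$, and shows $B=f(A)\oplus\tilde C$ by combining the Hahn--Banach Theorem with part \textit{(ii)} of Proposition \ref{bipo} (the bipolar identity, which is exactly the point where the weak star topology is indispensable, cf. Proposition \ref{snoopy}); finally $g|_{\tilde C}$ is an isomorphism onto $C$ by the Open Mapping Theorem. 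You instead transpose a retraction $\rho$ of $g'$ back to a map $\rho'\colon C\to B$ via the identification $(E',\sigma^*)'=E$, verify $g\circ\rho'=\operatorname{Id}_C$ by separating points with Hahn--Banach, and upgrade the a priori merely weak continuity of $\rho'$ to genuine continuity by a closed-graph argument (noting that $\phi\circ\rho'=\rho(\phi)\in C'$ for every $\phi\in B'$, so the graph is sequentially closed). Both routes are sound. Yours is shorter and more symmetric, and it isolates precisely where completeness of the spaces enters (only in the closed-graph step); on the other hand it leans on two facts the paper does not develop --- the Closed Graph Theorem for Fr\'echet spaces (a standard corollary of the Open Mapping Theorem, via the projection from the graph) and the identification of $(E',\sigma^*)'$ with $E$ (which the paper does invoke, without proof, in Proposition \ref{snoopy}) --- whereas the paper's construction stays within the orthogonality machinery it has already set up and, as a bonus, produces the explicit closed complement $\tilde C$ of $f(A)$ that is reused later in the proof of Proposition \ref{gershwin}.
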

\begin{proof}

To prove that \textit{(i)} implies \textit{(ii)} we will show that there is an isomorphism  $\Omega: B' \rightarrow C'\oplus A' $ such that the following diagram commutes
	\begin{equation}\label{ivan}
		\xymatrix{
		C' \ar@{=}[d]_{} \ar[r]^{g'}& B'\ar[d]^{\Omega}\ar[r]^{f'}& A'\ar@{=}[d]_{}\\
		C'\ar@{^{(}->}[r]^{i{\hspace*{0.5 cm}}}& C'\oplus A' \ar@{->>}[r]^{\pi{\hspace*{-0.2 cm}}}& A'}
	\end{equation}
where $i:A\hookrightarrow A\oplus C$ is the canonical inclusion and $\pi: A\oplus C \twoheadrightarrow C $ is the canonical projection.
Indeed, let $T:C\rightarrow B$ be the continuous linear section of $g:B\rightarrow C$ in  \eqref{george} which satisfies $g\circ T= Id_C$. 
We define
	\begin{align}\label{PO}
	   \Omega: B'&\rightarrow C'\oplus A'
	   \\
	   \Psi &\mapsto \left(T'\left( \Psi\right) ,f'\left( \Psi \right) \right) \nonumber.
	\end{align}
It is immediate to see that $\Omega$ is linear. 
As $f$ is continuous and $A'$  and $B'$ have the weak star topology, then $f'$ is continuous. 
As $T$ is continuous and $B'$  and $C'$ have the weak star topology, then $T'$ is continuous (see Proposition \ref{siqueiros}).
Thus, $\Omega$ is continuous. 
Moreover, it makes \eqref{ivan} a commutative diagram. 
Indeed, if $\varphi$ is an element of $C'$ we write
	\begin{equation*}
		\begin{split}
			\left( \Omega\circ g' \right) (\varphi)&=\left( \left( T'\circ g' \right) (\varphi), \left( f'\circ g' \right)(\varphi) \right)=\left(\left(  g\circ T \right)'  (\varphi),0 \right)\\ &=  \left( \left( \operatorname{Id}_C\right) '(\varphi),0 \right) =  \left( \operatorname{Id}_{C'} (\varphi),0 \right)=\left(\varphi,0 \right)=i(\varphi)\mbox{,}
		\end{split} 
	\end{equation*}
where we have used the fact that $ f'\circ g'=0$ by exactness of \eqref{harrison} and the fact that $ g\circ T=\operatorname{Id}_C$.
The commutativity of the right hand side square in \eqref{ivan} follows immediately from the definition of $\Omega$.
Therefore, $\Omega$ defined as in \eqref{PO} makes \eqref{ivan} a commutative diagram.
 
To check injectivity, suppose $\Omega\left( \Psi\right)=0$. 
Then, 
$T'\left( \Psi\right)=\Psi\circ T=0$ and $f'\left( \Psi\right)=\Psi\circ f=0$. 
As \eqref{george} splits, for every $b$ in $B$ there exist unique elements $a\in A$ and $c\in C$ such that $b=f(a)+T(c)$  
(see proof of \textit{(i)} implies \textit{(v)} in Proposition  \ref{ramiro}).
Then for every $b$ in $B$ we have that 
	\begin{equation*}
		\Psi(b)=\Psi\left( f(a)+T(c)\right) =f'(\Psi)(a)+T'(\Psi)(c)=0\mbox{,}
	\end{equation*} 
so that $\Psi\equiv 0 $.

To check surjectivity, let $(\phi,\varphi)\in C'\oplus A'$. 
Define $\Psi$ in $B'$ as $\Psi(b)=\phi(c)+\varphi(a)$ where $a$ in  $A$ and $c$ in $C$ are the unique elements such that $b$ is written as $b=f(a)+T(c)$. 
Then,
	\begin{equation*}
		\Omega(\Psi)(c,a)=\left( \Psi\circ T(c),\Psi\circ f(a)\right)=\left( \phi(c),\varphi(a)\right) \ \forall (c,a)\in C\oplus A\mbox{,}
	\end{equation*}
so that $\Omega(\Psi)=\left( \phi,\varphi\right)$.
	
To prove that \textit{(ii)} implies \textit{(i)} we will show that $B=f(A)\oplus \tilde{C}$ where $ \tilde{C}\simeq C$. 
Denote by $R:A'\rightarrow B'$ the continuous linear section of $f': B'\rightarrow A'$ in  \eqref{harrison} such that  $f'\circ R= Id_{A'}$. 
Define 
	\begin{equation*}
		\tilde{C}:=^\perp R(A')\mbox{,}
	\end{equation*}
recalling that
	\begin{equation*}
		^\perp R(A')=\left\lbrace b\in B: \mu(b)=0\  \forall \mu\in R(A') \right\rbrace=\bigcap_{\mu\in R(A')}\operatorname{Ker}\left(  \mu\right).
	\end{equation*}
From the above definition it is immediate to see that $\tilde{C}$ is a closed subspace of $B$.

Now we turn to show that $f(A)\cap \tilde{C}=\left\lbrace  0\right\rbrace  $ and to do so let us suppose there is a nonzero $b$ in $f(A)\cap \tilde{C}$. 
As $b$ belongs to $f(A)$ and $f$ is injective there exists a unique element $a$ in $A$ such that $b=f(a)$. 
On the other hand, as $b$ belongs to $\tilde{C}$  we have that $R(\lambda)(b)=0$  for every $\lambda$ in $A'$. 
Then
	\begin{equation} \label{bach}
		0=R(\lambda)(b)=R(\lambda)\left( f(a)\right)=f'\left( R(\lambda)\right)(a)=\lambda(a) \ \forall \lambda \in A'\mbox{,}
	\end{equation}
where we have used the fact that $f'\circ R=Id_{A'}$ in the last equality. 
As $b\neq 0$ we have that $a\neq 0$,  then by the Hahn-Banach Theorem (see Theorem \ref{hahnbanach}) there exists $\lambda$ in $A'$ such that $\lambda (a)\neq 0$, in contradiction with \eqref{bach}. 
This proves what we wanted.

In order to show that $B$ is contained in $ f(A)\oplus \tilde{C}$ we first define:
	\begin{equation*}
		\tilde{A}:=^\perp g'(C')\mbox{,}
	\end{equation*}
recalling that
	\begin{equation*}
		^\perp g'(C') =\left\lbrace  b\in B: \mu(b)=0\ \forall \mu \in g'(C') \right\rbrace.
	\end{equation*}
We assert that $\tilde{A}=f(A)$. 
We begin by proving the inclusion to the right: take $b$ in $\tilde{A}$, then $\mu(b)=0$ for every $\mu$ in $g'(C')$. 
Suppose that $g(b)\neq 0$; then, there is an element $\phi$ of $ C'$, with $\phi\neq 0$, such that $\phi(g(b))=g'(\phi)(b)\neq 0$ and  $g'(\phi)$ belongs to $g'(C')$. 
Contradiction. 
So it must be $g(b)=0$ which implies that  $b$ belongs to $\operatorname{Ker} \left( g\right) =\operatorname{Im}\left( f\right) $. 
To prove the inclusion to the left, let $f(a)$ be in $f(A)$ and take $\mu$ in $g'(C')$. 
As $\mu=g'(\phi)$ for some $\phi$ in $C'$, we can write: 
	\begin{equation*}
		\mu(f(a))=g'(\phi)(f(a))=\left( \phi\circ g \right) (f(a))=\phi\left( \left(g \circ f\right)(a)\right) =0\mbox{,}
	\end{equation*}
where we have used the fact that $g \circ f=0$ in the  last equality. 
From this we conclude that $f(a)$ belongs to $\tilde{A}$. 

Suppose there exists some element $b$ in $B\setminus f(A)\oplus \tilde{C}=B\setminus \tilde{A}\oplus \tilde{C}$. 
Define the linear functional $\omega:\braket{b}\oplus \tilde{A}\oplus \tilde{C}\rightarrow \mathbb{C}$ by the following requirements:
	\begin{equation}
		\begin{split}
			\omega|_{\tilde{A}\oplus \tilde{C}}&= 0\mbox{,}\\
			\omega(b)&= 1.
		\end{split}
	\end{equation}
Note that $\omega$ is continuous by the fact that its kernel is closed.
Applying the Hahn-Banach Theorem, it follows that there exists some $\Omega$ in $B'$ such that
	\begin{equation*}
		\Omega|_{\braket{b}\oplus \tilde{A}\oplus \tilde{C}}=\omega.
	\end{equation*} 
But then $\Omega\neq 0$ and $\Omega|_ {\tilde{A}\oplus \tilde{C}}=0$,  so that 
	\begin{equation*}
		0\neq \Omega\in\left( \tilde{A}\oplus \tilde{C}\right)^{\perp}= \tilde{A}^\perp\cap \tilde{C}^\perp=\left[^\perp g'(C')\right] ^{\perp}\cap \left[^\perp R(A')\right] ^{\perp}=\overline{g'(C')}\cap \overline{R(A')}\mbox{,}
	\end{equation*}
where we have used \textit{(ii)} of Proposition \ref{bipo} in the last equality.
In addition, both $R(A')$ and $g'(C')$ are closed in the weak star topology.
This is due to the fact that $g'(C')=\operatorname{Ker}\left(  f'\right) $, with $f'$ continuous; and by the fact that $R(A')$ is isomorphic to the closed space $A'$.
Therefore $\overline{g'(C')}\cap \overline{R(A')}=g'(C')\cap R(A')$, and we conclude:
	\begin{equation*}
		0\neq \Omega\in g'(C')\cap R(A').
	\end{equation*}
But $g'(C')\cap R(A')=\left\lbrace 0\right\rbrace $, so we have reached a contradiction which came from assuming the existence of some element $b$ in $B\setminus \tilde{A}\oplus \tilde{C}$.
Then we conclude $B=\tilde{A}\oplus \tilde{C}$.    

Finally, let us show that $\tilde{C}\simeq C$. 
For this purpose, consider the restriction $g|_{\tilde{C}}:\tilde{C}\rightarrow C$. 
On the one side, $\operatorname{Ker} \left( g|_{\tilde{C}}\right) = \operatorname{Ker}\left(  g\right) \cap \tilde{C}=f(A)\cap \tilde{C}=\left\lbrace 0\right\rbrace $  (the last equality having been proved above), so that $g|_{\tilde{C}}$ is injective. 
Now, let $c$ be in  $C$. 
As $g$ is surjective there exists an element $b$ of $B$ such that $g(b)=c$. 
It has been shown that $B=f(A)\oplus \tilde{C}$ so that there exist unique elements $a$ in $A$ and $\tilde{c}$ in $\tilde{C}$ such that $b=f(a)+\tilde{c}$. 
Then,
	\begin{equation*}
		g|_{\tilde{C}}(\tilde{c})=g(\tilde{c})=g(f(a)+\tilde{c})=g(b)=c\mbox{,}
	\end{equation*}
where the fact that $g\circ f=0$ has been used in the second equality. 
It follows from this,  that $g|_{\tilde{C}}$ is surjective. 
To complete the proof we note that by the Open Mapping Theorem, $g|_{\tilde{C}}:\tilde{C}\rightarrow C$ is an isomorphism.
\end{proof}
\begin{rem}\label{seqspa23}
Note that in proving that \textit{(ii)} implies \textit{(i)} in Lemma \ref{Mullova} we have  used \textit{(ii)} of Proposition \ref{bipo}, which in turn holds only when duals are provided with the weak star topology (see Proposition \ref{snoopy}). 
If we wanted to prove the same result in the case in which the duals have the strong topology we would not be able to use \textit{(ii)} of Proposition \ref{bipo}.
In fact, if the duals are considered with the strong topology, the fact that \textit{(ii)} implies \textit{(i)} in Lemma \ref{Mullova} is false. 
It does not even hold for Banach spaces, as can be seen in the following example.
\end{rem}
\begin{example}\label{seqspa}
Let \label{linfc0}$\ell_{\infty}$ be the space of bounded complex valued sequences and let $c_0$ be the space of complex valued sequences whose limit is zero.
Then, the short exact sequence
	\begin{equation}\label{gaga}
		0\rightarrow c_0 \overset{i}{\rightarrow} \ell_\infty \overset{\pi}{\rightarrow} \ell_\infty / c_0 \rightarrow 0
	\end{equation}
does not split.  
However, the dual short exact sequence
	\begin{equation}\label{dual}
		0\rightarrow \left(\ell_{\infty}/c_0 \right)' \overset{\pi'}{\rightarrow}\ell_{\infty}'\overset{i'}{\rightarrow}c_0' \rightarrow 0
	\end{equation}
splits if the dual spaces are given the strong topology.

We omit the very technical proof of the fact that \eqref{gaga} does not split.
The reader is referred to \cite{conway}, Ch.  III, \S 13. 
By  Lemma \ref{Mullova} we know that \eqref{dual} does not split if the dual spaces are given the weak star topology.
Contrary to this, we are going to show that  \eqref{dual} splits if the duals have the strong topology instead. 

We begin by noting that $c_0'$ is isometrically isomorphic to $\ell_1$ defined by
	\begin{equation}\label{elle1}
		\ell_1:=\left\lbrace\left(a_i \right)_{i\in\mathbb{N}} : a_i\in \mathbb{C}\  \forall i\in\mathbb{N}\mbox{,}\mbox{ and }\sum_{i=1}^\infty \left|a_i \right|<\infty \right\rbrace.
	\end{equation}
Indeed, we can define $\psi:\ell_1 \rightarrow c_0' $ which on each element $a= \left(a_i \right)_{i\in\mathbb{N}}$ of $\ell_1$ is defined by
	\begin{align*}
		\psi(a):c_0&\rightarrow \mathbb{C}\\
		b &\mapsto  \sum_{i=1}^{\infty} a_i b_i  \nonumber\mbox{,}
	\end{align*}
where $b$ denotes a sequence $\left(b_i \right)_{i\in\mathbb{N}}$ in $c_0 $.
For each $a$ it is clear that $\psi(a)$ is linear.
It is continuous by the following bound:
	\begin{equation}\label{boundi1}
		\left| \psi(a)(b) \right| =  \left| \sum_{i=1}^{\infty}a_i b_i \right| \leq\left\| b\right\|_\infty \sum_{i=1}^\infty \left| a_i \right|\leq \left\| b\right\|_\infty \left\| a\right\|_1.  
	\end{equation}

It is clear that $\psi$ is linear.
Restricting equation \eqref{boundi1} to elements $b$ of unitary norm and taking supremum over all such elements we have
	\begin{equation*}
		\left\| \psi(a) \right\| \leq \left\| a\right\|_1\mbox{,}
	\end{equation*}
so that $\psi$ is continuous.
Moreover, if $\psi(a)=0$, evaluating on the sequence $e^{\left( k\right)}$ in $c_0$ defined by $\left( e^{\left( k\right)}\right) _i=\delta_{ki}$ for each $k$ in $\mathbb{N}$, we arrive at:
	\begin{equation*}
		0=\psi(a)\left( e^{\left( k\right)}\right) = \sum_{i=1}^{\infty} a_i e^{\left( k\right)}_i= \sum_{i=1}^{\infty} a_i \delta_{ki}=a_k. 
	\end{equation*}
Thus, $a=0$ and we conclude $\psi$ is injective.

Given $\mu$ in $c_0'$ let $a$ be the sequence defined by $a_i=\mu\left( e^{\left( i\right)}\right)  $ where $e^{\left( i\right)}$ was defined before.
We assert that
	\begin{itemize}
		\item[1.]$a$ belongs to $\ell_1$  and that 
		\item[2.]$\psi(a)=\mu$, thus $\psi$ is surjective.
	\end{itemize}
To prove the first assertion, we define the numbers 
	\begin{equation}
		\alpha_i= 
		\left\{ 						\begin{array}{lcc}
				\frac{\left|\mu\left( e^{\left( i\right)}\right) \right| }{\mu\left( e^{\left( i\right)}\right)}\mbox{,} & \mbox{if} & \mu\left( e^{\left( i\right)}\right)\neq 0\mbox{,} \\
				0\mbox{,} &  \mbox{if} & \mu\left( e^{\left( i\right)}\right)=0.
			\end{array}
		\right.
	\end{equation}
And now consider for each positive integer $k$ the sequence $b^k=\sum_{i=1}^{k}\alpha_i e^i$.
Then, we have that $b_k$ belongs to $c_0$ and that $\left\| b^k \right\|_{\infty} $ is either 1 or 0  for every $k$.
Then we can write for each positive integer $k$:
	\begin{equation*}
		\sum_{i=1}^{k} \left| \mu\left(  e^{\left( i\right)}  \right) \right| = \sum_{i=1}^{k} \alpha_i  \mu\left(e^{\left( i\right)}\right) = \left| \mu\left( b^k\right) \right|\leq\left\| \mu \right\|  \left\|   b^k\right\|_\infty\leq  \left\| \mu \right\|.
	\end{equation*} 
Then, as $\mu$ is continuous its norm is bounded and taking the limit of $k$ tending to infinity, we obtain
	\begin{equation*}
		\left\| a \right\|_1=\sum_{i=1}^{\infty} \left| \mu\left(  e^{\left( i\right)}\right) \right|\leq  \left\| \mu \right\|
	\end{equation*}
from where it follows that $a$ belongs to $\ell_1$.

The second assertion follows from
	\begin{equation*}
		\begin{split}
			\psi(a)(b)&=\sum_{i=1}^{\infty} a_i b_i= \sum_{i=1}^{\infty} \mu\left( e^{\left( i\right)}\right)  b_i = \sum_{i=1}^{\infty} \mu\left( b_i e^{\left( i\right)}\right) =\lim\limits_{N\rightarrow \infty}  \sum_{i=1}^{N} \mu\left( b_i e^{\left( i\right)}\right) \\ &=\mu\left( \lim\limits_{N\rightarrow \infty}  \sum_{i=1}^{N} b_i e^{\left( i\right)}\right)\mbox{,}
		\end{split}
	\end{equation*}
and the fact that in $c_0$ the following limit holds
	\begin{equation*}
		\lim\limits_{N\rightarrow \infty}  \sum_{i=1}^{N} b_i e^{\left( i\right)}=b.
	\end{equation*}
This concludes the proof of the fact that  $c_0'$ is isometrically isomorphic to $\ell_1$.

Turning to the space $\ell_\infty'$, we assert that it is isometrically isomorphic to the space $ba$ defined by:
	\begin{equation}\label{badef}
		ba=\left\lbrace \nu: \mathcal{P}(\mathbb{N})\rightarrow \mathbb{R} : \nu \mbox{ is finitely additive } \mbox{ and } \forall E\subseteq\mathbb{N}\mbox{,}\ \sup_{\pi}\sum_{A\in\pi}\left|\nu(A) \right|<\infty \right\rbrace\mbox{,}
	\end{equation}
where $\pi$ in the definition above runs through all partitions of $E$.
Note that the supremum increases with respect to set inclusion and taking $E=\mathbb{N}$ we obtain a uniform bound for every $E$ contained in $\mathbb{N}$.
Also, $ba$ is a vector space with the usual definitions of addition and scalar multiplication.
Moreover, $ba$ is a Banach space defining the following norm
	\begin{equation}\label{ooooooooo}
		\left\|  \nu\right\|= \sup_{\pi}\sum_{A\in\pi}\left|\nu(A) \right|<\infty\mbox{,}
	\end{equation}
where $\pi$ in the definition above runs through all partitions of $\mathbb{N}$.
The norm \eqref{ooooooooo} is known as the \emph{total variation} of $\nu$\index{total variation of a measure}.

Considering the measurable space $\left(\mathbb{N},\mathcal{P}(\mathbb{N})\right) $, each element $\nu$ in $ba$ defines a continuous functional of $\ell_\infty$ in the following way:
	\begin{align}\label{jjjjjjj}
		\ell_\infty&\rightarrow \mathbb{R}\\
		a &\mapsto  \int_{\mathbb{N}}a\ \mbox{d}\nu \nonumber\mbox{,}
	\end{align}
and this identification defines an isomorphism between $\ell_\infty'$ and  $ba$ (for a description of the consrtuction of the integral in \eqref{jjjjjjj} when $\nu$ is finitely additive, see \cite{diestel}, p. 77). 

Note that if $\nu$ is $\sigma$-additive, the integral in the definition above coincides with the sum:
	\begin{equation*}
		\int_{\mathbb{N}}a\ \mbox{d}\nu=\sum_{i\in\mathbb{N}}a_i\nu\left(\left\lbrace i\right\rbrace  \right)\mbox{,}
	\end{equation*}
where $a=\left(a_i \right)_{i\in\mathbb{N}}$.

These isomorphisms, namely $\ell_\infty'\sim ba$ and $c_0'\sim \ell_1$ give the following short exact sequence which is isometrically isomorphic to \eqref{dual}:
	\begin{equation}\label{dual2}
		0\rightarrow \left(\ell_{\infty}/c_0 \right)' \rightarrow  ba  \overset{\phi}{\rightarrow}  \ell_1 \rightarrow  0
	\end{equation} 
where the morphism $\phi$ is given by
	\begin{align*}
		\phi:  ba &\rightarrow \ell_1\\
		\nu&\mapsto \left( \nu\left(\left\lbrace i\right\rbrace  \right) \right)_{i\in \mathbb{N}}
	\end{align*} 
and where we also have the isomorphism $\operatorname{ker}\left( \phi\right) \sim  \left(\ell_{\infty}/c_0 \right)'$, which is a closed subspace of $ba$.

We seek to show that \eqref{dual2} splits (which is equivalent to the splitting of \eqref{dual}).
To see this, it suffices to show that there exist some closed subspace of $ba$ which complements $\operatorname{ker}\left(  \phi\right) $.

Consider the subspace $ca$ of $ba$ of signed, countably additive and bounded measures
	\begin{equation}\label{cadef}
		ca= \left\lbrace \nu\in ba: \nu \mbox{ is } \sigma\mbox{-additive} \right\rbrace .
	\end{equation}
The space $ca$ is closed in $ba$, because it is complete with respect to the norm \eqref{ooooooooo} (see \cite{langlang93}, p.199).
In fact, $\phi$ induces an isomorphism between $ca$ and $\ell_1$. 
We briefly prove this last statement.
The morphism $\phi$ restricted to $ca$ is inyective because every countably additive measure $\nu$ is completely determined by the sequence of values $\nu(\{n\})$ with $n$ in $\mathbb{N}$.
On the other hand, given any real sequence $a = \left( a_{n}\right)_{n\in\mathbb{N}}$ in $\ell_{1}$, the arrow $\nu_{a} : \mathcal{P}(\mathbb{N}) \rightarrow \mathbb{R}$ defined by $\nu_{a}(X) = \sum_{n \in X} a_{n}$ is well defined, is $\sigma$-additive by definition and it is bounded.
Therefore, $\nu_{a}$ belongs to $ca$. 
Moreover, $\phi(\nu_{a}) = a$ by construction.
Thus, the restriction of $\phi$ to $ca$ is surjective.
To complete the proof we finally note that
\begin{equation*}
\left\|  \nu\right\|=\left\|  \phi(\nu)\right\|_1
\end{equation*}
for every $\nu$ in $ca$.
The above equality follows from the next argument.
Take $\nu$ in $ca$.
On the one hand, for every $N$ in $\mathbb{N}$ we have
	\begin{equation*}
		\sum_{i=1}^N \left| \nu\left(\left\lbrace i \right\rbrace  \right) \right|\leq \sum_{i=1}^N \left| \nu\left(\left\lbrace i \right\rbrace  \right) \right|+\left| \nu\left(\mathbb{N}\setminus \left\lbrace 1,\dots , N\right\rbrace  \right) \right|\mbox{,}
	\end{equation*}
and the right hand side of the above inequality is a sum over the partition
	\begin{equation*}
		\pi=\left\lbrace \left\lbrace 1\right\rbrace,\dots,\left\lbrace N\right\rbrace ,\mathbb{N}\setminus \left\lbrace 1,\dots , N\right\rbrace  \right\rbrace
	\end{equation*}
of $\mathbb{N}$.
Thus, by definition of the norm in $ba$ \eqref{ooooooooo}, we have that for every $N$ in $\mathbb{N}$ 
	\begin{equation*}
		\sum_{i=1}^N \left| \nu\left(\left\lbrace i \right\rbrace  \right) \right|\leq \left\|\nu \right\| \mbox{,}
	\end{equation*}
and therefore, taking the limit $N\rightarrow \infty$, we have that $\left\| \phi\left(\nu \right) \right\|_1\leq \left\|\nu \right\|  $.

On the other hand, for any partition $\pi$ of $\mathbb{N}$, we have that
	\begin{equation*}
		\sum_{A\in\pi}\left|\nu\left( A \right)  \right|\leq\sum_{A\in\pi}\sum_{i\in A}\left|\nu\left( \left\lbrace i\right\rbrace \right)  \right|=\sum_{i\in \mathbb{N}}\left|\nu\left( \left\lbrace i\right\rbrace \right)  \right|=\left\|\phi\left(\nu \right)  \right\|_1.
	\end{equation*}
Taking supremum over all such partitions $\pi$ on the left hand side of the previous inequality we arrive at $\left\|\nu  \right\|\leq\left\|\phi\left(\nu \right)  \right\|_1$.

As $ca\cap \operatorname{ker} \left( \phi\right) =\left\lbrace 0\right\rbrace$, \eqref{dual2} splits in the category of Banach spaces, and therefore so does \eqref{dual}.
\end{example}
\begin{propo}\label{baca}
Note that when considering  $ba$ with the weak star topology, $ca$ is not a closed subspace.
\end{propo}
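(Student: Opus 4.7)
The plan is to exhibit an explicit net in $ca$ whose weak star limit in $ba=\ell_\infty'$ lies outside $ca$, thereby showing that $ca$ fails to be weak star closed. The natural candidates are the Dirac point-mass measures $\delta_n$, defined by $\delta_n(E)=1$ if $n\in E$ and $\delta_n(E)=0$ otherwise; each $\delta_n$ is $\sigma$-additive with total variation equal to $1$, so $\left(\delta_n\right)_{n\in\mathbb{N}}$ is a sequence in $ca$ lying in the closed unit ball of $ba$. Under the identification $ba\simeq \ell_\infty'$ given in Example \ref{seqspa}, $\delta_n$ corresponds to the evaluation functional $a=\left(a_i\right)_{i\in\mathbb{N}}\mapsto a_n$ on $\ell_\infty$.

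First, I would invoke the Banach-Alaoglu Theorem to conclude that the closed unit ball of $\left(ba,\sigma^\ast\right)$ is weak star compact. It follows that the sequence $\left(\delta_n\right)_{n\in\mathbb{N}}$ admits a convergent subnet $\left(\delta_{n_\alpha}\right)_{\alpha\in A}$ with weak star limit some $\mu\in ba$; by definition of $\sigma^\ast$ this means $a_{n_\alpha}\to \int_{\mathbb{N}}a\ \mbox{d}\mu$ for every $a$ in $\ell_\infty$. Next, I would show $\mu\notin ca$. For any fixed $n$ in $\mathbb{N}$, the cofinality property of a subnet ensures that eventually $n_\alpha>n$, so evaluating at $e^{\left(n\right)}\in\ell_\infty$ (which has components $\left(e^{\left(n\right)}\right)_i=\delta_{ni}$) gives $\delta_{n_\alpha}\bigl(e^{\left(n\right)}\bigr)=0$ eventually, hence $\mu\left(\left\{n\right\}\right)=0$. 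On the other hand, evaluating at the constant sequence $\mathbf{1}=\left(1,1,1,\dots\right)$ in $\ell_\infty$ yields $\delta_{n_\alpha}\left(\mathbf{1}\right)=1$ for every $\alpha$, so $\mu\left(\mathbb{N}\right)=1$. If $\mu$ were $\sigma$-additive, countable additivity over the partition $\mathbb{N}=\bigsqcup_{n\in\mathbb{N}}\left\{n\right\}$ would force
	\begin{equation*}
		1=\mu\left(\mathbb{N}\right)=\sum_{n\in\mathbb{N}}\mu\left(\left\{n\right\}\right)=0\mbox{,}
	\end{equation*}
which is absurd. Therefore $\mu$ lies in the weak star closure of $ca$ but not in $ca$, so $ca$ is not closed in $\left(ba,\sigma^\ast\right)$.

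The only subtlety, rather than a genuine obstacle, is keeping straight the functional picture of $\mu$ as a continuous linear form on $\ell_\infty$ versus the set-function picture of $\mu$ as a finitely additive measure on $\mathcal{P}\left(\mathbb{N}\right)$; these agree under the isomorphism of Example \ref{seqspa}, and the whole argument uses only evaluations on the two bounded sequences $e^{\left(n\right)}$ and $\mathbf{1}$, both of which visibly lie in $\ell_\infty$. Note that the same net $\left(\delta_{n_\alpha}\right)$ fails to be Cauchy in the strong topology $b^\ast$, which is consistent with $ca$ being closed in $\left(ba, b^\ast\right)$ as established in Example \ref{seqspa}.
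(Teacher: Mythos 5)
Your proof is correct, but it takes a genuinely different route from the paper's. The paper applies Proposition \ref{bipo}\textit{(ii)} to compute the weak star closure directly: it shows $^\perp ca=\left\lbrace 0\right\rbrace$ by testing against the same Dirac measures $\delta_n$ you use, and concludes via the bipolar identity that $\overline{ca}=\left(^\perp ca\right)^\perp = ba$; combined with $ca\subsetneq ba$ (which follows from the decomposition $ba=ca\oplus\operatorname{ker}\left(\phi\right)$ with $\operatorname{ker}\left(\phi\right)\simeq\left(\ell_\infty/c_0\right)'\neq 0$ established in Example \ref{seqspa}), this gives non-closedness. You instead invoke Banach--Alaoglu to extract a weak star convergent subnet of $\left(\delta_n\right)_{n\in\mathbb{N}}$ and verify by hand that its limit $\mu$ satisfies $\mu\left(\left\lbrace n\right\rbrace\right)=0$ for all $n$ while $\mu\left(\mathbb{N}\right)=1$, so $\mu\notin ca$. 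Your use of subnets rather than subsequences is the right call, since the unit ball of $\ell_\infty'$ is not weak star metrizable. What each approach buys: the paper's argument yields the stronger conclusion that $ca$ is weak star \emph{dense} in $ba$, at the cost of relying on the bipolar machinery and on the unstated fact $ca\neq ba$; yours produces a single concrete limit point outside $ca$ (a Banach-limit-type functional) and is self-contained modulo Banach--Alaoglu, but by itself establishes only non-closedness, not density.
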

\begin{proof}
We will show that
	\begin{equation*}
		\overline{ca}=ba\mbox{,}
	\end{equation*}
where the closure is taken with respect to the weak star topology on $ba$.

By Theorem \ref{bipo}, we know
	\begin{equation*}
		\overline{ca}=\left(^\perp  ca \right)^\perp.
	\end{equation*}
It suffices to show that  $^\perp  ca=\left\lbrace 0\right\rbrace $; in that case, taking perpendicular again, we obtain the whole space $ba$.

For each $i$ in $\mathbb{N}$, $\delta_i$ belongs to $ca$, where $\delta_i$ is the Dirac measure \index{Dirac measure} concentrated at $i$:
	\begin{equation*}
		\delta_i(A)=
		\left\{ 
			\begin{array}{lcc}
				1\mbox{,} & \mbox{if} & i\in A\mbox{,} \\
				0\mbox{,} &  \mbox{if} & i\notin A.
			\end{array}
		\right.
	\end{equation*} 
For each $a$ in $\ell_\infty$, 
	\begin{equation} \label{lololo}
		\int_{\mathbb{N}}a\ \mbox{d}\delta_i=\sum_{j\in\mathbb{N}}a_j\delta_i\left(\left\lbrace j\right\rbrace  \right) =a_i.
	\end{equation} 
On the other hand, if $a$ belongs to $^\perp ca $ then $a\perp \delta_i $ for every $i$ in $\mathbb{N}$.
Combining this fact with \eqref{lololo} we get $a_i=0$ for every $i$ and therefore, $a=0$.
\end{proof}
\begin{propo}\label{gershwin}
Let
	\begin{equation} \label{george321}
 		0\rightarrow A \overset{f}{\rightarrow} B \overset{g}{\rightarrow} C \rightarrow 0 
	\end{equation}
be a short exact sequence of Fr\'echet spaces.
Given a continuous linear section $T:C\rightarrow B$ of $g:B\rightarrow C$, so that \eqref{george321} splits by Lemma \ref{ramiro}, one can define a continuous linear section $R:A'\rightarrow B'$ of $f':B'\rightarrow A'$ in the dual short exact sequence of TVS provided with the weak star topology:
	\begin{equation} \label{harrison321}
 		0\rightarrow C' \overset{g'}{\rightarrow} B' \overset{f'}{\rightarrow} A' \rightarrow 0\mbox{,}
	\end{equation}
$($and which splits too by Lemma \ref{Mullova}$)$.
Reciprocally, given  a continuous linear section $R:A'\rightarrow B'$ of $f':B'\rightarrow A'$ in \eqref {harrison321} one can define a continuous linear retraction $I:B\rightarrow A$ of $f:A\rightarrow B$ in \eqref{george321} or a continuous linear section $T:C\rightarrow B$ of $g:B\rightarrow C$ in \eqref{george321}.
\end{propo}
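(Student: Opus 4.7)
The strategy is to repackage the content of Proposition \ref{ramiro} and Lemma \ref{Mullova}, using transposition in the forward direction and the explicit decomposition appearing in the proof of Lemma \ref{Mullova} in the reverse direction.

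For the forward direction, starting from a continuous linear section $T:C\to B$ of $g$, I first apply Proposition \ref{ramiro} (equivalence of (i) and (iv)) to obtain a continuous linear retraction $I:B\to A$ of $f$; explicitly, $I=\pi_A\circ\Omega^{-1}$ where $\Omega(a,c)=f(a)+T(c)$ is the splitting isomorphism. I then set $R:=I':A'\to B'$. Since transposition reverses the order of composition, the identity
\[
f'\circ R=f'\circ I'=(I\circ f)'=(\operatorname{Id}_A)'=\operatorname{Id}_{A'}
\]
shows that $R$ is a section of $f'$, and its continuity when both duals carry the weak star topology is exactly the content of Proposition \ref{siqueiros}.

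For the reverse direction, given a continuous linear section $R:A'\to B'$ of $f'$, I reuse the construction from the proof of (ii) $\Rightarrow$ (i) in Lemma \ref{Mullova}. Namely, I set $\tilde{C}:={}^{\perp}R(A')$, which is closed in $B$ as an intersection of kernels of continuous functionals, and aim to establish the decomposition $B=f(A)\oplus \tilde{C}$. Once this is in place the two maps are built as follows: the restriction $g|_{\tilde{C}}:\tilde{C}\to C$ is a continuous linear bijection between Fr\'echet spaces, hence an isomorphism by the Open Mapping Theorem, and $T:=(g|_{\tilde{C}})^{-1}$, viewed as a map into $B$, is the sought section of $g$; for the retraction, the direct sum decomposition provides a continuous projection $P:B\to f(A)$ (again by the Open Mapping Theorem applied to the addition map $f(A)\oplus\tilde{C}\to B$), and composing with the continuous inverse $f^{-1}:f(A)\to A$ of the isomorphism $f:A\to f(A)$ (noting that $f(A)=\operatorname{Ker}(g)$ is closed, hence Fr\'echet) produces a continuous linear retraction $I:B\to A$ of $f$.

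The main technical hurdle is the decomposition $B=f(A)\oplus\tilde{C}$, and this is where the proof essentially coincides with the core step of Lemma \ref{Mullova}. Two ingredients are needed: the Hahn-Banach Theorem (Theorem \ref{hahnbanach}) to produce, for any hypothetical $b\in B\setminus(f(A)\oplus\tilde{C})$, a nonzero functional in $B'$ vanishing on $f(A)\oplus\tilde{C}$; and Proposition \ref{bipo}(ii), which identifies the orthogonals $({}^{\perp}R(A'))^{\perp}$ and $({}^{\perp}g'(C'))^{\perp}$ with the weak star closures of $R(A')$ and $g'(C')$ respectively, together with the fact that both $R(A')$ and $g'(C')=\operatorname{Ker}(f')$ are already closed in the weak star topology on $B'$. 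This is precisely the place where the weak star, as opposed to strong, topology on the duals is essential, as highlighted by Proposition \ref{snoopy} and Example \ref{seqspa}; the remaining checks (injectivity, $f(A)\cap\tilde{C}=\{0\}$) are routine applications of Hahn-Banach and of the defining properties of $R$ and $f$.
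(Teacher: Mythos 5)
Your proposal is correct and follows essentially the same route as the paper: the forward direction transposes the retraction $I=\pi_A\circ\Omega^{-1}$ obtained from Proposition \ref{ramiro} and invokes Proposition \ref{siqueiros} for weak-star continuity, and the reverse direction reuses the decomposition $B=f(A)\oplus{}^{\perp}R(A')$ from the proof of Lemma \ref{Mullova} to produce $T=(g|_{\tilde{C}})^{-1}$ and the retraction. Your retraction, built as $f^{-1}$ composed with the projection onto $f(A)$, coincides with the paper's $\Pi_A\circ\Psi^{-1}$ where $\Psi=f+(g|_{\tilde{C}})^{-1}$, so the difference is purely one of packaging.
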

\begin{proof}
First, suppose we are given a continuous linear section $T:C\rightarrow B$ of $g:B\rightarrow C$ in \eqref{george321}. 
By Proposition \ref{ramiro} there is an isomorphism $\Omega:A\oplus C\rightarrow B$
such that the following diagram is commutative
	\begin{equation*}
		\xymatrix{A
		\ar@{=}[d]_{} \ar[r]^{f}& B\ar[r]^{g}& C\ar@{=}[d]_{}\\	A\ar@{^{(}->}[r]^{i {\hspace*{0.5 cm}}}& A\oplus C\ar[u]^{\Omega} \ar@{->>}[r]^{\pi {\hspace*{-0.2 cm}}}& C}
	\end{equation*}
where $i:A\hookrightarrow A\oplus C$ is the canonical inclusion and $\pi: A\oplus C \twoheadrightarrow C $ is the canonical projection. 
Next define $I:B\rightarrow A$ by 
	\begin{equation*}
		I=\pi_A \circ \Omega^{-1}\mbox{,}
	\end{equation*}
where $\pi_A:A\oplus C\rightarrow A$ is the projection. 
Then, $I$ is a continuous linear retraction of $f:A\rightarrow B$ in \eqref{george321}.

Now define $R:A'\rightarrow B'$ as $R=I'$ (\textit{i.e.}: For every $\varphi$ in $A'$, $R(\varphi)$ in $B'$ is defined on each $b$ in $B$ by: $\braket{R(\varphi),b}=\braket{\varphi,I(b)}$).
$R$ is continuous by Proposition \ref{siqueiros}.  

It remains to show that $f'\circ R=Id_{A'}$.
Let $\varphi$ be in $A'$. 
Then for every $a$ in $A$ we have:
	\begin{equation*}
		\begin{split}
			\braket{\left(f'\circ R\right)\left(\varphi\right),a}&= \braket{f'\left(R\left(\varphi\right)\right),a}=\braket{R\left(\varphi\right)\circ f,a}=\braket{R\left(\varphi\right),f(a)}\\
			&=\braket{\varphi,I(f(a))}=\braket{\varphi,a}\mbox{,}
		\end{split}
	\end{equation*}  
from which we conclude $f'\circ R=Id_{A'}$.

Now suppose we are given a continuous linear section $R:A'\rightarrow B'$ of $f':B'\rightarrow A'$ in \eqref {harrison321}.
From the proof of \textit{(ii)} implies \textit{(i)} of Lemma \ref{Mullova} we have that for every $b$ in $B$ there exist unique elements $c$ in $C$ and $a$ in $A$ such that $b=f(a)+\left( g|_{\tilde{C}}\right)^{-1}(c)$. 
This, together with the fact that  $\left( g|_{\tilde{C}}\right)^{-1}$ is an isomorphism by the proof of Lemma \ref{Mullova} and $f$ is continuous by assumption, imply that
	\begin{equation*}
		\Psi=f+\left( g|_{\tilde{C}}\right)^{-1}:A\oplus C\rightarrow B
	\end{equation*}
is a linear continuous  bijection; and by the Open Mapping Theorem, an isomorphism. 

Next, if $\Pi_A: A\oplus C\rightarrow A$ is the canonical projection, define $I:B\rightarrow A$ by $I=\Pi_A\circ \Psi^{-1}$.
which is obviously linear, continuous and satisfies $I\circ f=Id_A$  .
Thus, $I:B\rightarrow A$ is a continuous linear retraction of $f:A\rightarrow B$ in \eqref{george321}.

On the other hand, if we define $T:C\rightarrow B$ by $T=\left( g|_{\tilde{C}} \right)^{-1}$, it 
is obviously linear, continuous and $T$ satisfies $g\circ T=Id_C$. Thus, $T:C\rightarrow B$ is a continuous linear section  of $g:B\rightarrow C$ in \eqref{george321}.
\end{proof}

\chapter{Distributions on open subsets of the Euclidean space}\label{asdf}
In this chapter we first  discuss the spaces of continuously differentiable functions and spaces of test functions.
A thorough description of their topology is made in order to define the notion of a distribution, and to give an  equivalent characterization.
The subject of distributions is further developed in Chapter \ref{macri} in a more general context.
\section{Spaces of continuously differentiable functions and spaces of test functions}

\begin{defi} Let $\Omega$ be a nonempty open subset of some topological space. 
Let $\varphi:\Omega\rightarrow\mathbb{C}$ be a function. 
The \emph{support}\index{support of a function} of $\varphi$ is the set defined by
	\begin{equation*}
		\operatorname{Supp}\left( \varphi\right) =\overline{\left\lbrace x\in \Omega: \varphi(x)\neq 0\right\rbrace }^\Omega. 
	\end{equation*}
Equivalently, a point $x$ is not in $\operatorname{Supp}\left( \varphi\right) $ if there exists an open neighborhood $V$ of $x$, contained in $\Omega$ such that $\varphi(V)=\left\lbrace 0 \right\rbrace$.   
\end{defi}
\begin{defi}
\label{funcinf1}
For each nonempty open subset $\Omega$ of $\mathbb{R}^d$ we denote by $\mathscr{E}\left( \Omega\right) $ $($resp. $\mathscr{E}^m\left( \Omega\right) $$)$ the algebra of  infinitely differentiable, or $\mathcal{C}^\infty$,  $($resp. $m$ times continuously differentiable, or $\mathcal{C}^m$$)$ complex valued  functions on $\Omega$. 

Let $X$ be a closed subset of $\mathbb{R}^d$. We denote by $\mathcal{I}^\infty \left( X,\mathbb{R}^d\right) $ $($resp. $\mathcal{I}^m\left( X,\mathbb{R}^d\right) $ for $m\in\mathbb{N}$$)$ the space of $\mathcal{C}^\infty$ $($resp. $\mathcal{C}^m$$)$ functions which vanish on $X$ together with all of their derivatives $($resp. all of their derivatives of order less than or equal to $m$$)$.
Note that $\mathcal{I}^\infty \left( X,\mathbb{R}^d\right) $ $($resp. $\mathcal{I}^m\left( X,\mathbb{R}^d\right) $$)$ is an ideal of $\mathscr{E}\left( \Omega\right) $ $($resp. $\mathscr{E}^{m}\left( \Omega\right) $$)$.

We define the ideal $\mathcal{I}\left( X,\mathbb{R}^d\right) $ of $\mathscr{E}\left( \mathbb{R}^d\right) $ by
	\begin{equation*}
		\mathcal{I}\left( X,\mathbb{R}^d\right) =\left\lbrace\varphi\in\mathscr{E}\left( \mathbb{R}^d\right) : \operatorname{Supp}\left( \varphi\right) \cap X=\emptyset \right\rbrace\subseteq \mathscr{E}\left( \mathbb{R}^d\right) .
	\end{equation*}
Note the following inclusions: 
	\begin{equation*}
		\mathcal{I}\left( X,\mathbb{R}^d\right) \subseteq \mathcal{I}^\infty\left( X,\mathbb{R}^d\right) \subseteq \dots \subseteq \mathcal{I}^{m+1}\left( X,\mathbb{R}^d\right) \subseteq \mathcal{I}^m\left( X,\mathbb{R}^d\right)\dots
	\end{equation*}
If $K$ is a compact set in $\mathbb{R}^d$, then $\mathscr{D}_K$ denotes the space of all $\varphi$ in  $\mathscr{E}\left( \mathbb{R}^d\right) $ whose support lies in $K$.
If $K$ is a compact subset of $\Omega$ then $\mathscr{D}_K$ may be identified with a subspace of $\mathscr{E}\left( \Omega\right) $, and we will denote it by $\mathscr{D}_{K}\left( \Omega\right) $.

A function  $\varphi:\Omega\rightarrow\mathbb{C}$ is called a \emph{test function}\index{test function} if it is infinitely differentiable and has compact support. 
The set of test functions over $\Omega$ will be denoted by $\mathscr{D}\left( \Omega\right) $. 
Observe that $\mathscr{D}\left( \Omega\right) $ is the union of the subspaces $\mathscr{D}_K\left( \Omega\right) $ as $K$ runs through all compact subsets of $\Omega$.
\end{defi}
\label{nnnnn}
Introducing certain families of seminorms we now proceed to topologize  the spaces given in Definition \ref{funcinf1} above, starting with $\mathscr{E}\left( \Omega\right) $ and $\mathscr{D}_K$.
We first introduce the notion of fundamental sequence of compact sets.
\begin{defi}\label{fundseq}
Let $\Omega$ be an open subset of a topological space.
A collection of compact sets
	\begin{equation*}
		\left\lbrace K_l\right\rbrace_{l\in\mathbb{N}} 
	\end{equation*}
is called a \emph{fundamental sequence}\index{fundamental sequence} if it satisfies the following properties:
	\begin{itemize}
		\item[FS 1.] $K_1\neq \emptyset$;
		\item[FS 2.] $K_l\subseteq\left(  K_{l+1}\right) ^\circ$, for every $l$ in $\mathbb{N}$; and
		\item[FS 3.] $\Omega=\bigcup\limits_{l\in\mathbb{N}}K_l$.
	\end{itemize}
\end{defi}
\begin{defi}
\label{kungfu} Let $\Omega$ be a nonempty open subset of $\mathbb{R}^d$ and $K$ a compact subset of $\Omega$.
For each integer $k\geq 0$ $(\mbox{resp. }0 \leq k\leq m)$ and each function $f$ in $\mathscr{E}\left( \Omega\right) $ $(\mbox{resp. }f$ in $\mathscr{E}^{m}(\Omega))$ we define the seminorm
	\begin{equation}
		\label{seminorms1}
		\parallel f \parallel_k^K=\sup\limits_{\substack{x \in K \\ |\nu| \leq k}}|\partial^{\nu}f(x)|.
	\end{equation}  
\end{defi}
\begin{defi}\label{defiseminorms}
\label{lala} Let $\Omega$ be a nonempty open subset of $\mathbb{R}^d$ and $\{K_l\}_{l\in\mathbb{N}}$ a fundamental sequence of compact sets covering $\Omega$.
For each $k\geq 0$ $(\mbox{resp. }0 \leq k\leq m)$, each $l>0$ and each function $f$ in $\mathscr{E}\left( \Omega\right) $ $(\mbox{resp. }f$ in $\mathscr{E}^{m}(\Omega))$ we define the seminorm
	\begin{equation} \label{seminorms2} 
		\parallel f \parallel_k^l=\parallel f \parallel_k^{K_l}.
	\end{equation}
\end{defi}
\begin{rem}
The topology obtained in Theorem \ref{degas} from the collection of seminorms 
	\begin{equation*}
		\left\lbrace\parallel\cdot\parallel^l_k\right\rbrace_{k,l}
	\end{equation*}
in definition \ref{lala} coincides with the one given by the collection
	\begin{equation*}
		\left\lbrace\parallel\cdot\parallel^K_{k}\right\rbrace_{k,K}
	\end{equation*}
of definition \ref{kungfu}.
This is a consequence of Definition \ref{fundseq}.
\end{rem}

Some simple but most useful results about this topology are given in the following lemma.
\begin{lem}[See \cite{dieu}, 17.1.3, 17.1.4 and 17.1.5]\label{mafi}
Consider $\mathscr{E}\left(\Omega\right)$ with the topology obtained from the collection of seminorms \eqref{seminorms1} $($or equivalently from \eqref{seminorms2}$)$ as described in Theorem \ref{degas}.  
	\begin{itemize}
		\item[(i)] For each multi-index $\nu$ the linear mapping $f\mapsto \partial^\nu f$ of $\mathscr{E}\left( \Omega\right) $ into $\mathscr{E}\left( \Omega\right) $ is continuous.
		\item[(ii)] For each function $\psi$ in $\mathscr{E}\left( \Omega\right) $ $(\mbox{resp. }\psi$ in $\mathscr{E}^{m}(\Omega))$ the linear mapping $\varphi\mapsto \psi \varphi$ of $\mathscr{E}\left( \Omega\right) $ $($resp. $\mathscr{E}^{m}(\Omega)$$)$ into itself is continuous.
		Moreover,  for every $\psi$ in $\mathscr{E}\left( \Omega\right) $ $($resp. $\psi$ in $\mathscr{E}^{m}(\Omega)$$)$, each $k$ in $\mathbb{N}_0$ $(\mbox{resp. }0 \leq k\leq m)$ and each compact set $K$ there exists a constant $M>0$  such that $\parallel \psi\varphi \parallel_k^K \leq M \parallel\varphi \parallel_k^K$, for every $\varphi$ in $\mathscr{E}\left( \Omega\right) $ $($resp. $\varphi$ in $\mathscr{E}^{m}(\Omega)$$)$. 
		\item[(iii)] Let $\varphi$ be a mapping of class $\mathcal{C}^\infty$ $($resp. $\mathcal{C}^m$$)$ of an open set $V$ into an open set $U$, both contained in $\mathbb{R}^d$.
		Then the linear mapping $f\mapsto f\circ\varphi$ of $\mathscr{E}\left( U\right) $ into $\mathscr{E}\left( V\right) $  $($resp. $\mathscr{E}^m\left( U\right) $ into $\mathscr{E}^m\left( V\right) $$)$  is continuous.
	\end{itemize}
\end{lem}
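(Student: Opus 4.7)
The plan is to reduce continuity of each linear map $T$ to a seminorm estimate of the form $\|Tf\|_{k}^{K'}\leq C\,\|f\|_{k'}^{K}$, since by Theorem \ref{degas} the topologies of $\mathscr{E}(\Omega)$ and $\mathscr{E}^m(\Omega)$ are defined by the separating families of seminorms $\|\cdot\|_{k}^{K}$, and a linear map between LCS is continuous if and only if every seminorm of the target is dominated by a seminorm of the source (this is the standard reformulation of continuity at zero in terms of local bases, together with Proposition \ref{kingcrimson} and the Minkowski functionals of Theorem \ref{gaugin}).

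For part (i), I would observe directly that differentiation shifts the order: for any compact $K\subseteq\Omega$ and any $k$,
\begin{equation*}
\|\partial^\nu f\|_{k}^{K}=\sup_{\substack{x\in K\\ |\mu|\leq k}}|\partial^{\mu+\nu}f(x)|\leq \|f\|_{k+|\nu|}^{K},
\end{equation*}
which is the desired estimate and concludes the case.

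For part (ii), I would invoke the Leibniz rule $\partial^{\mu}(\psi\varphi)=\sum_{\alpha\leq\mu}\binom{\mu}{\alpha}\,\partial^{\alpha}\psi\,\partial^{\mu-\alpha}\varphi$, valid termwise on $K$ for $|\mu|\leq k$ (respectively $\leq m$). Taking absolute values, estimating $|\partial^{\alpha}\psi(x)|\leq\|\psi\|_{k}^{K}$ for $x\in K$, $|\alpha|\leq k$, and $|\partial^{\mu-\alpha}\varphi(x)|\leq\|\varphi\|_{k}^{K}$, one obtains
\begin{equation*}
\|\psi\varphi\|_{k}^{K}\leq\Bigl(\|\psi\|_{k}^{K}\sum_{|\mu|\leq k}\sum_{\alpha\leq\mu}\binom{\mu}{\alpha}\Bigr)\,\|\varphi\|_{k}^{K}=M\,\|\varphi\|_{k}^{K},
\end{equation*}
with the combinatorial constant $M=M(\psi,k,K)$ finite, which gives both the explicit bound and continuity.

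Part (iii) is the main obstacle and the only one where a nontrivial combinatorial identity is needed, namely the Faà di Bruno formula for higher derivatives of a composition. For a smooth (resp. $\mathcal{C}^{m}$) map $\varphi:V\to U$, $\partial^{\mu}(f\circ\varphi)(x)$ expands as a finite linear combination, with universal integer coefficients, of terms of the form $(\partial^{\beta}f)(\varphi(x))\prod_{j}\partial^{\gamma_{j}}\varphi(x)$ with $|\beta|\leq|\mu|$ and all $|\gamma_{j}|\leq|\mu|$. Given a compact $K'\subseteq V$ and $k\leq m$, the image $K:=\varphi(K')$ is compact in $U$ by continuity of $\varphi$, and the finitely many partial derivatives $\partial^{\gamma}\varphi$ with $|\gamma|\leq k$ are bounded on $K'$; grouping those bounds into a constant $C=C(\varphi,k,K')$, one concludes
\begin{equation*}
\|f\circ\varphi\|_{k}^{K'}\leq C\,\|f\|_{k}^{\varphi(K')},
\end{equation*}
which establishes continuity of the pullback $f\mapsto f\circ\varphi$. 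The care needed here is purely bookkeeping — checking that all multi-indices appearing in the Faà di Bruno expansion stay within the allowed order $k\leq m$ in the $\mathcal{C}^{m}$ case, so that the argument is valid in both the smooth and the finitely differentiable setting.
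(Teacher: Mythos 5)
Your proposal is correct and follows essentially the same route as the paper: (i) by the order-shifting estimate $\|\partial^\nu f\|_k^K\leq\|f\|_{k+|\nu|}^K$, (ii) by the Leibniz rule, and (iii) by the higher-order chain rule (Faà di Bruno) together with compactness of $\varphi(K')$, each time reducing continuity to a seminorm domination. The only cosmetic difference is that in (ii) your constant sums over all $|\mu|\leq k$ where a maximum would do, and in (iii) the paper routes the bound through a member $K_q$ of a fundamental sequence containing $\varphi(K')$ rather than using $\varphi(K')$ itself, which is equivalent.
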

\begin{proof}
For \textit{(i)} and \textit{(ii)}, let $\{K_l\}_{l\in\mathbb{N}}$ be a fundamental sequence of compact subsets of $\Omega$ used to define seminorms $\left\|\cdot \right\|^l_k $ on $\mathscr{E}\left( \Omega\right) $  $($resp. $\mathscr{E}^m\left( \Omega\right) $$)$.
The proof of \textit{(i)} is quite straightforward. We only note that
\begin{equation*}
	\parallel \partial^\nu f \parallel^l_k \leq \parallel  f \parallel^l_{k+\left| \nu \right| }.
\end{equation*}

Now we prove \textit{(ii)}. 
For every $|\nu|\leq k$ and every $x$ in $K$,
	\begin{equation*}
		\begin{split}
			\left|\partial^{\nu}\left(\psi\varphi\right) (x)\right|&\leq\sum\limits_{|i|\leq|\nu|}\binom{\nu}{i}\left|\partial^{i}\varphi(x)\cdot \partial^{\nu-i}\psi(x) \right|
			\leq M'\sum\limits_{|i|\leq|\nu|}\binom{\nu}{i} \left| \partial^{i}\varphi(x) \right|\\
			&\leq M'\left[ \sum\limits_{|i|\leq|\nu|}\binom{\nu}{i} \right] \parallel\varphi \parallel_k^K
			\leq M \parallel\varphi \parallel_k^K\mbox{,}
		\end{split}
	\end{equation*}
where $M'$ and $M$ are defined by 
	\begin{equation*}
		M'=\max\limits_{|\alpha|\leq k}\sup\limits_{x\in K}\left|\partial^{\alpha}\psi(x)\right| \ \mbox{ and }\ M=M' \max\limits_{|\nu|\leq k}\sum\limits_{|i|\leq|\nu|}\binom{\nu}{i}\mbox{,}
	\end{equation*}
and are independent of $\varphi$ and $\nu$. 
Consequently, taking supremum over all the elements $x$ in $K$ and multi-indices $|\nu|\leq k$ on the left hand side of the above inequality we arrive at 
	\begin{equation*}
		\parallel \psi\varphi \parallel_k^K \leq M\parallel\varphi \parallel_k^K.
	\end{equation*} 
Now we prove \textit{(iii)}.
Let $\{K_l\}_{l\in\mathbb{N}}$ be a fundamental sequence of compact subsets of $U$ used to define seminorms $^U\!\!\left\|\cdot \right\|^l_k $ on $\mathscr{E}\left( U\right) $  $($resp. $\mathscr{E}^m\left( U\right) $$)$.
Put
	\begin{equation*}
		\varphi=\left(\varphi_1,\dots,\varphi_n \right)\mbox{,}
	\end{equation*}
where $\varphi_i$ is a scalar valued function.
Let $\{K_l'\}_{l\in\mathbb{N}}$ be a fundamental sequence of compact subsets of $V$ used to define seminorms $^V\!\!\left\|\cdot \right\|^l_k $ on $\mathscr{E}\left( V\right) $  $($resp. $\mathscr{E}^m\left( V\right) $$)$.
For each pair of integers $k$, $l$, let $a_{kl}$ be the gratest of the least upper bounds of the functions 
	\begin{equation*}
		\sup\left\lbrace 1,\left| \partial^{\nu}\varphi_i\right| \right\rbrace 
	\end{equation*} 
on $K'_l$,  for $\left| \nu\right| \leq k$ and $1\leq i \leq n$.
Finally, let $q$ be an integer such that $\varphi\left( K'_l\right) $ is contained in $K_q$.
By repeated application of the formula of partial derivatives of a composite function we obtain for each $f$ in $\mathscr{E}\left( U\right) $  $($resp. $\mathscr{E}^m\left( U\right) $$)$
	\begin{equation*}
		^U\!\!\left\| f \circ \varphi \right\|^l_k \leq c_{kl}a_{kl}^{k}\mbox{ }^U\!\!\left\| f \right\|^l_q.
	\end{equation*}
where $c_{kl}$ is a constant independent of $f$ and $\varphi$.
\end{proof}
\begin{propo}[See \cite{rudin}, Ch. 1, Example 1.46]\label{pinkfloyd}
The topology defined on $\mathscr{E}\left( \Omega\right) $ by the seminorms \eqref{seminorms2}, turns $\mathscr{E}\left( \Omega\right) $ into a Fr\'echet space with the Heine-Borel property, such that $\mathscr{D}_K\left( \Omega\right) $ is a closed subspace of $\mathscr{E}\left( \Omega\right) $ whenever the compact set $K$ is contained in $\Omega$, and therefore, $\mathscr{D}_K\left( \Omega\right) $ is a Fr\'echet space by itself with the Heine-Borel property.
\end{propo}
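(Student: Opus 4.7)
The plan is to verify the three claims in turn, all resting on the machinery of \S7 of Chapter \ref{joann1}.

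First, I would establish that $\mathscr{E}\left(\Omega\right)$ is a Fréchet space. The family $\mathscr{P}=\{\|\cdot\|_k^l\}_{k,l\in\mathbb{N}}$ of seminorms given by \eqref{seminorms2} is countable; it is separating because if $f\neq 0$ then $f(x_0)\neq 0$ for some $x_0\in\Omega$, and by \textbf{FS 3} of Definition \ref{fundseq} we have $x_0\in K_l$ for some $l$, hence $\|f\|_0^l>0$. By Theorem \ref{degas}, $\mathscr{P}$ induces a locally convex topology $\tau$ on $\mathscr{E}\left(\Omega\right)$, and by Remark \ref{pinocho} (since the family is countable) this topology is metrizable through an invariant metric $d$. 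What remains is completeness: given a $d$-Cauchy sequence $(f_n)$, the defining condition translates, via the formula for $d$ and Theorem \ref{degas}\textit{(ii)}, to the statement that for every $k,l$ and every multi-index $\nu$ with $|\nu|\leq k$, the sequence $(\partial^\nu f_n)$ is uniformly Cauchy on $K_l$. Classical real analysis then yields a limit function $f$ on $\Omega$ whose partial derivatives of all orders coincide with the uniform limits $\partial^\nu f_n\to \partial^\nu f$ on each $K_l$; hence $f\in\mathscr{E}\left(\Omega\right)$ and $f_n\to f$ with respect to every seminorm $\|\cdot\|_k^l$, proving $d$-convergence.

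Next I would prove the Heine–Borel property, which I expect to be the main technical obstacle. Let $B\subseteq\mathscr{E}\left(\Omega\right)$ be closed and bounded; by Theorem \ref{degas}\textit{(ii)}, boundedness means that for every $k,l$ there exists $M_{k,l}<\infty$ with $\|f\|_k^l\leq M_{k,l}$ for all $f\in B$. Since $\mathscr{E}\left(\Omega\right)$ is metrizable it suffices to establish sequential compactness. Given $(f_n)\subset B$, the estimates $\|\partial^\nu f_n\|_0^{K_l}\leq M_{|\nu|,l}$ control the sup-norm of each derivative on each $K_l$, while $\|\partial^\nu f_n\|_1^{K_l}\leq M_{|\nu|+1,l}$ controls the gradient, giving equicontinuity of $(\partial^\nu f_n)$ on $K_l$. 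Arzelà–Ascoli then produces a subsequence converging uniformly on $K_l$; running the standard diagonal argument over all pairs $(k,l)\in\mathbb{N}^2$ yields a subsequence $(f_{n_j})$ for which every derivative converges uniformly on every $K_l$. Arguing as in the completeness step identifies the limit as an element of $\mathscr{E}\left(\Omega\right)$, and the convergence is precisely $\tau$-convergence; since $B$ is closed, the limit lies in $B$.

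For the second claim, observe that
\begin{equation*}
\mathscr{D}_K\left(\Omega\right)=\bigcap_{x\in\Omega\setminus K}\operatorname{Ker}\left(\operatorname{ev}_x\right),
\end{equation*}
where $\operatorname{ev}_x:\mathscr{E}\left(\Omega\right)\to\mathbb{C}$ is the evaluation at $x$. Each $\operatorname{ev}_x$ is continuous because $|\operatorname{ev}_x(f)|\leq \|f\|_0^l$ for any $l$ with $x\in K_l$; hence each kernel is closed, and $\mathscr{D}_K\left(\Omega\right)$ is closed as an intersection of closed subspaces. Finally, to obtain the third claim, a closed subspace of a Fréchet space is a Fréchet space with the subspace metric, which is again complete and invariant. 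The Heine–Borel property is inherited: a closed bounded set $B$ in $\mathscr{D}_K\left(\Omega\right)$ is closed in $\mathscr{E}\left(\Omega\right)$ (since $\mathscr{D}_K\left(\Omega\right)$ is closed) and bounded there too, hence compact in $\mathscr{E}\left(\Omega\right)$, and therefore compact in the induced topology of $\mathscr{D}_K\left(\Omega\right)$.
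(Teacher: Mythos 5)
Your proposal is correct and follows essentially the same route as the paper: metrizability via Theorems \ref{degas} and \ref{liszt}, completeness via uniform convergence of all derivatives on the $K_l$, closedness of $\mathscr{D}_K\left(\Omega\right)$ as an intersection of kernels of evaluation maps, and Heine--Borel via derivative bounds, equicontinuity, Ascoli and a diagonal argument. The only cosmetic difference is that the paper derives equicontinuity of $\left\lbrace\partial^\nu\phi\right\rbrace$ on $K_{l-1}$ from bounds on $K_l$ (to keep the mean-value segments inside the domain), whereas you state it on $K_l$ itself; using the nesting $K_l\subseteq K_{l+1}^\circ$ fixes this immediately.
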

\begin{proof}
By Theorem \ref{degas}  the seminorms given by \eqref {seminorms2} define a locally convex topology on $\mathscr{E}\left( \Omega\right) $.
By Theorem \ref{liszt} this topology is metrizable, with an invariant metric whose  balls centered at zero are convex and balanced.
A choice of such a metric is given in Remark \ref{pinocho}.

Next, for every $x$ in  $\Omega$ we define the evaluation map \index{evaluation map}
	\begin{align*}
		\operatorname{e}_x: \mathscr{E}\left( \Omega\right) & \rightarrow  \mathbb{C}\\ \varphi &\mapsto  \varphi(x)\mbox{,}
	\end{align*}
which is a continuous functional for this topology.
To see this, just note that if $B_{1/n}(0)$  denotes the open ball of $\mathbb{R}$ centered at zero with radius $1/n$, then
	\begin{equation*}
		\operatorname{e}_x^{-1}\left(B_{1/n}(0) \right)=\left\lbrace \varphi: \left|\braket{\varphi,x} \right|<\frac{1}{n}\right\rbrace = \left\lbrace \varphi: \left\| \varphi\right\|^{\left\lbrace x\right\rbrace}_0<\frac{1}{n}  \right\rbrace  =V\left( \left\| \cdot\right\|^{\left\lbrace x\right\rbrace}_0,n\right)\mbox{,}
	\end{equation*}
where $V\left( \left\| \cdot\right\|^{\left\lbrace x\right\rbrace}_0,n\right)$, as defined in Theorem \ref{degas}, is an open set of $\mathscr{E}\left( \Omega\right) $.

As $\mathscr{D}_K\left( \Omega\right)$ can be written as
	\begin{equation*}
		\mathscr{D}_K\left( \Omega\right)=\bigcap_{x\in \Omega\setminus K\\
		} \operatorname{Ker}\left( \operatorname{e}_x\right) \mbox{,}
	\end{equation*}
it follows that $\mathscr{D}_K\left( \Omega\right)$ is closed in  $\mathscr{E}\left( \Omega\right) $ whenever $K$ is contained in $\Omega$.

The seminorms in \eqref{seminorms2} satisfy $\parallel\cdot\parallel_k^l\leq \parallel\cdot\parallel_{k'}^{l'}$ whenever $(l,k)\leq (l',k')$ and therefore, a local base at zero is given by the sets  
	\begin{equation*}
		V(N)=\left\lbrace \varphi   \in \mathscr{E}\left( \Omega\right)  :\ \parallel\varphi\parallel_N^N< \frac{1}{N} \right\rbrace\mbox{,} 
	\end{equation*}
for $N$ in $\mathbb{N}$.
Having stated this, we can proceed to prove that $\mathscr{E}\left( \Omega\right) $ is complete.
If $\left( \varphi_i \right) _{i\in\mathbb{N}} $ is a Cauchy sequence in $\mathscr{E}\left( \Omega\right) $ and if $N$  in  $\mathbb{N}$ is fixed, then $\varphi_i-\varphi_j$ belongs to $V(N)$ if $i$ and $j$ are sufficiently large. 
Thus $\left|\partial^\nu\varphi_i-\partial^\nu\varphi_j \right|< 1\slash N$ on $K_N$ if $|\nu|\leq  N$.
It follows that each $\partial^\nu\varphi_i$ converges uniformly on compact subsets of $\Omega$ to a function $g_\nu$.
In particular $\varphi_i(x)\rightarrow g_0(x)$.
It is now evident that $g_0$ belongs to $\mathscr{E}\left( \Omega\right) $, that $g_\nu=\partial^\nu g_0$  and that $\varphi_i\rightarrow g_0$ for the topology of  $\mathscr{E}\left( \Omega\right) $.

Thus $\mathscr{E}\left( \Omega\right) $ is a Fr\'echet space. 
The same is true for each of its closed subsets $\mathscr{D}_K\left( \Omega\right)$.

Suppose next that a subset  $B$ of  $\mathscr{E}\left( \Omega\right) $ is closed and bounded.
By Theorem \ref{degas}, the boundedness of $B$ is equivalent to the existence of numbers $M_{lk}<\infty$  such that $\parallel\phi\parallel_k^l\leq M_{lk}$ for every $\phi$ in $B$, and for every pair $(l,k)$ in $\mathbb{N}\times\mathbb{N}_0$.
The inequalities 
	\begin{equation*}
		\left| \partial^\nu \phi(x)\right|\leq M_{lk}\mbox{,} 
	\end{equation*}
valid for $x$ in $K_l$ when $|\nu|\leq k$, imply the equicontinuity of $\left\lbrace \partial^\nu \phi: \phi\in B\right\rbrace $ on $K_{l-1}$, if $|\nu|\leq k-1$.
It now follows from Ascoli's Theorem and Cantor's diagonal process that every sequence in $B$ contains a subsequence $\left(  \phi_i\right) _{i\in\mathbb{N}} $ for which  $\left(  \partial^\nu \phi_i\right) _{i\in\mathbb{N}} $ converges uniformly on compact subsets of $\Omega$, for each multi-index $\nu$.
Hence, $\left(  \phi_i\right)_{i\in\mathbb{N}}$ converges with respect to the topology of $\mathscr{E}\left( \Omega\right) $.
This proves that $B$ is compact.
Hence, $\mathscr{E}\left( \Omega\right) $ has the Heine-Borel property.
The same conclusion holds for $\mathscr{D}_K\left( \Omega\right) $ whenever $K$ has nonempty interior (otherwise, $\mathscr{D}_K\left( \Omega\right) ={0}$), because $\operatorname{dim}\left( \mathscr{D}_K(\Omega)\right) =\infty$ in that case. 
\end{proof}
\begin{rem}\label{toposuppcomp}
The topology of $\mathscr{D}_K\left( \Omega\right) $, as topological  subspace of $\mathscr{E}\left( \Omega\right) $, can be described more thoroughly by constructing a family of seminorms compatible with it.
For this purpose, consider first the space $\mathscr{D}\left( \Omega\right) $ which we can write as
	\begin{equation*}
		\mathscr{D}\left( \Omega\right) =\bigcup_{\substack{K\subseteq \Omega\\ \mbox{\small compact}}}\mathscr{D}_K\left( \Omega\right) .
	\end{equation*}
Next we define a family of norms $\left\lbrace \parallel \cdot \parallel_k\right\rbrace_{k\in\mathbb{N}} $ on  $\mathscr{D}\left( \Omega\right) $ given by
	\begin{equation}\label{norms}
		\parallel \varphi \parallel_k=\sup\limits_{\substack{x\in \Omega\\ |\nu|\leq k}}  \left| \partial^\nu \varphi(x)\right|.  
	\end{equation}
Then, the restrictions of these norms to  any fixed subset  $\mathscr{D}_K\left( \Omega\right) $ of $\mathscr{D}\left( \Omega\right) $ induce the same topology on $\mathscr{D}_K\left( \Omega\right) $ as do the seminorms of equation \eqref{seminorms2},  which define the topology on $\mathscr{E}\left( \Omega\right) $.
To see this, note that to each compact $K$ corresponds an integer $l_0$ such that $K$ is contained in $K_l$ for every $l\geq l_0$. 
For these $l$, $\parallel \varphi \parallel_k=\parallel \varphi \parallel_k^l$ if $\varphi$ belongs to $\mathscr{D}_K\left( \Omega\right) $.
Since 
	\begin{equation*}
		\parallel \varphi \parallel_k^l \leq \parallel \varphi \parallel_{k}^{l+1}\mbox{,}
	\end{equation*}
the topology induced by the sequence of seminorms
	\begin{equation*}
		\big\lbrace\left\|\cdot \right\|_k^l \big\rbrace_{k,l}
	\end{equation*}
is unchanged if we let $l$ start from $l_0$ rather than 1.
Therefore, the topology on $\mathscr{D}_K\left( \Omega\right) $ defined by the seminorms \eqref{seminorms2} coincides with the one defined by the restrictions of the norms \eqref{norms} to $\mathscr{D}_K\left( \Omega\right) $.
In practice, either set of seminorms can be used.

The topology of $\mathscr{D}_K\left( \Omega\right) $ will be denoted by $\tau_K$.
A local base for  $\tau_K$ is formed by the sets 
	\begin{equation*}
		V_N=\left\lbrace\varphi \in \mathscr{D}_K\left( \Omega\right) :\ \parallel \varphi \parallel_{N}<\frac{1}{N} \right\rbrace\mbox{,}
	\end{equation*}
for $N$ in $\mathbb{N}$.
\end{rem}
The norms in equation \eqref{norms} could be used to define a locally convex metrizable topology  on $\mathscr{D}\left( \Omega\right) $ as described in Theroem \ref{degas} and Remark \ref{pinocho}.
But this topology would not be complete.

Instead, for the space $\mathscr{D}\left( \Omega\right) $ we will consider the final topology in the category of LCS with respect to the family 
	\begin{equation}\label{fam}
		\mathscr{F}=  \left\lbrace i_K: K\subseteq \Omega\mbox{,}\  \mbox{\small compact}\right\rbrace \mbox{,}
	\end{equation}
where $i_K$ denotes the inclusion map:
	\begin{equation*}
		i_K:\mathscr{D}_K\left( \Omega\right) \hookrightarrow\mathscr{D}\left( \Omega\right) .
	\end{equation*}

The following theorem exhibits a topology on $\mathscr{D}\left( \Omega\right) $ such that all the functions of the family $\mathscr{F}$ are continuous. 
Later on,  it will be shown that this topology is the finest with respect to this property (\textit{i.e.} it is indeed the final topology with respect to the family  $\mathscr{F}$).
\begin{thm}[See \cite{rudin}, Ch. 6, Thm. 6.4]\label{rutkauskas}
Let $\beta$ be the collection of all convex balanced sets $W$ contained in $\mathscr{D}\left( \Omega\right) $ such that $\mathscr{D}_K\left( \Omega\right) \cap W$ belongs to $\tau_K$ (see Remark \ref{toposuppcomp}) for every compact subset $K$ of $\Omega$.
Let $\tau$ be the collection of all unions of sets of the form $\varphi + W$, with $\varphi$ in $\mathscr{D}\left( \Omega\right) $ and $W$ in $\beta$.
Then,
	\begin{itemize}
		\item[(i)]  $\tau$ is a topology in $\mathscr{D}\left( \Omega\right) $, and $\beta$ is a local base for $\tau$; 
		\item[(ii)] $\tau$ makes $\mathscr{D}\left( \Omega\right) $ into a LCS.
	\end{itemize}
\end{thm}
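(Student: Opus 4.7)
The plan is first to establish the stability of $\beta$ under a few operations, then to prove a displacement lemma, then to verify (i), and finally to establish the vector space topology axioms for (ii).

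First I would observe that $\beta$ is closed under finite intersections and under multiplication by nonzero scalars: convexity and balancedness are preserved under these operations in any vector space, and the defining condition $W\cap \mathscr{D}_K(\Omega)\in\tau_K$ is preserved because each $\tau_K$ is a vector space topology on $\mathscr{D}_K(\Omega)$ (Proposition \ref{pinkfloyd}). I would also note that every $W\in\beta$ is absorbing in $\mathscr{D}(\Omega)$: given $\varphi\in\mathscr{D}(\Omega)$, choose a compact $K$ with $\varphi\in\mathscr{D}_K(\Omega)$; then $W\cap\mathscr{D}_K(\Omega)$ is a $\tau_K$-neighborhood of zero and therefore absorbs $\varphi$.

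The crucial step is the displacement lemma: for every $W\in\beta$ and every $\varphi\in W$ there exists $W'\in\beta$ with $\varphi+W'\subseteq W$. Taking a compact $K$ with $\varphi\in\mathscr{D}_K(\Omega)$, the set $W\cap\mathscr{D}_K(\Omega)$ is a $\tau_K$-neighborhood of $\varphi$, and continuity of scalar multiplication in $\mathscr{D}_K(\Omega)$ produces $\delta\in(0,1)$ with $(1-\delta)^{-1}\varphi\in W$. For any $w\in W$,
\begin{equation*}
\varphi+\delta w=(1-\delta)\cdot\frac{\varphi}{1-\delta}+\delta\cdot w
\end{equation*}
is a convex combination of two elements of $W$, hence lies in $W$ by convexity. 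Therefore $\varphi+\delta W\subseteq W$, and $W'=\delta W$ lies in $\beta$ by the first step.

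With the lemma in hand, (i) is routine. Arbitrary unions of sets in $\tau$ lie in $\tau$ by construction, and $\emptyset,\mathscr{D}(\Omega)\in\tau$ trivially. For finite intersections, if $\varphi\in(\varphi_1+W_1)\cap(\varphi_2+W_2)$, the lemma applied to $\varphi-\varphi_i\in W_i$ yields $W_i'\in\beta$ with $\varphi+W_i'\subseteq\varphi_i+W_i$, and $W_1'\cap W_2'\in\beta$ does the job. Translation invariance is built into the definition of $\tau$, and every neighborhood of zero contains a set $0+W$ with $W\in\beta$, so $\beta$ is a local base at zero.

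For (ii), continuity of addition reduces to finding, for each $W\in\beta$, a $W'\in\beta$ with $W'+W'\subseteq W$; taking $W'=\tfrac12 W$ works by convexity and balancedness, and $W'\in\beta$ by the first step. Continuity of scalar multiplication follows from the absorbing property of $W$ together with the standard decomposition
\begin{equation*}
\alpha\varphi-\alpha_0\varphi_0=(\alpha-\alpha_0)\varphi_0+\alpha_0(\varphi-\varphi_0)+(\alpha-\alpha_0)(\varphi-\varphi_0)
\end{equation*}
and the balancedness of $W$. To show points are closed, given $\varphi\neq 0$, pick $x_0\in\Omega$ with $\varphi(x_0)\neq 0$ and set $W=\{\psi\in\mathscr{D}(\Omega):|\psi(x_0)|<|\varphi(x_0)|\}$; this is convex and balanced, and $W\cap\mathscr{D}_K(\Omega)$ is $\tau_K$-open because point evaluation at $x_0$ is $\tau_K$-continuous, so $W\in\beta$ separates $\varphi$ from $0$. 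Local convexity is automatic since all members of $\beta$ are convex. The main obstacle I expect is the displacement lemma, where one must use the vector topology on $\mathscr{D}_K(\Omega)$ to slide $\varphi$ strictly inside $W$ before applying convexity; every subsequent verification amounts to checking that the various sets constructed ($\delta W$, $\tfrac12 W$, $W_1'\cap W_2'$, the evaluation ball) genuinely belong to $\beta$.
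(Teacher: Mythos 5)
Your proposal is correct and follows essentially the same route as the paper's proof: your "displacement lemma" is exactly the inline computation $\phi-\phi_i\in(1-\delta_i)W_i$ followed by $\phi-\phi_i+\delta_iW_i\subseteq(1-\delta_i)W_i+\delta_iW_i=W_i$ used there, and the verifications for (ii) (halving $W$ for addition, an evaluation/sup-norm ball for closedness of points, absorbing plus balanced for scalar multiplication) match the paper's up to cosmetic choices. Extracting the stability of $\beta$ under intersections and nonzero scalings as a preliminary step is a reasonable reorganization but not a different argument.
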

\begin{proof}
Suppose $V_1$ and $V_2$ belong to $\tau$, and take  $\phi$  in  $V_1\cap V_2$.
To prove \textit{(i)} it is enough to show that
	\begin{equation}\label{pop}
		\phi +W\subseteq V_1 \cap V_2
	\end{equation}  
for some $W $ in  $\beta$.

The definition of $\tau$ shows that there exist $\phi_i$ in   $\mathscr{D}\left( \Omega\right) $ and $W_i$  in  $\beta$ such that
	\begin{equation*}
		\phi \in\phi_i+W_i\subseteq V_i\qquad (i=1\mbox{,}\ 2).
	\end{equation*}
Choose $K$ so that $\mathscr{D}_K\left( \Omega\right) $ contains $\phi_1$, $\phi_2$ and $\phi$.	
Since $\mathscr{D}_K\left( \Omega\right) \cap W_i$ is open in $\mathscr{D}_K\left( \Omega\right) $ and $\phi-\phi_i\in W_i$, we have
	\begin{equation*}
		\phi-\phi_i\in (1-\delta_i)W_i\mbox{,}
	\end{equation*}
for some $0<\delta_i<1$.
The convexity of $W_i$ implies therefore that
	\begin{equation*}
		\phi-\phi_i+\delta_i W_i \subseteq (1-\delta_i)W_i +\delta_i W_i=W_i\mbox{,}
	\end{equation*}
so that
	\begin{equation*}
		\phi+\delta_i W_i\subseteq \phi_i +W_i\subseteq V_i \qquad (i=1\mbox{,}\ 2).
	\end{equation*}
Hence, \eqref{pop} holds with $W=\delta_1W_1\cap \delta_2W_2$.

Now we prove \textit{(ii)}.
Suppose  that $\phi_1$ and $\phi_2$ are distinct elements of $\mathscr{D}\left( \Omega\right) $, and put
	\begin{equation*}
		W=\left\lbrace\phi\in \mathscr{D}\left( \Omega\right) : \parallel \phi \parallel_0<\parallel \phi_1 -\phi_2 \parallel_0  \right\rbrace\mbox{,}
	\end{equation*}
where $\parallel\cdot\parallel_0$ was defined in \eqref{norms}. 
Then, $W$ belongs to $\beta$ and $\phi_1 $ is not in $\phi_2 +W$.
It follows that the singleton $\left\lbrace \phi_1\right\rbrace $ is a closed set relative to $\tau$. 

Addition is $\tau$-continuous, since the convexity of every $W$ in $\beta$  implies that 
	\begin{equation*}
		\left( \Psi_1+\frac{1}{2} W\right) +\left( \Psi_2+\frac{1}{2} W\right) =(\Psi_1+\Psi_2)+W
	\end{equation*}
for any $\Psi_1$, $\Psi_2$ in $\mathscr{D}\left( \Omega\right) $.

To deal with scalar multiplication, pick a scalar $\alpha_0$ and $\phi_0$ in $\mathscr{D}\left( \Omega\right) $.
Then 
	\begin{equation*}
		\alpha\phi-\alpha_0\phi_0=\alpha(\phi-\phi_0)+(\alpha-\alpha_0)\phi_0.
	\end{equation*} 
If $W$ belongs to $\beta$, there exists $\delta>0$ such that 
	\begin{equation*}
		\delta\phi_0\in \frac{1}{2}W.
	\end{equation*}
Then, as $\frac{1}{2}W$ is balanced, whenever $|\alpha-\alpha_0|<\delta$, we have
	\begin{equation*}
		(\alpha-\alpha_0)\phi_0=\frac{\alpha-\alpha_0}{\delta}\ \delta\phi_0\subseteq \frac{1}{2}W.
	\end{equation*}  
Choose $c$ so that $2c(|\alpha_0|+\delta)=1$. 
Then, $|2\alpha c|<1$ for every $\alpha$ such that $|\alpha-\alpha_0|<\delta$.
Then, whenever  $\phi-\phi_0$ is in $cW$ and 
since $\frac{1}{2}W$ is balanced, it follows that
	\begin{equation*}
		\alpha(\phi-\phi_0)\in \alpha cW\subseteq 2\alpha c\ \frac{1}{2} W\subseteq  \frac{1}{2} W.
	\end{equation*}
Then, as $W$ is convex,
	\begin{equation*}
		\alpha \phi-\alpha_0\phi_0\in W\mbox{,}
	\end{equation*}
whenever $|\alpha-\alpha_0|<\delta$ and $\phi-\phi_0$ belongs to $cW$.
This shows that scalar multiplication is continuous.
\end{proof}
\begin{propo}\label{joesgarage}
The topology $\tau$ on $\mathscr{D}\left( \Omega\right) $ defined in the previous theorem, is the final topology in the category of LCS with respect to the family of functions $\mathscr{F}$ defined in \eqref{fam}. 
\end{propo}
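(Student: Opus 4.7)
The plan is to verify that $\tau$ satisfies the universal property characterizing the final LCS topology, namely: (a) each inclusion $i_K : \mathscr{D}_K(\Omega) \hookrightarrow \mathscr{D}(\Omega)$ is $\tau$-continuous, and (b) for every LCS $F$ and every linear map $f : \mathscr{D}(\Omega)\to F$, if $f\circ i_K$ is continuous for every compact $K\subseteq\Omega$, then $f$ is continuous.

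For (a), since $\mathscr{D}(\Omega)$ is a TVS under $\tau$ by Theorem \ref{rutkauskas} and $i_K$ is linear, it is enough to verify continuity at the origin. Given any basic zero neighborhood $W\in\beta$, the very definition of $\beta$ says that $i_K^{-1}(W)=\mathscr{D}_K(\Omega)\cap W$ belongs to $\tau_K$, so $i_K$ is continuous. This step is essentially immediate from the construction of $\beta$.

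For (b), I invoke Proposition \ref{kingcrimson}: a local base of $F$ at $0$ is given by convex balanced neighborhoods. Fix such a neighborhood $U\subseteq F$, and set $W = f^{-1}(U)$. Since $f$ is linear and $U$ is convex and balanced, so is $W$. For every compact $K\subseteq\Omega$,
\begin{equation*}
  \mathscr{D}_K(\Omega)\cap W \;=\; i_K^{-1}\bigl(f^{-1}(U)\bigr)\;=\;(f\circ i_K)^{-1}(U),
\end{equation*}
which is open in $\mathscr{D}_K(\Omega)$ because $f\circ i_K$ is continuous by hypothesis. Thus $W\in\beta$, so $W$ is a $\tau$-neighborhood of $0$ in $\mathscr{D}(\Omega)$, and $f(W)\subseteq U$. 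Continuity of $f$ at $0$, and hence everywhere by linearity, follows.

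Together, (a) and (b) identify $\tau$ with the final LCS topology on $\mathscr{D}(\Omega)$ with respect to $\mathscr{F}$: (a) shows $\tau$ lies in the class of LCS topologies making all $i_K$ continuous, and (b) shows that $\tau$ is characterised by the required universal property, so in particular it is the finest such topology (apply (b) to $F=(\mathscr{D}(\Omega),\tau')$ and $f=\mathrm{Id}$ for any competitor $\tau'$). There is no real obstacle here; the only point that deserves attention is the verification that $W=f^{-1}(U)$ is convex and balanced, which requires $U$ to be convex balanced and $f$ to be linear, explaining why the base $\beta$ was defined using exactly those two closure properties.
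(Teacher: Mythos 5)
Your proof is correct and follows essentially the same route as the paper: the paper shows directly that any locally convex topology $\tau'$ making every $i_K$ continuous satisfies $\tau'\subseteq\tau$, by writing a $\tau'$-open set as a union of translates of convex balanced zero neighborhoods $W$ and observing that each such $W$ lies in $\beta$ precisely because $\mathscr{D}_K(\Omega)\cap W$ is $\tau_K$-open for every $K$ — which is the same key verification you perform for $W=f^{-1}(U)$. The only organizational difference is that you first establish the full universal property for maps into an arbitrary LCS (this is exactly part \textit{(ii)} of Proposition \ref{lennon}, proved later in the paper) and then specialize to $f=\operatorname{Id}$, whereas the paper argues the maximality of $\tau$ directly.
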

\begin{proof}
Let $\tau'$ be a topology that turns $\mathscr{D}\left( \Omega\right) $ into a LCS and such that the functions in the family \eqref{fam} are continuous. 
We seek to show that $\tau'\subseteq \tau$.
Take a set  $V$ in $\tau'$. 
As $\left( \mathscr{D}\left( \Omega\right) ,\tau'\right)$ is a LCS there is a zero neighborhood basis $\beta'$ whose elements are convex balanced sets.
Then $V$ can be written as:
	\begin{equation*}
		V=\bigcup_{\phi\in V}\left( \phi+W_\phi\right)\mbox{,} 
	\end{equation*}
where $W_\phi$ is some set in $\beta'$ such that $\phi+W_\phi$ is contained in $V$.
By hypothesis,
	\begin{equation*}
		i_{K}: \left( \mathscr{D}_K\left( \Omega\right) ,\tau_K\right) \hookrightarrow \left( \mathscr{D}\left( \Omega\right) ,\tau'\right)
	\end{equation*}
is continuous for every compact subset $K$ of $\Omega$. 
Then, 
	\begin{equation*}
		\mathscr{D}_K\left( \Omega\right) \cap W_\phi \in \tau_K\mbox{,} \qquad \forall K\subseteq\Omega \quad \mbox{compact subset.}
	\end{equation*}
Thus, for every $\phi$  in  $V$, $W_\phi$ belongs to   $\beta$ and therefore, $V$ belongs to $\tau$. 
\end{proof}
\begin{thm}[See \cite{rudin}, Ch. 6,  Thm. 6.5]\label{brahms} 
Let $\tau$ denote the topology of $\mathscr{D}\left(\Omega\right)$ (described in Theorem \ref{rutkauskas}) and $\tau_K$ that of $\mathscr{D}_K\left(\Omega\right)$ (see Remark \ref{toposuppcomp}).
	\begin{itemize}
		\item[(i)]The topology $\tau_K$ coincides with the subspace topology that $\mathscr{D}_K\left( \Omega\right) $ inherits from $\mathscr{D}\left( \Omega\right) $.
		\item[(ii)] If $B$ is a bounded subset of $\mathscr{D}\left( \Omega\right) $, then it is contained in  $\mathscr{D}_K\left( \Omega\right) $ for some compact subset  $K$ of $\Omega$, and there are numbers $M_N<\infty$ such that every $\phi$ in $B$ satisfies the inequalities
			\begin{equation*}
				\left\| \phi\right\|_N \leq M_N \qquad (N=0\mbox{, } 1\mbox{, } 2\mbox{, }\dots).
			\end{equation*} 
		\item[(iii)] $\mathscr{D}\left( \Omega\right) $ has the Heine-Borel property.
		\item[(iv)]If $\left( \varphi_i\right) _{i\in\mathbb{N}}$ is a Cauchy sequence in $\mathscr{D}\left( \Omega\right) $ then it is contained in $\mathscr{D}_K\left( \Omega\right) $, for some compact subset   $K$ of $\Omega$ and 
			\begin{equation*}
				\lim\limits_{i,j\rightarrow\infty} \parallel\varphi_i-\varphi_j\parallel_k=0\mbox{,}
			\end{equation*}
		for every $k$ in $\mathbb{N}$.
		\item[(v)] If $\varphi_i\rightarrow 0$ with respect to the topology  of $\mathscr{D}\left( \Omega\right) $, then there is a compact subset $K$ of $\Omega$ which contains the support of every $\varphi_i$, and $\partial^\nu\varphi_i\rightarrow 0$ uniformly as $i\rightarrow \infty$ for every multi-index $\nu$. 
		In other words, there is a compact subset $K$ of $\Omega$ which contains the support of every $\varphi_i$ and $\varphi_i\rightarrow 0$ for the topology  of $\mathscr{D}_K\left( \Omega\right) $.  
		\item[(vi)] $\left( \mathscr{D}\left( \Omega\right) ,\tau\right) $ is sequentially complete.
	\end{itemize}
\end{thm}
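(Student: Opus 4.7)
The plan is to handle (i) and (ii) as the two substantive results and then derive (iii)--(vi) as straightforward consequences. For (i), continuity of the inclusion $i_K:\mathscr{D}_K(\Omega)\hookrightarrow\mathscr{D}(\Omega)$ (immediate from Proposition \ref{joesgarage}) shows that the subspace topology is coarser than $\tau_K$. For the reverse inclusion, given a $\tau_K$-open convex balanced neighborhood $U$ of $0$ determined by $\|\cdot\|_N<\varepsilon$, I would fix a fundamental sequence $\{K_n\}_{n\geq 0}$ with $K_0=K\subseteq K_1^\circ$, form the annuli $A_n=K_n\setminus K_{n-1}^\circ$ for $n\geq 1$, and define
\[
W=\bigl\{\varphi\in\mathscr{D}(\Omega):\|\varphi\|_N^{K}<\varepsilon/2\ \text{and}\ \|\varphi\|_{n+N}^{A_n}<c_n\ \forall n\geq 1\bigr\}
\]
for suitably small positive constants $c_n$. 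The set $W$ is convex and balanced; and for any compact $K'\subseteq K_m^\circ$ the conditions with $n>m$ are vacuous on $\mathscr{D}_{K'}(\Omega)$, since $A_n\cap K'=\emptyset$, so only finitely many $\tau_{K'}$-continuous seminorm conditions remain. Hence $W\in\beta$, and by construction $W\cap\mathscr{D}_K(\Omega)\subseteq U$.

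For (ii) I would argue by contradiction. Assume $B$ is bounded but not contained in any $\mathscr{D}_{K_n}$ of a fundamental sequence exhausting $\Omega$. For each $n$ one can pick $\varphi_n\in B$ with $\operatorname{Supp}(\varphi_n)\not\subseteq K_n$ and, by picking inside the support avoiding $K_n$, a point $y_n\notin K_n$ with $\varphi_n(y_n)\neq 0$; after passing to a subsequence we may assume every compact set contains only finitely many $y_n$. Set
\[
W=\bigl\{\varphi\in\mathscr{D}(\Omega):|\varphi(y_n)|<|\varphi_n(y_n)|/n\ \text{for every}\ n\in\mathbb{N}\bigr\}.
\]
This set is convex and balanced; for each compact $K'\subseteq\Omega$ only finitely many conditions are nontrivial on $\mathscr{D}_{K'}(\Omega)$, each of them $\tau_{K'}$-continuous, so $W\in\beta$. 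But $n\varphi_n\notin W$ by construction, contradicting absorption of $B$ by $W$. Once $B\subseteq\mathscr{D}_K(\Omega)$, (i) makes $B$ bounded in the Fréchet topology $\tau_K$, and Theorem \ref{degas}(ii) yields the uniform bounds $\|\varphi\|_N\leq M_N$.

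Item (iii) follows because a closed bounded set $B\subseteq\mathscr{D}(\Omega)$ lies in some $\mathscr{D}_K(\Omega)$ by (ii), is closed there by (i), and is therefore compact by the Heine--Borel property of $\mathscr{D}_K(\Omega)$ recorded in Proposition \ref{pinkfloyd}. For (iv), a Cauchy sequence is bounded in any TVS, so by (ii) it is confined to a single $\mathscr{D}_K(\Omega)$, and the seminorm statement is then just the Cauchy condition read through the identification of topologies in (i). Item (v) is immediate from (iv) together with the observation that $\mathscr{D}_K(\Omega)$ is closed in $\mathscr{D}(\Omega)$ (its complement is the complement in $\mathscr{E}(\Omega)$ intersected with $\mathscr{D}(\Omega)$, which is open by continuity of the inclusion $\mathscr{D}(\Omega)\hookrightarrow\mathscr{E}(\Omega)$), so the limit remains supported in $K$. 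Finally, (vi) follows from (iv): a Cauchy sequence sits in some Fréchet $\mathscr{D}_K(\Omega)$ where it converges, and by (i) the limit is also the $\tau$-limit in $\mathscr{D}(\Omega)$.

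The principal obstacle will be the patching construction in (i): one must design a single convex balanced set in $\mathscr{D}(\Omega)$ whose intersection with every $\mathscr{D}_{K'}(\Omega)$ is $\tau_{K'}$-open, yet whose intersection with $\mathscr{D}_K(\Omega)$ is contained in the chosen neighborhood $U$. The contradiction in (ii) is driven by the same combinatorial idea, exploiting that functions supported in a fixed compact vanish on all but finitely many of the annuli or test points used to cut out $W$.
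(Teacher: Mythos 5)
Your proposal is correct and follows essentially the same route as the paper: (i) and (ii) carry the weight, with (ii) proved by exactly the paper's device of points $x_m$ escaping every compact set and the convex balanced set $W=\{\phi:|\phi(x_m)|<m^{-1}|\phi_m(x_m)|\}$, and (iii)--(vi) then falling out as you describe. The only difference is in the hard direction of (i), where your annulus construction is more elaborate than necessary: since the global norm $\left\|\cdot\right\|_N=\sup_{x\in\Omega,|\nu|\leq N}|\partial^\nu\varphi(x)|$ restricts on each $\mathscr{D}_{K'}(\Omega)$ to a defining seminorm of $\tau_{K'}$, the single set $W=\{\psi\in\mathscr{D}(\Omega):\|\psi\|_N<\delta\}$ already lies in $\beta$ and satisfies $W\cap\mathscr{D}_K(\Omega)\subseteq U$, which is what the paper uses; your extra conditions on the $A_n$ are harmless but redundant.
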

\begin{rem}
In view of \textit{(i)}, the necessary conditions expressed by   \textit{(ii)},  \textit{(iv)} and  \textit{(v)} are also sufficient.
\end{rem}
\begin{proof}[Proof of Theorem \ref{brahms}]
To prove  \textit{(i)} suppose first that $V$ belongs to $\tau$. 
Pick $\phi$ in $\mathscr{D}_K\left( \Omega\right) \cap V$.
By theorem \ref{rutkauskas}, $\phi + W$ is contained in $V$ for some $W$ in $\beta$, where we recall that  $\beta$ is the collection of all convex balanced subsets $W$ of $\mathscr{D}\left( \Omega\right) $ such that $\mathscr{D}_K\left( \Omega\right) \cap W$ belongs to $\tau_K$ for every compact subset  $K$ of $\Omega$.
Hence,
	\begin{equation*}
		\phi+\left( \mathscr{D}_K\left( \Omega\right) \cap W\right)\subseteq \mathscr{D}_K\left( \Omega\right) \cap V.
	\end{equation*}
Since $\mathscr{D}_K\left( \Omega\right) \cap W$ is open in $\mathscr{D}_K\left( \Omega\right) $, we have proved that 
	\begin{equation*}
		\mathscr{D}_K\left( \Omega\right) \cap V\in\tau_K\mbox{,}\quad \mbox{ if }\ V\in\tau \ \mbox{ and } \ K\subseteq\Omega.
	\end{equation*}
Now suppose $U$ belongs to $\tau_K$.
We have to show that $U=\mathscr{D}_K\left( \Omega\right) \cap V$ for some $V$ in $\tau$.
The definition of $\tau_K$ implies that to every $\phi$  in $U$ corresponds numbers  $N$ in $\mathbb{N}$ and $\delta>0$ such that
	\begin{equation*}
		\left\lbrace\psi\in\mathscr{D}_K\left( \Omega\right) :\left\| \psi-\phi\right\|_N<\delta  \right\rbrace \subseteq U.
	\end{equation*}
Put 
	\begin{equation*}
		W_{\phi}=\left\lbrace\psi\in\mathscr{D}\left( \Omega\right) :\left\| \psi\right\|_N<\delta  \right\rbrace.
	\end{equation*}
Then $W_\phi$ belongs to $\beta$, and
	\begin{equation*}
		\mathscr{D}_K\left( \Omega\right) \cap\left(\phi + W_\phi \right)=\phi+\left(\mathscr{D}_K\left( \Omega\right) \cap W_\phi \right)=\left\lbrace\psi\in\mathscr{D}_K\left( \Omega\right) :\left\| \psi-\phi\right\|_N<\delta  \right\rbrace \subseteq U.  
	\end{equation*}
If $V$ is taken to be the union of the sets $\phi+W_\phi$, where $\phi$ runs through $U$, then $V$ has the desired property.

To prove  \textit{(ii)}, consider a subset $B$ of  $\mathscr{D}\left( \Omega\right) $ which lies in no $\mathscr{D}_K\left( \Omega\right) $.
Then there are functions $\phi_m$ in $B$ and there are distinct points $x_m$ in $\Omega$, without limit point in $\Omega$, such that $\phi_m(x_m)\neq 0$  ($m=1\mbox{, } 2\mbox{, } 3\mbox{, }\dots$).

Let $W$ be  the set of all $\phi$ in $\mathscr{D}\left( \Omega\right) $ that  satisfy
	\begin{equation*}
		\left| \phi(x_m)\right|<m^{-1}\left|\phi_m(x_m) \right|\qquad  (m=1\mbox{, } 2\mbox{, } 3\mbox{, } \dots).
	\end{equation*}
Since each $K$ contains only finitely many $x_m$, it is easy to see that $\mathscr{D}_K\left( \Omega\right) \cap W$ belongs to $\tau_K$. Indeed, suppose for the sake of simplicity, that $x_1\mbox{, }x_2\mbox{, }\dots\mbox{, } x_{m_0}$ are in  $K$ only.
Then, if $B_{m}$ denotes the ball of $\mathbb{R} $ centered at zero of radius $m^{-1}\left|\phi_m(x_m) \right|$,
	\begin{equation*}
		\mathscr{D}_K\left( \Omega\right) \cap W=\bigcap_{m=1}^{m_0} \left(|\cdot|\circ \operatorname{e}_{x_m}\right) ^{-1} \left(B_{m}\right)\mbox{,}
	\end{equation*} 
where $|\cdot|: \mathbb{C}\rightarrow \mathbb{R}_{\geq 0}$ is the operation of taking the modulus of a complex number; and we recall that 
	\begin{align*}
		\operatorname{e}_x:  \mathscr{D}_K\left( \Omega\right)  & \rightarrow  \mathbb{C}   \\
		\varphi& \mapsto \varphi(x)\nonumber
	\end{align*}
is the evaluation map, which is a continuous functional as $\mathscr{D}_K\left( \Omega\right) $ is a subspace of  $\mathscr{E}\left(\Omega\right) $ whenever $K$ is a compact subset of $\Omega$.

Therefore, we have that $\mathscr{D}_K\left( \Omega\right) \cap W$ is open in $\mathscr{D}_K\left( \Omega\right) $.
As $ W$  is also convex and balanced we conclude that $W$ belongs to $\beta$.
Since $\phi_m$ is not in $ mW$, no multiple of $W$ contains $B$.
This shows that $B$ is not bounded.

It follows that every bounded subset $B$ of $\mathscr{D}\left( \Omega\right) $ lies in some $\mathscr{D}_K(\Omega)$.
By \textit{(i)}, $B$ is then a bounded subset of $\mathscr{D}_K(\Omega)$.
Then, by Theorem \ref{degas}, every seminorm $\left\|\cdot \right\|_N$  with $N$ in $\mathbb{N}$ (see \eqref{norms}) is bounded  on $B$.
This means that there exist numbers $0<M_N<\infty$ such that $\left\|\phi \right\|_N\leq M_N$ for $N=1\mbox{, } 2\mbox{, } \dots$ and for every $\phi$ in $B$.
This completes the proof of  \textit{(ii)}.

Statement  \textit{(iii)} follows from  \textit{(ii)}, since $\mathscr{D}_K(\Omega)$ has the Heine-Borel property (see Proposition  \ref{pinkfloyd}).

Since Cauchy sequences are bounded,  \textit{(ii)} implies that every Cauchy sequence $\left(  \phi_i \right)_{i\in\mathbb{N}} $ in $\mathscr{D}(\Omega)$ lies in some $\mathscr{D}_K(\Omega)$. 
By  \textit{(i)}, $\left(  \phi_i \right)_{i\in\mathbb{N}} $ is also a Cauchy sequence relative to $\tau_K$.
This proves  \textit{(iv)}.

Statement  \textit{(v)} is just a restatement of  \textit{(iv)}.

Finally,  \textit{(vi)} follows from  \textit{(i)},  \textit{(iv)} and the completeness of $\mathscr{D}_K(\Omega)$ 
(recall that $\mathscr{D}_K(\Omega)$ is a Fr\'echet space by Proposition \ref{pinkfloyd}).
\end{proof}
By  \textit{(v)} of Theorem \ref{brahms} the notion of convergence of a  sequence of test functions with respect to the topology of $\mathscr{D}(\Omega)$  (described in Theorem \ref{rutkauskas}) can be given in an  equivalent way as in the following definition.
\begin{defi}
A sequence of functions $\left( \varphi_i\right)_{i\in\mathbb{N}}$  in $\mathscr{D}(\Omega)$ \emph{converges} to $\varphi$ in $\mathscr{D}(\Omega)$ if the following two conditions are satisfied:
	\begin{itemize}
		\item[1.] There exists a compact subset of $\Omega$ such that the supports of all $\varphi_i$ are contained in it. 
		\item[2.] For any n-tuple $k$ of nonnegative integers the functions $\partial^k\varphi_i$ converge uniformly to $\partial^k\varphi$ $($as $i\rightarrow\infty)$.
	\end{itemize}
\end{defi}
We turn to give a characterization of open sets in $\left( \mathscr{D}(\Omega),\tau\right)$ and also of contiuous linear mappings of $ \mathscr{D}(\Omega)$ into a LCS $E$.
\begin{propo}\label{lennon}
Let $\tau$ denote the topology of $\mathscr{D}\left(\Omega\right)$ (described in Theorem \ref{rutkauskas}) and $\tau_K$ that of $\mathscr{D}_K\left(\Omega\right)$ (see Remark \ref{toposuppcomp}).
	\begin{itemize}
		\item[(i)] A set $W$ is open in $\mathscr{D}(\Omega)$ if and only if  $W\cap \mathscr{D}_K(\Omega)$ is open in $\mathscr{D}_K(\Omega)$ for every $K$ compact subset of $\Omega$.
		\item[(ii)] Let $E$ be any LCS.
		A linear mapping 
			\begin{equation*}
			f:\left( \mathscr{D}(\Omega),\tau\right)\rightarrow E 
			\end{equation*}
		is continuous if and only if 
			\begin{equation*}
				f\circ i_K:\left( \mathscr{D}_K(\Omega),\tau_K\right)\rightarrow E
			\end{equation*}
		is continuous for every $K$ compact subset of $\Omega$.
	\end{itemize}
\end{propo}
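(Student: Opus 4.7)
Both parts assert that $\tau$ behaves as a final topology on $\mathscr{D}(\Omega)$ with respect to the inclusions $i_K$, and since Proposition \ref{joesgarage} already identifies $\tau$ as the final topology in the category of LCS, statement (ii) is essentially a restatement of that universal property. My plan is nevertheless to deduce (ii) from (i) by a short point-set argument, so the real content lies in (i) and in particular its $(\Leftarrow)$ direction.

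The $(\Rightarrow)$ direction of (i) is immediate from Theorem \ref{brahms}(i): since $\tau_K$ coincides with the subspace topology inherited from $\tau$, each inclusion $i_K$ is continuous, whence $W \cap \mathscr{D}_K(\Omega) = i_K^{-1}(W) \in \tau_K$ whenever $W \in \tau$. For the $(\Leftarrow)$ direction, suppose $W \cap \mathscr{D}_K(\Omega) \in \tau_K$ for every compact $K$. I would fix $\phi \in W$ and exhibit a set $V$ in the collection $\beta$ of Theorem \ref{rutkauskas} with $\phi + V \subseteq W$; this realises $W$ as a union of $\tau$-basic opens and hence as a member of $\tau$. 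Fix a fundamental sequence $(K_n)_n$ of compact subsets of $\Omega$ (Definition \ref{fundseq}) with $\phi \in \mathscr{D}_{K_1}(\Omega)$. By hypothesis and Theorem \ref{kingcrimson}, for each $n$ one has a convex balanced open neighborhood $V_n$ of $0$ in $\mathscr{D}_{K_n}(\Omega)$ with $V_n \subseteq (W-\phi) \cap \mathscr{D}_{K_n}(\Omega)$.

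The crux is to patch the $V_n$ into a single convex balanced set $V \subseteq \mathscr{D}(\Omega)$ with $V \subseteq W-\phi$ and $V \cap \mathscr{D}_K(\Omega) \in \tau_K$ for every compact $K$. This is the main obstacle. My plan is an inductive construction of a nested sequence $U_1 \subseteq U_2 \subseteq \dotsb$ of convex balanced open neighborhoods, $U_n \subseteq \mathscr{D}_{K_n}(\Omega)$, with $U_n \subseteq V_n$ at each stage. The extension step from $U_n$ to $U_{n+1}$ uses crucially that $\mathscr{D}_{K_n}(\Omega)$ is a closed subspace of the Fréchet space $\mathscr{D}_{K_{n+1}}(\Omega)$ (Proposition \ref{pinkfloyd}): via the Hahn--Banach Theorem (Theorem \ref{hahnbanach}), one extends the Minkowski functional of $U_n$ to a continuous seminorm on $\mathscr{D}_{K_{n+1}}(\Omega)$, whose open unit ball gives an open convex balanced extension of $U_n$; intersecting that extension with $V_{n+1}$ keeps the construction inside $W-\phi$ while still containing $U_n$. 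Setting $V = \bigcup_n U_n$ produces a convex balanced set (as an increasing union of convex balanced sets) contained in $W-\phi$, and for any compact $K \subseteq K_m$ the trace $V \cap \mathscr{D}_K(\Omega) = \bigcup_{n \geq m} (U_n \cap \mathscr{D}_K(\Omega))$ is a union of $\tau_K$-open sets, hence $\tau_K$-open. Thus $V \in \beta$ and $\phi + V \subseteq W$, completing (i).

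Statement (ii) then follows from (i) by a routine argument. The $(\Rightarrow)$ direction is immediate from continuity of each $i_K$ (Theorem \ref{brahms}(i)). For $(\Leftarrow)$, given an open $U \subseteq E$, the preimages $(f \circ i_K)^{-1}(U) = f^{-1}(U) \cap \mathscr{D}_K(\Omega)$ are $\tau_K$-open by hypothesis, and (i) yields $f^{-1}(U) \in \tau$, so $f$ is continuous. The hard part of the whole proof is thus the inductive construction in (i), and most subtly the verification that the Hahn--Banach extension of the Minkowski functional of $U_n$ can be chosen so that, after cutting back against $V_{n+1}$, the resulting $U_{n+1}$ strictly contains $U_n$ and remains an open convex balanced neighborhood of $0$.
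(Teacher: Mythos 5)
The gap is exactly where you locate it, and it cannot be closed. In your inductive step you set $U_{n+1}=\tilde U\cap V_{n+1}$, where $\tilde U$ is the unit ball of an extension of the Minkowski functional of $U_n$ and $V_{n+1}$ is a convex balanced neighborhood of $0$ in $\mathscr{D}_{K_{n+1}}(\Omega)$ contained in $(W-\phi)\cap\mathscr{D}_{K_{n+1}}(\Omega)$. Since $V_{n+1}$ is chosen with no reference to $U_n$, there is no reason for $U_{n+1}$ to contain $U_n$. The natural repair — produce a convex balanced open neighborhood $C$ of $0$ in $\mathscr{D}_{K_{n+1}}(\Omega)$ with $U_n\subseteq C\subseteq O_{n+1}:=(W-\phi)\cap\mathscr{D}_{K_{n+1}}(\Omega)$ — fails in general: any such $C$ must contain the convex hull of $U_n\cup B$ for some ball $B$ around $0$ in $\mathscr{D}_{K_{n+1}}(\Omega)$, and because $U_n$ is not precompact the radii of balls that fit around its points inside $O_{n+1}$ need not be bounded below, so this convex hull can leave $O_{n+1}$ no matter how small $B$ is. (A side remark: Theorem \ref{hahnbanach} extends continuous linear functionals, not seminorms; the seminorm extension you need is true but requires a separate inf-convolution argument.) This is not an artifact of your construction: it is precisely the classical obstruction showing that the final topology on $\mathscr{D}(\Omega)$ in the category of \emph{topological spaces} with respect to the $i_K$ is strictly finer than $\tau$, which is only final in the category of LCS (Proposition \ref{joesgarage}). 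Consequently the ``if'' direction of (i) is not available for arbitrary $W$; what is true, and immediate from the definition of $\beta$ in Theorem \ref{rutkauskas}, is the version for \emph{convex balanced} $W$ (and its translates).

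Accordingly, (ii) should not be routed through the general form of (i). The paper proves (ii) directly using only the convex case: since $E$ is a LCS it suffices to show $f^{-1}(U)\in\tau$ for $U$ in a base of convex balanced zero neighborhoods of $E$; linearity makes $f^{-1}(U)$ convex and balanced, its traces $f^{-1}(U)\cap\mathscr{D}_K(\Omega)=(f\circ i_K)^{-1}(U)$ are open by hypothesis, so $f^{-1}(U)$ belongs to $\beta$ and hence to $\tau$ by construction, and continuity follows from continuity at $0$ together with translation invariance. Your deduction of (ii)$(\Leftarrow)$ pulls back an arbitrary open $U\subseteq E$ and invokes (i), so it inherits the gap; replace it with the convexity argument above. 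For what it is worth, the paper's own one-line proof of (i)$(\Leftarrow)$ — choosing $V_K\in\tau$ with $W\cap\mathscr{D}_K(\Omega)=V_K\cap\mathscr{D}_K(\Omega)$ and concluding $W=\bigcup_K V_K$ — rests on the set identity $\bigcup_K\left(V_K\cap\mathscr{D}_K(\Omega)\right)=\bigcup_K V_K$, which fails because each $V_K$ may contain points outside $W$; so you have not overlooked an easy correct argument for the unrestricted statement.
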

\begin{proof}
To prove \textit{(i)}, if a subset $W$ of $\mathscr{D}(\Omega)$ satisfies that $W\cap \mathscr{D}_K(\Omega)$ is open in $\mathscr{D}_K(\Omega)$  for every  $K$ contained in  $\Omega$, then, as $\tau_K$ coincides with the subspace topology (see \textit{(i)} of Theorem \ref{brahms}), we have that there exists $V_K$  in  $\tau$ such that $W\cap \mathscr{D}_K(\Omega)=V_K\cap \mathscr{D}_K(\Omega)$ for every $K$ contained in $\Omega$. 
Thus,
	\begin{equation*}
		W=\bigcup_{K\subseteq \Omega}\left( W\cap\mathscr{D}_K(\Omega)\right) =\bigcup_{K\subseteq \Omega}\left( V_K\cap\mathscr{D}_K(\Omega)\right) =\bigcup_{K\subseteq \Omega}V_K\mbox{,}
	\end{equation*}
and therefore $W$ belongs to $\tau$.
The other implication is just \textit{(i)} of Theorem \ref{brahms}.

To prove \textit{(ii)}, first suppose that $f\circ i_K$ is continuous for every $K$ contained in $\Omega$. 
We must show that $f$ is continuous.
Let $W$ be any  convex balanced zero neighborhood of $E$.
As $f$ is linear $f^{-1}(W)$ is also convex and balanced; and $0$ belongs to $f^{-1}(W)$.
Moreover, for every $K$  compact subset of $\Omega$, 
	\begin{equation*}
		f^{-1}(W)\cap\mathscr{D}_K(\Omega)=\left( f\circ i_K \right)^{-1}(W)\in\tau_K
	\end{equation*}
by the fact that $f\circ i_K$ is continuous for every $K$.
Thus, by definition, $f^{-1}(W)$ belongs to $\beta$ and therefore to $\tau$.
This proves that $f$ is continuous.

The other implication of \textit{(ii)} is trivial by definition of $\tau$, which makes continuous the inclusions $i_K$ for every $K$ contained in $\Omega$.
\end{proof}
It has to be noted that, in spite of the fact that $\mathscr{D}(\Omega)$ is contained in $\mathscr{E}\left( \Omega\right) $, the topology just described does not coincide with the subspace topology.
In addition,  the space $\mathscr{D}\left( \Omega\right) $ with this topology is not a Fr\'echet space because it is not metrizable (see \cite{rudin}, Rmk. 6.9). 
\section{Distributions}\label{distrdef}
\begin{defi}\label{defid1}
The space of \emph{distributions}\index{distribution} is defined to be the continuous dual space of $\mathscr{D}(\Omega)$.
The set of distributions forms a vector space which will be denoted by  $\mathscr{D}(\Omega)'$.
\end{defi}
The following theorem gives equivalent characterizations of distributions.
\begin{thm}[See \cite{rudin}, Thm.  6.6]\label{gustavklimt} 
Let $E$ be a LCS and suppose $\Lambda: \mathscr{D}(\Omega)\rightarrow E$ is any linear mapping. 
Then, the following properties are equivalent: 
	\begin{itemize}
		\item[(i)]$\Lambda$ is continuous. 
		\item[(ii)] $\Lambda$ is bounded.
		\item[(iii)] If $\varphi_i\rightarrow 0$ for the topology  of $\mathscr{D}(\Omega)$, then $\Lambda\varphi_i\rightarrow 0$ in $E$. 
		\item[(iv)]The restrictions of $\Lambda$ to every $\mathscr{D}_K(\Omega)\subseteq \mathscr{D}(\Omega)$ are continuous.
	\end{itemize}
\end{thm}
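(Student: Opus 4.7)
The plan is to establish the cycle $(i)\Rightarrow(ii)\Rightarrow(iii)\Rightarrow(iv)\Rightarrow(i)$, using the fact that $\mathscr{D}(\Omega)$ is not metrizable but each $\mathscr{D}_K(\Omega)$ is Fréchet (Proposition \ref{pinkfloyd}), so that the structural description in Theorem \ref{brahms} reduces everything to the metrizable case where Theorem \ref{mielitas} applies.

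The implication $(i)\Rightarrow(ii)$ is free: it is the implication $(i)\Rightarrow(ii)$ of Theorem \ref{mielitas}, which holds for any continuous linear map between TVS regardless of metrizability. For $(ii)\Rightarrow(iii)$ I would start with a sequence $\varphi_i\to 0$ in $\mathscr{D}(\Omega)$. Being convergent, it is bounded; by Theorem \ref{brahms}(ii) it lies inside some $\mathscr{D}_K(\Omega)$, and by Theorem \ref{brahms}(v) (or its reformulation via \textit{(i)} of the same theorem) we have $\varphi_i\to 0$ in $\mathscr{D}_K(\Omega)$ as well. The restriction $\Lambda\vert_{\mathscr{D}_K(\Omega)}$ is still bounded, because bounded sets of $\mathscr{D}_K(\Omega)$ are bounded in $\mathscr{D}(\Omega)$ (they inherit the same zero neighborhoods by Theorem \ref{brahms}(i)), hence are mapped into bounded subsets of $E$ by $\Lambda$. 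Since $\mathscr{D}_K(\Omega)$ is metrizable, the chain $(ii)\Rightarrow(iv)\Rightarrow(i)$ in Theorem \ref{mielitas} yields that the restriction is continuous, and in particular $\Lambda\varphi_i\to 0$ in $E$.

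For $(iii)\Rightarrow(iv)$ I fix a compact $K\subseteq\Omega$ and consider $\Lambda\circ i_K:\mathscr{D}_K(\Omega)\to E$. Because $\mathscr{D}_K(\Omega)$ is Fréchet, hence metrizable, continuity is equivalent to sequential continuity at the origin (Theorem \ref{mielitas}). If $\varphi_i\to 0$ in $\mathscr{D}_K(\Omega)$, then by the continuity of the inclusion $i_K$ (which is built into the definition of $\tau$) we also have $\varphi_i\to 0$ in $\mathscr{D}(\Omega)$, and $(iii)$ supplies $\Lambda\varphi_i\to 0$. Finally, $(iv)\Rightarrow(i)$ is precisely the content of Proposition \ref{lennon}(ii), applied to the linear map $\Lambda$.

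The only genuinely delicate point is the step $(ii)\Rightarrow(iii)$: boundedness of $\Lambda$ on the non-metrizable space $\mathscr{D}(\Omega)$ does not by itself encode sequential continuity, so one really needs the combinatorial fact from Theorem \ref{brahms}(ii) that any bounded (and so any convergent) sequence is swallowed by a single $\mathscr{D}_K(\Omega)$. Once that is done, the problem collapses to the metrizable Fréchet situation and the remainder of the argument is routine.
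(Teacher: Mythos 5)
Your proposal is correct and follows essentially the same route as the paper: $(i)\Rightarrow(ii)$ via Theorem \ref{mielitas}, $(ii)\Rightarrow(iii)$ and $(iii)\Rightarrow(iv)$ by reducing to a single metrizable $\mathscr{D}_K(\Omega)$ through Theorem \ref{brahms}, and $(iv)\Rightarrow(i)$ from the defining property of the topology $\tau$. The only cosmetic difference is that you invoke Proposition \ref{lennon}(ii) for the last step, whereas the paper redoes that argument directly with a convex balanced neighborhood; the content is identical.
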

\begin{proof}
The fact that \textit{(i)} implies \textit{(ii)} is contained in Theorem \ref{mielitas}.

Assume $\Lambda$ is bounded and $\varphi_i\rightarrow 0$ in $\mathscr{D}(\Omega)$.
By Theorem   \ref{brahms}, $\varphi_i\rightarrow 0$ in some  $\mathscr{D}_K(\Omega)$, and the restriction of $\Lambda$ to  $\mathscr{D}_K(\Omega)$ is bounded.
As $\mathscr{D}_K(\Omega)$ is metrizable, Theorem \ref{mielitas} applied to 
$\Lambda:\mathscr{D}_K(\Omega)\rightarrow E$ shows that $\Lambda\varphi_i\rightarrow 0$ in $E$.
Thus \textit{(ii)} implies \textit{(iii)}.

Assume \textit{(iii)} holds.
We are going to prove \textit{(iv)}, namely that the restriction of $\Lambda$ to every $\mathscr{D}_K(\Omega)\subseteq \mathscr{D}(\Omega)$ is continuous.
Since $\mathscr{D}_K(\Omega)$ is metrizable,
it suffices to show that $\Lambda$ is sequentially continuous on every $\mathscr{D}_K(\Omega)$.
Let $\left(  \varphi_i\right)_{i\in\mathbb{N}}$ be a sequence contained in $\mathscr{D}_K(\Omega)$ such that  $\varphi_i\rightarrow 0$ in  $\mathscr{D}_K(\Omega)$.
By \textit{(i)} of Theorem \ref{brahms}, 
$\varphi_i\rightarrow 0$ in  $\mathscr{D}(\Omega)$.
Since \textit{(iii)} holds, we have that  $\Lambda\varphi_i\rightarrow 0$ in $E$.
Thus, \textit{(iii)} implies \textit{(iv)} . 

To prove that \textit{(iv)} implies \textit{(i)}, let $U$ be a convex balanced neighborhood of zero in $E$, and put $V=\Lambda^{-1}(U)$.
Then $V$ is convex and balanced.
By \textit{(iv)},   $\mathscr{D}_K(\Omega)\cap V$ is open in $\mathscr{D}_K(\Omega)$ for every $K$ compact subset of $\Omega$.
By definition of $\tau$, the topology of $\mathscr{D}(\Omega)$, $V$ belongs to $\tau$.
\end{proof}
By the previous theorem, if we wish to test whether a functional on $\mathscr{D}_K(\Omega)$ is   continuous, it sufficies to check that it is sequentially continuous.
Therefore, we can write the following equivalent definition of a distribution:
\begin{defi}[Equivalent definition of a distribution]
A \emph{distribution}\index{distribution} is a linear function $t:\mathscr{D}(\Omega)\rightarrow \mathbb{C}$, which satisfies that $t(\varphi_i)\rightarrow t(\varphi)$ in $\mathbb{C}$ whenever  $\varphi_i\rightarrow\varphi$ with respect to the topology of $\mathscr{D}(\Omega)$.
\end{defi} 
Another equivalent definition of distribution\index{distribution}, which follows from the description of $\tau_K$ by means of the norms $\left\| \cdot\right\| _N$ defined in \eqref{norms}, and the equivalence between \textit{(i)} and \textit{(iv)} in Theorem  \ref{gustavklimt} is given in the following proposition.
\begin{propo}[Equivalent definition of a distribution. See \cite{rudin}, Thm. 6.8]\label{joannnn}
If $t$ is a linear functional on $\mathscr{D}(\Omega)$, the following two conditions are equivalent:
	\begin{itemize}
		\item[(i)]$t$ belongs to $\mathscr{D}(\Omega)'$.
		\item[(ii)] To every compact subset $K$ of  $\Omega$ corresponds a nonnegative integer $N$ and a constant $C<\infty$ such that the inequality 
			\begin{equation}\label{obiz}
				\left|\braket{t,\varphi} \right|\leq C  \left\|\varphi\right\|_N 
			\end{equation} 
		holds for every $\varphi$ in $\mathscr{D}_K(\Omega)$.
	\end{itemize} 
\end{propo}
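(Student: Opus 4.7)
The plan is to reduce the problem to a standard seminorm estimate on each Fréchet space $\mathscr{D}_K(\Omega)$ by invoking Theorem \ref{gustavklimt}, which asserts that $t\in\mathscr{D}(\Omega)'$ if and only if the restriction $t|_{\mathscr{D}_K(\Omega)}$ is continuous for every compact $K\subseteq\Omega$. Combined with Remark \ref{toposuppcomp}, which identifies the topology $\tau_K$ as the one generated by the restrictions of the norms $\|\cdot\|_k$ defined in \eqref{norms}, this turns the question into the more familiar one of characterizing continuous linear functionals on a LCS whose topology is generated by a countable increasing sequence of seminorms.

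For (ii)$\Rightarrow$(i), assume the bound holds on each $\mathscr{D}_K(\Omega)$. Fix $K$ and $\varepsilon>0$; then
\begin{equation*}
\left\lbrace\varphi\in\mathscr{D}_K(\Omega):\|\varphi\|_N<\varepsilon/C\right\rbrace
\end{equation*}
is a $\tau_K$-neighborhood of $0$ whose image under $t|_{\mathscr{D}_K(\Omega)}$ lies in the $\varepsilon$-ball of $\mathbb{C}$. Hence $t|_{\mathscr{D}_K(\Omega)}$ is continuous at $0$, and being linear it is continuous. Applying (iv)$\Rightarrow$(i) of Theorem \ref{gustavklimt} yields $t\in\mathscr{D}(\Omega)'$.

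For (i)$\Rightarrow$(ii), suppose $t$ is continuous on $\mathscr{D}(\Omega)$. By (i)$\Rightarrow$(iv) of Theorem \ref{gustavklimt}, each restriction $t|_{\mathscr{D}_K(\Omega)}$ is continuous. Since $\{V_N\}_{N\in\mathbb{N}}$ with $V_N=\{\varphi\in\mathscr{D}_K(\Omega):\|\varphi\|_N<1/N\}$ is a local base at $0$ for $\tau_K$, continuity at $0$ gives some $N$ with $|\langle t,\varphi\rangle|<1$ for every $\varphi\in V_N$. For any $\varphi\in\mathscr{D}_K(\Omega)$ with $\|\varphi\|_N>0$, the rescaled function $\varphi/\bigl(N(1+\delta)\|\varphi\|_N\bigr)$ lies in $V_N$ for every $\delta>0$, hence
\begin{equation*}
|\langle t,\varphi\rangle|<N(1+\delta)\|\varphi\|_N,
\end{equation*}
and letting $\delta\to 0$ yields $|\langle t,\varphi\rangle|\le N\|\varphi\|_N$, so $C=N$ works.

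The only subtle point, rather than an obstacle, is the degenerate case $\|\varphi\|_N=0$: since $\|\lambda\varphi\|_N=0<1/N$ for every scalar $\lambda$, we obtain $|\lambda\langle t,\varphi\rangle|<1$ for all $\lambda\in\mathbb{C}$, forcing $\langle t,\varphi\rangle=0$; the inequality \eqref{obiz} is then trivially satisfied. The proof thus amounts to combining Theorem \ref{gustavklimt} with the standard dictionary between continuity of a linear functional and boundedness by a seminorm from the defining family.
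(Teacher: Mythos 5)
Your proof is correct and follows exactly the route the paper indicates for this proposition: reduce to continuity of each restriction $t|_{\mathscr{D}_K(\Omega)}$ via the equivalence of \textit{(i)} and \textit{(iv)} in Theorem \ref{gustavklimt}, then translate continuity at the origin into the seminorm bound using the local base $V_N$ from Remark \ref{toposuppcomp}. The paper only sketches this argument in the sentence preceding the proposition, and your write-up (including the rescaling step and the degenerate case $\left\|\varphi\right\|_N=0$) is a sound, complete version of it.
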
 
\begin{rem}
Equation \eqref{obiz} can be replaced by an analogous expression using the seminorms \eqref{seminorms1} or \eqref{seminorms2}, by Remark \ref{toposuppcomp}.
\end{rem}
\chapter{Jet spaces}\markboth{\small\bfseries CHAPTER 3. JET SPACES }{\small\bfseries CHAPTER 3. JET SPACES}
\label{joann3}
The aim of this chapter, is to present Whitney's Extension Theorem, which will be very useful in chapter \ref{cucuu}.
We begin by  giving a  brief introduction to jet spaces.
In particular, we define the subspace of  differentiable functions in the sense of Whitney, wich can be turned into a Banach space by considering a suitable norm.
Throughout this chapter $K$ will denote a compact subset of $\mathbb{R}^d$.

\begin{defi}[See \cite{malgrange}, Ch. 1]\label{stravinsky}
A \emph{jet of order} $m$\index{jet of order $m$} is a family 
	\begin{equation*}
		F=\left(f^k \right)_{|k|\leq m}
	\end{equation*}
of continuous functions $f^k:K\rightarrow\mathbb{C}$, indexed by a  multi-index $k$ with $|k|\leq m$. 

Let $J^m(K)$ denote the space of all jets of order $m$ provided with the natural structure of a vector space on $\mathbb{C}$. 
We define a norm on $J^m(K)$ by
	\begin{equation*}
		\left|F \right|_m^K=\sup\limits_{\substack{x\in K\\ |k|\leq m}}\left|f^k(x) \right|  .
	\end{equation*}
We will write $\left[ F(x)\right]^k=f^k(x)$ for every $x$ in $K$, and $F$ in $J^m(K)$.
If $k=0$ we will just write $F(x)=f^0(x)$.
\end{defi}
\begin{rem}
For $|k|\leq m$ we define the linear map:
	\begin{align*}
		D^k:  J^m(K)& \rightarrow  J^{m-|k|}(K) \\
        F & \mapsto  D^kF:=\left( f^{k+l}\right)_{|l|\leq m-|k|}\mbox{,}\nonumber
	\end{align*}
and for any $g$ in $\mathscr{E}^m\left( \mathbb{R}^d\right) $, $J^m(g)$ denotes the jet
	\begin{equation*}
		J^m(g)=\left(\frac{\partial^{|k|}g}{\partial x^k} \right)_{|k|\leq m}
	\end{equation*}
in $J^m(K)$ where each $\partial^{|k|}_xg$ is understood to be restricted to $K$.
   
The definitions above give the following commutative diagram
	\begin{equation*}
		\xymatrix{
		\mathscr{E}^m\left( \mathbb{R}^d\right)  \ar@{->}[d]_{\partial^k} \ar@{->}[r]^{J^m}& J^m(K)\ar[d]^{D^k}\\
		\mathscr{E}^{m-|k|}\left( \mathbb{R}^d\right)  \ar@{->}[r]_{J^{m-|k|}}& J^{m-|k|}(K) }
	\end{equation*}
\end{rem}
\begin{defi}\label{picasso}
For $x$ in $\mathbb{R}^d$, $a$ in $K$, and $F$ in $J^m(K)$, we define the \emph{Taylor polynomial} $($\emph{of order} m$)$ \emph{of} $F$ \emph{at the point } a \index{Taylor polynomial of a jet}as  
	\begin{equation*}
		T^m_a F(x)=\sum_{|k|\leq m}\frac{(x-a)^k}{k!}f^k(a)\mbox{;}
	\end{equation*}
and noting that $T^m_a F$ belongs to $\mathscr{E}\left( \mathbb{R}^d\right) $ as a function of the variable $x$, we define
	\begin{equation*}
		R^m_a F=F-J^m(T^m_aF).
	\end{equation*}
\end{defi}
\begin{defi}\label{frida}
We define the space $ \mathscr{E}^m\left( K\right) $ of \emph{differentiable functions of order} m \emph{in the sense of Whitney}\index{differentiable function in the sense of Whitney}, as the space of all jets $F$ in $J^m(K)$ such that 
	\begin{equation*}
		\left( R^m_x F\right)(y) =o\left( \left|x-y \right|^{m-|k|} \right) \mbox{ as } \left|x-y \right|\rightarrow 0\mbox{,}
	\end{equation*}
for all elements $x,y$ in $K$, and every  $|k|\leq m$.
We define the following norm on $\mathscr{E}^m\left( K\right) $
	\begin{equation*}
		\left\| F \right\|_m^K = \left|F \right|_m^K+\sup\limits_{\substack{x\mbox{,}y\in K\\x\neq y \\ |k|\leq m}} \frac{\left| \left( R_x^mF\right)^k(y) \right| }{\left| x-y\right|^{m-k} }.
	\end{equation*}
\end{defi}
\begin{propo}\label{leningrad}
$\left( \mathscr{E}^m\left( K\right) , \parallel\cdot\parallel^K_m\right) $ is a Banach space.
\end{propo}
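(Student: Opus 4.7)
The plan is to split the verification into the two standard pieces: checking that $\|\cdot\|_m^K$ really is a norm (routine, exploiting the linearity of $F\mapsto R_a^mF$ in $F$ and the fact that $|\cdot|_m^K$ is already a norm on $J^m(K)$), and then proving completeness, which is the substantive part. For the triangle inequality on the second summand of the norm, I would use that $R_x^m(F+G)=R_x^mF+R_x^mG$; for definiteness, I would use that $|F|_m^K=0$ already forces $F=0$.

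For completeness, let $(F_n)\subseteq \mathscr{E}^m(K)$ be a Cauchy sequence. First I would observe $|\cdot|_m^K\leq\|\cdot\|_m^K$, so $(F_n)$ is Cauchy in $J^m(K)$ equipped with $|\cdot|_m^K$. Since $K$ is compact, each component $f_n^k$ is Cauchy in $C(K,\mathbb{C})$ under uniform convergence and therefore converges uniformly to a continuous limit $f^k$. Setting $F=(f^k)_{|k|\leq m}\in J^m(K)$ yields the candidate limit, with $|F_n-F|_m^K\to 0$.

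Next I would exploit the full Cauchy condition: fix $\varepsilon>0$ and choose $N$ with $\|F_n-F_p\|_m^K<\varepsilon$ for all $n,p\geq N$, so that
\begin{equation*}
\bigl|\bigl(R_x^m(F_n-F_p)\bigr)^k(y)\bigr|\leq \varepsilon\,|x-y|^{m-|k|}
\end{equation*}
for every $x\neq y$ in $K$ and every $|k|\leq m$. The point is that for fixed $(x,y,k)$ the quantity $(R_x^mG)^k(y)$ is a finite linear combination of the values $g^j(x)$, $j\geq k$, and of $g^k(y)$; hence it depends continuously, indeed linearly, on $G\in J^m(K)$ evaluated pointwise. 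I can therefore let $p\to\infty$ \emph{pointwise} in the above inequality and conclude, for $n\geq N$,
\begin{equation*}
\bigl|\bigl(R_x^m(F_n-F)\bigr)^k(y)\bigr|\leq \varepsilon\,|x-y|^{m-|k|}
\end{equation*}
for all $x\neq y$ in $K$ and $|k|\leq m$. Taking the supremum over $(x,y,k)$ gives that the Whitney seminorm of $F_n-F$ is $\leq\varepsilon$, which together with $|F_n-F|_m^K\to 0$ yields $\|F_n-F\|_m^K\to 0$.

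It remains to verify $F\in\mathscr{E}^m(K)$. For this I would write $R_x^mF=R_x^mF_N+R_x^m(F-F_N)$; the first summand is $o(|x-y|^{m-|k|})$ as $|x-y|\to 0$ because $F_N\in\mathscr{E}^m(K)$, while the second is uniformly bounded by $\varepsilon\,|x-y|^{m-|k|}$ by the estimate just obtained. Hence $\limsup_{|x-y|\to 0}|(R_x^mF)^k(y)|/|x-y|^{m-|k|}\leq\varepsilon$, and since $\varepsilon>0$ is arbitrary the Whitney remainder condition holds for $F$. The main subtlety — and really the only one — is justifying the passage to the limit inside the supremum in the definition of the Whitney seminorm; this is why it is essential to work with the pointwise estimate first and only afterwards take the supremum over $(x,y,k)$.
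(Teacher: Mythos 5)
Your proof is correct and follows essentially the same route as the paper's: uniform convergence of the jet components gives the candidate limit, the Whitney seminorm estimate is transferred to the limit by passing to the limit pointwise in $(x,y,k)$ before taking the supremum (justified, as you note, by the fact that $(R_x^mG)^k(y)$ is a finite linear combination of pointwise values of $G$), and membership of $F$ in $\mathscr{E}^m(K)$ is obtained by the same $R_x^mF=R_x^mF_N+R_x^m(F-F_N)$ decomposition. The only difference is that you also sketch the verification of the norm axioms, which the paper explicitly omits.
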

\begin{proof}
We are only going to prove the completeness of the space $\mathscr{E}^m\left( K\right) $ under the norm $ \parallel\cdot\parallel^K_m$. 
Let $\left(F_i \right)_{i\in\mathbb{N}} $ be a Cauchy sequence in $\mathscr{E}^m\left( K\right) $. 
Then, given $\varepsilon>0$ there exists $i_1=i_1(\varepsilon)$ in $\mathbb{N}$ such that 
	\begin{equation*}
		i\mbox{, }j\geq i_1(\varepsilon)\implies \parallel F_i-F_j\parallel^K_m<\varepsilon\mbox{,}
	\end{equation*}
which in turn gives the following implications:
	\begin{subequations}
		\begin{align}
			i\mbox{, }j\geq i_1(\varepsilon)\implies \left| F_i-F_j\right|<\varepsilon \label{pototo}\mbox{,}\\
			i\mbox{, }j\geq i_1(\varepsilon) \implies \sup\limits_{\substack{x\mbox{,}y\in K\\x\neq y \\ |k|\leq m
			}} \frac{\left| \left[  R_x^m\left(F_i-F_j \right) \right] ^k(y) \right| }{\left| x-y\right|^{m-k}}<\varepsilon. \label{pototin}
		\end{align}
	\end{subequations}
The implication \eqref{pototo} implies that for every $k\leq m$,
	\begin{equation*}
		i\mbox{, }j\geq i_1(\varepsilon) \implies \sup\limits_{x\in K}\left|f_i^k(x)-f_j^k(x) \right|<\varepsilon .
	\end{equation*}
In other words, for every $k\leq m$,  $\left( f_i^k\right)_{i\in\mathbb{N}}$ is a Cauchy sequence in the space $\mathscr{E}^0(K)$ of continuous functions  on $K$ with the topology of uniform convergence, which is a Banach space.
Consequently, for every $k\leq m$ there exists a continuous function $f^k:K\rightarrow \mathbb{C}$ such that
	\begin{equation}\label{rachmaninov} 		
		\sup\limits_{x\in K}\left| f_i^k(x)-f^k(x)\right| \underset{i\rightarrow \infty}{\longrightarrow} 0.
	\end{equation}
Let $F=\left(f^k \right)_{|k|\leq m} $.
We affirm  that
	\begin{itemize}
		\item[(i)] $\left\|F_i-F \right\|^K_m\underset{i\rightarrow \infty}{\longrightarrow}0 $, and
		\item[(ii)] $F$ belongs to $\mathscr{E}^m\left( K\right) $.
	\end{itemize}
To prove the first assertion, notice that equation  \eqref{rachmaninov} implies 
	\begin{equation*}
		\left|F_i-F \right|^K_m =\sup\limits_{\substack{x\in K\\k\leq m}}\left| f_i^k(x)-f^k(x)\right| \underset{i\rightarrow \infty}{\longrightarrow} 0.
	\end{equation*}
Then, given $\varepsilon>0$ there exists $i_2=i_2(\varepsilon)$ in $\mathbb{N}$ such that 
	\begin{equation*}
		i\geq i_2(\varepsilon)\implies \left|F_i-F \right|^K_m<\frac{\varepsilon}{2}.
	\end{equation*}
On the other hand, for all elements $x\mbox{, }y$ in $K$, every $k\leq m$ and every $i\geq i_1\left( \frac{\varepsilon}{2}\right) $ we have
	\begin{equation} \label{sarma}
		\begin{split}
			\frac{\left| \left[  R_x^m\left(F_i-F \right) \right] ^k(y) \right| }{\left| y-x\right|^{m-k}}&=\left|\frac{\left[ f_i^k(y)-f^k(y)\right] }{\left( y-x\right)^{m-k}}-\!\sum_{k\leq |l| \leq m }\!\frac{\left(y-x \right)^{l-m}}{\!\!\!\!\!\!\!(l-k)!}\! \left[f^l_i(x)-f^l(x) \right] \! \right| \\
			&=\lim\limits_{j\rightarrow \infty}\left|\frac{\left[ f_i^k(y)-f_j^k(y)\right] }{\left( y-x\right)^{m-k}}-\!\sum_{k\leq |l| \leq m }\! \frac{\left(y-x \right)^{l-m}}{\!\!\!\!\!\!\!(l-k)!}\! \left[f^l_i(x)-f^l_j(x) \right] \! \right|\\
			&= \lim\limits_{j\rightarrow \infty}\frac{\left| \left[  R_x^m\left(F_i-F_j \right) \right] ^k(y) \right| }{\left| y-x\right|^{m-k}}\leq \frac{\varepsilon}{2}\mbox{,}
		\end{split}
	\end{equation}
where we have used equation \eqref{pototin} in the last inequality.
Taking supremum over all elements  $x\mbox{, }y$ in $K$, with  $x\neq y$ and all nonnegative integers $k\leq m$, on the left hand side of the previous inequality, we arrive at  
	\begin{equation}
		\sup\limits_{\substack{x\mbox{,}y\in K\\x\neq y\\|k|\leq m}}\frac{\left| \left[R_x^m\left(F_i-F\right)\right]^k(y)\right|}{\left|x-y\right|^{m-k}}\leq\frac{\varepsilon}{2}\mbox{,}
	\end{equation}
for $i\geq i_1(\frac{\varepsilon}{2})$.

Thus, taking $i_0=i_0(\varepsilon)=\max\left\lbrace i_1\left(\frac{\varepsilon}{2} \right), i_2(\varepsilon) \right\rbrace $ we have that $\left\|F_i-F \right\|^K_m< \varepsilon$ if $i\geq i_0(\varepsilon)$.
  
It remains to show that $F$ belongs to $\mathscr{E}^m\left( K\right) $, by proving the fact
	\begin{equation*}
		R_x^mF^k(y)=o\left( \left| y-x\right|^{m-k}\right).
	\end{equation*}
For this purpose, take $\varepsilon>0$. 
Let $i_0=i_0(\varepsilon)$ be as above such that the bound of equation \eqref{sarma} is valid for every $i\geq i_0$.
In particular, we then have 
	\begin{equation*}
		\left| \left[ R_x^m\left(F-F_{i_0}\right)\right] ^k(y)\right| \leq \frac{\varepsilon}{2}\left| y-x\right|^{m-k}.
	\end{equation*}
Finally, as $F_{i_0}$ itself belongs to the space $\mathscr{E}^m\left( K\right) $, there exists $\delta>0$ such that
	\begin{equation*}
		\left| y-x\right|<\delta\implies \left| \left( R_x^mF_{i_0}\right) ^k(y)\right| <\frac{\varepsilon}{2}\left| y-x\right|^{m-k}.
	\end{equation*}
Then, for all elements $x\mbox{, }y$ in $K$ such that  $x\neq y$ and $\left|y-x \right|<\delta $ we have 
	\begin{equation*}
		\left| \left(R_x^mF\right)^k(y)\right| \leq\left| \left[ R_x^m\left(F-F_{i_0}\right)\right] ^k(y)\right| +\left| \left(R_x^mF_{i_0}\right)^k(y)\right| \leq  \varepsilon\left| y-x\right|^{m-k}.
	\end{equation*}
Thus, $R_x^mF^k(y)=o\left( \left| y-x\right|^{m-k}\right)$.
\end{proof}
The following result is of fundamental importance and it will be useful in Chapter \ref{cucuu}. 
\begin{thm}[Whitney Extension Theorem. See \cite{bierstone}, Thm. 2.3] \label{whitney}\index{Whitney Extension Theorem}
For each positive integer $m$ there exists a continuous linear mapping 
	\begin{equation*}
		R: \mathscr{E}^m\left( K\right) \rightarrow \mathscr{E}^m\left( \mathbb{R}^d\right) 
	\end{equation*}
such that for every $F$ in $\mathscr{E}^m\left( K\right) $, $J^m\circ R(F)=F$ and $R(F)|_{\mathbb{R}^d\setminus K}$ is of class $\mathcal{C}^\infty$. 
\end{thm}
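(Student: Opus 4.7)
The plan is to follow the classical construction of the Whitney extension operator using a Whitney partition of unity on the open set $\mathbb{R}^d \setminus K$. First I would construct a Whitney decomposition of $\mathbb{R}^d \setminus K$ into a locally finite family of closed dyadic cubes $\{Q_i\}_{i\in \mathbb{N}}$ whose diameters $d_i = \operatorname{diam}(Q_i)$ are comparable to their distance to $K$, i.e.\ $c_1 d_i \leq \operatorname{dist}(Q_i, K) \leq c_2 d_i$ for absolute constants $c_1, c_2 > 0$. Then I would take a smooth partition of unity $\{\varphi_i\}$ subordinate to the cover obtained by slightly enlarging the cubes so that each point of $\mathbb{R}^d \setminus K$ lies in a bounded number of members, with the estimate $|\partial^k \varphi_i(x)| \leq C_k\, d_i^{-|k|}$ for every multi-index $k$ and some constants $C_k$ depending only on $k$ and $d$.

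Next, for each index $i$ I would select a point $p_i \in K$ with $|p_i - x| \leq c\, d_i$ for every $x \in \operatorname{Supp}(\varphi_i)$, and define the extension by
$$R(F)(x) = \begin{cases} f^0(x), & x \in K, \\ \displaystyle\sum_i \varphi_i(x)\, T^m_{p_i} F(x), & x \in \mathbb{R}^d \setminus K, \end{cases}$$
where $T^m_{p_i} F$ is the Taylor polynomial of Definition \ref{picasso}. The linearity of $R$ in $F$ is immediate from the linearity of Taylor polynomials, and on $\mathbb{R}^d \setminus K$ the sum is locally finite and each term is a polynomial, so $R(F) \in \mathcal{C}^\infty(\mathbb{R}^d \setminus K)$.

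The crucial step is verifying that $R(F) \in \mathscr{E}^m(\mathbb{R}^d)$ and that $J^m(R(F))|_K = F$. Fix $x_0 \in K$ and a multi-index $k$ with $|k| \leq m$. For $x \notin K$ close to $x_0$, the Leibniz rule yields
$$\partial^k R(F)(x) = \sum_i \sum_{l \leq k} \binom{k}{l}\, \partial^{k-l} \varphi_i(x)\, \partial^l T^m_{p_i} F(x).$$
Using $\sum_i \varphi_i \equiv 1$, hence $\sum_i \partial^j \varphi_i \equiv 0$ for $|j| \geq 1$, one can insert the constant $f^k(x_0)$ and rewrite $\partial^k R(F)(x) - f^k(x_0)$ as a sum whose terms are products of derivatives of $\varphi_i$ with components of the Whitney remainders $R^m_{x_0} F$ evaluated at $p_i$. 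Since $|p_i - x_0| \lesssim |x - x_0|$ whenever $\varphi_i(x) \neq 0$, the Whitney condition $(R^m_{x_0} F)^j(p_i) = o(|p_i - x_0|^{m - |j|})$ combined with the bound $|\partial^{k-l} \varphi_i(x)| \leq C_{k-l} d_i^{-|k-l|}$ and the comparability $d_i \asymp |x - x_0|$ yields
$$\partial^k R(F)(x) = f^k(x_0) + o(|x - x_0|^{m - |k|}) \quad \text{as } x \to x_0,$$
so $R(F)$ is $\mathcal{C}^m$ at $x_0$ with the prescribed derivatives, proving $J^m(R(F))|_K = F$.

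Finally, to establish continuity of $R : (\mathscr{E}^m(K), \|\cdot\|_m^K) \to \mathscr{E}^m(\mathbb{R}^d)$ in the topology of Definition \ref{lala}, I would fix a fundamental sequence $\{K_l\}$ exhausting $\mathbb{R}^d$ and bound each seminorm $\|R(F)\|_k^{K_l}$ (with $k \leq m$) by a constant multiple of $\|F\|_m^K$, using the same Leibniz expansion: on $K \cap K_l$ the bound is trivial, while on $K_l \setminus K$ the cancellation from $\sum \partial^j \varphi_i \equiv 0$ together with the estimates on $\varphi_i$ and on $R^m_{p_i}F$ gives the needed uniform control. The main obstacle I expect is the verification of the $\mathcal{C}^m$-regularity at boundary points of $K$: the estimates must be kept uniform in $i$ as $d_i \to 0$, and this is precisely where the comparability $d_i \asymp \operatorname{dist}(Q_i, K)$ and the Whitney remainder decay must match to absorb the factors $d_i^{-|k-l|}$ produced by differentiating the partition of unity.
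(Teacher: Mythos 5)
The paper does not prove this theorem at all: it is quoted from Bierstone's text and used as a black box, so there is no internal proof to compare against. Your plan is the classical Whitney construction (decomposition into dyadic cubes, a Whitney partition of unity with $|\partial^k\varphi_i|\lesssim d_i^{-|k|}$, and $R(F)=\sum_i\varphi_i\,T^m_{p_i}F$ off $K$), which is exactly the argument behind the cited reference, and the overall architecture — linearity, smoothness off $K$ via local finiteness, the jet condition via the Whitney remainders, continuity via the quantitative form of the same estimates — is sound.

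There is, however, one concrete step that fails as written: the claim that $d_i\asymp|x-x_0|$ when $\varphi_i(x)\neq 0$ and $x$ is near $x_0\in K$. What the Whitney decomposition gives is $d_i\asymp\operatorname{dist}(Q_i,K)\asymp\operatorname{dist}(x,K)=:\delta$, and one only has $\delta\leq|x-x_0|$; the ratio $|x-x_0|/\delta$ is unbounded as $x\to x_0$ along directions tangent to $K$. Consequently, after inserting $f^k(x_0)$ and using the cancellation $\sum_i\partial^{j}\varphi_i\equiv 0$, the terms with $|k-l|\geq 1$ are of size $d_i^{-(|k|-|l|)}\cdot o\bigl(|x-x_0|^{m-|l|}\bigr)=\delta^{-(|k|-|l|)}\cdot o\bigl(|x-x_0|^{m-|l|}\bigr)$, which is \emph{not} $o\bigl(|x-x_0|^{m-|k|}\bigr)$ when $\delta\ll|x-x_0|$. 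The standard repair, which your sketch omits, is to interpolate through a nearest point $\bar{x}\in K$ to $x$: first estimate $\partial^kR(F)(x)-\partial^kT^m_{\bar{x}}F(x)$, where every relevant distance ($|x-p_i|$, $|p_i-\bar{x}|$, $d_i$) is comparable to $\delta$, so the factors $\delta^{-(|k|-|l|)}$ are exactly absorbed by the remainder decay $(R^m_{\bar{x}}F)^{l+j}(p_i)=o\bigl(\delta^{m-|l|-|j|}\bigr)$, yielding $o(\delta^{m-|k|})\leq o(|x-x_0|^{m-|k|})$; then separately estimate $\partial^kT^m_{\bar{x}}F(x)-\partial^kT^m_{x_0}F(x)$ using the identity $T^m_{\bar{x}}F-T^m_{x_0}F=T^m_{\bar{x}}\bigl(R^m_{x_0}F\bigr)$ and the Whitney condition between the two points $\bar{x},x_0$ of $K$. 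Without this two-step comparison the verification of $\mathcal{C}^m$-regularity at $\partial K$ — which you correctly identify as the crux — does not close.
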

  
\chapter{Distributions on open subsets of manifolds}\label{macri}

In Chapter \ref{asdf} we discussed the spaces of continuously differentiable functions and spaces of test functions, defined on some open subset of $\mathbb{R}^d$.
Later on, we introduced the space of distributions as the continuous dual of the test function space. 
In the present chapter we are going to generalize these ideas, by considering a manifold $\mathcal{M}$ instead of the Euclidean space $\mathbb{R}^d$.

All through this dissertation, $\mathcal{M}$ will denote a $d$-dimensional smooth, paracompact, oriented manifold and $X$ a closed subset of $\mathcal{M}$.
Also, $d$ will denote the distance function induced by some choice of smooth Riemannian metric $g$ on $\mathcal{M}$. 
\section{Spaces of continuously differentiable functions and spaces of test functions}

\begin{defi}\label{defmani}
For each nonempty open subset $\Omega$ of a $d$-dimensional manifold  $\mathcal{M}$ we denote by $\mathscr{E}\left( \Omega\right) $ $($resp. $\mathscr{E}^m\left( \Omega\right) $$)$ the algebra of  infinitely differentiable, or $\mathcal{C}^\infty$,  $($resp. $m$ times continuously differentiable, or $\mathcal{C}^m$$)$ complex valued  functions on $\Omega$. 

If $K$ is a compact set in $\mathcal{M}$, then $\mathscr{D}_K$ denotes the space of all $\varphi$ in  $\mathscr{E}\left( \mathcal{M}\right) $ whose support lies in $K$.
If $K$ is a compact subset of $\Omega$ then $\mathscr{D}_K$ may be identified with a subspace of $\mathscr{E}\left( \Omega\right) $, and we will denote it by $\mathscr{D}_{K}\left( \Omega\right) $.

A function  $\varphi:\Omega\rightarrow\mathbb{C}$ is called a \emph{test function}\index{test function} if it is infinitely differentiable and has compact support. 
The space of test functions over $\Omega$ will be denoted by $\mathscr{D}\left( \Omega\right) $. 
Observe that $\mathscr{D}\left( \Omega\right) $ is the union of the subspaces $\mathscr{D}_K\left( \Omega\right) $ as $K$ runs through all compact subsets of $\Omega$.
\end{defi}
The space $\mathscr{E}\left( \Omega\right) $ (resp. $\mathscr{E}^{m}\left( \Omega\right) $) can be endowed with the structure of a Hausdorff locally convex topological space, defined by a sequence of seminorms and having the following property:
\\
\\
\begin{changemargin}{0.75cm}{0.75cm} 
\textit{A sequence $\left( \varphi_i\right)_{i\in\mathbb{N}} $ in $\mathscr{E}\left( \Omega\right) $ $($resp. $\mathscr{E}^{m}(\Omega)$$)$ converges to zero if and only if for each chart $\left(V, \psi\right) $, each compact subset $K$ of $\psi(V)$
and each multi-index $\nu$ $($resp. such that $|\nu|\leq m$$)$, the sequence }
	\begin{equation*}
	\Big(  \partial^\nu\left( \varphi_i\circ \psi^{-1}\right)|_K\Big) _{i\in\mathbb{N}}
	\end{equation*}  
\textit{converges uniformly to zero.} 
\end{changemargin}
To establish the existence of such a topology we proceed as follows.
\begin{defi}[See \cite{dieu}, Ch. XVII, \S 3]
\label{hendrix}
Let $\mathcal{M}$ be a $d$-dimensional manifold and take $\Omega$ a nonempty open subset of $\mathcal{M}$.
Consider an almost denumerable family of charts  $(V_\alpha,\psi_\alpha)$ of $\Omega$ such that the $V_\alpha$ form a locally finite open covering of $\Omega$.
Let 
	\begin{equation*}
		\left\lbrace K_{l,\alpha}\right\rbrace _{l\in\mathbb{N}}
	\end{equation*}
be a fundamental sequence of compact subsets of $\psi_\alpha(V_\alpha)$, and let $p'_{k,l,\alpha}$ be the seminorm $\parallel \cdot \parallel_k^l$ on $\mathscr{E}\left( \psi_\alpha\left( V_\alpha\right) \right) $, as defined in \eqref{seminorms2}. 
For each pair of integers $k\geq 0$ and $l>0$, and each function $\varphi$  in $\mathscr{E}\left( \Omega\right)$ $($resp. $\mathscr{E}^{m}(\Omega)$$)$ we define the seminorm:
	\begin{equation}\label{castellano}
		p_{k,l,\alpha}(\varphi)= p'_{k,l,\alpha}\left( \varphi\circ\psi_{\alpha}^{-1}\right) .
	\end{equation}
\end{defi}
It is clear that 
	\begin{equation*}
		\mathscr{P}=\left\lbrace p_{k,l,\alpha} \right\rbrace_{k,l,\alpha}
	\end{equation*}
is a separating family of seminorms.
To prove that the required condition is satisfied, we have to show that if $\left(  \varphi_i\right)_{i\in\mathbb{N}}$ tends to zero with respect to the topology defined by these seminorms (see Theorem \ref{degas}), then for each chart $\left(V, \psi\right) $ the sequence  
	\begin{equation*}
		\Big( \partial^\nu\left( \varphi_i\circ \psi^{-1}\right)|_K\Big) _{i\in\mathbb{N}} 
	\end{equation*}  
converges uniformly to zero for every compact subset $K$ of $\psi(V)$.

The compact set $\psi^{-1}(K)$ meets only a finite number of open sets $V_\alpha$, say $V_{\alpha_1}\dots V_{\alpha_q}$.
We can define a fundamental sequence of
	\begin{equation*}
		\bigcup_{h=1}^q V_{\alpha_h}
	\end{equation*}
 by
	\begin{equation*}
		K_l=\bigcup_{h=1}^q \psi_{\alpha_h}^{-1}\left( K_{l,\alpha_h}\right) \mbox{,}
	\end{equation*}
where we recall that 
	\begin{equation*}
		\left\lbrace K_{l,\alpha_h}\right\rbrace_{l\in\mathbb{N}}
	\end{equation*}
is a fundamental sequence of $\psi_{\alpha_h}(V_{\alpha_h})$.
It follows that there exists an integer $l$  such that $K_l$ contains $\psi^{-1}(K)$.
If $w_{i}^h$ is the restriction of $ \varphi_i\circ \psi^{-1}$ to $\psi(V\cap V_{\alpha_h})$, then it is enough to show that the restrictions of the $\partial^{\nu}w_{i}^h$ to 
	\begin{equation}\label{fela}
		K\cap \psi\left( \psi_{\alpha_h}^{-1}\left( K_{l,\alpha_h}\right) \right)
	\end{equation}
converge uniformly to zero.
 
Let 
	\begin{equation*}
		\Psi_h:=\psi_{\alpha_h}\circ \psi^{-1}:\psi\left(V\cap V_{\alpha_h} \right)\rightarrow \psi_{\alpha_h}\left(V\cap V_{\alpha_h} \right) 
	\end{equation*}
be the transition diffeomorphism. 
If we put
	\begin{equation*}
		\varphi_{i}^h:=\varphi_i\circ \psi_{\alpha_h}^{-1}
	\end{equation*}
then we have 
	\begin{equation*}
		w_{i}^h(t)=\varphi_{i}^h\left(\Psi_h(t) \right).  
	\end{equation*}
By \textit{(iii)} of Lemma \ref{mafi}, $\left(  w_{i}^h\right)_{i\in\mathbb{N}}$ converges to zero because each sequence $\left(\varphi_{i}^h\right)_{i\in\mathbb{N}}$ does so in the space 
	\begin{equation*}
		\mathscr{E}\left( \psi_{\alpha_h}\left( V_{\alpha_h}\right) \right)\quad  (\mbox{resp. } \mathscr{E}^m\left( \psi_{\alpha_h}\left( V_{\alpha_h}\right) \right) ).
	\end{equation*} 
Thus, the restrictions of the partial derivatives $\partial^{\nu}w_{i}^h$ to the sets \eqref{fela} converge uniformly to zero on every compact set $K$ contained in  $V$.
\begin{rem}
$\mathscr{D}_K(\Omega)$ $($resp. $\mathscr{D}_K^{m}(\Omega))$  is clearly a closed subspace of $\mathscr{E}\left( \Omega\right) $ $($resp. $\mathscr{E}^{m}(\Omega))$.
\end{rem}

\begin{rem}[On the construction of a  compatible set of seminorms for $\mathscr{E}\left( \Omega'\right) $ if $\Omega'\subseteq \Omega$]\label{sandia}
Following the notation of definition \ref{hendrix}, whenever we have an inclusion $\Omega'\subseteq \Omega$  of open subsets of $\mathcal{M}$, we consider an almost denumerable family of charts  $(V_\alpha,\psi_\alpha)$ of $\Omega$ such that the $V_\alpha$ form a locally finite open covering of $\Omega$.
For $\Omega'$, consider the family of charts given by restriction to $\Omega'$:
	\begin{equation*}
		\left( V_\alpha',\left.\psi_\alpha\right|_{V_\alpha'}\right)\mbox{,}
	\end{equation*}
where $V_\alpha':=V_\alpha\cap\Omega'$.
The latter family of charts forms a locally finite open covering of $\Omega'$.
In addition, if
	\begin{equation*}
		\left\lbrace K_{l,\alpha} \right\rbrace _{l\in\mathbb{N}}
	\end{equation*}
is a fundamental sequence of $\psi_\alpha\left( V_\alpha\right) $, choose a fundamental sequence of $\psi_\alpha\left( V'_\alpha\right) $
	\begin{equation*}
		\left\lbrace K'_{l,\alpha} \right\rbrace _{l\in\mathbb{N}}
	\end{equation*}
such that $K'_{l,\alpha}$ is contained in $K_{l,\alpha}$.
Then, calling $p_{k,l,\alpha}$ and $p^{\Omega'}_{k,l,\alpha}$ the seminorms defined in \eqref{castellano} for $\mathscr{E}\left( \Omega\right) $ and  $\mathscr{E}\left( \Omega'\right) $,  respectively, we have the following immediate consequence:
	\begin{equation*}
		p^{\Omega'}_{k,l,\alpha}(\varphi)\leq 
		p_{k,l,\alpha}(\varphi)
	\end{equation*}
for every $\varphi$ in $\mathscr{E}\left( \Omega\right) $.
\end{rem}
\begin{propo}\label{naranja}
For each function $\psi$ in $\mathscr{E}\left( \Omega\right) $ $(\mbox{resp. }\psi$ in   $\mathscr{E}^{m}(\Omega))$ the linear mapping $\varphi\mapsto \psi \varphi$ of $\mathscr{E}\left( \Omega\right) $ $($resp. $\mathscr{E}^{m}(\Omega)$$)$ into itself is continuous.
Moreover, consider an almost denumerable family of charts  $(V_\alpha,\psi_\alpha)$ of $\Omega$ such that the $V_\alpha$ form a locally finite open covering of $\Omega$. 
If $\varphi,\psi$ belong to $\mathscr{E}\left( \Omega\right) $ $($resp. $\varphi,\psi$ belong to $\mathscr{E}^{m}(\Omega)$$)$, for each $k$ in $\mathbb{N}_0$ $(\mbox{resp. }0 \leq k\leq m)$, each $l$ in $\mathbb{N}$  and each index $\alpha$, there exists a constant $M>0$  independent of $\varphi$ such that $	p_{k,l,\alpha}( \psi\varphi ) \leq M 	p_{k,l,\alpha}(\varphi )$. 
\end{propo}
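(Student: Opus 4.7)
The plan is to reduce the manifold statement to its Euclidean analogue, namely Lemma \ref{mafi} \textit{(ii)}, by pulling back through the charts $(V_\alpha,\psi_\alpha)$ used to define the seminorms $p_{k,l,\alpha}$.

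First I would exploit the compatibility of pointwise multiplication with pullbacks: for every $\alpha$, $(\psi\varphi)\circ \psi_\alpha^{-1} = (\psi\circ \psi_\alpha^{-1})\cdot (\varphi\circ\psi_\alpha^{-1})$ as smooth (resp. $\mathcal{C}^m$) functions on the open set $\psi_\alpha(V_\alpha)\subseteq \mathbb{R}^d$. By the very definition \eqref{castellano} of $p_{k,l,\alpha}$ this yields
\begin{equation*}
  p_{k,l,\alpha}(\psi\varphi)=\bigl\| (\psi\circ\psi_\alpha^{-1})\cdot(\varphi\circ\psi_\alpha^{-1})\bigr\|_k^{K_{l,\alpha}}.
\end{equation*}
Since $\psi\circ\psi_\alpha^{-1}$ lies in $\mathscr{E}(\psi_\alpha(V_\alpha))$ (resp.\ $\mathscr{E}^m(\psi_\alpha(V_\alpha))$), Lemma \ref{mafi} \textit{(ii)}, applied on this Euclidean open set with the compact $K=K_{l,\alpha}$, provides a constant $M=M(\psi,k,l,\alpha)>0$, independent of $\varphi$, such that
\begin{equation*}
  \bigl\|(\psi\circ\psi_\alpha^{-1})(\varphi\circ\psi_\alpha^{-1})\bigr\|_k^{K_{l,\alpha}}\leq M\,\bigl\|\varphi\circ\psi_\alpha^{-1}\bigr\|_k^{K_{l,\alpha}}=M\,p_{k,l,\alpha}(\varphi).
\end{equation*}
Chaining the two displays gives the ``Moreover'' part of the statement directly.

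For the continuity assertion, the map $m_\psi:\varphi\mapsto \psi\varphi$ is linear, so by Theorem \ref{degas} it suffices to verify that $m_\psi^{-1}(U)$ is a zero neighborhood for every basic convex balanced zero neighborhood $U$. Such a $U$ is a finite intersection of sets of the form $V(p_{k,l,\alpha},n)=\{\varphi:p_{k,l,\alpha}(\varphi)<1/n\}$, and the seminorm bound just proved shows that $V(p_{k,l,\alpha},\lceil Mn\rceil)\subseteq m_\psi^{-1}(V(p_{k,l,\alpha},n))$. Intersecting over the finitely many factors of $U$ yields a basic neighborhood of $0$ contained in $m_\psi^{-1}(U)$, so $m_\psi$ is continuous at $0$, and hence everywhere by linearity.

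There is no real obstacle here: the content of the proposition is entirely encoded in the Euclidean Leibniz estimate of Lemma \ref{mafi} \textit{(ii)}, and the only thing to be mildly careful about is that the constant $M$ depends on $\psi$, $k$, $l$ and the chart index $\alpha$ (through the derivatives of $\psi\circ\psi_\alpha^{-1}$ on $K_{l,\alpha}$) but not on $\varphi$, which is exactly what the ``Moreover'' clause claims.
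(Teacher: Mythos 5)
Your proposal is correct and follows the paper's own argument: pull back through the chart so that $p_{k,l,\alpha}(\psi\varphi)=p'_{k,l,\alpha}\bigl((\psi\circ\psi_\alpha^{-1})(\varphi\circ\psi_\alpha^{-1})\bigr)$ and then invoke Lemma \ref{mafi} \textit{(ii)} on the Euclidean open set $\psi_\alpha(V_\alpha)$. The only difference is that you spell out the routine deduction of continuity from the seminorm estimate, which the paper leaves implicit.
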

\begin{proof}
The result follows from
	\begin{equation*}
		\begin{split}
			p_{k,l,\alpha}( \psi\varphi ) &= p'_{k,l,\alpha}\left(( \psi\varphi ) \circ\psi_{\alpha}^{-1}\right)\\
			&= p'_{k,l,\alpha}\left( \left( \psi  \circ\psi_{\alpha}^{-1}\right) \left( \varphi\circ\psi_{\alpha}^{-1}\right)  \right)\mbox{,}
		\end{split}
	\end{equation*}
and the application of \textit{(ii)} of Lemma \ref{mafi}, which ensures there is a constant $M$ which does not depend on $\varphi$ such that
	\begin{equation*}
		p'_{k,l,\alpha}\left( \left(  \psi  \circ\psi_{\alpha}^{-1}\right) \left( \varphi \circ\psi_{\alpha}^{-1}\right)  \right)\leq M p'_{k,l,\alpha}  \left(  \varphi  \circ\psi_{\alpha}^{-1}\right) =M p_{k,l,\alpha}  \left( \varphi\right).
	\end{equation*} 
The statement of the remark  is thus proved.
\end{proof}
\section{Distributions}

\begin{defi}
\label{mariacallas} 
The space of \emph{distributions}\index{distribution} is defined to be the continuous dual space of $\mathscr{D}(\Omega)$.
Equivalently, a \emph{distribution}\index{distribution} is by definition a linear form $t$ on $\mathscr{D}(\Omega)$ whose restriction to each Fr\'echet space $\mathscr{D}_K(\Omega)$ is continuous. 
 The set of distributions forms a vector space which will be denoted by  $\mathscr{D}(\Omega)'$.

In order to verify that a linear function $t$ on $\mathscr{D}(\Omega)$ is a distribution, it must be shown that for each sequence $\left(  \varphi_i\right)_{i\in\mathbb{N}}$ contained in an arbitrary  $\mathscr{D}_K(\Omega)$ and such that converges to zero in $\mathscr{E}\left( \Omega\right) $, the sequence $\left(  \braket{t,\varphi_i}\right)_{i\in\mathbb{N}}$ tends to zero in $\mathbb{C}$. 
\end{defi}
\begin{rem}\label{hormi}
There are alternative definitions of distributions. The reader is referred to Chapter 6 of \cite{hormander}.
\end{rem}
\begin{rem}[Product of a distribution and an infinitely differentiable function]
If $t$ belongs to $\mathscr{D}(\Omega)'$ and $\varphi$ to $\mathscr{E}(\Omega')$, where $\Omega'\subseteq \Omega$ is an inclusion of open sets of $\mathcal{M}$, their product
	\begin{equation}\label{ghjdf}
		\left.t\right|_{\Omega'}\varphi 
	\end{equation} 
belongs to $\mathscr{D}(\Omega')'$ and is given by
	\begin{equation*}
		\left\langle \left.t\right|_{\Omega'}\varphi, \psi \right\rangle=\left\langle \left.t\right|_{\Omega'},\varphi \psi \right\rangle\mbox{,}
	\end{equation*}
for every $\psi$ in $\mathscr{D}(\Omega')$.
To simplify the notation, we will write $t\varphi$ instead of \eqref{ghjdf}.
\end{rem}
\begin{propo}[Equivalent definition of a distribution. See \cite{dieu}, 17.3.1.1] \label{rachmaninov2}
A distribution\index{distribution} is a linear function $t$ defined on $\mathscr{D}(\Omega)$ such that for each compact subset $K$ of $\Omega$ there exist integers $k$ and $l$, and a finite number of indices $\alpha_1\mbox{,}\dots\mbox{, }\alpha_q $ together with a constant $C\geq 0$, such that
	\begin{equation}\label{sesamo}
		\left|\braket{t,\varphi} \right|\leq C\sup_{1\leq h \leq q}  p_{k,l,\alpha_h}(\varphi) 
	\end{equation} 
for every $ \varphi$ in $\mathscr{D}_K(\Omega)$.
\end{propo}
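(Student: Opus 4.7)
The plan is to prove both implications by reducing to continuity of a linear functional on the Fréchet space $\mathscr{D}_K(\Omega)$, which inherits from $\mathscr{E}(\Omega)$ the topology generated by (the restrictions of) the seminorms $p_{k,l,\alpha}$ introduced in Definition \ref{hendrix}.

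For the nontrivial direction, assume $t$ is a distribution and fix a compact $K\subseteq \Omega$. Then $t|_{\mathscr{D}_K(\Omega)}$ is a continuous linear functional on a LCS, so by Theorem \ref{pepitos} it is bounded on some zero neighborhood. By Theorem \ref{degas} applied to the separating family $\left\lbrace p_{k,l,\alpha} \right\rbrace$, such a neighborhood contains a finite intersection of basic sets
\begin{equation*}
W=\Bigl\{\varphi\in\mathscr{E}(\Omega) : p_{k_h,l_h,\alpha_h}(\varphi)<\tfrac{1}{n_h},\ 1\leq h\leq q\Bigr\}.
\end{equation*}
Taking $k=\max_h k_h$ and $l=\max_h l_h$ and using that $p_{k,l,\alpha}$ is monotone in $k$ and $l$, I can replace $W$ by the simpler neighborhood $\{\varphi:p_{k,l,\alpha_h}(\varphi)<\delta,\ 1\leq h\leq q\}$ for a single $\delta>0$. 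A standard homogeneity/scaling argument for seminorms (rescaling $\varphi$ by $\delta/2$ divided by $\sup_h p_{k,l,\alpha_h}(\varphi)$ when the latter is nonzero; absorbing the case of zero supremum by iterating through multiples $n\varphi$) then yields the estimate \eqref{sesamo} with $C=2M/\delta$, where $M$ bounds $|t|$ on $W\cap \mathscr{D}_K(\Omega)$.

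For the converse, assume \eqref{sesamo} holds for every compact $K\subseteq \Omega$. I must check that $t|_{\mathscr{D}_K(\Omega)}$ is continuous for each such $K$. Since $\mathscr{D}_K(\Omega)$ is a closed subspace of the Fréchet space $\mathscr{E}(\Omega)$, it is metrizable, so by Theorem \ref{mielitas} continuity is equivalent to sequential continuity at zero. If $\left(\varphi_i\right)_{i\in\mathbb{N}}$ is contained in $\mathscr{D}_K(\Omega)$ and converges to zero in $\mathscr{E}(\Omega)$, then by Theorem \ref{degas}\,(i), each seminorm $p_{k,l,\alpha_h}(\varphi_i)\to 0$, hence $\sup_{1\leq h\leq q}p_{k,l,\alpha_h}(\varphi_i)\to 0$; the bound \eqref{sesamo} then forces $\langle t,\varphi_i\rangle \to 0$.

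I expect no genuine obstacle. The only mildly delicate point is bookkeeping in the forward direction: combining finitely many basic seminorm neighborhoods into a single expression of the form $C\sup_h p_{k,l,\alpha_h}$, and handling the degenerate case where the supremum vanishes. Once the local base from Theorem \ref{degas} and the equivalence between continuity and boundedness on a neighborhood (Theorem \ref{pepitos}) are invoked, the proposition is essentially a translation of the abstract statement into the explicit seminorms $p_{k,l,\alpha}$ built in Definition \ref{hendrix}.
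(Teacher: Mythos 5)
Your argument is correct. Note, however, that the paper itself offers no proof of this proposition: it is stated with a bare citation to Dieudonn\'e (17.3.1.1), just as its Euclidean analogue, Proposition \ref{joannnn}, is justified only by appeal to Theorem \ref{gustavklimt}. Your write-up therefore supplies a proof the thesis omits, and it does so using exactly the machinery the thesis has already built: Definition \ref{mariacallas} reduces everything to continuity of $t|_{\mathscr{D}_K(\Omega)}$, Theorem \ref{pepitos} converts continuity into boundedness on a zero neighborhood, Theorem \ref{degas} (together with Proposition \ref{aylan}) identifies a local base of finite intersections of the sets $V(p_{k,l,\alpha},n)$, and the monotonicity $p_{k,l,\alpha}\leq p_{k',l',\alpha}$ for $k\leq k'$, $l\leq l'$ (which holds because the $K_{l,\alpha}$ form a fundamental sequence) lets you pass to a single pair $(k,l)$. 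The rescaling step and the treatment of the degenerate case $\sup_h p_{k,l,\alpha_h}(\varphi)=0$ are both sound. Two small remarks: Theorem \ref{pepitos} is stated for \emph{nonzero} functionals, so you should dispose of the case $t|_{\mathscr{D}_K(\Omega)}=0$ separately (where \eqref{sesamo} holds with $C=0$); and in the converse direction the detour through metrizability and Theorem \ref{mielitas} is unnecessary, since \eqref{sesamo} already exhibits $t$ as bounded on the neighborhood $\left\lbrace \varphi\in\mathscr{D}_K(\Omega) : \sup_{1\leq h\leq q}p_{k,l,\alpha_h}(\varphi)<1\right\rbrace$, so Theorem \ref{pepitos} gives continuity directly.
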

\begin{propo}\label{joplin}
Let $t$ be in $\mathscr{D}(\Omega)'$ and let $O$ be an open subset of $\Omega$ such that $\braket{t,\varphi}=0$ for every $\varphi$ in $\mathscr{D}(\Omega)$ with  $\operatorname{Supp}\left( \varphi\right)$ contained in  $ O$. 
Then we say that $t$ \emph{vanishes on $O$}\index{v@distribution vanishing on an open set}. 
Let $\mathfrak{O}$ be the union of such open sets $O$, which is again open.
Then t vanishes on $\mathfrak{O}$.
\end{propo}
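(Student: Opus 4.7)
The plan is to reduce the statement to a finite cover argument via a partition of unity. Let $\varphi\in\mathscr{D}(\Omega)$ with $K:=\operatorname{Supp}(\varphi)\subseteq\mathfrak{O}$. Since $K$ is compact and $\mathfrak{O}=\bigcup_{\alpha} O_\alpha$ with each $O_\alpha$ open in $\Omega$ and $t$ vanishing on each $O_\alpha$, I can extract a finite subcover $O_{\alpha_1},\ldots,O_{\alpha_n}$ of $K$.

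Next I would invoke the existence of a smooth partition of unity subordinate to this finite cover: since $\mathcal{M}$ is paracompact and smooth, there exist $\chi_1,\ldots,\chi_n\in\mathscr{E}(\mathcal{M})$ with $\operatorname{Supp}(\chi_i)\subseteq O_{\alpha_i}$, $0\leq\chi_i\leq 1$, and $\sum_{i=1}^n\chi_i(x)=1$ for every $x$ in an open neighborhood $U$ of $K$ inside $\bigcup_{i}O_{\alpha_i}$. Multiplying $\varphi$ by the identity on its support then yields the decomposition
\begin{equation*}
    \varphi=\sum_{i=1}^n \chi_i\varphi,
\end{equation*}
valid as an equality in $\mathscr{D}(\Omega)$, because outside $U$ both sides vanish ($\varphi$ is supported in $K\subseteq U$), while on $U$ we have $\sum_i\chi_i=1$.

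Now each summand $\chi_i\varphi$ lies in $\mathscr{D}(\Omega)$ with $\operatorname{Supp}(\chi_i\varphi)\subseteq\operatorname{Supp}(\chi_i)\subseteq O_{\alpha_i}$, so by hypothesis $\langle t,\chi_i\varphi\rangle=0$ for each $i$. By linearity,
\begin{equation*}
    \langle t,\varphi\rangle=\sum_{i=1}^n\langle t,\chi_i\varphi\rangle=0,
\end{equation*}
which is exactly the claim that $t$ vanishes on $\mathfrak{O}$.

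The only nontrivial ingredient is the existence of the smooth partition of unity subordinate to a finite open cover of a compact subset, which follows from paracompactness and smoothness of $\mathcal{M}$ (a standard manifold construction using bump functions in charts glued via a locally finite refinement). Everything else is a routine manipulation of the definition of a distribution vanishing on an open set. Thus I do not anticipate any serious obstacle beyond citing or briefly recalling the partition of unity result.
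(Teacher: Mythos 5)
Your argument is correct and follows essentially the same route as the paper's own proof: extract a finite subcover of the compact support of $\varphi$, take a smooth partition of unity subordinate to it that sums to $1$ on (a neighborhood of) that support, write $\varphi=\sum_i\chi_i\varphi$, and conclude by linearity since each $\chi_i\varphi$ is a test function supported in one of the $O_{\alpha_i}$. The only cosmetic difference is that the paper takes the partition functions in $\mathscr{D}(\Omega)$ with $\sum_i\varphi_i=1$ on $K$ itself, while you take them in $\mathscr{E}(\mathcal{M})$ summing to $1$ on a neighborhood of $K$; both versions make the decomposition valid.
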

\begin{proof}
Take $\varphi$ in $\mathscr{D}(\Omega)$ such that $\operatorname{Supp}\left( \varphi\right)$ is contained in $\mathfrak{O}$. 
As $\operatorname{Supp}\left( \varphi\right) $ is compact there exist a finite number of open sets $O_1\mbox{,} \dots \mbox{, }O_n$ on which $t$ vanishes and such that
	\begin{equation*}
		K:=\operatorname{Supp}\left( \varphi\right) \subseteq\bigcup_{i=1}^n O_i.
	\end{equation*}
By a partition of unity argument (see \cite{vandijk}, Ch. 2,  Lemma 2.5), there is a set of functions $\left\lbrace\varphi_i\right\rbrace _{i=1}^n$ contained in $\mathscr{D}(\Omega)$ such that
	\begin{equation*}
		\operatorname{Supp}\left( \varphi_i\right) \subseteq O_i\mbox{,}\quad \varphi_i\geq 0\mbox{,}\quad \sum_{i=1}^{n}\varphi_i\leq 1\mbox{,}\ \ \mbox{ and }\quad \sum_{i=1}^{n}\varphi_i = 1 \mbox{ on }K .
	\end{equation*}
Then,
	\begin{equation*}
		\braket{t,\varphi}=\left\langle t,\sum_{i=1}^{n}\varphi_i\varphi\right\rangle =\sum_{i=1}^n\braket{t,\varphi_i\varphi}=0\mbox{,}
	\end{equation*}
where in the last equality we have used the fact that $\varphi_i\varphi$ belongs to $\mathscr{D}\left( O_i\right) $, on which $t$ vanishes.  
\end{proof}
Thus, $\mathfrak{O}$ is the largest open set on which $t$ vanishes. 
\begin{defi}\label{distrcomp}
The \emph{support} of a distribution \index{support of a distribution} $t$ in $\mathscr{D}\left( \Omega\right)' $, denoted by $\operatorname{Supp}\left( t\right) $, is the complement of the largest open set in $\Omega$ on which $t$ vanishes. 
Equivalently, a point $x$ is not in $\operatorname{Supp}\left( t\right) $ if there exists an open neighborhood $V$ of $x$ such that $\braket{t,\varphi}=0$  for every $\varphi$ in $\mathscr{D}\left( V\right) $.

For a fixed compact set $K$, we will denote by $\mathscr{D}'_K(\Omega)$ the space of distributions $t$ in $\mathscr{D}(\Omega)'$ whose support is contained in $K$.
\end{defi}
\begin{rem}
The set $\operatorname{Supp}\left( t\right)$ is clearly seen to be  closed in $\Omega$ with the subspace topology.
\end{rem}
\begin{thm}[See \cite{hormander}, Thm. 2.2.1.]\label{glueunic}
If $t$ belongs to $\mathscr{D}(\Omega)'$ and every point in $\Omega$ has a neighborhood to which the restriction of $t$  is zero, then $t=0$.
\end{thm}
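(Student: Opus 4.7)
The plan is to deduce the theorem directly from Proposition \ref{joplin}, which was just established. The hypothesis gives, for every $x \in \Omega$, an open neighborhood $O_x \subseteq \Omega$ of $x$ on which the restriction of $t$ vanishes in the sense that $\braket{t,\varphi} = 0$ for every $\varphi \in \mathscr{D}(\Omega)$ with $\operatorname{Supp}(\varphi) \subseteq O_x$. Let $\mathfrak{O} = \bigcup_{x \in \Omega} O_x$; this is an open set and clearly $\mathfrak{O} = \Omega$ because each $x$ lies in its own $O_x$. By Proposition \ref{joplin}, $t$ vanishes on $\mathfrak{O} = \Omega$, meaning $\braket{t,\varphi} = 0$ for every $\varphi \in \mathscr{D}(\Omega)$, i.e., $t = 0$.

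If one prefers to avoid invoking Proposition \ref{joplin} and give a direct argument, the plan is a standard partition of unity reduction. Fix an arbitrary $\varphi \in \mathscr{D}(\Omega)$ and let $K = \operatorname{Supp}(\varphi)$, which is compact. By hypothesis, for every $x \in K$ there is an open neighborhood $O_x \subseteq \Omega$ of $x$ on which $t$ vanishes. The family $\{O_x\}_{x \in K}$ is an open cover of the compact set $K$, so one can extract a finite subcover $O_{x_1}, \dots, O_{x_n}$. A smooth partition of unity $\{\varphi_i\}_{i=1}^n$ subordinate to this cover, together with the partition of unity results on manifolds (as used in the proof of Proposition \ref{joplin}), provides functions with $\operatorname{Supp}(\varphi_i) \subseteq O_{x_i}$ and $\sum_{i=1}^n \varphi_i = 1$ on $K$. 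Then $\varphi = \sum_{i=1}^n \varphi_i \varphi$, and each $\varphi_i\varphi$ lies in $\mathscr{D}(O_{x_i})$, so by linearity
\[
\braket{t,\varphi} = \sum_{i=1}^n \braket{t,\varphi_i\varphi} = 0.
\]
Since $\varphi$ was arbitrary, $t = 0$.

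The argument is essentially routine given the partition of unity machinery and Proposition \ref{joplin}. There is no real obstacle: the compactness of $\operatorname{Supp}(\varphi)$ is what converts the pointwise local condition into a finite decomposition, and linearity of $t$ finishes the job. The only subtlety worth noting is that the theorem strengthens Proposition \ref{joplin} in appearance only — conceptually, the union of the local neighborhoods is an open set covering $\Omega$, so the two statements are equivalent.
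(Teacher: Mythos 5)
Your proposal is correct, and your second (direct partition-of-unity) argument is essentially identical to the paper's own proof, which fixes $\varphi$, covers $\operatorname{Supp}(\varphi)$ by finitely many neighborhoods on which $t$ vanishes, and decomposes $\varphi$ via a subordinate partition of unity. Your first route, deducing the theorem immediately from Proposition \ref{joplin} by taking $\mathfrak{O}=\Omega$, is a valid and slightly slicker packaging of the same underlying argument, since Proposition \ref{joplin} was itself proved by that partition-of-unity reduction.
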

\begin{proof}
If $\varphi$ belongs to $\mathscr{D}(\Omega)$ we can  find for every $x$ in $\operatorname{Supp}\left( \varphi\right) $ an open neighborhood $U$ contained in $\Omega$ such that the restriction of $t$ to $U$ is zero.
As $\operatorname{Supp}\left( \varphi\right) $ is compact we can choose  a finite number of such open sets $U_i$ ($i=1\mbox{,}\dots\mbox{, }n$) which cover $\operatorname{Supp}\left( \varphi\right) $ .

Let $\left\lbrace\phi_i \right\rbrace_{i=i}^n $ be a partition of unity subordinated to the cover $\left\lbrace U_i \right\rbrace_{i=1}^n$.
Then we can write 
	\begin{equation*}
		\varphi=\sum_{i=1}^n \varphi_i\mbox{,} \qquad\mbox{with } \varphi_i:=\varphi\phi_i\in\mathscr{D}(\Omega)\ \forall i=1\mbox{,}\dots\mbox{, }n.
	\end{equation*}
Thus, $\braket{t,\varphi_i}=0$, which implies $\braket{t,\varphi}=\sum_{i=1}^n\braket{t,\varphi_i}=0$.
\end{proof}
\begin{rem}[See \cite{langlang93}, Ch. 11, \S 2, the  discussion immediately after Corollary 2.4]\label{hatcher}
Every compactly supported distribution $t$ in $\mathscr{D}(\Omega)'$ has a continuous extension to the space $\mathscr{E}(\Omega)$.
This is done by taking a function $\chi$ in $\mathscr{D}(\Omega)$ which is equal to 1 in an open neighborhood of $\operatorname{Supp}(t)$, and  defining the extension $t_{ext}$ by 
	\begin{equation}\label{matemate}
		\left\langle t_{ext}, \varphi \right\rangle:=\left\langle t, \chi\varphi \right\rangle 
	\end{equation}
for each $\varphi$ in $\mathscr{E}(\Omega)$.
It is immediate to see that \eqref{matemate} is independent of the choice of  $\chi$ subject to the condition that it is equal to 1 in an open neighborhood of $\operatorname{Supp}(t)$, and that it defines a continuous extension of $t$ to $\mathscr{E}(\Omega)$, namely 
	\begin{equation*}
		\left\langle t, \chi\psi \right\rangle=\left\langle t, \psi \right\rangle
	\end{equation*}
if  $\psi$ belongs to $\mathscr{D}(\Omega)$.
We will generally omit the suffix `\textit{ext}' when we consider this canonical extension, as its use will be clear from context.
\end{rem}
\begin{thm}[See \cite{hormander}, Thm. 2.2.4.]\label{gluedistrib}
Let $\left\lbrace \Omega_i\right\rbrace_{i\in I} $ be an arbitrary family of open sets in $\mathcal{M}$ and set $\Omega=\cup \Omega_i$.
If $t_i$ belongs to $\mathscr{D}\left( \Omega_i\right)' $ and $t_i=t_j$ on $\Omega_i\cap \Omega_j$ for all elements $i,j$ in $I$,
then there exists one and only one $t$ in $\mathscr{D}(\Omega)'$ such that $t_i$ is the restriction of $t$ to $\Omega_i$ for every $i$ in  $I$.
\end{thm}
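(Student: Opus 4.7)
The plan is to prove existence by a partition-of-unity construction and uniqueness as an immediate consequence of Theorem \ref{glueunic}.

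For uniqueness, suppose $t, \tilde{t} \in \mathscr{D}(\Omega)'$ both restrict to $t_i$ on each $\Omega_i$. Then $t - \tilde{t}$ vanishes on every $\Omega_i$, hence on $\Omega = \bigcup_i \Omega_i$ by Theorem \ref{glueunic}, so $t = \tilde{t}$.

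For existence, I would proceed as follows. Given $\varphi \in \mathscr{D}(\Omega)$, the support $K := \operatorname{Supp}(\varphi)$ is compact, so it is covered by finitely many members $\Omega_{i_1}, \dots, \Omega_{i_n}$ of the open cover. By the partition-of-unity argument used in the proof of Proposition \ref{joplin}, there exist functions $\chi_1, \dots, \chi_n \in \mathscr{D}(\Omega)$ with $\operatorname{Supp}(\chi_k) \subseteq \Omega_{i_k}$ and $\sum_{k=1}^n \chi_k = 1$ on a neighborhood of $K$, so that $\varphi = \sum_{k=1}^n \chi_k \varphi$ with each $\chi_k \varphi \in \mathscr{D}(\Omega_{i_k})$. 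Then define
\begin{equation*}
\langle t, \varphi \rangle := \sum_{k=1}^n \langle t_{i_k}, \chi_k \varphi \rangle.
\end{equation*}

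The main technical point (and the step I expect to be the most delicate, even though it is routine) is well-definedness, i.e.\ independence of the choice of subcover and of the partition of unity. Given a second such datum $\{\Omega_{j_l}\}_{l=1}^m$ with partition $\{\psi_l\}_{l=1}^m$, we compute
\begin{equation*}
\sum_{k=1}^n \langle t_{i_k}, \chi_k \varphi \rangle = \sum_{k,l} \langle t_{i_k}, \psi_l \chi_k \varphi \rangle = \sum_{k,l} \langle t_{j_l}, \psi_l \chi_k \varphi \rangle = \sum_{l=1}^m \langle t_{j_l}, \psi_l \varphi \rangle,
\end{equation*}
using in the first equality that $\sum_l \psi_l = 1$ on $K$, in the second the hypothesis $t_{i_k} = t_{j_l}$ on $\Omega_{i_k} \cap \Omega_{j_l}$ (noting that $\operatorname{Supp}(\psi_l \chi_k \varphi) \subseteq \Omega_{i_k} \cap \Omega_{j_l}$), and in the third that $\sum_k \chi_k = 1$ on $K$. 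Linearity of $t$ is then clear.

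To check that $t$ is a distribution, I would fix a compact $K \subseteq \Omega$ and a single partition of unity $\{\chi_k\}_{k=1}^n$ as above adapted to $K$; for every $\varphi \in \mathscr{D}_K(\Omega)$ the formula above applies with these fixed $\chi_k$. Since multiplication by the fixed $\chi_k$ is continuous from $\mathscr{D}_K(\Omega)$ into $\mathscr{D}_{K \cap \operatorname{Supp}(\chi_k)}(\Omega_{i_k})$ by Proposition \ref{naranja}, and each $t_{i_k}$ is continuous on the latter space, the finite sum defining $\langle t, \varphi \rangle$ is continuous on $\mathscr{D}_K(\Omega)$. By Definition \ref{mariacallas} (i.e.\ continuity on each $\mathscr{D}_K(\Omega)$), $t \in \mathscr{D}(\Omega)'$. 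Finally, for $\varphi \in \mathscr{D}(\Omega_i)$, choosing a subcover and partition of unity for $\operatorname{Supp}(\varphi)$ and applying the compatibility hypothesis on each $\Omega_i \cap \Omega_{i_k}$ yields $\langle t, \varphi \rangle = \sum_k \langle t_i, \chi_k \varphi \rangle = \langle t_i, \varphi \rangle$, so $t|_{\Omega_i} = t_i$, completing the proof.
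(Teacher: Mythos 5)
Your proposal is correct and follows essentially the same route as the paper's proof: a partition of unity subordinate to the cover, well-definedness via the double-sum computation using the compatibility hypothesis on the intersections $\Omega_{i}\cap\Omega_{j}$, and continuity on each $\mathscr{D}_K(\Omega)$ obtained by fixing a finite partition adapted to $K$ and invoking the continuity of multiplication by a fixed smooth function together with the continuity of each $t_{i_k}$. The only cosmetic difference is that the paper works with one global (locally finite) partition of unity while you build a finite one per compact set, and you spell out the final verification that $t|_{\Omega_i}=t_i$, which the paper leaves implicit.
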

\begin{proof}
The uniqueness is precisely Theorem \ref{glueunic} so we will prove the existence. 

Let $\left\lbrace \phi_i \right\rbrace_{i\in I} $ be a partition of unity associated to the cover $\left\lbrace \Omega_i\right\rbrace_{i\in I} $.
Then, for any function $\varphi$ in $\mathscr{D}(\Omega)$ we can write
	\begin{equation}\label{jamon}
		\varphi=\sum_{i\in I} \varphi_i\mbox{,}
	\end{equation}
with
	\begin{equation*}
		\varphi_i:=\varphi\phi_i\mbox{,}
	\end{equation*}
that belongs to  $\mathscr{D}(\Omega)$ for every  $i$ in  $I$.
The sum in \eqref{jamon} is finite because $\operatorname{Supp}\left( \varphi\right) $ is contained in the union of a finite number of sets of the cover $\left\lbrace \Omega_i \right\rbrace_{i\in I} $. 
Therefore, we have  that every function $\varphi$ in $\mathscr{D}(\Omega)$ can be written as a finite sum of the type \eqref{jamon}.
 
Next, if $t$ is a distribution with the required properties   
we must have that
	\begin{equation}\label{siria}
		\braket{t,\varphi}=\sum_{i\in I}\big\langle t_i,\left.\varphi_i\right|_{\Omega_i}\big\rangle
	\end{equation}
if $\varphi$ is expressed as a sum like \eqref{jamon}.
We shall prove that 
	\begin{equation*}
			\sum_{i\in I}\big\langle t_i,\left.\varphi_i\right|_{\Omega_i}\big\rangle
	\end{equation*}
is independent of the way in which  $\varphi$ is written. 
This will follow if we show that $\sum_{i\in I}\varphi_i=0$ implies
	\begin{equation*}
		\sum_{i\in I}\big\langle t_i,\left.\varphi_i\right|_{\Omega_i}\big\rangle=0.
	\end{equation*}
Set 
	\begin{equation*}
		K=\bigcup_{i\in I} \operatorname{Supp}\left( \varphi_i\right)\mbox{,}
	\end{equation*}
which is a compact subset of $\Omega$; and choose functions $\psi_j$ in $\mathscr{D}(\Omega)$ such that $\operatorname{Supp}\left( \psi_j\right)$ is contained in $\Omega_j$, and $\sum_{j\in I}\psi_j=1$ on  $K$, where we stress the fact that the last sum is finite. 
Then, we have 
	\begin{equation*}
		\operatorname{Supp}\left( \psi_j\varphi_i\right)\subseteq \Omega_j\cap \Omega_i 
	\end{equation*} 
and
	\begin{equation*}
		\big\langle t_i,\left.\left( \psi_j \varphi_i\right) \right| _{\Omega_i}\big\rangle=\big\langle t_j,\left.\left( \psi_j \varphi_i\right) \right| _{\Omega_j}\big\rangle.
	\end{equation*}  
Hence,
	\begin{equation*}
		\begin{split}
			\sum_{i\in I}\big\langle t_i,\left.\varphi_i\right|_{\Omega_i}\big\rangle&=\sum_{i\in I}\sum_{j\in I}\big\langle t_i,\left.\left( \psi_j \varphi_i\right) \right| _{\Omega_i}\big\rangle=\sum_{i\in I}\sum_{j\in I} \big\langle t_j,\left.\left( \psi_j \varphi_i\right) \right| _{\Omega_j}\big\rangle\\
			&=\sum_{j\in I} \left\langle t_j, \left. \left( \psi_j \sum_{i\in I}\varphi_i\right) \right| _{\Omega_j}\right\rangle=0.
		\end{split}
	\end{equation*}

Having proved that \eqref{siria} defines a linear form on $\mathscr{D}(\Omega)$ we must show that it has the continuity properties required of a distribution.

Choose a compact subset $K$ of $\Omega$, and, as before, choose functions $\psi_j$ in $\mathscr{D}(\Omega)$ such that $\operatorname{Supp}\left( \psi_j\right)$ is contained in $\Omega_j$, and $\sum_{j\in I}\psi_j=1$ on $K$, where we note that the last sum is finite.
If $\varphi$ belongs to  $\mathscr{D}_K(\Omega)$, we have that $\varphi \psi_j$ is in $\mathscr{D}(\Omega)$ whose support lies in $\Omega_j$ and $\varphi=\sum_{j\in I}\varphi\psi_j$.
So \eqref{siria} gives 
	\begin{equation}\label{mouse}
		\braket{t,\varphi}=\sum_{j\in I}\big\langle t_j,\left.\left( \varphi\psi_j\right) \right|_{\Omega_j}\big\rangle.
	\end{equation}

We seek to estimate 
	\begin{equation*}
		\left| \big\langle t_j,\left. \left( \varphi\psi_j\right) \right|_{\Omega_j}\big\rangle\right|
	\end{equation*}
for each $j$.
Since each $t_j$ is a continuous linear functional on $\mathscr{D}\left( \Omega_j\right) $, by Proposition \ref{rachmaninov2}, given the compact set
	\begin{equation*}
		K_j:=\operatorname{Supp}\left( \psi_j\right) \cap K\subseteq \Omega_j\mbox{,}
	\end{equation*} 
there exist integers $k_j$ and $l_j$ and a finite number of indices $\alpha_{1,j}\mbox{,}\dots\mbox{, }\alpha_{q_j,j} $ together with a constant $C_j\geq 0$ such that for every $\psi$ in $\mathscr{D}_{K_j}\left( \Omega_j\right) $
	\begin{equation}\label{joel}
		\left|\braket{t_j,\psi} \right|\leq C_j\sup_{1\leq h \leq q_j}  p^{\Omega_j}_{k_j,l_j,\alpha_{h,j}}(\psi) \leq C_j\sup_{1\leq h \leq q_j}  p_{k_j,l_j,\alpha_{h,j}}(\psi) 
	\end{equation} 
where we have followed the notations and facts  of Remark \ref{sandia} for the seminorms.

Therefore, each of the nonzero terms of the sum \eqref{mouse}, which are finitely many, can be estimated as:
	\begin{equation*}
		\left| \big\langle t_j,\left. \left( \varphi\psi_j\right) \right|_{\Omega_j}\big\rangle\right| \leq C_j\sup_{1\leq h \leq q_j}  p_{k_j,l_j,\alpha_{h,j}}(\varphi\psi_j).
	\end{equation*}
In addition, by Proposition \ref{naranja}, each of the seminorms evaluated on the product $\varphi\psi_j$ can be estimated as follows:
	\begin{equation*}
		p_{k_j,l_j,\alpha_{h,j}}\left( \varphi\psi_j\right) \leq C'_j\  p_{k_j,l_j,\alpha_{h,j}}(\varphi)\mbox{,}\qquad \forall 1\leq h \leq q_j\mbox{,}
	\end{equation*}
where $C'_j$ does not depend on $\varphi$.
Therefore,
	\begin{equation*}
		\left| \braket{t,\varphi}\right| \leq  \sum_{j\in I}C_jC'_j\sup_{1\leq h \leq q_j}  p_{k_j,l_j,\alpha_{h,j}}(\varphi) \leq M \sum_{j\in I}\sup_{1\leq h \leq q_j}  p_{k_j,l_j,\alpha_{h,j}}(\varphi)\mbox{,}
	\end{equation*}
for every $\varphi$ in  $\mathscr{D}_K(\Omega)$, and where $M=\max\left\lbrace C_jC'_j: j\in I\right\rbrace $.
Finally,
	\begin{equation*}
		M \sum_{j\in I}\sup_{1\leq h \leq q_j}  p_{k_j,l_j,\alpha_{h,j}}(\varphi)\leq M \sum_{j\in I}\sup_{\substack{1\leq h \leq q_j\\ j\in I}}  p_{k_j,l_j,\alpha_{h,j}}(\varphi)=M'\sup_{\substack{1\leq h \leq q_j\\ j\in I}}  p_{k_j,l_j,\alpha_{h,j}}(\varphi)\mbox{,}
	\end{equation*}
where $M'=M \sum_{j\in I}1$.
Thus,
	\begin{equation*}
		\left| \braket{t,\varphi}\right| \leq M'\sup_{\substack{1\leq h \leq q_j\\ j\in I}}  p_{k_j,l_j,\alpha_{h,j}}(\varphi)\mbox{,}
	\end{equation*}
which has the form of  estimate \eqref{sesamo}.
This completes the proof.
\end{proof}
\section{Distributions having moderate growth along a closed subset of a manifold}

\begin{defi}[\textit{cf.} with the ambiguous definition in \cite{vietdang}]
\label{grulla} Let $\Omega$ be an open subset of $\mathcal{M}$ and take a family of charts $\left( V_\alpha,\psi_\alpha\right) $  covering $\Omega$, as in definition \ref{hendrix}. 
A distribution $t$ in $\mathscr{D}\left( \Omega\setminus X\right)' $ has \index{moderate growth}\emph{moderate growth along $X$} if for every compact subset $K$ of $\Omega$  there is a finite collection of  seminorms $p_{k,l,\alpha_1}\mbox{, } p_{k,l,\alpha_2}\mbox{,} \dots\mbox{, }p_{k,l,\alpha_q}$, and a pair of constants $C$ and $s$ in $\mathbb{R}_{\geq 0}$ such that
	\begin{equation}\label{modgro1}
		\left| \braket{t,\varphi}\right| \leq C\left[ 1+d\left( \operatorname{Supp}\left( \varphi\right) ,X\right)^{-s}\right] \sup_{1\leq h \leq q}  p_{k,l,\alpha_h}(\varphi) 
	\end{equation}
for every $\varphi$  in $\mathscr{D}_K\left( \Omega\setminus X\right) $. 
\end{defi}
We denote by $\mathcal{T}_{\mathcal{M}\setminus X}\left( \Omega\right) $\label{virchulina} the set of distributions in $\mathscr{D}\left( \Omega\setminus X\right)'$ with moderate growth along $X$. 
Note that the previous definition of $\mathcal{T}_{\mathcal{M}\setminus X}\left( \Omega\right) $ is independent of the choice of the distance $d$, since two Riemannian metrics $g_1$ and $g_2$ on $\mathcal{M}$  are \emph{locally equivalent}\index{locally equivalent metrics}. \textit{i.e.} given any point of $\mathcal{M}$ there exists a neighborhood $V$ and positive constants $\lambda_1$ and $\lambda_2$ such that $g_1\leq \lambda_2g_2$ and $g_2\leq \lambda_1g_1$ on $V$. 
\begin{rem}\label{casopart}
If\ $t$ belongs to $\mathscr{D}\left( \mathcal{M}\right)' $ we can choose $s=0$ in \eqref{modgro1} and have the same estimate without the divergent factor
	\begin{equation*}
		1+d\left(\operatorname{Supp}\left( \varphi\right) ,X\right)^{-s}\mbox{,}
	\end{equation*}
by Proposition \ref{rachmaninov2}.
\end{rem} 

The product of a distribution with moderate growth and a $\mathcal{C}^\infty$ function gives a distribution with moderate growth.
This result is mentioned at the beginning of \S 1.1 of \cite{vietdang}, but the author gives no demonstration.
We thus state this result in the following remark and  give a brief  proof of it.

\begin{propo}\label{viir} 
For every $\varphi$ in $\mathscr{E}\left( \Omega\right) $, whenever  $t$ in $\mathscr{D}\left( \Omega\setminus X\right)' $ has moderate growth, the same property holds for $t\varphi$ in $\mathscr{D}\left( \Omega\setminus X\right)' $. 
\end{propo}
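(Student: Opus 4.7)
The plan is to unwind the definition of the product $t\varphi$ and combine two ingredients already at our disposal: the moderate growth estimate available for $t$ and the continuity of multiplication by a fixed $\mathcal{C}^\infty$ function, in the quantitative form given by Proposition \ref{naranja}.

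First I would fix an arbitrary compact subset $K$ of $\Omega$ and take $\psi$ in $\mathscr{D}_K(\Omega\setminus X)$. By definition $\langle t\varphi,\psi\rangle=\langle t,\varphi\psi\rangle$, and since $\operatorname{Supp}(\varphi\psi)\subseteq \operatorname{Supp}(\psi)\subseteq K$, the product $\varphi\psi$ still belongs to $\mathscr{D}_K(\Omega\setminus X)$. Applying moderate growth of $t$ to this compact $K$ yields constants $C,s\geq 0$, integers $k,l$ and indices $\alpha_1,\dots,\alpha_q$ (all depending only on $K$) such that
\begin{equation*}
\left|\langle t,\varphi\psi\rangle\right|\leq C\bigl[1+d(\operatorname{Supp}(\varphi\psi),X)^{-s}\bigr]\sup_{1\leq h\leq q}p_{k,l,\alpha_h}(\varphi\psi).
\end{equation*}

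Next, I would take care of the two factors on the right hand side separately. From $\operatorname{Supp}(\varphi\psi)\subseteq\operatorname{Supp}(\psi)$ we get $d(\operatorname{Supp}(\varphi\psi),X)\geq d(\operatorname{Supp}(\psi),X)$, so $d(\operatorname{Supp}(\varphi\psi),X)^{-s}\leq d(\operatorname{Supp}(\psi),X)^{-s}$ and hence
\begin{equation*}
1+d(\operatorname{Supp}(\varphi\psi),X)^{-s}\leq 1+d(\operatorname{Supp}(\psi),X)^{-s}.
\end{equation*}
For the seminorm factor, Proposition \ref{naranja} supplies, for each of the finitely many indices $h=1,\dots,q$, a constant $M_h>0$ depending on $\varphi$, $k$, $l$, $\alpha_h$ (and in particular not on $\psi$) such that $p_{k,l,\alpha_h}(\varphi\psi)\leq M_h\,p_{k,l,\alpha_h}(\psi)$. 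Setting $M=\max_{1\leq h\leq q}M_h$ and combining the two estimates gives
\begin{equation*}
\left|\langle t\varphi,\psi\rangle\right|\leq CM\bigl[1+d(\operatorname{Supp}(\psi),X)^{-s}\bigr]\sup_{1\leq h\leq q}p_{k,l,\alpha_h}(\psi),
\end{equation*}
which is exactly the moderate growth estimate for $t\varphi$ with the same exponent $s$, the same seminorms, and constant $CM$.

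There is essentially no serious obstacle; the only mild point requiring attention is the asymmetry in Proposition \ref{naranja}, namely that the constant there depends on the fixed smooth factor but is uniform in the varying test function. In our application the roles are reversed compared to how the proposition is phrased — here $\varphi$ is the fixed $\mathcal{C}^\infty$ function and $\psi$ is the variable test function — so one must simply verify that the proposition is indeed symmetric in the two arguments (both lie in $\mathscr{E}(\Omega)$ and the product is commutative), which it is.
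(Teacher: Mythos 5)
Your proof is correct and follows essentially the same route as the paper's: apply the moderate growth estimate of $t$ to $\varphi\psi$ (which stays in $\mathscr{D}_K(\Omega\setminus X)$), use $\operatorname{Supp}(\varphi\psi)\subseteq\operatorname{Supp}(\psi)$ to control the distance factor, and invoke Proposition \ref{naranja} to bound $p_{k,l,\alpha_h}(\varphi\psi)$ by a $\psi$-independent multiple of $p_{k,l,\alpha_h}(\psi)$. Your closing remark about the roles of the fixed and variable factors in Proposition \ref{naranja} is a fair point of care, and it resolves exactly as you say, by commutativity of the product.
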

 
\begin{proof} Take an arbitrary compact subset $K$ of $\Omega$, and  $\psi$ in $\mathscr{D}_K(\Omega\setminus X)$.
Then,
	\begin{equation*}
		\left| \braket{\varphi t, \psi}\right| =\left| \braket{t,\varphi\psi}\right| \leq C[1+d\left( \operatorname{Supp}\left( \varphi\psi\right) ,X\right) ^{-s}]\: \sup_{1\leq h \leq q}  p_{k,l,\alpha_h}(\varphi\psi)\mbox{,}
	\end{equation*}
where we have used the fact that
	\begin{equation*}
		\operatorname{Supp}\left( \varphi\psi\right) \subseteq \operatorname{Supp}\left(  \psi\right) \subseteq K\mbox{,}
	\end{equation*}
so that $\varphi\psi$ belongs to $\mathscr{D}_K(\Omega\setminus X)$;  and the fact that $t$ has moderate growth along $X$.
Therefore, the existence of a pair  $(C,s)$ in $\mathbb{R}_{\geq 0}^2$, and that of a finite collection of seminorms $p_{k,l,\alpha_1}\mbox{, } p_{k,l,\alpha_2}\mbox{,} \dots\mbox{, }p_{k,l,\alpha_q}$ such that the previous inequality holds, is justified.  
 
Moreover, as $\operatorname{Supp} \left( \varphi\psi\right)  \subseteq \operatorname{Supp}\left(  \psi\right) $ we have 
	\begin{equation*}
		\begin{split}
			d\left( \operatorname{Supp}\left( \varphi\psi\right) ,X\right) &\geq d\left( \operatorname{Supp}\left( \psi\right) ,X\right) \\ \implies d\left( \operatorname{Supp}\left( \varphi\psi\right) ,X\right) ^{-s}&\leq d\left( \operatorname{Supp}\left( \psi\right) ,X\right) ^{-s}.
		\end{split}	
	\end{equation*}
Therefore,
	\begin{equation*}
		\left| \braket{\varphi t, \psi}\right| \leq C[1+d\left( \operatorname{Supp}\left( \psi\right) ,X\right) ^{-s}] \sup_{1\leq h \leq q}  p_{k,l,\alpha_h}(\varphi\psi).
	\end{equation*}
Finally, we have to estimate the last factor of the above inequality. 
By Proposition \ref{naranja}, for each index $h$ there is a  constant $C_h\geq 0$ which does not depend on $\psi$ such that
	\begin{equation*}
		p_{k,l,\alpha_h}(\varphi\psi)\leq C_h p_{k,l,\alpha_h}(\psi).
	\end{equation*}
Then, if $C'=\max\left\lbrace C_h: 1\leq h \leq q \right\rbrace $, we get
	\begin{equation*}
		\sup_{1\leq h \leq q}  p_{k,l,\alpha_h}(\varphi\psi)\leq C' \sup_{1\leq h \leq q}  p_{k,l,\alpha_h}(\psi).
	\end{equation*}
Thus, 
	\begin{equation*}
		\left| \braket{\varphi t, \psi}\right| \leq CC'[1+d\left( \operatorname{Supp}\left( \psi\right) ,X\right) ^{-s}] \sup_{1\leq h \leq q}  p_{k,l,\alpha_h}(\psi)\mbox{,}
	\end{equation*}
which completes the proof.
\end{proof}
 
\chapter{The extension of distributions}\label{cucuu}

In this chapter we will focus on the problem of extension of distributions originally defined on the complement of a closed set in a  manifold. 
Recall that $\mathcal{M}$ denotes a $d$-dimensional smooth, paracompact, oriented manifold; $X$ a closed subset of $\mathcal{M}$; and $d$ is the distance function induced by some choice of smooth Riemannian metric $g$ on $\mathcal{M}$.
 
A great number of the results mentioned in this chapter appear in \cite{vietdang}, where they are ambiguosly stated or proved in an incomplete and confusing manner. 
This is the case of Theorems \ref{elteo} and \ref{janis}; Lemmas \ref{dumbo}, \ref{technical} and \ref{zappa}; and Proposition \ref{mendelsshon}.
We thus give an exhaustive and clear demonstration of them, filling the gaps left by the author.

\section{Main extension theorem}

We shall prove the following main theorem which gives equivalent conditions for a distribution $t$  in $\mathscr{D}(\mathcal{M}\setminus X)'$ to be continuously extendible to the whole manifold $\mathcal{M}$. 
This theorem appears in \cite{vietdang} as Thm 0.1.
The proof we shall provide is a well-organized and simplified one of that given there.
\begin{thm}\label{elteo}
Let  $t$ be a distribution in $\mathscr{D}\left( \mathcal{M}\setminus X\right)'$.
The three following claims are equivalent:
	\begin{itemize}
		\item [(i)] $t$ has moderate growth along $X$.
		\item [(ii)] $t$  is extendible to $\mathscr{D}\left( \mathcal{M}\right)$.
		\item [(iii)] There is a family of functions
			\begin{equation*}
				(\beta_\lambda)_{\lambda\in(0,1]}\subseteq\mathscr{E}\left( \mathcal{M}\right)
			\end{equation*}
		such that
			\begin{enumerate}
				\item $\beta_\lambda=0\ $ in a neighborhood of $X$,
				\item $\lim\limits_{\lambda\to 0}\beta_\lambda(x)=1\mbox{,}\ $ for every $x$ in $\mathcal{M}\setminus X$, 
			\end{enumerate}
		and a family 
			\begin{equation*}
				(c_\lambda)_{\lambda\in(0,1]}
			\end{equation*} 
		of distributions on $\mathscr{D}\left( \mathcal{M}\right)$ supported on $X$, such that the following limit
			\begin{equation*}
				\lim\limits_{\lambda\to 0}t\beta_\lambda -c_\lambda
			\end{equation*}
		exists and defines a continuous  extension to $\mathscr{D}\left( \mathcal{M}\right)$ of $t$
		\footnote{In principle, the product $t\beta_\lambda$ belongs to the space $\mathscr{D}\left(\mathcal{M}\setminus X  \right)'$. 
		However, for every $\varphi$ in $\mathscr{D}\left(\mathcal{M}\right) $ the action 
			\begin{equation*}
				\left\langle t\beta_\lambda,\varphi\right\rangle=\left\langle t,\beta_\lambda\varphi\right\rangle
			\end{equation*}
		is well-defined, since  $\beta_\lambda=0$ on a neighborhood of $X$.
		Thus,  $t\beta_\lambda$ can be thought to belong to the space $\mathscr{D}\left( \mathcal{M}\right)' $.}.
	\end{itemize}
\end{thm}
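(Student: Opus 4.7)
The plan is to prove the chain of implications $(iii) \Rightarrow (ii) \Rightarrow (i) \Rightarrow (iii)$. The first implication is essentially tautological: if $\tilde{t} := \lim_{\lambda \to 0}(t\beta_\lambda - c_\lambda)$ exists in $\mathscr{D}(\mathcal{M})'$ and restricts to $t$ on $\mathcal{M}\setminus X$, we have the desired continuous extension. For $(ii) \Rightarrow (i)$, given any continuous extension $\tilde{t} \in \mathscr{D}(\mathcal{M})'$, Proposition \ref{rachmaninov2} applied to $\tilde{t}$ yields, for each compact $K \subseteq \mathcal{M}$, a bound of the form $|\langle \tilde{t}, \varphi\rangle| \leq C\sup_h p_{k,l,\alpha_h}(\varphi)$ on $\mathscr{D}_K(\mathcal{M})$; restricting to test functions supported in $\mathcal{M}\setminus X$ produces exactly the estimate \eqref{modgro1} with $s = 0$ (cf.\ Remark \ref{casopart}).

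The substantial work lies in $(i) \Rightarrow (iii)$. First I would construct the cutoff family: fix a smooth nondecreasing $\chi: \mathbb{R}_{\geq 0} \to [0,1]$ equal to $0$ on $[0, 1/2]$ and to $1$ on $[1, \infty)$, and (after smoothing the distance function $d(\cdot, X)$ if needed) set $\beta_\lambda(x) := \chi(d(x, X)/\lambda)$. Then $\beta_\lambda$ vanishes on the $\lambda/2$-neighborhood of $X$ and tends pointwise to $1$ on $\mathcal{M}\setminus X$, so each $t\beta_\lambda$ extends by zero to a distribution on $\mathcal{M}$. Applying moderate growth to the test function $\beta_\lambda\varphi$ (whose support lies in $\{d(\cdot, X) \geq \lambda/2\}$) yields $|\langle t\beta_\lambda, \varphi\rangle| \leq C(1 + (\lambda/2)^{-s})\sup_h p_{k,l,\alpha_h}(\beta_\lambda\varphi)$; this quantity diverges as $\lambda \to 0$, and the role of $c_\lambda$ is precisely to absorb the divergent principal part.

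To construct $c_\lambda$, pick an integer $N$ large enough in terms of the moderate growth exponent $s$ and the order $k$ of the bounding seminorms, and invoke Whitney's extension theorem (Theorem \ref{whitney}) to obtain a continuous linear section $R: \mathscr{E}^N(X \cap K) \to \mathscr{E}^N(\mathcal{M})$ of the jet map $J^N$. Composition with the $N$-jet restriction yields a continuous operator $\Pi_N: \mathscr{D}_K(\mathcal{M}) \to \mathscr{E}^N(\mathcal{M})$ defined by $\Pi_N\varphi := R(J^N\varphi|_{X \cap K})$ that factors through the $N$-jet of $\varphi$ along $X$; consequently $\varphi - \Pi_N\varphi$ vanishes to order $N$ on $X$, which by Taylor's theorem gives estimates of the form $|\partial^j(\varphi - \Pi_N\varphi)(x)| \leq C\,d(x,X)^{N-|j|}$ for $|j| \leq N$. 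Now set $\langle c_\lambda, \varphi\rangle := \langle t\beta_\lambda, \Pi_N\varphi\rangle$, interpreting $t\beta_\lambda$ via its canonical extension to $\mathscr{E}(\mathcal{M})$ (Remark \ref{hatcher}); this $c_\lambda$ is supported on $X$ because $\Pi_N\varphi = R(0) = 0$ whenever $\varphi$ vanishes in a neighborhood of $X$. Consequently $\langle t\beta_\lambda - c_\lambda, \varphi\rangle = \langle t\beta_\lambda, \varphi - \Pi_N\varphi\rangle$, and convergence as $\lambda \to 0$ follows by combining moderate growth with the Taylor decay of $\varphi - \Pi_N\varphi$ on the annular region $\{\lambda/2 < d(\cdot, X) < \lambda\}$, where essentially all the $\lambda$-dependence concentrates.

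The main obstacle will be the simultaneous control of three competing scales in the seminorm estimate for $\beta_\lambda(\varphi - \Pi_N\varphi)$: the divergent prefactor $d(\operatorname{Supp}(\cdot), X)^{-s}$ coming from moderate growth, the $d(\cdot, X)^{N-|j|}$ vanishing of the Whitney remainder, and the derivatives of $\beta_\lambda$, which blow up like $\lambda^{-|j|}$. These must balance so that the relevant seminorms $p_{k,l,\alpha_h}(\beta_\lambda(\varphi - \Pi_N\varphi))$ remain bounded uniformly in $\lambda$, which forces the quantitative choice $N \geq k + s$ and requires a careful Leibniz expansion. A secondary technical point is globalizing: the construction of $\Pi_N$ (and hence of $c_\lambda$) is a priori local on each compact $K$, so assembling everything into a single distribution on $\mathcal{M}$ requires a partition of unity argument combined with the gluing principle of Theorem \ref{gluedistrib}.
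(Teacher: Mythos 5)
Your proposal is correct and follows essentially the same route as the paper: the trivial implications (iii)$\Rightarrow$(ii)$\Rightarrow$(i), then for (i)$\Rightarrow$(iii) a cutoff $\beta_\lambda$ built from the distance to $X$, a counterterm $c_\lambda$ obtained by pairing $t\beta_\lambda$ with a Whitney-extension projection onto the jets along $X$, and convergence from balancing the order-$N$ Taylor vanishing of $\varphi-\Pi_N\varphi$ against the $\lambda^{-s}$ divergence (the paper packages this balance as a dyadic telescoping in Proposition \ref{mendelsshon}, phrases your $\Pi_N$ abstractly as a \emph{renormalization scheme} in Theorem \ref{janis}, and performs the localization to Euclidean charts up front rather than at the end). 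Your streamlining of dispensing with the Dirac-sequence mollifiers used in Lemma \ref{zappa} is legitimate, because $\beta_\lambda$ vanishes near $X$, which is the only place where the Whitney remainder $\varphi-\Pi_N\varphi$ fails to be smooth.
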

\begin{proof}
In a trivial way, \textit{(iii)} implies \textit{(ii)}, being a particular case of the latter.
That \textit{(ii)} implies \textit{(i)} is quite straightforward, for if $\bar{t}$  is a continuous  extension  to $\mathscr{D}\left( \mathcal{M}\right)$ of $t$, by Proposition \ref{rachmaninov2}, it satisfies the estimate \ref{modgro1} in the particular case of $s=0$.
Thus, by definition,  $\bar{t}$ and therefore $t$ satisfies \textit{(i)}.
The proof of \textit{(i)} implies \textit{(iii)} is the matter of the following two sections.
In \S\ref{simple} we reduce the problem to the Euclidean case with a partition of unity argument (see Propositions \ref{locglo} and \ref{lociso}), and in \S\ref{proofEC} we prove the Euclidean case (see Theorem \ref{elteo2}).
\end{proof}
\section{Reduction to the Euclidean case }\label{simple}

Suppose \textit{(i)} of Theorem \ref{elteo} holds, namely $t$ in $\mathscr{D}(\mathcal{M}\setminus X)'$ has moderate growth along $X$.
 
The first step is to localize the problem by a partition of unity argument. 
Choose a locally finite cover of $\mathcal{M}$ by relatively compact open charts $\left( V_\alpha,\psi_\alpha\right) $ and a subordinate partition of unity $\left\lbrace \varphi_\alpha\right\rbrace _\alpha$ such that $\sum_{\alpha} \varphi_\alpha=1$. 
Set
	\begin{equation*}
		K_\alpha:=\operatorname{Supp}\left(  \varphi_\alpha\right) \subseteq V_\alpha.
	\end{equation*}
Then, each 
	\begin{equation}\label{talpha}
		t_\alpha:=t\varphi_\alpha|_{V_\alpha\setminus X}
	\end{equation}
is compactly supported as it vanishes outside $K_\alpha$, and belongs to
	\begin{equation*}
		\mathscr{D}\left( 
		V_\alpha\setminus\left( X\cap K_\alpha\right) \right)'.
	\end{equation*}
By Remark \ref{hatcher}, each $t_\alpha$ can be thought as an element of the space
	\begin{equation*}
		\mathscr{E}\left(V_\alpha\setminus\left( X\cap K_\alpha\right)  \right)'.
	\end{equation*}
In addition, every $t\varphi_\alpha$  has moderate growth along $X\cap K_\alpha$ by Proposition \ref{viir}. 

We shall call \textit{(i)-(iii)-global} the implication \textit{(i)} implies \textit{(iii)} in Theorem \ref{elteo}.
Now suppose we could prove the following local version  of \textit{(i)-(iii)-global}, which we shall call \textit{(i)-(iii)-local}.
\begin{changemargin}{0.75cm}{0.75cm} 
\textit{\textbf{\textit{(i)-(iii)-local}:} For every $\alpha$ let $t_\alpha$ be the compactly supported distribution
	\begin{equation*}
		t_\alpha=t\varphi_\alpha|_{V_\alpha\setminus X}
	\end{equation*}
in the space
	\begin{equation*}
		\mathscr{D}\left(V_\alpha\setminus\left( X\cap K_\alpha\right)  \right)',
	\end{equation*}
or else in
	\begin{equation*}
		\mathscr{E}\left(V_\alpha\setminus\left( X\cap K_\alpha\right)  \right)'
	\end{equation*}
by Remark \ref{hatcher}.
Then,  for every $\alpha$
	\begin{enumerate}[itemindent=2 cm]
		\item[\textbf{(i)-local}] $t_\alpha$ has moderate growth along $X\cap K_\alpha$, implies	
		\item[\textbf{(iii)-local}] there is a family of functions
			\begin{equation*}
				\left( \beta^\alpha_\lambda\right) _{\lambda\in(0,1]}\subseteq\mathscr{E}\left( V_\alpha\right) 
			\end{equation*} 
		such that
			\begin{enumerate}
				\item $\beta^\alpha_\lambda=0\ $ on a neighborhood of $X\cap K_\alpha$,
				\item $\lim\limits_{\lambda\to 0}\beta^\alpha_\lambda(x)=1\mbox{,}\ $  for every $x$ in $V_\alpha\setminus(X\cap  K_\alpha)$,
			\end{enumerate}
		and a family 
			\begin{equation}\label{fliac}
				\left( c^\alpha_\lambda\right)_{\lambda\in(0,1]}
			\end{equation}
		of distributions  on 	$\mathscr{D}\left( V_\alpha\right)$ supported on $X\cap K_\alpha$, viewed in the space $\mathscr{E}\left( V_\alpha\right)'$ by Remark \ref{hatcher}, such that the following limit
			\begin{equation}\label{strangedays}
				\lim\limits_{\lambda\to 0}t_\alpha\beta^\alpha_\lambda -c^\alpha_\lambda
			\end{equation}
		exists and  defines a continuous  extension  to $\mathscr{E}\left( V_\alpha\right)$ of $t_\alpha$
			\footnote{\label{guagw}In principle, the product $t_\alpha\beta^\alpha_\lambda$ belongs to the space $\mathscr{E}\left(V_\alpha\setminus\left( X\cap K_\alpha\right)  \right)'$. 
			However, for every $\varphi$ in $\mathscr{E}\left( V_\alpha\right) $ the action 
				\begin{equation*}
					\left\langle t_\alpha\beta^\alpha_\lambda,\varphi\right\rangle=\left\langle t_\alpha,\beta^\alpha_\lambda\varphi\right\rangle
				\end{equation*}
			is well-defined, since  $\beta^\alpha_\lambda=0$ on a neighborhood of $X\cap K_\alpha$.
			Thus,  $t_\alpha\beta^\alpha_\lambda$ can be thought to belong to the space $\mathscr{E}\left( V_\alpha\right)'$.}.	
	\end{enumerate}}
\end{changemargin}

Setting
	\footnote{\label{explicacion}A notation abuse has been used here: in principle, the product $\left.\varphi_\alpha\right|_{V_\alpha}\beta_{\lambda}^\alpha$ is only defined on the domain of the functions $\left.\varphi_\alpha\right|_{V_\alpha}$ and $\beta_{\lambda}^\alpha$, namely $V_\alpha$.
 	However, we assume that $\left.\varphi_\alpha\right|_{V_\alpha}\beta_{\lambda}^\alpha$ extends by zero outside $V_\alpha$ and therefore,  $\left.\varphi_\alpha\right|_{V_\alpha}\beta_{\lambda}^\alpha$ may be thought to belong to the space $\mathscr{E}\left( \mathcal{M}\right) $.
 	Thus, the sum $\sum\left.\varphi_\alpha\right|_{V_\alpha}\beta_{\lambda}^\alpha$ makes sense and belongs to $\mathscr{E}\left( \mathcal{M}\right) $ .
 	The second sum in \eqref{lluvia} is also a notation abuse, as the distributions $c_{\lambda}^\alpha$ act on different spaces, namely $\mathscr{D}\left( V_\alpha\right)$, or $\mathscr{E}\left( V_\alpha\right)$ by Remark \ref{hatcher}. 
 	However, as  $c^\alpha_\lambda$ has compact support in $X\cap K_\alpha$ we can extend it by zero outside $V_\alpha$.
 	Thus, $c^\alpha_\lambda$ can be thought to belong to the space $\mathscr{E}\left( \mathcal{M}\right)'$, and given $\varphi$ in $\mathscr{D}\left(\mathcal{M}\right)$, we understand the sum defining $c_\lambda$ to act as
 		\begin{equation*}
 			\left\langle c_\lambda,\varphi\right\rangle =\sum_{\alpha}\left\langle c^\alpha_\lambda, \varphi \right\rangle. 
 		\end{equation*}
 		}
	\begin{equation}\label{lluvia}
		\beta_\lambda=\sum_{\alpha}\left.\varphi_\alpha\right|_{V_\alpha}  \beta^\alpha_\lambda
		\quad  \mbox{ and }  \quad c_\lambda=\sum_{\alpha}c^\alpha_\lambda
	\end{equation}
we find that $t\beta_\lambda-c_\lambda$      converges to some extension of $t$ in $\mathscr{D}\left( \mathcal{M}\right)' $ when $\lambda \rightarrow 0$.
Thus, \textit{(iii)} of Theorem \ref{elteo}, holds.

Therefore, if  \textit{(i)} of Theorem \ref{elteo} is  assumed to be true and the implication \textit{(i)-(iii)-local} holds, then \textit{(iii)} of Theorem \ref{elteo} is also true.
In other words we have the following result:
\begin{propo}\label{locglo}
\textit{(i)-(iii)}-local implies \textit{(i)-(iii)}-global.
\end{propo}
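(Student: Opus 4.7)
The plan is to verify directly that the two families defined in equation \eqref{lluvia} meet every requirement of condition \textit{(iii)}-global, relying on the local families furnished by \textit{(iii)}-local through footnote \ref{explicacion} for the required extensions by zero.

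First I would check that $\beta_\lambda\in\mathscr{E}(\mathcal{M})$ is well-defined: each summand $\varphi_\alpha|_{V_\alpha}\beta^\alpha_\lambda$ vanishes on $V_\alpha\setminus K_\alpha$ and therefore extends smoothly by zero outside $V_\alpha$; local finiteness of the cover $\{V_\alpha\}$ makes the sum locally finite. For property (1) of \textit{(iii)}-global, fix $x\in X$. By local finiteness there is a neighborhood $W$ of $x$ meeting only finitely many indices $\alpha_{1},\dots,\alpha_{N}$ for which $W\cap V_{\alpha_i}\neq\emptyset$. For every such $\alpha_i$: either $x\notin K_{\alpha_i}$, in which case $\varphi_{\alpha_i}$ vanishes on an open subneighborhood of $x$; or $x\in K_{\alpha_i}\cap X$, in which case $\beta^{\alpha_i}_\lambda$ vanishes on an open subneighborhood of $x$ by property (1)-local. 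Intersecting the finitely many such neighborhoods yields an open neighborhood of $x$ where $\beta_\lambda=0$. For property (2), fix $x\in\mathcal{M}\setminus X$; then $x\notin X\cap K_\alpha$ for every $\alpha$, so property (2)-local gives $\beta^\alpha_\lambda(x)\to 1$, hence $\beta_\lambda(x)\to\sum_\alpha\varphi_\alpha(x)=1$, where only finitely many summands are nonzero. Similarly, $c_\lambda\in\mathscr{D}(\mathcal{M})'$ is well-defined with support in $X$, because each $c^\alpha_\lambda$ has compact support in $X\cap K_\alpha\subseteq X$ and extends by zero outside $V_\alpha$ via Remark \ref{hatcher}, and the cover is locally finite.

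The central step is the computation of $\lim_{\lambda\to 0}\langle t\beta_\lambda-c_\lambda,\varphi\rangle$ for arbitrary $\varphi\in\mathscr{D}(\mathcal{M})$. Since $\operatorname{Supp}(\varphi)$ is compact and the cover is locally finite, only finitely many indices $\alpha$ contribute to $\beta_\lambda\varphi$ or to $\langle c_\lambda,\varphi\rangle$. The function $\beta_\lambda\varphi$ vanishes near $X$, so it lies in $\mathscr{D}(\mathcal{M}\setminus X)$, and I would rewrite
\begin{equation*}
\langle t\beta_\lambda,\varphi\rangle=\bigl\langle t,\sum_\alpha \varphi_\alpha\beta^\alpha_\lambda\varphi\bigr\rangle=\sum_\alpha\langle t\varphi_\alpha,\beta^\alpha_\lambda\varphi\rangle=\sum_\alpha\langle t_\alpha\beta^\alpha_\lambda,\varphi|_{V_\alpha}\rangle,
\end{equation*}
obtaining
\begin{equation*}
\langle t\beta_\lambda-c_\lambda,\varphi\rangle=\sum_\alpha\langle t_\alpha\beta^\alpha_\lambda-c^\alpha_\lambda,\varphi|_{V_\alpha}\rangle.
\end{equation*}
By \textit{(iii)}-local the $\alpha$-th summand converges in $\mathbb{C}$ to $\langle\bar t_\alpha,\varphi|_{V_\alpha}\rangle$, where $\bar t_\alpha\in\mathscr{E}(V_\alpha)'$ denotes the continuous extension of $t_\alpha$. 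Hence the finite sum converges to $\langle\bar t,\varphi\rangle:=\sum_\alpha\langle\bar t_\alpha,\varphi|_{V_\alpha}\rangle$.

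It remains to check that $\bar t\in\mathscr{D}(\mathcal{M})'$ and that $\bar t$ extends $t$. Continuity follows from Proposition \ref{rachmaninov2}: for any compact $K\subseteq\mathcal{M}$, only indices in the finite set $\{\alpha:K\cap K_\alpha\neq\emptyset\}$ contribute, and each $\bar t_\alpha$ is continuous on $\mathscr{E}(V_\alpha)$; the restriction map $\varphi\mapsto\varphi|_{V_\alpha}$ from $\mathscr{D}_K(\mathcal{M})$ is continuous by Lemma \ref{mafi} and Remark \ref{sandia}, so $\bar t|_{\mathscr{D}_K(\mathcal{M})}$ is a finite sum of continuous functionals. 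Finally, for $\varphi\in\mathscr{D}(\mathcal{M}\setminus X)$ one has $\langle\bar t_\alpha,\varphi|_{V_\alpha}\rangle=\langle t_\alpha,\varphi|_{V_\alpha}\rangle=\langle t\varphi_\alpha,\varphi\rangle$, so $\langle\bar t,\varphi\rangle=\sum_\alpha\langle t\varphi_\alpha,\varphi\rangle=\langle t,\varphi\rangle$. The main obstacle is purely bookkeeping: carefully tracking the several extensions by zero flagged in footnotes \ref{guagw} and \ref{explicacion}, and justifying the exchange of sums and the distribution $t$ using the fact that every sum in sight is locally finite.
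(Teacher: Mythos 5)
Your proposal is correct and follows exactly the construction the paper uses: it defines $\beta_\lambda$ and $c_\lambda$ by \eqref{lluvia}, decomposes $\langle t\beta_\lambda-c_\lambda,\varphi\rangle$ into the locally finite sum $\sum_\alpha\langle t_\alpha\beta^\alpha_\lambda-c^\alpha_\lambda,\varphi|_{V_\alpha}\rangle$, and passes to the limit term by term. The paper states this almost without verification (relegating the extensions by zero to footnotes \ref{explicacion} and \ref{guagw}), so your write-up simply supplies the bookkeeping — smoothness and properties (1)--(2) of $\beta_\lambda$, the support of $c_\lambda$, continuity of the limit via Proposition \ref{rachmaninov2}, and the check that it restricts to $t$ — all of which is sound.
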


Now suppose the condition \textit{(i)-local} of implication \textit{(i)-(iii)-local} holds for every $\alpha$, namely each $t_{\alpha}$  has moderate growth along $X\cap K_\alpha$.

The idea is to use local charts to transfer the situation on each set $V_\alpha$ to an open subset of the Euclidean space and work on $\mathbb{R}^d$. 
On every set $V_\alpha$,  let 	
	\begin{equation*}
		\psi_\alpha:V_\alpha\rightarrow V\subseteq\mathbb{R}^d
	\end{equation*}
denote the corresponding chart where $V=\psi_\alpha\left( V_\alpha\right) $ is open. 
Then the pushforward 	
	\begin{equation}
		\tilde{t}_\alpha:=\psi_{\alpha*}\left( \left(\left. t\varphi_\alpha\right) \right|_{V_\alpha}\right)
	\end{equation}
is compactly supported as it vanishes outside $\psi_{\alpha}(K_\alpha)$, and belongs to
	\begin{equation*}
		\mathscr{D}\left( V\setminus \psi_{\alpha}\left( X\cap K_\alpha\right) \right)'.
	\end{equation*} 
By Remark \ref{hatcher}, each $\tilde{t}_\alpha$ can be seen as an element of the space 
	\begin{equation*}
		\mathscr{E}\left( V\setminus \psi_{\alpha}\left( X\cap K_\alpha\right) \right)'.
	\end{equation*} 
Actually, the compact set 	
	\begin{equation*}
		\psi_{\alpha}(X\cap K_\alpha)
	\end{equation*}
is contained in $V$, since $X\cap K_\alpha$ is contained in  $V_\alpha$ and $\psi_\alpha$ is a diffeomorphism. 
Therefore, the distribution $\tilde{t}_\alpha$ is an element of
	\begin{equation}\label{teblanconegro}
		\mathscr{E}\left( \mathbb{R}^d\setminus \psi_{\alpha}\left( X\cap K_\alpha\right) \right)'
	\end{equation}
and has compact support contained in  $\psi_{\alpha}(K_\alpha)$.

Suppose we could prove the the following implication, which we shall call \textit{(i)-(iii)-iso}.

\begin{changemargin}{0.75cm}{0.75cm} 
\textit{\textbf{\textit{(i)-(iii)-iso}:} 
For every $\alpha$ let $\tilde{t}_\alpha$ be the compactly supported distribution
	\begin{equation*}
			\tilde{t}_\alpha=\psi_{\alpha*}\left( \left(\left. t\varphi_\alpha\right) \right|_{V_\alpha}\right)
	\end{equation*}
in the space
	\begin{equation*}
			\mathscr{D}\left( V\setminus \psi_{\alpha}\left( X\cap K_\alpha\right) \right)'\mbox{,}
	\end{equation*}
or else in
	\begin{equation*}
			\mathscr{E}\left( \mathbb{R} ^d\setminus \psi_{\alpha}\left( X\cap K_\alpha\right) \right)'
	\end{equation*}
by Remark \ref{hatcher} and the discussion above.
Then,  for every $\alpha$
	\begin{enumerate}[itemindent=2 cm]
		\item[\textbf{(i)-iso}] $\tilde{t}_\alpha$ has moderate growth along the set $\psi_{\alpha}\left( X\cap K_\alpha\right)$, implies
		\item[\textbf{(iii)-iso}] there is a family of functions
			\begin{equation*}
				\left( \tilde{\beta}^\alpha_\lambda\right) _{\lambda\in(0,1]}\subseteq\mathscr{E}\left(\mathbb{R}^d\right) 
			\end{equation*} 
		such that
			\begin{enumerate}
				\item $\tilde{\beta}^\alpha_\lambda=0\ $ on a neighborhood of $ \psi_{\alpha}(X\cap K_\alpha)$,
				\item $\lim\limits_{\lambda\to 0}\tilde{\beta}^\alpha_\lambda(x)=1\mbox{,}\ $ for every $x$ in $\mathbb{R}^d\setminus \psi_{\alpha}(X\cap K_\alpha)$, 
			\end{enumerate}
		and a family 
			\begin{equation*}
				\left( \tilde{c}^\alpha_\lambda\right) _{\lambda\in(0,1]}
			\end{equation*}
		of distributions  on 	$\mathscr{D}\left( \mathbb{R}^d\right)$ supported on $\psi_{\alpha}(X\cap K_\alpha)$, viewed in the space $\mathscr{E}\left( \mathbb{R}^d\right)'$ by Remark \ref{hatcher}, such that the following limit	
			\begin{equation}\label{strangedays2}
				\lim\limits_{\lambda\to 0}\tilde{t}_\alpha\tilde{\beta}^\alpha_\lambda -\tilde{c}^\alpha_\lambda
			\end{equation}
		exists and defines a continuous extension to $\mathscr{E}\left( \mathbb{R}^d\right) $ of $\tilde{t}_\alpha$
			\footnote{In principle, the product $\tilde{t}_\alpha\tilde{\beta}^\alpha_\lambda$ belongs to the space $\mathscr{E}\left( \mathbb{R}^d\setminus \psi_{\alpha}\left( X\cap K_\alpha\right) \right)'$. 
			However, for every $\varphi$ in $\mathscr{E}\left( \mathbb{R}^d \right)' $ the action 
				\begin{equation*}
					\left\langle \tilde{t}_\alpha\tilde{\beta}^\alpha_\lambda,\varphi\right\rangle=\left\langle \tilde{t}_\alpha,\tilde{\beta}^\alpha_\lambda\varphi\right\rangle
				\end{equation*}
			is well-defined since  $\tilde{\beta}^\alpha_\lambda=0\ $ on a neighborhood of $ \psi_{\alpha}(X\cap K_\alpha)$.
			Thus,  $\tilde{t}_\alpha\tilde{\beta}^\alpha_\lambda$ can be thought to belong to the space $\mathscr{E}\left(\mathbb{R}^d\right)' $.}.
	\end{enumerate}}
\end{changemargin}

Define the family of functions 
	\begin{equation*}
		\beta^\alpha_\lambda:=\psi_\alpha^\ast\left( \tilde{\beta}^\alpha_\lambda\right) =\tilde{\beta}^\alpha_\lambda \circ \psi_\alpha\mbox{,}
	\end{equation*}
contained in  $\mathscr{E}\left( V_\alpha\right)$.
Define also the family of distributions 
	\begin{equation*}
		c^\alpha_\lambda:=\psi_\alpha^\ast\left( \tilde{c}^\alpha_\lambda \right)
	\end{equation*}
on $\mathscr{D}\left( V_\alpha\right)$ supported on $X\cap K_\alpha$, viewed in the space $\mathscr{E}\left( V_\alpha\right)'$ by Remark \ref{hatcher}, and where
	\begin{equation*}
		\big\langle \psi_\alpha^\ast\left( \tilde{c}^\alpha_\lambda \right), \phi\big\rangle =\big\langle\tilde{c}^\alpha_\lambda ,\phi\circ\psi_\alpha^{-1}\big\rangle\mbox{,}
	\end{equation*}
for every $\phi$ in $\mathscr{E}\left( V_\alpha\right)$.
We then obtain that
	\begin{equation}
		\lim\limits_{\lambda\to 0}t_\alpha\beta^\alpha_\lambda -c^\alpha_\lambda
	\end{equation}
exists and defines a continuous extension in $\mathscr{E}\left( V_\alpha\right)$ of $t_\alpha$ (see \eqref{talpha} and footnote \ref{guagw}).
Thus, \textit{(iii)-local} holds.

Therefore, if the condition   \textit{(i)-local} of implication \textit{(i)-(iii)-local}  is  assumed to be true for every $\alpha$ and the implication \textit{(i)-(iii)-iso} holds, then \textit{(iii)-local} of of implication \textit{(i)-(iii)-local} is also true for every $\alpha$.
In other words we have the following result:
\begin{propo}\label{lociso}
\textit{(i)-(iii)-iso} implies \textit{(i)-(iii)-local}. 
\end{propo}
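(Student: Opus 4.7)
The plan is to verify the hypothesis of \textit{(i)-(iii)-iso} from \textit{(i)-local}, apply that implication to obtain the families $(\tilde{\beta}^\alpha_\lambda)$ and $(\tilde{c}^\alpha_\lambda)$, and then transfer these back to $V_\alpha$ via the pullback under $\psi_\alpha$, checking that each defining property listed in \textit{(iii)-local} is preserved under this transfer.

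First I would show that if $t_\alpha$ has moderate growth along $X\cap K_\alpha$, then $\tilde{t}_\alpha=\psi_{\alpha*}(t_\alpha)$ has moderate growth along $\psi_\alpha(X\cap K_\alpha)$. Since $V_\alpha$ is relatively compact, the Riemannian distance $d$ restricted to $V_\alpha$ and the Euclidean distance pulled back through $\psi_\alpha$ are locally equivalent, so the divergent factor $1+d(\operatorname{Supp}(\varphi),X\cap K_\alpha)^{-s}$ can be controlled by the analogous Euclidean factor on the image side, up to multiplicative constants. The seminorms $p_{k,l,\alpha_h}$ from Definition \ref{hendrix} are by construction the pullbacks under $\psi_\alpha^{-1}$ of the usual $\mathbb{R}^d$-seminorms $\parallel\cdot\parallel_k^l$, so the estimate \eqref{modgro1} for $t_\alpha$ on $\mathscr{D}_K(V_\alpha\setminus(X\cap K_\alpha))$ transports to the required estimate for $\tilde{t}_\alpha=t_\alpha\circ \psi_\alpha^\ast$ on $\mathscr{D}_{\psi_\alpha(K)}(V\setminus\psi_\alpha(X\cap K_\alpha))$.

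Next, apply \textit{(i)-(iii)-iso} to obtain the families $(\tilde{\beta}^\alpha_\lambda)_{\lambda\in(0,1]}\subseteq\mathscr{E}(\mathbb{R}^d)$ and $(\tilde{c}^\alpha_\lambda)_{\lambda\in(0,1]}\subseteq\mathscr{E}(\mathbb{R}^d)'$, and define
\begin{equation*}
\beta^\alpha_\lambda:=\psi_\alpha^\ast(\tilde{\beta}^\alpha_\lambda)=\tilde{\beta}^\alpha_\lambda\circ\psi_\alpha\in\mathscr{E}(V_\alpha),\qquad c^\alpha_\lambda:=\psi_\alpha^\ast(\tilde{c}^\alpha_\lambda).
\end{equation*}
The properties listed under \textit{(iii)-local} then need to be checked one by one. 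The vanishing $\beta^\alpha_\lambda=0$ on a neighborhood of $X\cap K_\alpha$ follows from the analogous vanishing of $\tilde{\beta}^\alpha_\lambda$ on a neighborhood of $\psi_\alpha(X\cap K_\alpha)$ and the fact that $\psi_\alpha$ is a diffeomorphism, so preimages of open neighborhoods are open neighborhoods. The pointwise convergence $\beta^\alpha_\lambda(x)\to 1$ on $V_\alpha\setminus(X\cap K_\alpha)$ is immediate from $\tilde{\beta}^\alpha_\lambda(\psi_\alpha(x))\to 1$. The support condition $\operatorname{Supp}(c^\alpha_\lambda)\subseteq X\cap K_\alpha$ follows because the pullback of a distribution by a diffeomorphism sends supports to their preimages, and $\psi_\alpha^{-1}(\operatorname{Supp}(\tilde{c}^\alpha_\lambda))\subseteq X\cap K_\alpha$.

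Finally, for every $\phi\in\mathscr{E}(V_\alpha)$, writing out the definitions gives
\begin{equation*}
\langle t_\alpha\beta^\alpha_\lambda-c^\alpha_\lambda,\phi\rangle=\langle\tilde{t}_\alpha\tilde{\beta}^\alpha_\lambda-\tilde{c}^\alpha_\lambda,\phi\circ\psi_\alpha^{-1}\rangle,
\end{equation*}
where the right side is evaluated after extending $\phi\circ\psi_\alpha^{-1}$ from $V$ to $\mathbb{R}^d$; this extension is irrelevant to the pairing since both $\tilde{t}_\alpha$ and $\tilde{c}^\alpha_\lambda$ are compactly supported inside $V$, and can thus be viewed in $\mathscr{E}(\mathbb{R}^d)'$ via Remark \ref{hatcher}. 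The limit as $\lambda\to 0$ exists on the right side by hypothesis \textit{(iii)-iso}, hence the limit on the left exists and defines a continuous extension of $t_\alpha$ to $\mathscr{E}(V_\alpha)$. I expect the main technical obstacle to be Step 1 (the transfer of the moderate growth condition), since it requires careful handling of the local equivalence of metrics on $V_\alpha$ and an explicit comparison of the manifold seminorms with those of $\mathscr{E}(\mathbb{R}^d)$ used in Definition \ref{grulla}; the remaining steps are essentially bookkeeping under the diffeomorphism $\psi_\alpha$.
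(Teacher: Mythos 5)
Your proof is correct and follows essentially the same route as the paper: push $t_\alpha$ forward through the chart, invoke \textit{(i)-(iii)-iso}, and pull the resulting families $(\tilde{\beta}^\alpha_\lambda)$ and $(\tilde{c}^\alpha_\lambda)$ back via $\psi_\alpha^\ast$, checking that the defining properties of \textit{(iii)-local} survive the transfer. Your Step 1 (verifying that $\tilde{t}_\alpha$ indeed satisfies \textit{(i)-iso}) is left implicit in the paper, but it is immediate there because the manifold seminorms $p_{k,l,\alpha}$ of Definition \ref{hendrix} and the moderate growth condition of Definition \ref{grulla} are already formulated in terms of the pushforwards and the local equivalence of metrics, so your added care is consistent with, not a departure from, the paper's argument.
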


\section{Proof of the Euclidean case}\label{proofEC}

The results of Propositions \ref{locglo} and \ref{lociso} show that to prove  \textit{(i)} implies  \textit{(iii)} in Theorem \ref{elteo},  it suffices to prove the implication \textit{(i)-(iii)-iso}, \textit{i.e.} we only  have to show that \textit{(i)-iso} implies \textit{(iii)-iso} for every $\alpha$.
Observe that for every value of $\alpha$ the result to be proved is basically the one described in the following theorem, whose proof will be therefore sufficient.
\begin{thm}\label{elteo2}
Let $X$ be a compact subset of the Euclidean space  $\mathbb{R}^d$.
Let $t$ be a compactly supported distribution defined on $\mathscr{D}\left( \mathbb{R}^d\setminus X\right)$, and viewed in the space $\mathscr{E}\left( \mathbb{R}^d\setminus X\right)'$ by Remark \ref{hatcher},
with moderate growth along $X$.
Set 
	\begin{equation*}
		\mathcal{K}:=\operatorname{Supp}\left( t\right).
	\end{equation*}
Then, there is a family of functions
	\begin{equation*}
		(\beta_\lambda)_{\lambda\in(0,1]} \subseteq  \mathscr{E}\left( \mathbb{R}^d\right) 
	\end{equation*}
such that
	\begin{enumerate}
		\item $\beta_\lambda=0\ $ on a neighborhood of $X$,
		\item $\lim\limits_{\lambda\to 0}\beta_\lambda(x)=1\mbox{,}\ $ for every $x$ in $\mathbb{R}^d\setminus X$, 
	\end{enumerate}
and a family 
	\begin{equation*}
		(c_\lambda)_{\lambda\in(0,1]}
	\end{equation*}
of distributions  on 	$\mathscr{D}\left( \mathbb{R}^d\right)$ supported on $X$, viewed in the space $\mathscr{E}\left( \mathbb{R}^d\right)'$ by Remark \ref{hatcher}, such that the following limit
	\begin{equation*}
		\lim\limits_{\lambda\to 0}t\beta_\lambda -c_\lambda
	\end{equation*}
exists and defines a continuous extension to
$\mathscr{E}\left( \mathbb{R}^d\right)$ of $t$
\footnote{\label{guagw32}In principle, the product $t\beta_\lambda$ belongs to the space $\mathscr{E}\left(\mathbb{R}^d\setminus X \right)'$. 
However, for every $\varphi$ in $\mathscr{E}\left(\mathbb{R}^d\right) $ the action 
	\begin{equation*}
		\left\langle t\beta_\lambda,\varphi\right\rangle=\left\langle t,\beta_\lambda\varphi\right\rangle
	\end{equation*}
is well-defined, since  $\beta_\lambda=0$ on a neighborhood of $X$.
Thus,  $t\beta_\lambda$ can be thought to belong to the space $\mathscr{E}\left( \mathbb{R}^d\right)'$.}.
\end{thm}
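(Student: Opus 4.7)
The strategy is to separate the contribution to $\langle t,\beta_\lambda\varphi\rangle$ coming from the jet of $\varphi$ on $X$ (which carries all the divergence as $\lambda\to 0$) from the contribution of the flat-on-$X$ part of $\varphi$ (where the moderate-growth bound provides the needed control), and to absorb the former into the counterterms $c_\lambda$. The cutoffs themselves I would build from a smoothed distance: choose $\tilde\rho\in\mathscr{E}(\mathbb{R}^d)$ with $\tilde\rho^{-1}(0)=X$, $\tilde\rho\asymp d(\cdot,X)$ and $|\partial^\alpha\tilde\rho(x)|\leq C_\alpha\,\tilde\rho(x)^{\max(0,1-|\alpha|)}$ (such a function exists via a Whitney-type regularization of the distance to $X$); fix $\chi\in\mathscr{E}(\mathbb{R})$ with $\chi=0$ on $(-\infty,1]$ and $\chi=1$ on $[2,\infty)$, and set $\beta_\lambda(x)=\chi(\tilde\rho(x)/\lambda)$. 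Then $\beta_\lambda\in\mathscr{E}(\mathbb{R}^d)$ vanishes on $\{\tilde\rho<\lambda\}$, equals $1$ on $\{\tilde\rho\geq 2\lambda\}$ so that $\beta_\lambda\to 1$ pointwise on $\mathbb{R}^d\setminus X$, and one has $|\partial^\alpha\beta_\lambda|\leq C_\alpha\lambda^{-|\alpha|}$ with $\operatorname{Supp}(\partial^\alpha\beta_\lambda)\subseteq\{\lambda\leq\tilde\rho\leq 2\lambda\}$ for $|\alpha|\geq 1$.

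Next, I would apply Whitney's Extension Theorem \ref{whitney} at a finite order $N$ (to be fixed large in terms of the moderate-growth exponent $s$ and the seminorm order $k$ produced by \eqref{modgro1}) to obtain a continuous section $R_N:\mathscr{E}^N(X)\to\mathscr{E}^N(\mathbb{R}^d)$ of $J^N|_X$. This gives, by restriction, a continuous retraction $\sigma:=\operatorname{Id}-R_N\circ J^N|_X$ of $\mathscr{E}(\mathbb{R}^d)$ onto $\mathcal{I}^N(X,\mathbb{R}^d)$. Define
\begin{equation*}
\langle c_\lambda,\varphi\rangle:=\langle t,\,\beta_\lambda\cdot R_N(J^N\varphi|_X)\rangle.
\end{equation*}
If $\operatorname{Supp}(\varphi)\cap X=\emptyset$ then $J^N\varphi|_X=0$, so each $c_\lambda$ is supported on $X$; its continuity as a distribution follows from the continuity of $J^N|_X$, $R_N$, multiplication by $\beta_\lambda$ (Proposition \ref{naranja}) and $t$. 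The identity $\langle t\beta_\lambda-c_\lambda,\varphi\rangle=\langle t,\beta_\lambda\,\sigma(\varphi)\rangle$ then reduces the problem to showing that $\lim_{\lambda\to 0}\langle t,\beta_\lambda\psi\rangle$ exists for every $\psi\in\mathcal{I}^N(X,\mathbb{R}^d)$, with continuous dependence on $\psi$.

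For such a flat $\psi$ I would verify the Cauchy property using moderate growth. The function $(\beta_\lambda-\beta_{\lambda'})\psi$ has support inside $\{\min(\lambda,\lambda')\leq\tilde\rho\leq 2\max(\lambda,\lambda')\}$, so \eqref{modgro1} produces the factor $1+\min(\lambda,\lambda')^{-s}$. On that shell, Leibniz combined with $|\partial^\alpha\beta_\lambda|\lesssim\lambda^{-|\alpha|}$ and the flatness estimate $|\partial^j\psi(x)|\leq C_{j,M}\,\tilde\rho(x)^{M}$ (valid for every $M\leq N-|j|$ by Taylor's theorem and the definition of $\mathcal{I}^N$) yields
\begin{equation*}
\sup_{|K|\leq k}\big|\partial^K\big((\beta_\lambda-\beta_{\lambda'})\psi\big)\big|\leq C\,\max(\lambda,\lambda')^{M-k}.
\end{equation*}
Taking $N$, hence $M$, strictly larger than $k+s$ makes the Cauchy bound vanish as $\lambda,\lambda'\to 0$, so the sequence $\langle t,\beta_\lambda\psi\rangle$ converges. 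The analogous estimate, uniform in $\lambda$, together with the continuity of $\sigma$, transfers to continuity of the limit $\bar t(\varphi):=\lim_{\lambda}\langle t,\beta_\lambda\sigma(\varphi)\rangle$ as a linear functional on $\mathscr{E}(\mathbb{R}^d)$. Finally $\bar t$ extends $t$, since $\operatorname{Supp}(\varphi)\cap X=\emptyset$ implies both $\sigma(\varphi)=\varphi$ and $\beta_\lambda\varphi=\varphi$ for $\lambda$ small enough.

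The main technical obstacle I foresee is the coordination of three sets of quantitative data: the moderate-growth exponent $s$ and seminorm order $k$ allowed by \eqref{modgro1} on a fixed compact neighbourhood of $\mathcal{K}\cup X$; the differentiability order $N$ at which Whitney's finite-order section is applied, which governs how flat $\sigma(\varphi)$ is on $X$ (and hence the maximal admissible $M$ in the Taylor estimate); and the quantitative continuity of $R_N$ in the $\mathscr{E}^N$-norms, which is what allows the Cauchy bound to be promoted to continuity on $\mathscr{E}(\mathbb{R}^d)$ through $\sigma$. Choosing $N$ large enough uniformly so that all three estimates fit together, and then unwinding them carefully through the Whitney splitting, is the delicate bookkeeping that constitutes the heart of the argument.
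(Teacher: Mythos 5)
Your architecture is essentially the paper's: a cutoff family vanishing near $X$, a Whitney-type splitting of $\mathscr{E}(\mathbb{R}^d)$ into a part flat on $X$ plus a jet part, counterterms defined by pairing $t$ against the jet part, and convergence on the flat part obtained by playing flatness against moderate growth. (Your two deviations are both legitimate and even streamline matters: building $\beta_\lambda$ from a regularized distance instead of the paper's mollified indicator $\phi_\lambda\ast\alpha_\lambda$ is equivalent, and your observation that $\beta_\lambda\sigma(\varphi)$ is already smooth — because $\beta_\lambda$ kills the neighbourhood of $X$ where the Whitney extension is only $\mathcal{C}^N$ — lets you bypass the Dirac-sequence regularization that the paper needs in Lemma \ref{zappa} to pair $t$ with general elements of $\mathcal{I}^N(X,\mathbb{R}^d)$.)

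There is, however, a genuine gap in the Cauchy step, and it is exactly where you locate the ``delicate bookkeeping.'' Your bound for $\left|\left\langle t,(\beta_\lambda-\beta_{\lambda'})\psi\right\rangle\right|$ is the product of the moderate-growth factor $1+\min(\lambda,\lambda')^{-s}$ (governed by the \emph{inner} radius of the support shell) with the seminorm bound $C\max(\lambda,\lambda')^{M-k}$ (governed by the \emph{outer} radius, since the Leibniz terms where derivatives fall on $\beta_{\lambda'}$ live where $\tilde\rho\asymp\lambda'$). For $\lambda\ll\lambda'$ this product is of order $\lambda^{-s}\,(\lambda')^{M-k}$, which does not tend to zero as $\lambda,\lambda'\to 0$ independently, no matter how large $M$ is; so the Cauchy property does not follow from a single application of \eqref{modgro1} to the whole difference. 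The missing idea is a dyadic telescoping over intermediate scales: write $\beta_\lambda-\beta_{\lambda'}=\sum_j\bigl(\beta_{2^{-j-1}\lambda'}-\beta_{2^{-j}\lambda'}\bigr)$, apply the moderate-growth estimate to each annular piece separately (where inner and outer radii are comparable, giving a term of order $(2^{-j}\lambda')^{M-k-s}$), and sum the geometric series under the condition $M-k>s$. This is precisely the content of the paper's Proposition \ref{mendelsshon}, which first upgrades the moderate-growth estimate to one with $s=0$ on functions flat on $X$; once you insert that step, the rest of your argument goes through.
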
 

Before giving the proof of Theorem \ref{elteo2}, we will first discuss some intermediate results which will simplify the task, namely Lemmas \ref{dumbo}, \ref{technical} and \ref{zappa}; Proposition \ref{mendelsshon}; and Theorem \ref{janis}.

In the following lemma, for which we recall the definition of the ideal $\mathcal{I}\left( X,\mathbb{R}^d\right) $ of $\mathscr{E}\left( \mathbb{R}^d\right)$ (see Definition \ref{funcinf1}):
	\begin{equation*}
		\mathcal{I}\left( X,\mathbb{R}^d\right) =\left\lbrace\varphi\in\mathscr{E}\left( \mathbb{R}^d\right) : \operatorname{Supp}\left( \varphi\right) \cap X=\emptyset \right\rbrace\mbox{,}
	\end{equation*}
we show that the  moderate growth condition for a compactly supported distribution can be reformulated in terms of its natural growth when evaluated on elements of $\mathcal{I}\left( X,\mathbb{R}^d\right)$, instead of compactly supported functions.
This result is stated in equation (6) of \cite{vietdang}.
However, the author gives no proof of it,  so we provide it. 
\begin{lem}
\label{dumbo}
Let $t$ be a compactly supported distribution defined on $\mathscr{D}\left( \mathbb{R}^d\setminus X\right)$, and viewed in the space $\mathscr{E}\left( \mathbb{R}^d\setminus X\right)'$ by Remark \ref{hatcher}.
Let  $\{K_l\}_{l\in\mathbb{N}}$ be a fundamental sequence of compact sets covering $\mathbb{R}^d$ used to construct a family of seminorms as in Definition \ref{defiseminorms}. 
Then $t$ has moderate growth along $X$ if and only if there exist a pair $(C,s)$ in $\mathbb{R}_{\geq 0}^2$ and a seminorm $\parallel \cdot \parallel_k^l$  such that 
	\begin{equation} \label{modgro2}
		\left|\braket{t,\varphi} \right|\leq C \left[1+d\left(\operatorname{Supp} \left( \varphi\right) ,X \right)^{-s} \right]\parallel \varphi \parallel_k^l\mbox{,}
	\end{equation}
for every $\varphi$ in $\mathcal{I}\left( X,\mathbb{R}^d\right)$.
\end{lem}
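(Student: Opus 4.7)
The plan is to treat the two implications separately. The reverse direction ($\Leftarrow$) is essentially immediate: every $\phi$ in $\mathscr{D}_K\left(\mathbb{R}^d\setminus X\right)$, extended by zero to $\mathbb{R}^d$, is a smooth function with support in $K\subseteq \mathbb{R}^d\setminus X$ and therefore lies in $\mathcal{I}\left(X,\mathbb{R}^d\right)$. Applying the hypothesized bound \eqref{modgro2} to this extension and noting that the value $\langle t,\phi\rangle$ is intrinsic (by the canonical extension of Remark \ref{hatcher}) delivers precisely the moderate growth estimate \eqref{modgro1} for the compact set $K$.

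For the forward direction ($\Rightarrow$) the substantive idea is to use the compactness of $\mathcal{K}=\operatorname{Supp}(t)\subseteq\mathbb{R}^d\setminus X$ to reduce to a single, fixed compact set. First I would pick $\chi$ in $\mathscr{D}\left(\mathbb{R}^d\setminus X\right)$ equal to $1$ on a neighborhood of $\mathcal{K}$ and set $K_0=\operatorname{Supp}(\chi)\subseteq \mathbb{R}^d\setminus X$, so that $d(K_0,X)>0$. For any $\varphi$ in $\mathcal{I}\left(X,\mathbb{R}^d\right)$, the restriction of $\varphi$ to $\mathbb{R}^d\setminus X$ lies in $\mathscr{E}\left(\mathbb{R}^d\setminus X\right)$ and the canonical extension of Remark \ref{hatcher} gives $\langle t,\varphi\rangle = \langle t,\chi\varphi\rangle$, with $\chi\varphi\in \mathscr{D}_{K_0}\left(\mathbb{R}^d\setminus X\right)$. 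Applying the moderate growth hypothesis at the fixed compact $K=K_0$ produces a pair $(C,s)$, indices $k,l$, and a seminorm $\parallel\cdot\parallel_k^l$ such that
\begin{equation*}
   \left|\langle t,\chi\varphi\rangle\right|\leq C\bigl[1+d\left(\operatorname{Supp}(\chi\varphi),X\right)^{-s}\bigr]\parallel \chi\varphi\parallel_k^l.
\end{equation*}

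Since $\operatorname{Supp}(\chi\varphi)\subseteq K_0$, the bracketed distance factor is bounded by $M:=1+d(K_0,X)^{-s}$, which is a constant independent of $\varphi$. The Leibniz rule, in the form of \textit{(ii)} of Lemma \ref{mafi}, gives a further constant $C'>0$ depending only on $\chi$ and on $(k,l)$ such that $\parallel\chi\varphi\parallel_k^l\leq C'\parallel\varphi\parallel_k^l$. Combining these bounds yields $|\langle t,\varphi\rangle|\leq CMC'\parallel\varphi\parallel_k^l$, and since $1+d\left(\operatorname{Supp}(\varphi),X\right)^{-s}\geq 1$ we may freely reinsert the distance factor to match the form \eqref{modgro2} (in fact the argument shows one may take $s=0$, in accordance with Remark \ref{casopart}).

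The main obstacle is not analytic difficulty but rather the careful bookkeeping tying together three layers: the intrinsic action of $t$ on $\mathscr{D}\left(\mathbb{R}^d\setminus X\right)$, its canonical extension to $\mathscr{E}\left(\mathbb{R}^d\setminus X\right)$ afforded by compact support, and the evaluation on elements of $\mathcal{I}\left(X,\mathbb{R}^d\right)$ via restriction. One also has to note that in the Euclidean setting the family of seminorms $p_{k,l,\alpha}$ appearing in Definition \ref{grulla} is equivalent to the single-chart family $\parallel\cdot\parallel_k^l$ of Definition \ref{defiseminorms} (one may cover $\mathbb{R}^d$ by the identity chart, or pass between chart systems by \textit{(iii)} of Lemma \ref{mafi}), so the use of a single seminorm in \eqref{modgro2} is without loss of generality.
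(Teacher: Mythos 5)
Your reverse implication is correct and is exactly the paper's argument: $\mathscr{D}_K\left(\mathbb{R}^d\setminus X\right)\subseteq \mathcal{I}\left(X,\mathbb{R}^d\right)$, so the estimate on $\mathcal{I}\left(X,\mathbb{R}^d\right)$ restricts to each $\mathscr{D}_K\left(\mathbb{R}^d\setminus X\right)$.

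The forward implication, however, has a genuine gap at its very first step: you choose $\chi$ in $\mathscr{D}\left(\mathbb{R}^d\setminus X\right)$ equal to $1$ on a neighborhood of $\mathcal{K}=\operatorname{Supp}(t)$ and conclude $d\left(\operatorname{Supp}(\chi),X\right)>0$. Such a $\chi$ exists only if $\mathcal{K}$ is at positive distance from $X$, and in the setting this lemma is built for that is false: "compactly supported" here means that $t$ vanishes outside a compact subset $K$ of $\mathbb{R}^d$ (see the localization in \S\ref{simple}, where $t_\alpha=t\varphi_\alpha$ vanishes outside $K_\alpha$ and $K_\alpha$ meets $X$), so $\operatorname{Supp}(t)\subseteq K\setminus X$ may accumulate at $X$. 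If your reading were the intended one, every such $t$ would satisfy \eqref{modgro2} with $s=0$ by a one-line argument, which would make the moderate growth condition, Proposition \ref{mendelsshon} with its dyadic decomposition, and indeed the whole extension problem of Chapter \ref{cucuu} vacuous. Your own remark that "the argument shows one may take $s=0$" is precisely the symptom of this over-strong assumption.

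The paper's proof avoids the issue by taking $\chi$ in $\mathscr{D}_K\left(\mathbb{R}^d\right)$ — smooth on all of $\mathbb{R}^d$, with support allowed to meet $X$ — equal to $1$ on $\mathcal{K}$, where $K$ is a compact set with $\mathcal{K}\subseteq K^{\circ}$. The product $\chi\varphi$ then lies in $\mathscr{D}_K\left(\mathbb{R}^d\setminus X\right)$ only because $\varphi$ belongs to $\mathcal{I}\left(X,\mathbb{R}^d\right)$, and the distance factor is \emph{not} uniformly bounded: instead one uses $\operatorname{Supp}\left(\chi\varphi\right)\subseteq\operatorname{Supp}\left(\varphi\right)$ to get $d\left(\operatorname{Supp}\left(\chi\varphi\right),X\right)^{-s}\leq d\left(\operatorname{Supp}\left(\varphi\right),X\right)^{-s}$, so the divergent factor survives in terms of $\operatorname{Supp}(\varphi)$, exactly as \eqref{modgro2} demands, and only the seminorm is controlled via Lemma \ref{mafi}\textit{(ii)}. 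Your final observation about the equivalence of the seminorm families in the Euclidean case is fine but peripheral.
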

\begin{proof}
We have to show that the following statements are equivalent:
	\begin{enumerate}
		\item[(a)]For any compact subset  $K$ of $\mathbb{R}^d$  there is a pair of constants $(C,s)$ in $\mathbb{R}^2_{\geq 0}$ and a seminorm $ \parallel \cdot \parallel_k^l$ such that
		\begin{equation*}
			\left| \braket{t,\varphi}\right| \leq C\left[ 1+d\left( \operatorname{Supp} \left( \varphi\right) ,X\right) ^{-s}\right] \parallel \varphi \parallel_k^l\mbox{,}
		\end{equation*}
	for every $\varphi$ in $\mathscr{D}_K\left( \mathbb{R}^d\setminus X\right) $.
	\item[(b)] There exist a pair of constants $ (C,s)$ in $\mathbb{R}_{\geq 0}^2$  and a seminorm $\parallel \cdot \parallel_k^l$ such that 
		\begin{equation*}
 			\left|\braket{t,\varphi} \right|\leq C \left[1+d\left(\operatorname{Supp} \left(  \varphi\right) ,X \right)^{-s} \right]\parallel \varphi \parallel_k^l\mbox{,}
		\end{equation*}
	for every $\varphi$ in $\mathcal{I}\left( X,\mathbb{R}^d\right) $.
	\end{enumerate}
To prove that \textit{(b)} implies \textit{(a)}, take a compact subset  $K$ of $\mathbb{R}^d$.
As
	\begin{equation*}
		\mathscr{D}_K\left( \mathbb{R}^d\setminus X\right) \subseteq \mathcal{I}\left( X,\mathbb{R}^d\right)\mbox{,}
	\end{equation*}
we have, by item \textit{(b)}, that there is a pair of constants $(C,s)$ in $\mathbb{R}^2_{\geq 0}$ and a seminorm $\parallel \cdot \parallel_k^l $ such that
	\begin{equation*}
		\left| \braket{t,\varphi}\right| \leq C\left[ 1+d\left( \operatorname{Supp} \left( \varphi\right) ,X\right) ^{-s}\right]\parallel \varphi \parallel_k^l\mbox{,}
	\end{equation*}
for every $\varphi$ in  $\mathscr{D}_K\left( \mathbb{R}^d\setminus X\right) $.

To prove that \textit{(a)} implies \textit{(b)}, set
	\begin{equation*}
		\mathcal{K}:=\operatorname{Supp}\left( t\right).
	\end{equation*}
Let $\varphi$ be in $\mathcal{I}\left( X,\mathbb{R}^d\right) $ and take  $K$ a compact subset such that $\mathcal{K}$ is contained in $K^{\circ}$. 
For this $K$,  \textit{(a)} implies there is a seminorm $\parallel \cdot \parallel_k^l$ and a pair of constants $(C,s)$ in $\mathbb{R}^2_{\geq 0}$ such that  
	\begin{equation*}
		\left| \braket{t,\psi}\right| \leq C\left[ 1+d\left( \operatorname{Supp} \left( \psi\right) ,X\right) ^{-s}\right] \parallel \psi \parallel_k^l\mbox{,}
	\end{equation*}
for every $\psi$ in $\mathscr{D}_K\left( \mathbb{R}^d\setminus X\right) $.
Take a bump function $\chi$ in  $\mathscr{D}_K\left( \mathbb{R}^d\right) $ such that $\chi=1$ on $\mathcal{K}$. 
Then, the above inequality applies to $\chi \varphi$, which belongs to $\mathscr{D}_K\left( \mathbb{R}^d\setminus X\right) $, and we get
	\begin{equation*}
		\left| \braket{t,\varphi}\right| =\left| \braket{t,\chi \varphi}\right| \leq C\left[ 1+d\left( \operatorname{Supp} \left( \chi\varphi\right) ,X\right) ^{-s}\right]\parallel \chi \varphi \parallel_k^l.
	\end{equation*}
Moreover, as 		
	\begin{equation*}
		\operatorname{Supp}\left(  \chi\varphi\right) = \operatorname{Supp}\left(  \chi \right) \cap \operatorname{Supp}\left(  \varphi\right) \subseteq \operatorname{Supp}\left( \varphi\right)\mbox{,}
	\end{equation*}
we have
	\begin{equation*}
		d\left( \operatorname{Supp}\left( \chi\varphi\right) ,X\right) \geq d\left( \operatorname{Supp}\left( \varphi\right) ,X\right)\mbox{,}
	\end{equation*} 
and therefore,
	\begin{equation*}
		d\left( \operatorname{Supp}\left( \chi\varphi\right) ,X\right) ^{-s}\leq d\left( \operatorname{Supp}\left( \varphi\right) ,X\right) ^{-s}.
	\end{equation*}
Then,
	\begin{equation*}
		\left| \braket{t,\varphi}\right| \leq C\left[ 1+d\left( \operatorname{Supp} \left( \varphi\right) ,X\right) ^{-s}\right]\parallel \chi \varphi \parallel_k^l.
	\end{equation*}
Finally, by item \textit{(ii)} of Lemma \ref{mafi}, there exists a positive constant $M$, independent of $\varphi$, such that
	\begin{equation*}
		\parallel \chi \varphi \parallel_k^l\leq M \parallel  \varphi \parallel_k^l.
	\end{equation*}
Thus,
	\begin{equation*}
		\left| \braket{t,\varphi}\right| \leq CM\left[ 1+d\left( \operatorname{Supp} \left( \varphi\right) ,X\right) ^{-s}\right]\parallel \varphi \parallel_k^l.
	\end{equation*}
The lemma is thus proved.
\end{proof}
Now we turn to show in Proposition \ref{mendelsshon} a very strong  result that will enable us to define a continuous extension of a compactly supported distribution with moderate growth, namely  estimate \eqref{modgro2} is satisfied with $s=0$.
To achieve this we will need the following lemma.

We recall we denote by $\mathcal{I}^\infty \left( X,\mathbb{R}^d\right) $ (resp. $\mathcal{I}^m\left( X,\mathbb{R}^d\right) $) the space of $\mathcal{C}^\infty$ (resp. $\mathcal{C}^m$) functions which vanish on $X$ together with all of their derivatives (resp. all of their derivatives of order less than or equal to $m$). 
In addition we observe that we have the inclusions (see Definition \ref{funcinf1}): 
	\begin{equation*}
		\mathcal{I}\left( X,\mathbb{R}^d\right) \subseteq \mathcal{I}^\infty\left( X,\mathbb{R}^d\right) \subseteq \dots \subseteq \mathcal{I}^{m+1}\left( X,\mathbb{R}^d\right) \subseteq \mathcal{I}^m\left( X,\mathbb{R}^d\right)  \dots
	\end{equation*}

\begin{lem}[\textit{cf.} \cite{vietdang}, Lemma 1.1]\label{technical}
For every pair of positive integers $n$ and $m$, consider the closed ideal $\mathcal{I}^{m+n}\left( X,\mathbb{R}^d\right) $. 
Let $K$  be a compact subset of $\mathbb{R}^d$. 
Then, there is a family of functions $\chi_\lambda$ in $\mathscr{E}\left( \mathbb{R}^d\right) $, parametrized by $\lambda$ in $(0,1]$, which satisfies
	\begin{itemize}
		\item[(i)]$\chi_\lambda(x)=1$ if $d(x,X)\leq\frac{\lambda}{8}$,
		\item[(ii)]$\chi_\lambda(x)=0$ if $d(x,X)\geq\lambda$, and
		\item[(iii)] there is a constant $\tilde{C}\geq 0$ such that, for every $\lambda$  in $(0,1]$ and for every $\varphi$ in $\mathcal{I}^{m+n}\left( X,\mathbb{R}^d\right)$,
	 		\begin{equation*}
				\parallel \chi_\lambda \varphi \parallel_{m}^{K}\leq \tilde{C}\lambda^n \parallel \varphi \parallel_{m+n}^{K\cap \left\lbrace x: d(x,X)\leq \lambda\right\rbrace }\mbox{,}
			\end{equation*}
	\end{itemize}
and the constant $\tilde{C}$ depends neither on $\varphi$ nor $\lambda$ (for the description of the seminorms $\parallel \cdot \parallel_{m}^{K}$ see Definition \ref{kungfu}).
\end{lem}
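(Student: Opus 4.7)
The plan breaks into three pieces: constructing the cutoff family, establishing a Whitney–type pointwise bound for $\varphi$ near $X$, and combining them via the Leibniz rule.

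\textbf{Construction of $\chi_\lambda$.} I would first produce the cutoff by mollifying an indicator. Let $\theta\in\mathscr{D}(\mathbb{R}^d)$ be a nonnegative function with $\int\theta=1$ and support in the closed unit ball, set $\theta_\varepsilon(y)=\varepsilon^{-d}\theta(y/\varepsilon)$, and define
\[
\chi_\lambda := \mathbf{1}_{\{d(\cdot,X)\leq\lambda/2\}}\ast\theta_{\lambda/16}.
\]
The triangle inequality for the Lipschitz function $d(\cdot,X)$ verifies (i) and (ii): if $d(x,X)\leq\lambda/8$, then every $y\in B(x,\lambda/16)$ satisfies $d(y,X)\leq\lambda/8+\lambda/16\leq\lambda/2$, so the convolution equals $1$; if $d(x,X)\geq\lambda$, then every such $y$ satisfies $d(y,X)\geq\lambda-\lambda/16>\lambda/2$, so the integrand vanishes. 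Differentiating under the convolution yields
\[
|\partial^\beta\chi_\lambda(x)|\leq C_\beta\,\lambda^{-|\beta|},\qquad x\in\mathbb{R}^d,
\]
with $C_\beta$ depending only on $\beta$ and $\theta$.

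\textbf{Pointwise control of $\varphi$ near $X$.} Writing $K':=K+\overline{B(0,1)}$, I would then show that for every $\varphi\in\mathcal{I}^{m+n}(X,\mathbb{R}^d)$, every multi-index $\gamma$ with $|\gamma|\leq m+n$, and every $\lambda\in(0,1]$,
\[
|\partial^\gamma\varphi(x)|\leq C'_{m,n}\,d(x,X)^{m+n-|\gamma|}\,\|\varphi\|_{m+n}^{K'\cap\{d(\cdot,X)\leq\lambda\}}
\]
whenever $x\in K$ satisfies $d(x,X)\leq\lambda$. To this end, choose a closest point $x_0\in X$ to $x$ and apply Taylor's formula with integral remainder to $\partial^\gamma\varphi$ at $x_0$ of order $p:=m+n-|\gamma|$. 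Because $\varphi\in\mathcal{I}^{m+n}$, every $\partial^\delta\varphi(x_0)$ with $|\delta|\leq m+n$ vanishes, so the entire polynomial part of the expansion is identically zero and only
\[
p\sum_{|\eta|=p}\frac{(x-x_0)^\eta}{\eta!}\int_0^1(1-t)^{p-1}\partial^{\gamma+\eta}\varphi(x_0+t(x-x_0))\,dt
\]
survives. Bounding the integrand by $\|\varphi\|_{m+n}^{K'\cap\{d(\cdot,X)\leq\lambda\}}$ and noting that the segment $[x_0,x]$ lies inside $K'\cap\{d(\cdot,X)\leq|x-x_0|\}$ (since $|x-x_0|=d(x,X)\leq\lambda\leq 1$) yields the claim; the case $|\gamma|=m+n$ is trivial.

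\textbf{Leibniz combination.} Finally, for $|\alpha|\leq m$ and $x\in K$ with $d(x,X)\leq\lambda$ (outside this set $\chi_\lambda$ and all its derivatives vanish), the Leibniz rule combined with the two previous estimates gives
\[
|\partial^\alpha(\chi_\lambda\varphi)(x)|\leq\sum_{\beta\leq\alpha}\binom{\alpha}{\beta}\,C_\beta\lambda^{-|\beta|}\cdot C'_{m,n}\,\lambda^{m+n-(|\alpha|-|\beta|)}\,\|\varphi\|_{m+n}^{K'\cap\{d(\cdot,X)\leq\lambda\}}.
\]
In every summand the $\lambda$–exponents collapse to $m+n-|\alpha|\geq n$, and because $\lambda\leq 1$ this is dominated by $\lambda^n$. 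Summing the finitely many indices and taking the supremum over $|\alpha|\leq m$ and $x\in K$ produces the desired bound $\|\chi_\lambda\varphi\|_m^K\leq\tilde C\,\lambda^n\,\|\varphi\|_{m+n}^{K'\cap\{d(\cdot,X)\leq\lambda\}}$. The main obstacle, and the only genuinely delicate step, is precisely the Taylor estimate: one needs the segment from $x$ to its nearest point on $X$ to lie in a single fixed compact on which the seminorm can be evaluated (hence the mild enlargement from $K$ to $K'$, which is harmless in the downstream uses of the lemma), and one must verify that a single constant $\tilde C$ works uniformly in both $\varphi$ and $\lambda\in(0,1]$. Once that vanishing-order estimate is in place, the rest is routine book-keeping.
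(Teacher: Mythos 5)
Your proof is correct and follows essentially the same route as the paper's: a mollified indicator of the $\lambda/2$-neighbourhood of $X$ for the cutoff, the derivative bound $|\partial^\beta\chi_\lambda|\leq C_\beta\lambda^{-|\beta|}$, a Taylor expansion at the nearest point of $X$ exploiting the vanishing of all jets of $\varphi$ up to order $m+n$, and a Leibniz combination (the paper expands $\partial^{\nu-i}\varphi$ to order exactly $|i|+n$ so the powers of $\lambda$ cancel on the nose, whereas you expand to maximal order and invoke $\lambda\leq 1$ — an immaterial difference). Your replacement of $K$ by $K'=K+\overline{B(0,1)}$ on the right-hand side, needed so that the Taylor segment from $x$ to its nearest point of $X$ lies in the compact over which the seminorm is taken, is a legitimate and arguably necessary refinement of the stated bound (the paper glosses over this), and as you observe it is harmless in the lemma's downstream uses.
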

\begin{proof}
Choose $\phi$ in $\mathscr{E}\left( \mathbb{R}^d\right) $ such that
	\begin{enumerate}
		\item $\phi\geq 0$,
		\item $\int\limits_{\mathbb{R}^d}\phi(x) \,\mathrm{d}x=1$, and
		\item $\phi(x)=0$ if $|x|\geq 3/8$.
	\end{enumerate} 
Set 
	\begin{equation*}
		\phi_\lambda:=\lambda^{-d}\phi(\lambda^{-1}\cdot) 
	\end{equation*}
and let $\alpha_\lambda$ be the characteristic function of the set 
	\begin{equation*}
		\left\lbrace x: d\left( x,X\right)\leq \frac{\lambda}{2} \right\rbrace.
	\end{equation*} 
Then, the convolution product $\chi_\lambda\equiv\phi_\lambda \ast \alpha_\lambda$ satisfies
	\begin{equation}\label{vir}
		\chi_\lambda(x)\equiv\phi_\lambda \ast \alpha_\lambda(x) = \left\{
			\begin{array}{lr}
				1\mbox{,} & \mbox{if } d\left(x,X \right)\leq \frac{\lambda}{8}\mbox{,} \\0\mbox{,} & \mbox{if } d\left(x,X \right)\geq \lambda. 
			\end{array}
		\right.
	\end{equation}
We show this last statement.  
By definition, we have
	\begin{equation}\label{fosforo}
		\begin{split}
			\phi_\lambda   \ast\alpha_\lambda(x)&= \int\limits_{\mathbb{R}^d}      \phi_\lambda (x-y)\alpha_\lambda (y)\, \mathrm{d}y = \int\limits_{\left\lbrace y: d(y,X)\leq \frac{\lambda}{2}\right\rbrace }\phi_\lambda (x-y)\, \mathrm{d}y \\&=    \int\limits_{\left\lbrace y: d(y,X)\leq \frac{\lambda}{2}\right\rbrace } \lambda^{-d}\phi\left( \lambda^{-1}\left( x-y\right) \right)\mathrm{d}y =\int\limits_{\Omega_{x,\lambda}}              \phi\left( z\right) \mathrm{d}z\mbox{,}
		\end{split}
	\end{equation}
where 
	\begin{equation*}
		\Omega_{x,\lambda}=\left\lbrace z:\,\exists\,y \mbox{ such that }\; z=(x-y)\lambda^{-1}\;\mbox{ and }\;\; d(y,X)\leq \frac{\lambda}{2}\right\rbrace. 
	\end{equation*}
Suppose $d\left(x,X \right)\leq \lambda/8$.
We want to show that the last integral in \eqref{fosforo} is equal to one.
For this, it suffices to  prove that if $|z|\leq  3/8$, then $z$ belongs to $\Omega_{x,\lambda}$; if this was the case,  using the third condition satisfied by $\phi$ we would have
	\begin{equation*}
		1\geq\int\limits_{\Omega_{x,\lambda}}              \phi\left( z\right) \mathrm{d}z \geq	\int\limits_{\left\lbrace z:  \left| z\right| \leq 3/8\right\rbrace }\phi(z) \,\mathrm{d}z=\int\limits_{\mathbb{R}^d}\phi(z) \,\mathrm{d}z=1\mbox{,}
	\end{equation*}
which implies that the last integral in \eqref{fosforo} is equal to one.
So, suppose $d\left(x,X \right)\leq \lambda/8$ and  $|z|\leq  3/8$.
Let $y$ be such that $z=(x-y)\lambda^{-1}$ (such an element $y$ always exists given $z$ and $x$). 
We seek to show $d(y,X)\leq \lambda/2$.
For every $\tilde{x}$ in $X$ we have
	\begin{equation*}
		d(y,X)\leq |y-\tilde{x}|\leq|y-x|+|x-\tilde{x}|\leq \frac{3}{8}\lambda+|x-\tilde{x}|.
	\end{equation*}
Then, taking infimum over all the elements $\tilde{x}$ in $X$ on the right hand side of the above inequality we get
	\begin{equation*}
		d(y,X)\leq \frac{3}{8}\lambda+ d(x,X)\leq \frac{3}{8}\lambda+ \frac{\lambda}{8}=\frac{\lambda}{2}.
	\end{equation*}
Therefore, if $d\left(x,X \right)\leq \lambda/8$, we have  $\phi_\lambda \ast \alpha_\lambda(x)=1.$

Now suppose that $d\left(x,X \right)\geq \lambda$. 
Take $z=(x-y)\lambda^{-1}$ in $\Omega_{x,\lambda}$  where  $d(y,X)\leq \lambda/2$. 
Then, for every $\tilde{x}$ in $X$
	\begin{equation*}
		|z|\geq \frac{|x-\tilde{x}|}{\lambda}-\frac{|y-\tilde{x}|}{\lambda}\geq  \frac{d(x,X)}{\lambda}-\frac{|y-\tilde{x}|}{\lambda} \geq  1-\frac{|y-\tilde{x}|}{\lambda}.
	\end{equation*}
Taking infimum over all the elements $ \tilde{x}$ in $X$ in the above inequality we get
	\begin{equation*}
		|z|\geq  1-\frac{d(y,X)}{\lambda} \geq 1-\frac{1}{2} \geq\frac{1}{2} > \frac{3}{8}.
	\end{equation*}
Therefore, if $d\left(x,X \right)\geq \lambda$, $\Omega_{x,\lambda}$ is contained in $\left\lbrace z: \left| z\right| > 3/8 \right\rbrace $.
Then, we have
	\begin{equation*}
		\phi_\lambda \ast \alpha_\lambda(x)=\int\limits_{\Omega_{x,\lambda}} \phi\left( z\right) \;\mathrm{d}z\leq \int\limits_{\left\lbrace z:  \left| z\right| > 3/8\right\rbrace} \phi\left( z\right) \;\mathrm{d}z=0.
	\end{equation*}
In order to prove the third item of the lemma, let $\varphi$ be in  $\mathcal{I}^{m+n}\left( X,\mathbb{R}^d\right) $. 
By Leibniz's rule,
	\begin{equation*}
		\partial_x^\nu\left( \chi_\lambda \varphi\right)(x) =\sum_{|i|\leq |\nu|} \binom{\nu}{i}\left( \partial_x^i\chi_\lambda \partial_x^{\nu-i}\varphi\right) (x)\mbox{,}
	\end{equation*}
for each $|\nu|\leq m$. 
Then, it suffices to estimate each term $\left( \partial^i_x\chi_\lambda \partial^{\nu-i}_x\varphi\right) (x)$ (with $|i|\leq |\nu|\leq m$) of the above sum.  
For every multi-index $i$, there is some constant $C_i$ such that for every $x$ in $\mathbb{R}^d\setminus X$,
	\begin{equation}
		\quad|\partial^i_x\chi_\lambda(x)|\leq \frac{C_i}{\lambda^{|i|}}.
	\end{equation}
To prove this, set 
	\begin{equation*}
		B:=\left\lbrace y:d(y,X)\leq \frac{\lambda}{2}\right\rbrace\mbox{,}
	\end{equation*}
and note 
	\begin{equation*}
		\begin{split}
			\left| \partial^i_x\chi_\lambda(x)\right| &= \left|  \partial^i_x\left( \;  \int\limits_{\mathbb{R}^d} \phi_\lambda\left( x-y\right)\alpha_\lambda(y) \;\mathrm{d}y\right) \right| = \left|  \partial^i_x\left(  \,\int\limits_{B } \lambda^{-d}\phi\left( \lambda^{-1}\left( x-y\right) \right) \;\mathrm{d}y\right) \right| \\ &= \left|\,\int\limits_{B} \lambda^{-d}\lambda^{-|i| }\left( \partial^i\phi\right) \left( \lambda^{-1}\left( x-y \right) \right) \;\mathrm{d}y \right|= \left|\,\int\limits_{\Omega_{x,\lambda}} \lambda^{-|i|} \partial^i\phi\left( z\right) \;\mathrm{d}z \right| \\&\leq \lambda^{-|i|}\int\limits_{\Omega_{x,\lambda}}\left| \partial^i\phi\left( z\right) \right| \mathrm{d}z \leq\lambda^{-|i|}\underbrace{ \int\limits_{\mathbb{R}^d}\left| \partial^i\phi\left( z\right) \right| \mathrm{d}z}_{C_i} .
		\end{split}  
	\end{equation*}

To estimate the other factor, namely  $\partial^{\nu-i}\varphi(x)$, we can assume $x$ to be in the support of $\partial^i\chi_\lambda \partial^{\nu-i}\varphi$ (otherwise the whole term is zero).
As
	\begin{equation*}
		\operatorname{Supp}\left(\partial^i\chi_\lambda \partial^{\nu-i}\varphi \right)\subseteq \operatorname{Supp}\left(\partial^i\chi_\lambda  \right)\subseteq \left\lbrace u: d\left(u,X \right)\leq \lambda \right\rbrace\mbox{,} 
	\end{equation*}
where result \eqref{vir} has been used, we have $d\left(x,X \right)\leq \lambda$.
Fix $y$ in $X$ such that $d(x,X)=|x-y|$.
As $\varphi$ belongs to $\mathcal{I}^{m+n}\left( X,\mathbb{R}^d\right) $, $\partial^{\nu-i}\varphi $ vanishes at $y$ at order $i+n$.
Therefore,
	\begin{equation*}
		\partial^{\nu-i}_x \varphi(x)=\sum\limits_{|\beta|=|i|+n}(x-y)^{\beta}R_\beta (x)\mbox{,}
	\end{equation*}
where the right hand side is just the integral remainder in Taylor's expansion of $\partial^{\nu-i} \varphi$ around $y$. 
At this point, note that by all our assumptions we have
	\begin{equation*}
		\left| x-y\right| ^{\beta}=d(x,X)^{\beta}\leq \lambda^{\beta}.
	\end{equation*}
Hence,
 \begin{equation*}
 		\left|\partial^{\nu-i}\varphi(x)\right|\leq \sum\limits_{|\beta|=|i|+n}\left| (x-y)^{\beta}R_\beta (x) \right|\leq  \lambda  ^{|i|+n}\sum\limits_{|\beta|=|i|+n}\left| R_\beta (x) \right| .
 	\end{equation*}
The estimations of both factors can be put together to obtain
	\begin{equation*}
		\left|\partial^i\chi_\lambda \partial^{\nu-i}\varphi(x)\right|\leq\frac{C_i}{\lambda^{|i|}}\lambda  ^{|i|+n}\sum\limits_{|\beta|=|i|+n}\left| R_\beta (x) \right|\leq C_i \lambda^{n} \sum\limits_{|\beta|=|i|+n}\left|R_\beta (x) \right|.
	\end{equation*}
It is easy to see that $R_\beta$ only depends on the jets of $\varphi$ of order less than or equal to $m+n$, since
	\begin{equation*}
		\sum\limits_{|\beta|=|i|+n}\left|R_\beta (x) \right|\leq \sum\limits_{|\beta|=|i|+n}\quad\sum\limits_{|\nu|=|\beta|}\frac{|\beta|}{\nu!}\parallel\varphi\parallel_{\nu}^{K\cap \left\lbrace u:\, d(u,X)\leq \lambda \right\rbrace} \leq A_i\parallel\varphi\parallel_{m+n}^{K\cap \left\lbrace u:\, d(u,X)\leq \lambda \right\rbrace}\mbox{,}
	\end{equation*}
for some constant $A_i$ which does not depend on $\varphi$.
Then,
	\begin{equation*}
		\left|\partial^i\chi_\lambda \partial^{\nu-i}\varphi(x)\right|\leq		  C_i \lambda^{n} A_i\parallel\varphi\parallel_{m+n}^{K\cap \left\lbrace u:\, d(u,X)\leq \lambda \right\rbrace}.
	\end{equation*}
Finally, we have
	\begin{equation*}
		\begin{split}
			\left| \partial^\nu\left( \chi_\lambda \varphi\right)(x)\right| & \leq \sum_{|i|\leq |\nu|} \binom{\nu}{i}\left| \partial^i\chi_\lambda \partial^{\nu-i}\varphi(x)\right| \\ &\leq  \left[ \sum_{|i|\leq |\nu|} \binom{\nu}{i} C_i A_i\right] \lambda^n \parallel\varphi\parallel_{m+n}^{K\cap \left\lbrace u:\, d(u,X)\leq \lambda \right\rbrace}\\ &\leq  \tilde{C} \lambda^n \parallel\varphi\parallel_{m+n}^{K\cap \left\lbrace u:\, d(u,X)\leq \lambda \right\rbrace}\mbox{,}
		\end{split}
	\end{equation*}
where $\tilde{C}$ is given by
	\begin{equation*}
		\tilde{C}=\max\limits_{\nu\leq m}\left\lbrace \sum_{|i|\leq |\nu|} \binom{\nu}{i} C_i A_i\right\rbrace.
	\end{equation*}
Taking supremum over all the elements $x$ in $K$ and $\nu\leq m$ on the left hand side of the previous inequality, we obtain the required result.
\end{proof}

Using the previous lemma we can now prove the following result.

\begin{propo}[\textit{cf.} \cite{vietdang}, Thm. 1.1]\label{mendelsshon} 
Let $t$ be a compactly supported distribution defined on $\mathscr{D}\left( \mathbb{R}^d\setminus X\right)$, viewed in the space $\mathscr{E}\left( \mathbb{R}^d\setminus X\right)'$ by Remark \ref{hatcher}, and with moderate growth along $X$.
Let  $\{K_l\}_{l\in\mathbb{N}}$ be a fundamental sequence of compact sets covering $\mathbb{R}^d$ used to construct a family of seminorms as in Definition \ref{defiseminorms}. 
Then, there exist a constant $C$ in $\mathbb{R}_{\geq 0}$ and a seminorm  $\parallel \cdot \parallel_{k}^{l}$ such that 
	\begin{equation}\label{elsa}
		\left|\braket{t,\varphi} \right|\leq C \parallel \varphi \parallel_{k}^{l}\mbox{,}
	\end{equation}
for every $\varphi$ in $\mathcal{I}\left( X,\mathbb{R}^d\right)$ (see Definition \ref{funcinf1}).
\end{propo}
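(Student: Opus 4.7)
The plan is to start from Lemma \ref{dumbo}, which supplies constants $C,s\geq 0$ and a seminorm $\|\cdot\|_k^l$ with
$|\langle t,\psi\rangle|\leq C[1+d(\operatorname{Supp}(\psi),X)^{-s}]\|\psi\|_k^l$
for every $\psi\in\mathcal{I}(X,\mathbb{R}^d)$, and to trade the divergent factor $d(\operatorname{Supp}(\psi),X)^{-s}$ for extra differentiability, at the cost of raising the order $k$ of derivatives, by means of the family $(\chi_\lambda)_{\lambda\in(0,1]}$ constructed in Lemma \ref{technical}.

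First, I would fix an integer $n>s$ and set $\lambda_j:=2^{-j}$ for $j\geq 0$. For any $\varphi\in\mathcal{I}(X,\mathbb{R}^d)$, the condition $\operatorname{Supp}(\varphi)\cap X=\emptyset$ forces $\chi_{\lambda_j}\varphi$ to vanish identically as soon as $\lambda_j<d(\operatorname{Supp}(\varphi),X)$, so the telescoping identity
$\varphi=(1-\chi_{\lambda_0})\varphi+\sum_{j\geq 1}(\chi_{\lambda_{j-1}}-\chi_{\lambda_j})\varphi$
is actually a finite sum for each individual $\varphi$; the final estimate, however, will be produced by summing the full geometric series, which will make the bound uniform in $\varphi$.

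The next step is to bound each piece separately. The residual $(1-\chi_{\lambda_0})\varphi$ is supported at distance $\geq 1/8$ from $X$, so Lemma \ref{dumbo} together with Leibniz's rule applied to the fixed smooth factor $1-\chi_{\lambda_0}$ yields $|\langle t,(1-\chi_{\lambda_0})\varphi\rangle|\leq C_1\|\varphi\|_k^l$. Each dyadic piece $\psi_j:=(\chi_{\lambda_{j-1}}-\chi_{\lambda_j})\varphi$ is supported in an annulus of the form $\{2^{-j}/8\leq d(\cdot,X)\leq 2^{-j+1}\}$, so Lemma \ref{dumbo} yields $|\langle t,\psi_j\rangle|\leq C_2\,2^{js}\|\psi_j\|_k^l$. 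Writing $\psi_j=\chi_{\lambda_{j-1}}\varphi-\chi_{\lambda_j}\varphi$ and invoking Lemma \ref{technical} with $m=k$ and the chosen $n$ — which is legitimate because $\varphi\in\mathcal{I}(X,\mathbb{R}^d)\subseteq\mathcal{I}^{k+n}(X,\mathbb{R}^d)$ — produces $\|\chi_{\lambda_j}\varphi\|_k^l\leq\tilde{C}\,2^{-jn}\|\varphi\|_{k+n}^l$ uniformly in $j$, and the same for $\lambda_{j-1}$, so $\|\psi_j\|_k^l\leq C_3\,2^{-jn}\|\varphi\|_{k+n}^l$.

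Combining the two estimates gives $|\langle t,\psi_j\rangle|\leq C_4\,2^{j(s-n)}\|\varphi\|_{k+n}^l$, and since $n>s$ the geometric series $\sum_{j\geq 1}2^{j(s-n)}$ converges, yielding $|\langle t,\varphi\rangle|\leq C^\ast\|\varphi\|_{k+n}^l$ with a constant independent of $\varphi$. The main obstacle — and really the only substantive step — will be the calibration inside the dyadic argument: the singular factor $2^{js}$ coming from proximity of the $j$-th annulus to $X$ must be dominated by the $2^{-jn}$ decay of $\|\chi_{\lambda_j}\varphi\|_k^l$ supplied by Lemma \ref{technical}, and it is precisely this balance that forces the derivative order in the final seminorm to be raised by $n>s$ relative to the one supplied by Lemma \ref{dumbo}.
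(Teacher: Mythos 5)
Your argument is correct and follows essentially the same route as the paper's proof: invoke Lemma \ref{dumbo} to reformulate moderate growth on $\mathcal{I}\left( X,\mathbb{R}^d\right)$, split off the piece supported at distance $\geq 1/8$ from $X$, perform the same dyadic telescoping with the cut-offs $\chi_{2^{-j}}$, and absorb the $2^{js}$ divergence via the $\lambda^n$ gain of Lemma \ref{technical} with $n>s$, summing the resulting geometric series. The only cosmetic differences are the indexing of the telescoping sum and the fact that you bound the residual term directly through Lemma \ref{dumbo} (with the distance factor bounded by a constant) rather than through the continuity of $t$ with an auxiliary bump function, which is equally valid.
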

\begin{proof}
Suppose $t$ has moderate growth along $X$.
By Lemma \ref{dumbo}, there exist positive numbers $C$ and $s$, and a seminorm $\parallel \cdot \parallel_k^{l}$ such that
	\begin{equation}\label{democrat}
		\left|\braket{t,\varphi} \right|\leq C \left[1+d\left(\operatorname{Supp} \left(  \varphi\right) ,X \right)^{-s} \right]\parallel \varphi \parallel_k^l\mbox{,}
	\end{equation}
for every $\varphi$ in  $\mathcal{I}\left( X,\mathbb{R}^d\right)$.

If $s=0$ in \eqref{democrat}, there is nothing to prove.
Therefore, we shall treat the case where $s>0$. 
The idea is to absorb the divergence by a dyadic decomposition in the following way.
We begin by writing
	\begin{equation}\label{republican}
		\braket{t,\varphi} =\braket{t,\left( \chi_{1}-\chi_{1}+1\right) \varphi}          =\braket{t,\chi_{1}\varphi}+\braket{t,\left( 1-\chi_{1}\right) \varphi}.
	\end{equation}
We are going to estimate each of the terms in \eqref{republican}.
We can easily estimate the term  $\braket{t,\left(1-\chi_{1}\right)\varphi}$. 
Set
	\begin{equation*}
		\mathcal{K}:=\operatorname{Supp}\left( t\right).
	\end{equation*}
Let $\chi$ in $\mathscr{D}\left( \mathbb{R}^d\right) $ be  such that $\chi=1$ on $\mathcal{K}$ and $\chi=0$ outside $V$, where $V$ is a bounded  open set that contains $\mathcal{K}$.
Then, $\chi\left(1-\chi_1 \right)\varphi$ belongs to $\mathscr{D}\left( \mathbb{R}^d\setminus X\right) $ and, by continuity of $t$, there is some constant $C>0$ and a seminorm $\parallel \cdot \parallel^{l_1}_{k_1}$ such that
	\begin{equation*}
		\left| \braket{t,\left(1-\chi_1 \right)\varphi}\right| = \left| \braket{t,\chi\left(1-\chi_1 \right)\varphi}\right| \leq C\parallel \chi\left(1-\chi_1 \right)\varphi \parallel^{l_1}_{k_1}
\end{equation*}
for every $\varphi$ in $\mathscr{E}\left( \mathbb{R}^d\right)$. 
On the other hand, from Lemma \ref{mafi} there exists some  positive constant $M$ which does not depend on  $\varphi$ and such that 
	\begin{equation*}
		\parallel \chi\left(1-\chi_1 \right)\varphi \parallel^{l_1}_{k_1}\leq M \parallel \varphi \parallel^{l_1}_{k_1}.
	\end{equation*} 
Taking $C_1=CM$ we conclude that 
\begin{equation*}
\left| \braket{t,\left(1-\chi_{1}\right)\varphi}\right| \leq C_1\parallel \varphi \parallel^{l_1}_{k_1}
\end{equation*}
for some seminorm $\parallel \cdot \parallel^{l_1}_{k_1}$ and some constant $C_1>0$ which does not depend on $\varphi$. 

To estimate the remaining term in \eqref{republican}, for every $\varphi$ in $\mathcal{I}\left( X,\mathbb{R}^d\right) $ there exists a number $N$ in $\mathbb{N}$ such that  $\chi_{2^{-N}}\varphi=0$.
Then, we can write
	\begin{equation*}
		\begin{split}
			\braket{t,\chi_{1}\varphi}&=\braket{t,\left( \chi_{1}-\chi_{2^{-1}}+\chi_{2^{-1}}-\dots-\chi_{2^{-N+1}}+\chi_{2^{-N+1}}-\chi_{2^{-N}}\right) \varphi}\\ &=\sum_{j=0}^{N-1} \braket{t,\left(\chi_{2^{-j}}-\chi_{2^{-j-1}}\right) \varphi}. 
		\end{split}
	\end{equation*}
Thus,
	\begin{equation*}
		\begin{split}
			\left| \braket{t,\chi_{1}\varphi}\right| &\leq \sum_{j=0}^{N-1}\left| \braket{t, \left(\chi_{2^{-j}}-\chi_{2^{-j-1}}\right) \varphi}\right|  \\
			&\leq C_2 \sum_{j=0}^{N-1}\left[1+d \left( \operatorname{Supp}\left(\left( \chi_{2^{-j}}-\chi_{2^{-j-1}}\right) \varphi\right) ,X \right) ^{-s}\right] \parallel \left( \chi_{2^{-j}}-\chi_{2^{-j-1}}\right)\varphi\parallel _{k_2}^{l_2}  
		\end{split}
	\end{equation*}
where we have used condition \eqref{democrat}.
Now choose $n$ in $\mathbb{N}$ such that  $n-s>0$. 
Applying Lemma \ref{technical} with $K=K_{l_2}$ we have
	\begin{equation*}
		\begin{split}
			\left| \braket{t,\chi_{1}\varphi}\right|&\leq C_2 \sum_{j=0}^{N-1}\left(1+2^{s(j+4)}\right)\left(2^{-jn}+ 2^{-(j+1)n}\right)\tilde{C}_2 \parallel \varphi\parallel _{k_2+n}^{l_2}  \\
			& \leq C'_2 \parallel \varphi \parallel _{k_2+n}^{l_2}\mbox{,}
		\end{split}
	\end{equation*}
for $C'_2$ defined by
	\begin{equation*}
		C'_2= \tilde{C}_2C_2\left( 1+2^{-n}\right) \sum\limits_{j=0}^{N-1} 2^{-jn}\left(1+2^{s(j+4)}\right)<+\infty\mbox{,}
	\end{equation*}
which is independent of $\varphi$ and $N$.
To complete the proof we finally write
	\begin{equation*}
		\left| \braket{t,\varphi}\right|  \leq\left| \braket{t,\chi_{1}\varphi}\right| +\left| \braket{t,\left( 1-\chi_{1}\right) \varphi}\right|  \leq C'_2 \parallel \varphi \parallel _{k_2+n}^{l_2}+C_1\parallel \varphi \parallel^{l_1}_{k_1}\leq  C'\parallel \varphi \parallel _{k'}^{l'}\mbox{,}
	\end{equation*}
where $l'=\max\left\lbrace l_1,l_2\right\rbrace$,  $\ k'=\max\left\lbrace k_1,k_2+n \right\rbrace$ and $\ C'=C'_2+C_1$.
\end{proof}
We are going to prove in Lemma \ref{zappa}  that given a compactly supported distribution $t$  on $\mathscr{D}\left( \mathbb{R}^d\setminus X\right)$ with moderate growth along $X$, and a number $m$ in $\mathbb{N}\cup \left\lbrace\infty \right\rbrace $, there is a unique continuous extension of $t$ to the space $\mathcal{I}^m\left( X,\mathbb{R}^d\right)$.
Part of this result appears in Lemma 1.2  of \cite{vietdang}.
However, the proof given by the author is unsatisfactory because he omits  several steps.
For instance, the author states that the mentioned extension  is unique but does not show this fact in his proof.
We give in turn a more detailed demonstration.
In addition, we extend the scope of Lemma 1.2 of \cite{vietdang} to include the case $m=\infty$.

To this end, we first give the definition of Dirac sequences and in Theorem \ref{aproxdirac} we state a  useful result related to them.  
\begin{defi}[See \cite{langlang}, Ch. 11, \S 1]\label{dirac}
A sequence $\left(   \phi_i \right)  _{i\in\mathbb{N}}$ of complex valued functions defined on all $\mathbb{R}^d$ is called a \emph{Dirac sequence}\index{Dirac sequence} if it satisfies the following properties:
	\begin{itemize}
		\item[DIR 1.]$\phi_i\geq 0$ for every $i$ in $\mathbb{N}$ and for every $x$ in $\mathbb{R}^d$.
		\item[DIR 2.]Each $\phi_i$ is continuous, and 
			\begin{equation*}
				\int\limits_{\mathbb{R}^d}\phi_i(t)\mbox{d}t=1. 
			\end{equation*}
		\item[DIR 3.]Given $\varepsilon>0$ and $\delta>0$, there exists $i_0$ such that if $i\geq i_0$ then 
			\begin{equation*}
				\int\limits_{\mathbb{R}^d\setminus B_{\delta}(0)}\phi_i(t)\mbox{d}t <\varepsilon.
			\end{equation*}
	\end{itemize}
\end{defi}
\begin{thm}[See \cite{langlang}, Ch. 11, Thm. 1.1]\label{aproxdirac}
Let $f:\mathbb{R}^d\rightarrow \mathbb{R}$ be a piecewise continuous function, and assume that $f$ is bounded. 
Let $\left( \phi_i \right)_{i\in\mathbb{N}}$ be a Dirac sequence and for each $i$ in $\mathbb{N}$ define $f_i = \phi_i * f$. 
Let $K$ be a compact subset of $\mathbb{R}^d$ on which $f$ is continuous. 
Then the sequence $\left(    f_i \right)_{i\in\mathbb{N}} $ converges to $f$ uniformly on $K$.
\end{thm}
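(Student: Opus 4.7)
The plan is to use the standard approximate identity argument based on DIR 2, which gives $\int \phi_i = 1$, to write
\begin{equation*}
f_i(x) - f(x) = (\phi_i * f)(x) - f(x)\!\int_{\mathbb{R}^d}\!\phi_i(t)\,\mathrm{d}t = \int_{\mathbb{R}^d}\phi_i(t)\bigl[f(x-t) - f(x)\bigr]\mathrm{d}t.
\end{equation*}
From this identity, the strategy is to produce a uniform bound on $|f_i(x)-f(x)|$ for $x\in K$ by splitting the integral at radius $\delta>0$ into the ``near'' region $B_\delta(0)$ and the ``far'' region $\mathbb{R}^d\setminus B_\delta(0)$, tuning $\delta$ first and then $i$.

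First I would handle the far region. Since $f$ is bounded globally, $|f(x-t)-f(x)|\leq 2\|f\|_\infty$, so by DIR 1 and DIR 3, for any $\varepsilon>0$ there exists $i_0$ such that
\begin{equation*}
\left|\int_{|t|\geq \delta}\phi_i(t)\bigl[f(x-t)-f(x)\bigr]\mathrm{d}t\right|\leq 2\|f\|_\infty\!\!\int_{\mathbb{R}^d\setminus B_\delta(0)}\!\!\phi_i(t)\,\mathrm{d}t < \varepsilon
\end{equation*}
for every $i\geq i_0$, and crucially the bound is uniform in $x$. Next I would handle the near region using (uniform) continuity of $f$ on a closed $1$-neighborhood $K_1$ of $K$: since $K_1$ is compact and $f$ is continuous on it, $f|_{K_1}$ is uniformly continuous, so there is some $\delta\in(0,1)$ such that $|f(y)-f(x)|<\varepsilon$ whenever $x\in K$, $y\in K_1$ and $|y-x|<\delta$. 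For such $\delta$, every $x\in K$ and $|t|<\delta$ satisfies $x-t\in K_1$, and therefore by DIR 1 and DIR 2,
\begin{equation*}
\left|\int_{|t|<\delta}\phi_i(t)\bigl[f(x-t)-f(x)\bigr]\mathrm{d}t\right|\leq \varepsilon \int_{|t|<\delta}\phi_i(t)\,\mathrm{d}t\leq \varepsilon.
\end{equation*}
Combining the two bounds, $|f_i(x)-f(x)|<2\varepsilon$ uniformly in $x\in K$ for $i\geq i_0$.

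The main obstacle is the subtle passage from pointwise continuity on $K$ to uniform control of $|f(x-t)-f(x)|$ for $x\in K$ and small $|t|$, because $x-t$ need not lie in $K$. The fix I would employ is to read the hypothesis as continuity of $f$ on some (possibly slightly enlarged) compact set containing a neighborhood of $K$, which is the natural setting in which the convolution argument is meant to be applied; this is consistent with Lang's formulation in \cite{langlang}. Once this interpretation is in place, the rest of the proof is a direct two-region estimate, with no further difficulties.
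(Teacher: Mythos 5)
Your argument is correct and is exactly the standard approximate-identity proof; the paper itself gives no proof of this theorem, deferring entirely to Lang (\cite{langlang}, Ch.\ 11, Thm.\ 1.1), whose argument is the same decomposition $f_i(x)-f(x)=\int\phi_i(t)[f(x-t)-f(x)]\,\mathrm{d}t$ followed by the near/far splitting at radius $\delta$ that you use. On the one point you flag as an obstacle: passing to a compact neighborhood $K_1$ of $K$ on which $f$ is continuous is a legitimate fix and is harmless for the paper's only application (in Lemma \ref{zappa} the theorem is applied to globally smooth $\varphi$), but it does slightly strengthen the hypothesis, and it is not actually necessary. The reading under which the stated theorem is true is that $f$, as a function on $\mathbb{R}^d$, is continuous at each point of $K$ — continuity of the mere restriction $f|_K$ is not enough (take $K=\{0\}$ and $f$ the characteristic function of $\{0\}$: then $f_i\equiv 0$ while $f(0)=1$) — and under that reading a finite-cover compactness argument yields directly a $\delta>0$ such that $|f(x-t)-f(x)|<\varepsilon$ for all $x\in K$ and all $|t|<\delta$, with no enlargement of $K$ and no appeal to uniform continuity on a neighborhood.
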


\begin{lem}[\textit{cf.}\cite{vietdang}, Lemma 1.2] \label{zappa}
Let $m$ be a positive integer  or eventually infinity.
Let $t$ be a compactly supported distribution defined on $\mathscr{D}\left( \mathbb{R}^d\setminus X\right)$, viewed in the space $\mathscr{E}\left( \mathbb{R}^d\setminus X\right)'$ by Remark \ref{hatcher}, and with moderate growth along $X$.
Then, there is a unique continuous extension $t_m$ to the space $\mathcal{I}^m\left( X,\mathbb{R}^d\right)$.
If $m$ belongs to $\mathbb{N}$, such an extension is given on every $\varphi$ in $\mathcal{I}^m\left( X,\mathbb{R}^d\right)$ by the  iterated limit
	\begin{equation}\label{milka}
		 \braket{t_m,\varphi}=\lim\limits_{\lambda\rightarrow 0}\lim \limits_{i\rightarrow \infty} \braket{t,  \left(1-\chi_\lambda \right)\phi_i\ast \varphi }\mbox{,}
	\end{equation}
where $\left( \chi_\lambda\right)_{\lambda\in\left( 0,1\right] }$ is the family of cut-off functions defined in Lemma \ref{technical}  and $\left(  \phi_i \right)_{i\in\mathbb{N}} $ is a Dirac sequence of functions defined on $\mathbb{R}^d$.

Moreover, if $\varphi$ belongs to $\mathcal{I}^m\left( X,\mathbb{R}^d\right) \cap \mathscr{E}\left( \mathbb{R}^d\right)  $, the continuous extension $t_m$ is given by 
	\begin{equation}\label{milka2}
		\braket{t_m,\varphi}=\lim\limits_{\lambda\rightarrow 0}\braket{t,  \left(1-\chi_\lambda \right) \varphi }.
	\end{equation}
This means that the evaluation of  $t_m$ on such a function $\varphi$ can be obtained without the need of a Dirac sequence.
\end{lem}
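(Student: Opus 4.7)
The plan is to bootstrap from Proposition \ref{mendelsshon}, which upgrades the moderate-growth hypothesis to a continuity estimate $|\langle t,\psi\rangle|\leq C\,\|\psi\|_k^l$ valid on the whole ideal $\mathcal{I}(X,\mathbb{R}^d)$ with respect to one fixed seminorm of order $k$. Thus $t$ is already continuous on $\mathcal{I}(X,\mathbb{R}^d)$ in the topology of a single $\mathcal{C}^k$ seminorm, and the extension to $\mathcal{I}^m(X,\mathbb{R}^d)$ will be obtained by a density argument whose Cauchy control is furnished by Lemma \ref{technical}; implicitly this requires $m\geq k+1$, while the case $m=\infty$ incurs no restriction.

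First I would treat the smooth case $\varphi\in\mathcal{I}^m(X,\mathbb{R}^d)\cap\mathscr{E}(\mathbb{R}^d)$. Since $1-\chi_\lambda$ vanishes on a neighbourhood of $X$, the product $(1-\chi_\lambda)\varphi$ lies in $\mathcal{I}(X,\mathbb{R}^d)$, so $\langle t,(1-\chi_\lambda)\varphi\rangle$ is defined. For $0<\mu<\lambda\leq 1$,
\begin{equation*}
\bigl|\langle t,(\chi_\mu-\chi_\lambda)\varphi\rangle\bigr|\leq C\,\|(\chi_\mu-\chi_\lambda)\varphi\|_k^l\leq C\bigl(\|\chi_\mu\varphi\|_k^l+\|\chi_\lambda\varphi\|_k^l\bigr),
\end{equation*}
and applying Lemma \ref{technical} with any $n\geq 1$ satisfying $m\geq k+n$ yields $\|\chi_\lambda\varphi\|_k^l\leq \tilde{C}\,\lambda^n\,\|\varphi\|_{k+n}^{l'}\to 0$ as $\lambda\to 0$. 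This shows $\langle t,(1-\chi_\lambda)\varphi\rangle$ is Cauchy, establishing the limit in \eqref{milka2}; passing to the limit in $|\langle t,(1-\chi_\lambda)\varphi\rangle|\leq C\|(1-\chi_\lambda)\varphi\|_k^l$ also delivers the desired continuity bound $|\langle t_m,\varphi\rangle|\leq C\|\varphi\|_k^l$.

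For general $\varphi\in\mathcal{I}^m(X,\mathbb{R}^d)$, the Dirac regularisations $\phi_i\ast\varphi$ lie in $\mathscr{E}(\mathbb{R}^d)$ and converge to $\varphi$ in $\mathcal{C}^m$ uniformly on compact sets, by applying Theorem \ref{aproxdirac} to $\varphi$ and to each of its partial derivatives of order $\leq m$. Although $\phi_i\ast\varphi$ need not vanish on $X$, the cutoff makes $(1-\chi_\lambda)(\phi_i\ast\varphi)\in\mathcal{I}(X,\mathbb{R}^d)$ for every fixed $\lambda>0$, so the inner bracket in \eqref{milka} is well-defined. The inner limit then exists by a second Cauchy argument: the differences $(1-\chi_\lambda)(\phi_i\ast\varphi-\phi_j\ast\varphi)$ tend to zero in $\|\cdot\|_k^l$ via Lemma \ref{mafi} combined with the $\mathcal{C}^k$-convergence of $(\phi_i\ast\varphi)$. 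Taking $\lambda\to 0$ afterwards, I would rerun the Cauchy estimate of the previous paragraph, now applied to $\phi_i\ast\varphi$, and use the uniform bound $\|\phi_i\ast\varphi\|_{k+n}^{l'}\leq C'\|\varphi\|_{k+n}^{l''}$ standard for Dirac convolutions to conclude convergence of the iterated limit, independent of the specific approximants.

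Uniqueness follows because any continuous extension of $t$ must coincide with $t_m$ on $\mathcal{I}(X,\mathbb{R}^d)$, while the construction above exhibits $\mathcal{I}(X,\mathbb{R}^d)$ as dense in $\mathcal{I}^m(X,\mathbb{R}^d)$ for the topology under which $t_m$ is continuous. The main obstacle I foresee is verifying independence of the iterated limit from the auxiliary choices of Dirac sequence $(\phi_i)$ and cutoff family $(\chi_\lambda)$, and showing that when $\varphi$ is smooth the iterated formula \eqref{milka} collapses to the single limit \eqref{milka2}; both reduce to carefully interchanging the two Cauchy arguments, where the uniformity in $\varphi$ provided by Lemma \ref{technical} is the crucial ingredient.
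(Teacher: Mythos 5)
Your overall architecture matches the paper's: Proposition \ref{mendelsshon} supplies the order-$k$ estimate on $\mathcal{I}(X,\mathbb{R}^d)$, Lemma \ref{technical} and the mollifiers show $\mathcal{I}(X,\mathbb{R}^d)$ is dense in $\mathcal{I}^m(X,\mathbb{R}^d)$ for the $\mathscr{E}^m$ topology, and the extension is then forced by density. Your observation that the argument implicitly needs $m\geq k$ (you say $k+1$) is correct and is in fact glossed over in the paper as well, which silently writes ``every nonnegative integer $k$ with $k\leq m$'' at the corresponding step.

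There is, however, one step that fails as written: the treatment of the outer limit for a general $\varphi\in\mathcal{I}^m(X,\mathbb{R}^d)$. You propose to ``rerun the Cauchy estimate of the previous paragraph, now applied to $\phi_i\ast\varphi$'' together with the bound $\|\phi_i\ast\varphi\|_{k+n}^{l'}\leq C'\|\varphi\|_{k+n}^{l''}$. That Cauchy estimate rests on Lemma \ref{technical}, whose hypothesis is that the function being cut off lies in $\mathcal{I}^{k+n}(X,\mathbb{R}^d)$, i.e.\ vanishes on $X$ to order $k+n$; the factor $\lambda^n$ in that lemma comes precisely from Taylor-expanding around the nearest point of $X$. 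Mollification destroys this vanishing: $\phi_i\ast\varphi$ is smooth but in general nonzero on $X$, so $\|\chi_\lambda(\phi_i\ast\varphi)\|_k^l$ does not tend to $0$ as $\lambda\to 0$ (the zeroth-order part alone already contributes $\sup_X|\phi_i\ast\varphi|$, and the derivatives of $\chi_\lambda$ blow up like $\lambda^{-|\alpha|}$). The repair is the one the paper uses: insert $(1-\chi_\lambda)\varphi$ by the triangle inequality, writing
$\|\varphi-(1-\chi_\lambda)\phi_i\ast\varphi\|_k^K\leq\|\chi_\lambda\varphi\|_k^K+\|(1-\chi_\lambda)(\varphi-\phi_i\ast\varphi)\|_k^K$,
so that Lemma \ref{technical} is applied only to $\varphi$ itself (giving smallness in $\lambda$, uniformly in $i$), while the mollifier error is absorbed by taking $i$ large \emph{after} $\lambda$ is fixed (the factor $\|1-\chi_\lambda\|_k^K$ is $\lambda$-dependent, so the inner limit cannot be uniform in $\lambda$, and the iterated order of limits is essential). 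With that substitution your Cauchy argument, the independence of the approximating sequence, and the uniqueness-by-density all go through as in the paper.
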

\begin{proof}
Consider first that $m$ belongs to $\mathbb{N}$. 
To show that \eqref{milka} is well-defined we begin by proving the following convergences with respect to the topology of $\mathscr{E}^m\left( \mathbb{R}^d\right) $:
	\begin{itemize}
		\item[(i)] Let $\left(  \phi_i \right)_{i\in\mathbb{N}} $ be a Dirac sequence; then we have
			\begin{equation*}
				\lim\limits_{i \rightarrow \infty}\left( 1-\chi_\lambda\right)\phi_i\ast \varphi= \left( 1-\chi_\lambda\right) \varphi
			\end{equation*}
		for a  fixed value of $\lambda$ in $(0,1]$ and a fixed function $\varphi$ in $\mathscr{E}^m\left( \mathbb{R}^d\right) $.
		\item[(ii)]$\lim\limits_{\lambda\rightarrow 0}\left( 1-\chi_\lambda\right)\varphi=\varphi$ for a fixed function $\varphi$ in $\mathcal{I}^m\left( X,\mathbb{R}^d\right) $.
		\item[(iii)] $\lim\limits_{\lambda\rightarrow 0}\lim \limits_{i\rightarrow \infty}\left(1-\chi_\lambda \right)\phi_i\ast \varphi =\varphi$ for a fixed function $\varphi$ in $\mathcal{I}^m\left( X,\mathbb{R}^d\right) $.
	\end{itemize}
The proof of these statements is as follows.
Let $K$ be a compact subset of $\mathbb{R}^d$ and $k$ a nonnegative integer. 
Then, for every $\varphi$ in $\mathscr{E}^m\left( \mathbb{R}^d\right) $, each multi-index  $\alpha$ such that $|\alpha|\leq k$, and every $x$ in $K$ we have
	\begin{equation*}
		\begin{split}
			\left| \partial^{\alpha}_x\left[ \left( 1-\chi_\lambda\right)  \phi_i\ast\varphi\right.\right.&-\left.\left.\left( 1-\chi_\lambda\right)  \varphi\right]\right(x)| = \left| \partial^{\alpha}_x\left[ \left( 1-\chi_\lambda\right)\left(   \phi_i\ast\varphi- \varphi\right) \right](x)\right|\\ & \leq\sum_{|j|\leq |\alpha|}\binom{\alpha}{j}\left|\left[  \partial_x^{\alpha-j}\left( 1-\chi_\lambda\right)\partial_x^{j}\left(   \phi_i\ast\varphi- \varphi\right)\right] (x)\right|\\ &\leq \sum_{|j|\leq |\alpha|}\binom{\alpha}{j} \parallel 1-\chi_\lambda\parallel _k^K \left| \partial_x^{j}\left(   \phi_i\ast\varphi- \varphi\right)(x)\right|\\ &= \parallel 1-\chi_\lambda\parallel _k^K \sum_{|j|\leq |\alpha|}\binom{\alpha}{j}\left|\left(  \phi_i\ast\partial^{j}\varphi- \partial^{j}\varphi\right) (x)\right|.
		\end{split}
	\end{equation*}
From Theorem \ref{aproxdirac} it immediately follows that 
	\begin{equation*}
		\phi_i\ast\partial^{j}\varphi- \partial^{j}\varphi \underset{i\rightarrow \infty}{\longrightarrow} 0
	\end{equation*}
uniformly on $K$. 
The previous discussion implies that 
	\begin{equation*}
		\partial^{\alpha}_x\left[ \left( 1-\chi_\lambda\right)  \phi_i\ast\varphi-\left( 1-\chi_\lambda\right)  \varphi\right] \underset{i\rightarrow \infty}{\longrightarrow}0
	\end{equation*}
uniformly on $K$  for every $|\alpha|\leq k$, from which it follows that
	\begin{equation}\label{blanco}
		\left( 1-\chi_\lambda\right)\varphi = \lim_{i \rightarrow \infty}\left( 1-\chi_\lambda\right)\phi_i\ast\varphi
	\end{equation}
for the topology of $\mathscr{E}^m\left( \mathbb{R}^d\right) $, for every $\varphi$ in $\mathscr{E}^m\left( \mathbb{R}^d\right) $.
In other words, given $\eta >0$, there exists $i_0=i_0(\eta,\lambda,\varphi)$ such that, for every  $i\geq i_0$,
	\begin{equation*}
		\parallel \left( 1-\chi_\lambda\right)  \phi_i\ast\varphi-\left( 1-\chi_\lambda\right)  \varphi\parallel_k^K<\eta.
	\end{equation*}

For the second limit in the variable $\lambda$ we proceed as follows. 
By item \textit{(iii)} of Lemma \ref{technical} , we have that 
	\begin{equation*}
		 \parallel \chi_\lambda \varphi \parallel ^{K}_k\leq \parallel \chi_\lambda \varphi \parallel ^{K}_m \leq \tilde{C}\parallel \varphi\parallel_m^{K\cap \left\lbrace d(x,X)\leq \lambda\right\rbrace }\underset{\lambda\rightarrow 0}{\longrightarrow} 0\mbox{,}
	\end{equation*}
for every $\varphi$ in $\mathcal{I}^m\left( X,\mathbb{R}^d\right) $ and every nonnegative integer $k$, with $k\leq m$.
Therefore, 
	\begin{equation}\label{negro}
		\varphi = \lim_{\lambda\rightarrow 0}\left( 1-\chi_\lambda\right)\varphi
	\end{equation}
for the topology of $\mathscr{E}^m\left( \mathbb{R}^d\right) $, for every  $\varphi$ in $\mathcal{I}^m\left( X,\mathbb{R}^d\right) $.
In other words, given $\eta>0$ there exists $\lambda_0=\lambda_0(\eta,\varphi)$ such that
	\begin{equation*}
		\left\| \left( 1-\chi_\lambda\right)\varphi-\varphi\right\|^K_k =\left\|\chi_\lambda\varphi\right\|^K_k<\eta\mbox{,}
	\end{equation*}
for every $\lambda\leq \lambda_0$.

We turn now to prove the third limit.
Let $\varphi$ be in $\mathcal{I}^m\left( X,\mathbb{R}^d\right) $ and $\eta>0$.
Then if $\lambda<\lambda_0\left( \frac{\eta}{2},\varphi\right) $ and $i>i_0\left(\frac{\eta}{2},\varphi,\lambda\right) $ we obtain
	\begin{equation*}
		\begin{split}
			\left\| \varphi-\left(1-\chi_{\lambda}\right) \phi_i\ast\varphi\right\| _k^K& \leq\left\| \varphi-\left(1-\chi_{\lambda} \right) \varphi\right\| _k^K+\left\| \left(1-\chi_{\lambda} \right) \varphi-\left(1-\chi_{\lambda} \right) \phi_i\ast\varphi\right\| _k^K \\ &\leq  \frac{\eta}{2}+\frac{\eta}{2}=  \eta.
		\end{split}
	\end{equation*}
We have proven
	\begin{equation}
		\varphi=\lim\limits_{\lambda\rightarrow 0}\lim \limits_{i\rightarrow \infty}\left(1-\chi_\lambda \right)\phi_i\ast \varphi
	\end{equation}
with respect to the topology of $\mathscr{E}^m\left( \mathbb{R}^d\right) $, for every $\varphi$ in $\mathcal{I}^m\left( X,\mathbb{R}^d\right) $.
Observe that this result implies $\mathcal{I}\left( X,\mathbb{R}^d\right) $ is dense in $\mathcal{I}^m\left( X,\mathbb{R}^d\right) $. 

The statement of the lemma is  now staightforward.
We shall prove that \eqref{milka} uniquely defines a continuous extension $t_m$ in  $\mathcal{I}^m\left( X,\mathbb{R}^d\right) '$.
We begin by showing the double limit in \eqref{milka} exists.
Take $\left( \lambda_j\right)_{j\in\mathbb{N}} $ such that
	\begin{equation*}
		\lim\limits_{j\rightarrow \infty} \lambda_j =0.
	\end{equation*}
On $ \mathbb{N}\times\mathbb{N}$ we define the partial ordering: $(i,j)\leq (i',j')$ if and only if $ i\leq i'\ $ and $\  j\leq j'$, which turns $\mathbb{N}\times\mathbb{N}$ into a directed set.
As the field $\mathbb{C}$ is complete it suffices to see that
	\begin{equation*}
		\bigg(  \left\langle t,\left(1-\chi_{\lambda_j} \right)\phi_i\ast\varphi\right\rangle \bigg)_{(i,j)\in \mathbb{N}\times\mathbb{N}}
	\end{equation*}
is a Cauchy net.
Take $\eta>0$. If we choose 
	\begin{equation*}
		i\mbox{, }i'> \max\bigg\lbrace i_0\left(\frac{\eta}{4C},\varphi,\lambda_j\right),i_0\left(\frac{\eta}{4C},\varphi,\lambda_{j'}\right) \bigg\rbrace\ \mbox{ and } \  j\mbox{, }j'>j_0
	\end{equation*}
where $j_0$ is such that $\lambda_{j}<\lambda_0\left( \frac{\eta}{4C},\varphi\right)\ $ for every $\ j>j_0$, then
	\begin{equation*}
		\begin{split}
			\left| \left\langle t,\Big(1-\chi_{\lambda_j} \Big)\phi_i\ \ast\right.\right.& \left.\Big.\varphi\Big\rangle -\left\langle t,\left(1-\chi_{\lambda_{j'}}\right)\phi_{i'}\ast\varphi\right\rangle  \right|\\ & = \left| \left\langle t,\Big(1-\chi_{\lambda_j} \Big)\phi_i\ast\varphi-\left(1-\chi_{\lambda_{j'}} \right)\phi_i'\ast\varphi\right\rangle \right|\\ &\leq C \left\|  \Big(1-\chi_{\lambda_j} \Big)\phi_{i}\ast\varphi-\left(1-\chi_{\lambda_{j'}} \right)\phi_{i'}\ast\varphi\right\| _k^K \\ &\leq C\bigg(\left\|  \Big(1-\chi_{\lambda_j} \Big)\phi_{i}\ast\varphi-\varphi\right\| _k^K  +\left\|  \varphi -\left(1-\chi_{\lambda_{j'}} \right)\phi_{i'}\ast\varphi\right\|_k^K \bigg) \\ &\leq C\bigg(\frac{\eta}{2C}+ \frac{\eta}{2C} \bigg)= \eta\mbox{,}\nonumber
		\end{split}
	\end{equation*}
where in the first inequality we have used estimate \eqref{elsa} of Proposition \ref{mendelsshon}.
Set
	\begin{equation*}
		w:=\lim\limits_{\lambda\rightarrow 0}\lim \limits_{i\rightarrow \infty}\left\langle t,\left(1-\chi_\lambda \right)\phi_i\ast \varphi\right\rangle .
	\end{equation*}
Suppose now that $\left( \psi_j\right)_{j\in\mathbb{N}} $ is another sequence of functions in $\mathcal{I}\left( X,\mathbb{R}^d\right) $ that converges with respect to the topology of $\mathscr{E}^m\left( \mathbb{R}^d\right) $  to $\varphi$ in $\mathcal{I}^m\left( X,\mathbb{R}^d\right) $, and set
	\begin{equation*}
		\tilde{w}:=\lim\limits_{j\rightarrow\infty}\braket{t,\psi_j}.
	\end{equation*}
Then,
	\begin{equation*}
		\left| w-\tilde{w}\right|\leq \left|w-\braket{t,\left(1-\chi_\lambda \right)\phi_i\ast \varphi} \right| +\left|\braket{t,\left(1-\chi_\lambda \right)\phi_i\ast \varphi}- \braket{t,\psi_j}\right|+\left| \braket{t,\psi_j}-\tilde{w}\right|   .
	\end{equation*}
It is clear that the first and last terms on the right hand side can be made arbitrarily small by choosing $i$ and $j$ sufficiently large, and $\lambda$ sufficiently small. 
For the term in the middle, note that as $t$ satisfies \eqref{elsa}, there exist some constant $C$ and a seminorm $\parallel\cdot \parallel_k^K$ such that 
	\begin{equation*}
		\begin{split}
			\left|\braket{t,\left(1-\chi_\lambda \right)\phi_i\ast \varphi}- \braket{t,\psi_j}\right|&=\left|\braket{t,\left(1-\chi_\lambda \right)\phi_i\ast \varphi-\psi_j}\right|\\ &\leq C \parallel \left(1-\chi_\lambda \right)\phi_i\ast \varphi-\psi_j \parallel_k^K.
		\end{split}
	\end{equation*}
Furthermore,
	\begin{equation*}
		\parallel \left(1-\chi_\lambda \right)\phi_i\ast \varphi-\psi_j \parallel_k^K\leq \parallel \left(1-\chi_\lambda \right)\phi_i\ast \varphi-\varphi \parallel_k^K+\parallel \varphi-\psi_j \parallel_k^K
	\end{equation*}
and each of the terms on the right hand side can be made arbitrarily small by choosing $i$ and $j$ sufficiently large, and $\lambda$ sufficiently small.
The previous argument shows that $\left| w-\tilde{w}\right|\leq \varepsilon\ $ for every $\ \varepsilon>0$. 
Then $w=\tilde{w}$.

We are now allowed to define for every $\varphi$ in $\mathcal{I}^m\left(X,\mathbb{R}^d\right) $
	\begin{equation*}
		\braket{t_m,\varphi}:=\lim\limits_{j\rightarrow\infty}\braket{t,\psi_j}\mbox{,}
	\end{equation*}
where $\left( \psi_j\right)_{j\in\mathbb{N}} $ is any sequence of functions in $\mathcal{I}\left( X,\mathbb{R}^d\right) $ such that 
	\begin{equation*}
		\psi_j\underset{j\rightarrow\infty}{\longrightarrow}\varphi\
	\end{equation*}
with respect to the topology of $\mathscr{E}^m\left( \mathbb{R}^d\right) $.
By the previous argument, one can always choose
	\begin{equation*}
		\braket{t_m,\varphi}=\lim\limits_{\lambda\rightarrow 0}\lim\limits_{i\rightarrow \infty}\braket{t,\left(1-\chi_\lambda\right)\phi_i\ast \varphi}\mbox{,}
	\end{equation*} 
for every $\varphi$ in $\mathcal{I}^m\left( X,\mathbb{R}^d\right)$.

If $\varphi$ belongs to $\mathcal{I}\left( X,\mathbb{R}^d\right) $, we can take $\psi_j=\varphi\ $ for every $j$ in $\mathbb{N}$ and
	\begin{equation*}
		\braket{t_m,\varphi}=\lim\limits_{j\rightarrow\infty}\braket{t,\varphi}=\braket{t,\varphi}\mbox{,}
	\end{equation*}
so that $t_m$ is a continuous extension of  $t$. 

Suppose $\tilde{t}_m$ is another continuous extension of $t$. 
Let $\varphi$ be in $\mathcal{I}^m\left( X,\mathbb{R}^d\right) $ and take $\left( \psi_j\right)_{j\in\mathbb{N}} $ any sequence of functions in $\mathcal{I}\left( X,\mathbb{R}^d\right) $ such that 
	\begin{equation*}
		\psi_j\underset{j\rightarrow\infty}{\longrightarrow}\varphi
	\end{equation*}
with respect to the topology of $\mathscr{E}^m\left( \mathbb{R}^d\right) $. 
Then,	
	\begin{equation*}
		\braket{\tilde{t}_m,\varphi}=\braket{\tilde{t}_m,\lim\limits_{j\rightarrow\infty}\psi_j}=\lim\limits_{j\rightarrow\infty}\braket{\tilde{t}_m,\psi_j}=\lim\limits_{j\rightarrow\infty}\braket{t,\psi_j}=\braket{t_m,\varphi}.
	\end{equation*}
Thus, $t_m$ is unique.

Finally, $t_m$ is linear as a consequence of $t$ being linear itself on $\mathcal{I}\left( X,\mathbb{R}^d\right) $. 
In effect, let $\varphi$ and $\tilde{\varphi}$ be two functions in $\mathcal{I}^m\left( X,\mathbb{R}^d\right) $ and let  $\alpha$ be in $\mathbb{C}$. 
Take any pair of sequences of functions $\left( \psi_j\right)_{j\in\mathbb{N}} $ and $( \tilde{\psi}_j)_{j\in\mathbb{N}} $ in $\mathcal{I}\left( X,\mathbb{R}^d\right) $ such that
	\begin{equation*}
		\psi_j\underset{j\rightarrow\infty}{\longrightarrow}\varphi \ \mbox{ and } \ \tilde{\psi}_j\underset{j\rightarrow\infty}{\longrightarrow}\tilde{\varphi}\mbox{,}
	\end{equation*} 
with respect to the topology of $\mathscr{E}^m\left( \mathbb{R}^d\right) $.
Then, 
	\begin{equation*}
		\braket{t_m,\varphi+\alpha\tilde{\varphi}}=\lim\limits_{j\rightarrow\infty}\braket{t,\psi_j+\alpha\tilde{\psi}_j}=\lim\limits_{j\rightarrow\infty}\braket{t,\psi_j}+\alpha\lim\limits_{j\rightarrow\infty}\braket{t,\tilde{\psi}_j}=\braket{t_m,\varphi}+\alpha\braket{t_m,\tilde{\varphi}}.
	\end{equation*}
%

Finally, if $\varphi$ is an element of  $\mathcal{I}^\infty\left( X,\mathbb{R}^d\right) $, we can define $t_\infty$ on $\varphi$ by the right hand side of \eqref{milka} because  $\mathcal{I}^\infty\left( X,\mathbb{R}^d\right) \subset\mathcal{I}^m\left( X,\mathbb{R}^d\right) $ and the proof given above obviously holds in this restricted case.

Now, item \textit{(i)} at the beginning of this proof  holds for $\varphi$ in $\mathscr{E}\left( \mathbb{R}^d\right) $.
Then, the sequence 
	\begin{equation*}
		\big( \left( 1-\chi_\lambda\right)\phi_i\ast \varphi\big)_{i\in\mathbb{N}} 
	\end{equation*}
converges to $ \left( 1-\chi_\lambda\right) \varphi$ in the space $\mathscr{E}\left( \mathbb{R}^d\right) $.
By the fact that $t$ is sequentially continuous, we may write
	\begin{equation*}
		\lim \limits_{i\rightarrow \infty} \braket{t,  \left(1-\chi_\lambda \right)\phi_i\ast \varphi }=\braket{t,  \left(1-\chi_\lambda \right) \varphi }\mbox{,}
	\end{equation*}
so that in the case that $\varphi$ is an element of $\mathscr{E}\left( \mathbb{R}^d\right) $, the right hand side of equation  \eqref{milka} reduces to the right hand side of equation \eqref{milka2}, and no mollifiers are needed to define an extension of $t$.
\end{proof}

Choosing a fixed positive integer $m$, Lemma \ref{zappa} thus gives a continuous extension to $\mathcal{I}^m\left(X,\mathbb{R}^d \right)$ of a distribution $t$ in $\mathscr{D}\left( \mathbb{R}^d\setminus X\right)'$, whenever $t$ is compactly supported and has moderate growth along $X$.

Briefly, we repeat the way in which this extension is constructed. 
Let $\left( \chi_\lambda\right)_{\lambda\in\left( 0,1\right] }$ be the family of cut-off functions defined in Lemma \ref{technical}  and let $\left(  \phi_i \right)_{i\in\mathbb{N}} $ be a Dirac sequence of functions defined on $\mathbb{R}^d$.
Setting $\beta_{\lambda}=1-\chi_\lambda$ we have that $t_m$, defined on a function $\varphi$ in $\mathcal{I}^m\left( X,\mathbb{R}^d\right) $ by 
	\begin{equation*}
		\left\langle t_m,\varphi\right\rangle =\lim\limits_{\lambda\rightarrow 0}\lim \limits_{i\rightarrow \infty} \left\langle t,  \beta_\lambda\left(  \phi_i\ast \varphi\right) \right\rangle \mbox{,}
	\end{equation*}
is a continuous extension to $\mathcal{I}^m\left( X,\mathbb{R}^d\right)$ of $t$.
In other words, such a distribution $t_m$ makes the following diagram commutative
	\begin{equation*}
		\xymatrix{\mathcal{I}\left( X,\mathbb{R}^d\right)  \ar[d]_t \ar@{^{(}->}[r]^i& \mathcal{I}^m\left( X,\mathbb{R}^d\right) \ar@{-->}[dl]^{\quad \exists ! \ t_m}\\\mathbb{C}&}
	\end{equation*}
Observe that by construction (see Lemma \ref{technical}), the family
	\begin{equation*}
		(\beta_\lambda)_{\lambda\in(0,1]}\subseteq\mathscr{E}\left( \mathbb{R}^d\right) 
	\end{equation*}
already satisfies the requirements in Theorem \ref{elteo2}.
 
To prove Theorem \ref{elteo2} we need to find a further continuous extension of  $t_m$ (and therefore of $t$)  to the space $\mathscr{E}\left( \mathbb{R}^d\right)$,   satisfying the description in the statement of the mentioned theorem. 

To construct the new extension $\bar{t}$ we will need  Theorem \ref{janis} below.
This result appears as  Theorem 1.2 of  \cite{vietdang}, and unfortunately it is stated in a rather unclear fashion.
Nor does the author give a definition of splitting of a short exact sequence of LCS (which we have given, see Definition \ref{galletitasdelimon}), neither does he specify the topology of the dual spaces that appear in his theorem, which is by no means a minor omission (see Remark \ref{seqspa23} and Example \ref{seqspa}).
We give a complete and organised proof of this result.
\begin{thm}\label{janis}
There is a bijection between:
	\begin{itemize}
		\item [1.] The collection  $\mathfrak{D}$ of closed subspaces $D$ of $\mathscr{E}^m\left( \mathbb{R}^d\right) $ such that
			\begin{equation*}
				\mathscr{E}^m\left( \mathbb{R}^d\right) =\mathcal{I}^m\left( X,\mathbb{R}^d\right) \oplus D.
			\end{equation*} 
		$D$ is called a  \emph{renormalization scheme}\index{renormalization scheme}.
		\item [2.] The space $\mathscr{S}$ of continuous linear sections
			\begin{equation*}
				T:\mathscr{E}^m\left( X\right) \rightarrow \mathscr{E}^m\left( \mathbb{R}^d\right) 
			\end{equation*}
 		of  
			\begin{equation*}
				J^m:\mathscr{E}^m\left( \mathbb{R}^d\right) \rightarrow \mathscr{E}^m\left( X\right) 
			\end{equation*}
		in the short exact sequence
			\begin{equation}\label{eroica} 			
				0\rightarrow \mathcal{I}^m\left( X,\mathbb{R}^d\right) \overset{i}{ \rightarrow} \mathscr{E}^m\left( \mathbb{R}^d\right) \overset{J^m}{\rightarrow} \mathscr{E}^m\left( X\right) \rightarrow 0.
			\end{equation}
		\item [3.] The space $\mathscr{R}$ of \emph{renormalization sections}\index{renormalization section}, \textit{i.e.} the space of continuous linear sections
			\begin{equation*}
				\mathcal{R}:\mathcal{I}^m\left( X,\mathbb{R}^d\right) '\rightarrow\mathscr{E}^m\left( \mathbb{R}^d\right)' 
			\end{equation*}
 		of  
			\begin{equation*}
				i':\mathscr{E}^m\left( \mathbb{R}^d\right)' \rightarrow\mathcal{I}^m\left( X,\mathbb{R}^d\right) '
			\end{equation*}
		in the dual short exact sequence
			\begin{equation} \label{pastoral}
				0\rightarrow \mathscr{E}^{m}\left(X\right)'  \overset{{J^m}'}{\rightarrow} \mathscr{E}^m\left( \mathbb{R}^d\right)' \overset{i'}{\rightarrow} \mathcal{I}^m\left( X,\mathbb{R}^d\right) '\rightarrow 0\mbox{,}
			\end{equation}
		where the dual spaces are considered with the weak star topology.
	\end{itemize}
In addition, the space $\mathfrak{D}$ is not empty.
Therefore the bijection between the above items gives a one to one correspondence between nonempty sets.
\end{thm}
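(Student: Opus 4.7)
The plan is to observe that \eqref{eroica} is a short exact sequence of Fréchet spaces (with $\mathscr{E}^m(X)$ even Banach by Proposition \ref{leningrad}, and $X$ compact in the Euclidean setting of Theorem \ref{elteo2}), and then apply the general splitting machinery of Chapter \ref{joann1}. Exactness is straightforward: $\mathcal{I}^m(X,\mathbb{R}^d)=\operatorname{Ker}(J^m)$ by definition of $\mathcal{I}^m$, and $J^m$ is surjective by Whitney's Extension Theorem (Theorem \ref{whitney}), which in fact already provides an explicit continuous linear section $R\in\mathscr{S}$. This observation immediately yields $\mathscr{S}\neq\emptyset$, whence also $\mathfrak{D}\neq\emptyset$ and $\mathscr{R}\neq\emptyset$ once the claimed bijections are established.

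For the bijection $\mathfrak{D}\leftrightarrow\mathscr{S}$, I would associate to each $D\in\mathfrak{D}$ the map $T_D:=(J^m|_D)^{-1}$: the restriction $J^m|_D\colon D\to\mathscr{E}^m(X)$ is a continuous linear bijection between Fréchet spaces (injectivity follows from $D\cap\mathcal{I}^m(X,\mathbb{R}^d)=\{0\}$, and surjectivity from decomposing an arbitrary preimage of a given jet according to $\mathscr{E}^m(\mathbb{R}^d)=\mathcal{I}^m(X,\mathbb{R}^d)\oplus D$), hence an isomorphism by the Open Mapping Theorem. Conversely, given $T\in\mathscr{S}$ set $D_T:=\operatorname{Im}(T)$; this is closed since $D_T=\operatorname{Ker}(\operatorname{Id}_{\mathscr{E}^m(\mathbb{R}^d)}-T\circ J^m)$, and the identity $\phi=(\phi-TJ^m\phi)+TJ^m\phi$ together with $J^m\circ T=\operatorname{Id}$ exhibits $\mathscr{E}^m(\mathbb{R}^d)=\mathcal{I}^m(X,\mathbb{R}^d)\oplus D_T$. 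A short check using these formulas yields $D_{T_D}=D$ and $T_{D_T}=T$.

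For the bijection $\mathscr{S}\leftrightarrow\mathscr{R}$, I would invoke Lemma \ref{Mullova} to pass between splittings of \eqref{eroica} and those of its weak star dual \eqref{pastoral}, and Proposition \ref{gershwin} to produce explicit constructions in both directions: starting from $T\in\mathscr{S}$, define the continuous linear retraction $I(\phi)=\phi-T(J^m\phi)$ of $i$ and set $\mathcal{R}_T:=I'$; conversely, given $\mathcal{R}\in\mathscr{R}$, the proof of Lemma \ref{Mullova} produces the closed complement $\tilde{C}=\,^{\perp}\mathcal{R}(\mathcal{I}^m(X,\mathbb{R}^d)')$ of $i(\mathcal{I}^m(X,\mathbb{R}^d))$ in $\mathscr{E}^m(\mathbb{R}^d)$, and one sets $T_{\mathcal{R}}:=(J^m|_{\tilde{C}})^{-1}$. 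The main obstacle, and the only genuinely delicate point of the argument, will be verifying that these two constructions are mutually inverse, i.e.\ that $T\mapsto\mathcal{R}_T\mapsto T_{\mathcal{R}_T}$ recovers $T$ (and dually for $\mathcal{R}$). This reduces to identifying, via the orthogonality dictionary of Proposition \ref{bipo} and the definition of the transpose, the subspace $\tilde{C}$ associated with $\mathcal{R}_T$ with $\operatorname{Im}(T)=\operatorname{Ker}(I)$, and amounts to pure bookkeeping once the formulas of Proposition \ref{gershwin} are unpacked.
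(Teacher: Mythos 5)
Your proposal is correct and follows essentially the same route as the paper: exactness and nonemptiness via Whitney's Extension Theorem, then two bijections built from the identical formulas (the pre-annihilator $^\perp\mathcal{R}(\mathcal{I}^m(X,\mathbb{R}^d)')$ as the complement, the transpose of the canonical retraction $\phi\mapsto\phi-T(J^m\phi)$ as the renormalization section, and $(J^m|_D)^{-1}$ as the section of $J^m$), with the mutual-inverse checks resting on Lemma \ref{Mullova}, Proposition \ref{gershwin} and the orthogonality relations exactly as you indicate. The only difference is organizational — you pair $\mathfrak{D}$ directly with $\mathscr{S}$ and then $\mathscr{S}$ with $\mathscr{R}$, whereas the paper links $\mathfrak{D}$ with $\mathscr{R}$ first and then $\mathscr{R}$ with $\mathscr{S}$ — which changes nothing of substance.
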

\begin{proof}
The exactness of \eqref{eroica} and the existence of linear continuous sections of \eqref{eroica} is a consequence of the Whitney Extension Theorem (see Theorem \ref{whitney} and \cite{bierstone}, Thm. 2.6 for the exactness).
Therefore, the last assertion of Theorem \ref{janis}, namely $\mathfrak{D}$ is not empty, follows when we have proved the bijection between the spaces listed in it.

Since  \eqref{eroica} is a continuous exact sequence of Fréchet spaces, the dual sequence
	\begin{equation} 
		0\rightarrow \mathscr{E}^{m}\left(X\right)' \overset{{J^m}'}{\rightarrow} \mathscr{E}^m\left( \mathbb{R}^d\right)' \overset{i'}{\rightarrow} \mathcal{I}^m\left( X,\mathbb{R}^d\right) '\rightarrow 0
	\end{equation}
is exact (see \cite{meisevogt}, Proposition 26.4). 

We begin by proving there is a bijection between the space $\mathfrak{D}$ of renormalization schemes  and the space $\mathscr{R}$ of renormalization sections.
We define a mapping $\mathcal{F}:\mathscr{R}\rightarrow \mathfrak{D}$ such that on each renormalization section 
	\begin{equation*}
		\mathcal{R}:\mathcal{I}^m\left( X,\mathbb{R}^d\right) '\rightarrow\mathscr{E}^m\left( \mathbb{R}^d\right)' 
	\end{equation*} 
is given by the formula:
	\begin{equation} \label{Ff}
		\mathcal{F}(\mathcal{R})=\left\lbrace \varphi \in \mathscr{E}^m\left( \mathbb{R}^d\right) : \mu(\varphi)=0\ \forall \mu\in\mathcal{R}\left( \mathcal{I}^m\left( X,\mathbb{R}^d\right) '\right)  \right\rbrace .
	\end{equation}
By the proof of \textit{(ii)} implies \textit{(i)} of Lemma \ref{Mullova}, $\mathcal{F}(\mathcal{R})$ is a closed subspace such that   
	\begin{equation*}
		\mathscr{E}^m\left( \mathbb{R}^d\right) =\mathcal{I}^m\left( X,\mathbb{R}^d\right) \oplus\mathcal{F}\left( \mathcal{R}\right) .
	\end{equation*}
In other words, $\mathcal{F}(\mathcal{R})$ is a renormalization scheme.

Next, we define the mapping  $\mathcal{G}:\mathfrak{D}\rightarrow \mathscr{R}$ such that on each renormalization scheme $D$ is defined by 
	\begin{equation*}
		\mathcal{G}\left( D\right) =\Pi_1',
	\end{equation*}
where 
	\begin{equation*}
		\Pi_1:\mathcal{I}^m\left( X,\mathbb{R}^d\right) \oplus D\rightarrow \mathcal{I}^m\left( X,\mathbb{R}^d\right)
	\end{equation*}
is the canonical projection onto $\mathcal{I}^m\left( X,\mathbb{R}^d\right) $. 
Note that $\Pi_1$ is continuous because $\operatorname{Ker}\left( \Pi_1\right) =D$ is closed by assumption.

We assert that $\mathcal{G}\circ \mathcal{F}=\operatorname{Id}_{\mathscr{R}}$ and that $\mathcal{F}\circ \mathcal{G}=\operatorname{Id}_{\mathfrak{D}}$.

To prove the first equality, let 
	\begin{equation*}
		\mathcal{R}:\mathcal{I}^m\left( X,\mathbb{R}^d\right) '\rightarrow\mathscr{E}^m\left( \mathbb{R}^d\right)'
	\end{equation*}
be a renormalization section.
Then $\mathcal{F}\left( \mathcal{R}\right) $ is given by \eqref{Ff}. 
Applying $\mathcal{G}$ we must show that
	\begin{equation*}
		\mathcal{G}\circ\mathcal{F}(\mathcal{R})=\Pi_1'=\mathcal{R}\mbox{,}
	\end{equation*}
where, in this particular case,
	\begin{equation*}
		\Pi_1:\mathcal{I}^m\left( X,\mathbb{R}^d\right) \oplus\mathcal{F}\left( \mathcal{R}\right) \rightarrow \mathcal{I}^m\left( X,\mathbb{R}^d\right) 
	\end{equation*}
is the canonical projection onto  $\mathcal{I}^m\left( X,\mathbb{R}^d\right) $.

Define  
	\begin{equation*}
		\Pi_2:\mathcal{I}^m\left( X,\mathbb{R}^d\right) \oplus \mathcal{F}\left( \mathcal{R}\right) \rightarrow \mathcal{F}\left( \mathcal{R}\right)\mbox{,}
	\end{equation*}
the canonical  projection onto $\mathcal{F}\left( \mathcal{R}\right) $, which satisfies
	\begin{equation*}
		\Pi_2=\operatorname{Id}_{\mathscr{E}^m\left( \mathbb{R}^d\right) }-\Pi_1.
	\end{equation*}
Notice that 
	\begin{equation*}
		\operatorname{Ker}\left( \Pi_2\right) =\mathcal{I}^m\left( X,\mathbb{R}^d\right)
	\end{equation*}
is closed. 
Thus, $\Pi_2$ is continuous.
Then, for every $\tau$ in $\mathcal{I}^m\left( X,\mathbb{R}^d\right) '$ and $\psi$  in $\mathscr{E}^m\left( \mathbb{R}^d\right) $ we have
	\begin{equation*}
		\begin{split}
			\left\langle \Pi_1'(\tau),\psi\right\rangle & =\left\langle \tau,\Pi_1(\psi)\right\rangle =\left\langle i'\circ\mathcal{R}(\tau),\Pi_1(\psi)\right\rangle =\left\langle \mathcal{R}\left( \tau\right) \circ i,\Pi_1(\psi)\right\rangle \\ &=\left\langle \mathcal{R}(\tau),\Pi_1(\psi)\right\rangle =\left\langle \mathcal{R}(\tau),\psi-\Pi_2(\psi)\right\rangle =\left\langle \mathcal{R}(\tau),\psi\right\rangle\mbox{,}
		\end{split}
	\end{equation*} 
where in the last equality we have used the fact that  $\Pi_2(\psi)$ belongs to the space $\mathcal{F}(\mathcal{R})$ given by \eqref{Ff}, so that $\mathcal{R}(\tau)$ applied to it is zero.
Thus, $\mathcal{G}\circ \mathcal{F}=\operatorname{Id}_{\mathscr{R}}$.

To prove the other equality, namely $\mathcal{F}\circ \mathcal{G}=\operatorname{Id}_{\mathfrak{D}}$, let $D$ be a renormalization scheme. 
Then, by definition  $D$ is a closed subspace of $\mathscr{E}^m\left( \mathbb{R}^d\right) $ and it satisfies 
	\begin{equation*}
		\mathscr{E}^m\left( \mathbb{R}^d\right) =\mathcal{I}^m\left( X,\mathbb{R}^d\right) \oplus D.
	\end{equation*}
By definition of $\mathcal{G}$,
	\begin{equation*}
		\mathcal{G}\left( D\right) =\Pi_1'\mbox{,}
	\end{equation*}
where 
	\begin{equation*}
		\Pi_1:\mathcal{I}^m\left( X,\mathbb{R}^d\right) \oplus D\rightarrow \mathcal{I}^m\left( X,\mathbb{R}^d\right)
	\end{equation*}
is now  the canonical projection onto  $\mathcal{I}^m\left( X,\mathbb{R}^d\right) $. 
Applying $\mathcal{F}$ we get
	\begin{equation}\label{cjto}
		\mathcal{F}\circ \mathcal{G}\left( D\right) =\left\lbrace \varphi \in \mathscr{E}^m\left( \mathbb{R}^d\right) : \mu\left( \varphi\right) =0\ \forall \mu\in\Pi_1'\left( \mathcal{I}^m\left( X,\mathbb{R}^d\right) '\right)  \right\rbrace. 
	\end{equation}
We then have to prove the following equality
	\begin{equation}\label{blanquita}
		D=\left\lbrace \varphi \in \mathscr{E}^m\left( \mathbb{R}^d\right) : \mu(\varphi)=0\ \forall \mu\in\Pi_1'\left( \mathcal{I}^m\left( X,\mathbb{R}^d\right) '\right)  \right\rbrace. 
	\end{equation}
If $\varphi$ belongs to $D$, $\Pi_1(\varphi)=0$.
Then, for every $\mu=\Pi_1'\left( \phi\right) $, where $\phi$ belongs to $\mathcal{I}^m\left( X,\mathbb{R}^d\right) '$, we have
	\begin{equation*}
		\mu(\varphi)=\braket{\Pi_1'(\phi),\varphi}=\braket{\phi,\Pi_1(\varphi)}=0\mbox{,}
	\end{equation*}
where in the last equality we have used the fact that $\Pi_1(\varphi)=0$. Thus, $\varphi$ belongs to \eqref{cjto}.

On the other hand, if $\varphi$ in $\mathscr{E}^m\left( \mathbb{R}^d\right)  $ is such that $\mu(\varphi)=0$ for every $\mu$  belonging to 
	\begin{equation*}
		\Pi_1'\left( \mathcal{I}^m\left( X,\mathbb{R}^d\right) '\right)\mbox{,}
	\end{equation*}
we have to show that $\Pi_1(\varphi)=0$. 
But this is equivalent to showing that the evaluation maps,  $\operatorname{e}_x:\mathscr{E}^m\left( \mathbb{R}^d\right) \rightarrow \mathbb{C}$,  satisfy  
	\begin{equation*}
		\operatorname{e}_x\left( \Pi_1(\varphi)\right) =\braket{\Pi_1(\varphi),x}=0
	\end{equation*}
for every $x$ in $\mathbb{R}^d$, which in turn happens if and only if
	\begin{equation}\label{truee}
		\Pi_1'\left( \operatorname{e}_x\right) (\varphi)=0\mbox{,}
	\end{equation}
for every $x$ in $\mathbb{R}^d$. 
As $\operatorname{e}_x$ belongs to $\mathscr{E}^m\left( \mathbb{R}^d\right)' $ for every $x$ in $\mathbb{R}^d$, \eqref{truee} holds, and therefore, the equality \eqref{blanquita} is proved.  

The bijection between the space $\mathscr{R}$ of renormalization sections and the space $\mathscr{I}$ of continuous linear sections of
	\begin{equation*}
		J^m:\mathscr{E}^m\left( \mathbb{R}^d\right) \rightarrow \mathscr{E}^m\left( X\right)
	\end{equation*}
in \eqref{eroica} is defined in the same way as the argument described in Proposition  \ref{gershwin}. 
Namely, we define a mapping $\mathcal{J}:\mathscr{R}\rightarrow \mathscr{I}$ such that on each renormalization section
	\begin{equation*}
		\mathcal{R}:\mathcal{I}^m\left( X,\mathbb{R}^d\right) '\rightarrow\mathscr{E}^m\left( \mathbb{R}^d\right)' 
	\end{equation*} 
is given by the formula:
	\begin{equation*}
		\mathcal{J}\left( \mathcal{R}\right) =\left( J^m|_{\mathcal{F}\left( \mathcal{R}\right) }\right)^{-1}
	\end{equation*}
where we recall that 
	\begin{equation} \label{F}
		\mathcal{F}\left( \mathcal{R}\right) =\left\lbrace \varphi \in \mathscr{E}^m\left( \mathbb{R}^d\right) : \mu\left( \varphi\right) =0\ \forall \mu\in\mathcal{R}\left( \mathcal{I}^m\left( X,\mathbb{R}^d\right) '\right)  \right\rbrace. 
	\end{equation}
By Proposition \ref{gershwin}, $\mathcal{J}\left( \mathcal{R}\right) $ is continuous and satisfies 
	\begin{equation*}
		J^m\circ\mathcal{J}(\mathcal{R})=\operatorname{Id}_{\mathscr{E}^m\left( X\right) }.
	\end{equation*}

Next, define a mapping $\mathcal{K}:\mathscr{I}\rightarrow \mathscr{R}$ such that on each continuous linear section 
	\begin{equation*}
		\mathcal{T}:\mathscr{E}^m\left( X\right) \rightarrow \mathscr{E}^m\left( \mathbb{R}^d\right)
	\end{equation*}
of 
	\begin{equation*}
		J^m:\mathscr{E}^m\left( \mathbb{R}^d\right) \rightarrow \mathscr{E}^m\left( X\right)
	\end{equation*}
in \eqref{eroica} is given by
	\begin{equation*}
		\mathcal{K}(\mathcal{T})=\left( \pi_1\circ \Omega^{-1}_{\mathcal{T}}\right)'\mbox{,}
	\end{equation*}
where 
	\begin{equation*}
		\Omega_{\mathcal{T}}:  \mathcal{I}^m\left( X,\mathbb{R}^d\right)\oplus \mathscr{E}^m\left( X\right) \rightarrow  \mathscr{E}^m\left(\mathbb{R}^d\right)
	\end{equation*}
is an isomorphism that makes the following diagram 
	\begin{equation*}
		\xymatrix{\mathcal{I}^m\left( X,\mathbb{R}^d\right)
		\ar@{=}[d]_{} \ar[r]^{i}& \mathscr{E}^m\left(\mathbb{R}^d\right)\ar[r]^{J^m}&   \mathscr{E}^m\left( X\right)\ar@{=}[d]_{}\\	\mathcal{I}^m\left( X,\mathbb{R}^d\right)\ar@{^{(}->}[r]^{ {\hspace*{0.8 cm}}}& \mathcal{I}^m\left( X,\mathbb{R}^d\right)\oplus \mathscr{E}^m\left( X\right)\ar[u]^{\Omega_{\mathcal{T}}} \ar@{->>}[r]^{\pi_2 {\hspace*{-1 cm}}}&  \mathscr{E}^m\left( X\right)}
	\end{equation*} 
commutative and 
	\begin{equation*}
		\pi_1:  \mathcal{I}^m\left( X,\mathbb{R}^d\right)\oplus \mathscr{E}^m\left( X\right) \rightarrow  \mathcal{I}^m\left( X,\mathbb{R}^d\right)
	\end{equation*}
is the canonical projection.
By Proposition \ref{gershwin}, $\mathcal{K}(\mathcal{T})$ is a continuous linear section of  
	\begin{equation*}
		i':\mathscr{E}^m\left( \mathbb{R}^d\right)' \rightarrow\mathcal{I}^m\left( X,\mathbb{R}^d\right) '
	\end{equation*}
in the dual exact sequence \eqref{pastoral}.

We have to prove  $\mathcal{K}\circ \mathcal{J}=\operatorname{Id}_{\mathscr{R}}$ and  $\mathcal{J}\circ \mathcal{K}=\operatorname{Id}_{\mathscr{I}}$.
We begin with the first equality.
Let $\mathcal{R}
$ be in $\mathscr{R}$ and take $t$ in $\mathcal{I}^m\left( X,\mathbb{R}^d\right) '$ and $\varphi$ in $\mathscr{E}^m\left( \mathbb{R}^d\right) $.
Then, 
	\begin{equation*}
		\begin{split}
			\braket{\mathcal{K}\circ\mathcal{J}(\mathcal{R})(t),\varphi}&=\left\langle\left( \pi_1\circ \Omega^{-1}_{\mathcal{J}(\mathcal{R})}\right) '(t)\ ,\varphi\right\rangle =\left\langle t\ ,\left( \pi_1\circ \Omega^{-1}_{\mathcal{J}(\mathcal{R})}\right) (\varphi)\right\rangle \\ &=\left\langle \left( i'\circ\mathcal{R }\right)(t)\ ,\left( \pi_1\circ \Omega^{-1}_{\mathcal{J}(\mathcal{R})}\right) (\varphi)\right\rangle =\left\langle \mathcal{R }(t)\ ,\left( \pi_1\circ \Omega^{-1}_{\mathcal{J}(\mathcal{R})}\right) (\varphi)\right\rangle \\			&=\left\langle \mathcal{R }(t)\ ,\varphi-\left( \mathcal{J}\left( \mathcal{R}\right) \circ J^{m}\right) (\varphi)\right\rangle 			=\left\langle \mathcal{R }(t)\ ,\varphi\right\rangle 
		\end{split}
	\end{equation*} 
where in the last equality we have used the fact that 
	\begin{equation*}
		\left( \mathcal{J}\left( \mathcal{R}\right) \circ J^{m}\right) (\varphi)\in\mathcal{F}(\mathcal{R})
	\end{equation*}
and therefore
	\begin{equation*}
		\left\langle \mathcal{R }(t)\ ,  \left( \mathcal{J}\left( \mathcal{R}\right) \circ J^{m}\right) (\varphi)\right\rangle=0.
	\end{equation*}
Thus, we conclude $\mathcal{K}\circ \mathcal{J}=\operatorname{Id}_{\mathscr{R}}$.

To prove the other equality, namely $\mathcal{J}\circ \mathcal{K}=\operatorname{Id}_{\mathscr{I}}$, let $\mathcal{T}$ be in $\mathscr{I}$ and take $F$ in $\mathscr{E}^m\left( X\right) $.
Then
	\begin{equation*}
		\begin{split}
			\left[ \mathcal{J}\circ\mathcal{K}(\mathcal{T})\right] (F) &=\left( J^m|_{\mathcal{F}(\mathcal{K}(\mathcal{T}) )}\right)^{-1}(F)=\operatorname{Id}_{\mathscr{E}^m\left( \mathbb{R}^d\right)}\circ\left( J^m|_{\mathcal{F}(\mathcal{K}(\mathcal{T}) )}\right)^{-1}(F)\\& =\left[ \pi_1\circ\Omega_{\mathcal{T}}^{-1}+\mathcal{T}\circ J^m \right] \left( \mathbb{R}^d\right)\circ\left( J^m|_{\mathcal{F}(\mathcal{K}(\mathcal{T}) )}\right)^{-1}(F) \\ &=\left[ \mathcal{T}\circ J^m\circ\left( J^m|_{\mathcal{F}(\mathcal{K}(\mathcal{T}) )}\right)^{-1}\right] (F) =\mathcal{T}(F).
		\end{split}	
	\end{equation*}
Thus, $\mathcal{J}\circ\mathcal{K}=\operatorname{Id}_{\mathscr{I}}.$
\end{proof}
Now we can give the proof of Theorem \ref{elteo2}.
\begin{proof}[Proof of Theorem \ref{elteo2}]
Let $\left( \chi_\lambda\right)_{\lambda\in\left( 0,1\right] }$ be the family of cut-off functions defined in Lemma \ref{technical}  and let $\left(  \phi_i \right)_{i\in\mathbb{N}} $ be a Dirac sequence of functions defined on $\mathbb{R}^d$.
Set $\beta_{\lambda}:=1-\chi_\lambda$ and choose some fixed positive integer $m$. 

By hypothesis, $t$ belongs to  $\mathscr{D}\left( \mathbb{R}^d\setminus X\right)'$, is compactly supported and has moderate growth along $X$.
Then, from Lemma \ref{zappa} we have that $t_m$, defined on a function $\varphi$ in $\mathcal{I}^m\left( X,\mathbb{R}^d\right) $ by 
	\begin{equation*}
		\left\langle t_m,\varphi\right\rangle =\lim\limits_{\lambda\rightarrow 0}\lim \limits_{i\rightarrow \infty} \left\langle t,  \beta_\lambda\left(  \phi_i\ast \varphi\right) \right\rangle\mbox{,}
	\end{equation*}
is a continuous extension to  $\mathcal{I}^m\left( X,\mathbb{R}^d\right)$ of $t$.

Observe that by construction (see Lemma \ref{technical}), the family
	\begin{equation*}
		(\beta_\lambda)_{\lambda\in(0,1]}\subseteq\mathscr{E}\left( \mathbb{R}^d\right) 
	\end{equation*}
already satisfies the requirements in the statement of Theorem \ref{elteo2}.

Let $D$ be a fixed renormalization scheme (see Theorem \ref{janis} where we define renormalization schemes and prove their existence) and let  
	\begin{equation*}
		I^m_{D}:\mathscr{E}^m\left( \mathbb{R}^d\right) \rightarrow \mathcal{I}^m\left( X,\mathbb{R}^d\right) \quad\mbox{ and } \quad P^m_D=\operatorname{Id}_{\mathscr{E}\left( \mathbb{R}^d\right)}-I^m_D:\mathscr{E}^m\left( \mathbb{R}^d\right) \rightarrow D\mbox{,}
	\end{equation*} 
be the canonical projections relative to $D$, \textit{i.e.} the space $\mathscr{E}^m\left( \mathbb{R}^d\right)$ is written as
	\begin{equation*}
		\mathscr{E}^m\left( \mathbb{R}^d\right)=\mathcal{I}^m\left( X,\mathbb{R}^d\right)\oplus D.
	\end{equation*}
Then, the desired extension $\bar{t}$ of $t$ can be defined on a given $\varphi$ in $\mathscr{E}\left( \mathbb{R}^d\right) $ as follows: first applying the projection $I^m_{D}$ to obtain an element in  $\mathcal{I}^m\left( X,\mathbb{R}^d\right) $, and then applying the already found extension $t_m$.
Explicitly,  we have that for every $\varphi$ in $\mathscr{E}\left( \mathbb{R}^d\right)$,
	\begin{equation}\label{amoabeethoven}
		\left\langle \bar{t},\varphi\right\rangle =\lim\limits_{\lambda\rightarrow 0}\lim \limits_{i\rightarrow \infty}\underset{finite\ part}{\left\langle t,\beta_\lambda \left( \phi_i\ast I^m_{D}\varphi\right) \right\rangle } = \lim\limits_{\lambda\rightarrow 0}\lim \limits_{i\rightarrow \infty} \left\langle t,\beta_\lambda\varphi\right\rangle - \underset{singular\ part}{\left\langle t,\beta_\lambda \left( \phi_i\ast P^m_{D}\varphi\right)\right\rangle } 
	\end{equation}
is a well-defined extension of $t$, called the  \emph{renormalization}\index{renormalization of a distribution} of the distribution $t$.
From equation 
\eqref{amoabeethoven} we see that the definition of the required distributions $c_{\lambda}$ supported on $X$ is given by 
	\begin{equation*}
		\left\langle c_{\lambda},\varphi\right\rangle =\lim \limits_{i\rightarrow \infty}  \big\langle t,\beta_\lambda \left( \phi_i\ast P_D^m\varphi\right) \big\rangle\mbox{,}
	\end{equation*}
for every $\varphi$  in $\mathscr{E}\left( \mathbb{R}^d\right) $.
Thus, the proof of the theorem is complete.
\end{proof}
\begin{rem}
It has to be noted that the construction of the extension $\bar{t}$ depends on the fixed positive integer $m$ chosen for the extension $t_m$ and in the renormalization scheme $D$ selected thereafter.
\end{rem}
\begin{rem}[See comments immediately after Proposition 1.2 in \cite{vietdang}]
Observe that $\left\langle t,\beta_\lambda\varphi\right\rangle$ diverges as $\lambda \rightarrow 0$ in \eqref{amoabeethoven} whenever $\varphi$ does not belong to the space $\mathcal{I}^m\left( X,\mathbb{R}^d\right)$.
However, these divergences are local in the sense they can be subtracted by the counterterm $\left\langle t,\beta_\lambda \left( \phi_i\ast P^m_{D}\varphi\right)\right\rangle$ as $\lambda \rightarrow 0$.
\end{rem}
\chapter{Renormalization of Feynman amplitudes in Euclidean quantum field theories}\label{joann6}

This chapter is devoted to the problem of extending the so called \emph{Feynman amplitudes} (to be defined later) in quantum field theories whose underlying spacetime has a Riemannian structure.
Recall that $\mathcal{M}$ denotes a $d$-dimensional smooth, paracompact, oriented manifold; and 
$X$ a closed subset of $\mathcal{M}$.
We denote by $d$ the distance function induced by some choice of smooth Riemannian metric $g$ on $\mathcal{M}$. 

Many of the results appearing in this chapter are stated in \cite{vietdang} in an ambiguos fashion or with confusing proofs, 
namely Theorems \ref{maggotbrain} and \ref{cocainedecisions}; Lemma \ref{silvia}; and Proposition \ref{G}.
We thus give an exhaustive and clear explanation of them.

In the first section, the problem of extending a product of a function and a distribution is treated, in an almost general way.
This means that we are going to require the function to satisfy certain features, namely to be tempered along the subset over which the product is intended to be extended.

In the second section we concentrate on the particular case of a continuous extension of Feynman amplitudes, using the general results obtained before.
\section{Renormalized products}\label{secrenprod}

We begin by introducing a class of functions whose features are relevant for the purpose of extending products, namely the class of tempered functions along $X$.

The notion of tempered function will be a local one, so it will be defined by means of a partition of unity argument.
We will use the same notations of \S \ref{simple}: choose a locally finite cover of $\mathcal{M}$ by relatively compact open charts $(V_\alpha,\psi_\alpha)$, with  
	\begin{equation*}
		\psi_\alpha:V_\alpha\rightarrow V\subseteq\mathbb{R}^d
	\end{equation*}
where $V=\psi_\alpha(V_\alpha)$ is open.
Let $\left\lbrace \varphi_\alpha\right\rbrace _\alpha$ be a  subordinate partition of unity such that $\sum_{\alpha} \varphi_\alpha=1$, and denote
	\begin{equation*}
		K_\alpha=\operatorname{Supp}\left(  \varphi_\alpha\right) \subseteq V_\alpha.
	\end{equation*}
We first give the notion of a tempered function defined on  the Euclidean space $ \mathbb{R}^d$.
\begin{defi}\index{tempered function}
Let $\Omega$ be an open subset of $ \mathbb{R}^d$  and $X$ a closed set contained in $\Omega$.
Let $d$ denote the distance function in the Euclidean space.
A function $f$ in $\mathscr{E}\left( \Omega\setminus X\right) $ is \emph{tempered along } X if it satisfies the following estimate: for every $k$ in $\mathbb{N}$ and every compact subset $ K$ of $\Omega$ there exists a pair of positive constants $C$ and $s$ such that
	\begin{equation}\label{temper}
		\sup\limits_{|\nu|\leq k}\left|\partial^\nu f(x) \right| \leq C \left[1+ d(x,X) ^{-s}\right]   
	\end{equation}
for every $x$ in $K\setminus X$. 
\end{defi}
We now generalize the preceding definition to cover functions defined on a manifold $\mathcal{M}$.
\begin{defi}\label{tempfunc}\index{tempered function}
A function $f$ in $\mathscr{E}\left( \mathcal{M}\setminus X\right) $ is \emph{tempered along }X if in any local chart $\psi_\alpha:V_\alpha\rightarrow V$, the pushforward
	\begin{equation}\label{villavicencio}
		\psi_{\alpha\ast}\left( \varphi_\alpha f\right):V\setminus \psi_\alpha\left( X\cap V_\alpha\right) \rightarrow \mathbb{C}\mbox{,}
	\end{equation}
satisfies the following estimate: for every $k$ in $\mathbb{N}$ and  every $K$ compact subset of  $V$, there exists a pair of positive constants $C$  and $s$ such that 
	\begin{equation}\label{funkadelic}
		\sup\limits_{|\nu|\leq k}\left|\partial^\nu\left(  \psi_{\alpha\ast}\left( \varphi_\alpha f\right)\right) (x) \right| \leq C \left[1+ d(x,\psi_\alpha\left( X\cap V_\alpha\right)) ^{-s}\right]\mbox{,}
	\end{equation}
for every $x$ in $K\setminus \psi_\alpha\left( X\cap V_\alpha\right)$. 
Here, $d$ denotes the distance function in the Euclidean space.
   
In other words, $f$ in $\mathscr{E}\left( \mathcal{M}\setminus X\right) $ is tempered along X if for every $\alpha$,  the pushforward $\psi_{\alpha\ast}\left( \varphi_\alpha f\right)$ in $\mathscr{E}\left( V\setminus \psi_\alpha\left( X\cap V_\alpha\right) \right) $ is tempered along $\psi_\alpha\left( X\cap V_\alpha\right)$.

The class of tempered functions along a closed subset $X$ of a manifold  $\mathcal{M}$  forms  an algebra by Leibniz's rule, and will be denoted by $\mathcal{T}(X,\mathcal{M})$.
\end{defi}
Now, we establish in Theorem \ref{maggotbrain} a result about renormalization of a distribution multiplied by a tempered function, for which we will need the following proposition.
\begin{propo}[\textit{cf.} \cite{vietdang}, Proposition 3.1]\label{G}
Let $t$ be a compactly supported distribution in $\mathscr{D}\left( \mathbb{R}^d\setminus X \right)'$, and let $f$ be  a function in $\mathscr{E}\left( \mathbb{R}^d\setminus X \right) $, which satisfy the following estimates:
	\begin{enumerate}
		\item There exist a pair of positive constants $C_1$ and $s_1$, and a seminorm $\parallel \cdot \parallel_k^l$  such that 
			\begin{equation*}
				\left|\braket{t,\varphi} \right| \leq C_1 \left[1+d\left(\operatorname{Supp} \left(  \varphi\right) ,X \right)^{-s_1} \right]\parallel \varphi \parallel_k^l\mbox{,}
			\end{equation*}
		for every $\varphi$ in $\mathcal{I}\left( X,\mathbb{R}^d\right)$.
		\item For every $k$ in $\mathbb{N}$ and for every compact subset $K$ of $\mathbb{R}^d$, there is a pair of positive constants $C_2$ and $s_2$ such that 		\begin{equation*}
			 \sup\limits_{|\nu|\leq k}\left|\partial^\nu f(x) \right| \leq C_2 \left[1+ d(x,X) ^{-s_2}\right]\mbox{,}
		\end{equation*}
	for every $x$ in $K\setminus X$.
	\end{enumerate}
Then, $ft$ satisfies the following estimate: there exists a positive constant $C$ such that 
	\begin{equation*}
		 \left|\braket{ft,\varphi} \right|\leq C \left[1+d\left(\operatorname{Supp} \left(  \varphi\right) ,X \right)^{-\left( s_1+s_2\right) } \right]\parallel \varphi \parallel_k^l\mbox{,}
	\end{equation*}
for every $\varphi$ in $\mathcal{I}\left( X,\mathbb{R}^d\right) $.
\end{propo}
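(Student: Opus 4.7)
The plan is to reduce everything to the hypothesis on $t$ applied to the test function $f\varphi$, and then to estimate the seminorm $\|f\varphi\|_k^l$ by Leibniz's rule, exploiting the fact that $f\varphi$ is supported on $\operatorname{Supp}(\varphi)$, on which we control the derivatives of $f$ uniformly in terms of $A:=d(\operatorname{Supp}(\varphi),X)$.

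First I would check that $f\varphi$ is a legitimate test function for hypothesis (1). Since $\varphi\in\mathcal{I}(X,\mathbb{R}^d)$, its support is a compact set disjoint from $X$, so $f\varphi$ is smooth on $\operatorname{Supp}(\varphi)$ and extends by zero to an element of $\mathscr{E}(\mathbb{R}^d)$. Because $\operatorname{Supp}(f\varphi)\subseteq\operatorname{Supp}(\varphi)$ is disjoint from $X$, we have $f\varphi\in\mathcal{I}(X,\mathbb{R}^d)$, and moreover $d(\operatorname{Supp}(f\varphi),X)\geq A$. Applying hypothesis (1) to $f\varphi$ yields
\begin{equation*}
\left|\langle ft,\varphi\rangle\right|=\left|\langle t,f\varphi\rangle\right|\leq C_1\bigl[1+A^{-s_1}\bigr]\left\| f\varphi\right\|_k^l.
\end{equation*}

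Next I would estimate $\|f\varphi\|_k^l$. Since $\partial^\nu(f\varphi)$ vanishes off $\operatorname{Supp}(\varphi)$, the supremum over $K_l$ reduces to the supremum over $K_l\cap\operatorname{Supp}(\varphi)$. On this set every point $x$ satisfies $d(x,X)\geq A$, hence hypothesis (2), applied with the fixed compact $K_l$ and the fixed order $k$, gives $|\partial^i f(x)|\leq C_2[1+A^{-s_2}]$ for all $|i|\leq k$, with $C_2$ depending only on $k$ and $l$. Leibniz's rule then yields
\begin{equation*}
\left|\partial^\nu(f\varphi)(x)\right|\leq\sum_{i\leq \nu}\binom{\nu}{i}\left|\partial^i f(x)\right|\left|\partial^{\nu-i}\varphi(x)\right|\leq C_2\,2^{|\nu|}\bigl[1+A^{-s_2}\bigr]\left\|\varphi\right\|_k^l,
\end{equation*}
so that $\|f\varphi\|_k^l\leq C'\bigl[1+A^{-s_2}\bigr]\|\varphi\|_k^l$ for a constant $C'$ independent of $\varphi$.

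Combining the two estimates gives a bound by $C_1 C'[1+A^{-s_1}][1+A^{-s_2}]\|\varphi\|_k^l$. The final step is the elementary inequality
\begin{equation*}
\bigl[1+A^{-s_1}\bigr]\bigl[1+A^{-s_2}\bigr]\leq 4\bigl[1+A^{-(s_1+s_2)}\bigr],
\end{equation*}
which I would verify by splitting into cases $A\geq 1$ and $A<1$: in the former all inverse powers are at most $1$ so the product is at most $4$, while in the latter $A^{-s_j}\leq A^{-(s_1+s_2)}$ for $j=1,2$ (since $s_1,s_2\geq 0$ and $A<1$), giving the bound $1+3A^{-(s_1+s_2)}\leq 4[1+A^{-(s_1+s_2)}]$. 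Setting $C=4C_1C'$ finishes the argument.

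There is no real obstacle here, only careful bookkeeping; the only mildly delicate point is checking that the constant $C_2$ coming from hypothesis (2) can be chosen uniformly in $\varphi$. This is immediate because $K_l$ and $k$ are fixed once and for all by the seminorm appearing in hypothesis (1), so $C_2$ and $s_2$ depend only on these fixed data, not on the particular test function $\varphi$.
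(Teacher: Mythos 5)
Your proof is correct and follows essentially the same route as the paper's: apply hypothesis (1) to $f\varphi$, estimate $\left\| f\varphi\right\|_k^l$ via Leibniz's rule and hypothesis (2) on $K_l\cap\operatorname{Supp}\left( \varphi\right)$, and absorb the product of the two divergent factors into $4\left[1+d\left(\operatorname{Supp}\left( \varphi\right),X\right)^{-\left( s_1+s_2\right)}\right]$. Your write-up is in fact more careful than the paper's (which leaves a stray pointwise $d(x,X)$ in its displayed chain of inequalities), in particular in verifying that $f\varphi$ belongs to $\mathcal{I}\left( X,\mathbb{R}^d\right)$ and that the constants from hypothesis (2) are uniform in $\varphi$.
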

\begin{proof}
The claim follows from the following estimate.
For every $\varphi$ in $\mathcal{I}\left( X,\mathbb{R}^d\right) $,
	\begin{equation*}
		\begin{split}
			\left| \braket{ft,\varphi}\right|&\!\leq C_1  \left[1+d\left(\operatorname{Supp} \left( \varphi\right) ,X \right)^{-s_1} \right] \left\| f\varphi \right\|_k^l\\
			&\leq \!C_1C_2 2^{kn} \!\left[1+d\left(\operatorname{Supp} \left( \varphi\right) ,X \right)^{-s_1} \right]\!\!\left[1+ d(x,X) ^{-s_2}\right] \!\left\| \varphi \right\|_k^l \\
			&\leq\!\underbrace{ 4C_1C_2 2^{kn}}_{C}\left[1+d\left(\operatorname{Supp} \left(  \varphi\right) ,X \right)^{-\left( s_1+s_2\right) } \right]\left\| \varphi \right\|_k^l\mbox{,}
		\end{split}
	\end{equation*}
where we have used the Leibniz rule for the second inequality.
\end{proof}
\begin{thm}[\textit{cf.} \cite{vietdang}, Thm. 3.1]\label{maggotbrain}
For every function $f$ in $\mathcal{T}\left( X,\mathcal{M}\right) $ and every distribution $t$ in $\mathscr{D}(\mathcal{M})'$, there exists a distribution $\mathcal{R}(ft)$ in  $\mathscr{D}(\mathcal{M})'$ which coincides with the regular product $ft$ outside $X$.
\end{thm}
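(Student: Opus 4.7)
The strategy is to apply the Main Extension Theorem \ref{elteo}: it suffices to produce a distribution on $\mathcal{M}\setminus X$ representing $ft$ and show it has moderate growth along $X$; the continuous extension supplied by Theorem \ref{elteo} is then the required $\mathcal{R}(ft)$. The first step is to verify that $ft\in\mathscr{D}(\mathcal{M}\setminus X)'$ via the pairing $\langle ft,\phi\rangle:=\langle t,f\phi\rangle$, which is legitimate because $f\phi\in\mathscr{D}(\mathcal{M})$ for every $\phi\in\mathscr{D}(\mathcal{M}\setminus X)$; linearity is clear and continuity follows from the continuity of multiplication by a smooth function on each $\mathscr{D}_K(\mathcal{M}\setminus X)$.

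The heart of the argument is the verification of estimate \eqref{modgro1} for $ft$ along $X$. Since moderate growth is a local condition (Definition \ref{grulla}), I would reduce to the Euclidean setting by the localization procedure of \S\ref{simple}. Fix a locally finite cover by relatively compact charts $(V_\alpha,\psi_\alpha)$ together with a subordinate partition of unity $\{\varphi_\alpha\}$, and write $K_\alpha:=\operatorname{Supp}(\varphi_\alpha)\subset V_\alpha$. For a compact $K\subset\mathcal{M}$ and $\phi\in\mathscr{D}_K(\mathcal{M}\setminus X)$, only finitely many indices $\alpha$ contribute to the decomposition $\langle ft,\phi\rangle=\sum_\alpha\langle\varphi_\alpha(ft),\phi\rangle$. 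For each such $\alpha$ one passes to the Euclidean chart by setting $\tilde t_\alpha:=\psi_{\alpha*}(\varphi_\alpha t|_{V_\alpha})\in\mathscr{D}(\mathbb{R}^d)'$ (compactly supported) and $\tilde f_\alpha:=\psi_{\alpha*}(f|_{V_\alpha\setminus X})\in\mathscr{E}(\mathbb{R}^d\setminus X_\alpha)$, where $X_\alpha:=\psi_\alpha(X\cap K_\alpha)$; by Definition \ref{tempfunc}, $\tilde f_\alpha$ is tempered along $X_\alpha$.

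I would now invoke Proposition \ref{G}. Since $\tilde t_\alpha$ is continuous on $\mathscr{D}(\mathbb{R}^d)$, Remark \ref{casopart} combined with Lemma \ref{dumbo} shows that $\tilde t_\alpha$ satisfies the first hypothesis of Proposition \ref{G} on the ideal $\mathcal{I}(X_\alpha,\mathbb{R}^d)$ with exponent $s_1=0$; the second hypothesis is precisely the tempered condition. Proposition \ref{G} then furnishes constants $C_\alpha,s_\alpha\geq 0$, an integer $k_\alpha$ and a compact $K'_\alpha\subset\mathbb{R}^d$ with $|\langle\tilde f_\alpha\tilde t_\alpha,\varphi\rangle|\leq C_\alpha[1+d(\operatorname{Supp}(\varphi),X_\alpha)^{-s_\alpha}]\|\varphi\|_{k_\alpha}^{K'_\alpha}$ for every $\varphi\in\mathcal{I}(X_\alpha,\mathbb{R}^d)$. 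Specializing $\varphi=\psi_{\alpha*}(\varphi_\alpha\phi)$ and translating back via \eqref{castellano} identifies the Euclidean seminorm with the manifold seminorm $p_{k_\alpha,l_\alpha,\alpha}(\phi)$ up to a multiplicative constant (using Proposition \ref{naranja} to absorb the factor $\varphi_\alpha$); the local equivalence of Riemannian metrics discussed after Definition \ref{grulla} allows one to bound $d(\operatorname{Supp}(\psi_{\alpha*}(\varphi_\alpha\phi)),X_\alpha)^{-s_\alpha}$ by a constant multiple of $d(\operatorname{Supp}(\phi),X)^{-s_\alpha}$ uniformly over $\phi\in\mathscr{D}_K(\mathcal{M}\setminus X)$. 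Summing the resulting finite collection of inequalities gives the moderate-growth estimate \eqref{modgro1} for $ft$. Theorem \ref{elteo} then delivers $\mathcal{R}(ft)\in\mathscr{D}(\mathcal{M})'$, and the coincidence $\mathcal{R}(ft)|_{\mathcal{M}\setminus X}=ft$ is automatic since $\mathcal{R}(ft)$ is by construction a continuous extension of $ft$. The main obstacle is precisely the bookkeeping in the pull-back step: matching the local exponents $s_\alpha$ into a single exponent $s=\max_\alpha s_\alpha$, consolidating the finitely many Euclidean seminorms into a manifold-level estimate of the form \eqref{modgro1}, and controlling the Euclidean-versus-manifold distance comparison uniformly over the finite family of charts intersecting $K$.
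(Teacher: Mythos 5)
Your proposal is correct and follows essentially the same route as the paper: localization by a partition of unity and pushforward to charts, moderate growth of the local products via Lemma \ref{dumbo} and Proposition \ref{G}, and then the extension theorem. The only (cosmetic) difference is that you assemble a single global moderate-growth estimate for $ft$ and invoke Theorem \ref{elteo} once, whereas the paper extends each localized piece $\tilde f_\alpha\tilde t_\alpha$ separately and sums the pulled-back extensions; both versions rest on the same ingredients and the distance/seminorm comparisons you flag all go in the favourable direction.
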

Before giving the proof of Theorem  \ref{maggotbrain} we give the following interesting consequence.
\begin{cor}\label{maggotbrain2}
A function $f$ in $\mathcal{T}\left( X,\mathcal{M}\right) $ can always be considered as a distribution in $\mathscr{D}\left( \mathcal{M}\setminus X\right)' $.
In that case, $f$ has a continuous extension to $\mathscr{D}\left( \mathcal{M}\right)$ by considering $t=1$ in Theorem \ref{maggotbrain}.
\end{cor}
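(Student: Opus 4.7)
The plan is to establish the two claims of the corollary directly, relying on Theorem \ref{maggotbrain} as a black box and only checking that its hypotheses are met with $t=1$.

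First, I would verify the embedding $\mathcal{T}(X,\mathcal{M}) \hookrightarrow \mathscr{D}(\mathcal{M}\setminus X)'$. Since $\mathcal{M}$ is oriented Riemannian, the volume form $d\mu_g$ is canonically available, and for any $f \in \mathscr{E}(\mathcal{M}\setminus X)$ (in particular, any tempered $f$) we associate the functional
\begin{equation*}
\langle T_f, \varphi \rangle = \int_{\mathcal{M}\setminus X} f\varphi \, d\mu_g\mbox{,} \qquad \varphi \in \mathscr{D}(\mathcal{M}\setminus X).
\end{equation*}
For any compact $K \subseteq \mathcal{M}\setminus X$ and $\varphi \in \mathscr{D}_K(\mathcal{M}\setminus X)$, the bound $|\langle T_f,\varphi\rangle| \leq \bigl(\int_K |f|\,d\mu_g\bigr) \|\varphi\|_0^K$ holds, with the prefactor finite because $f$ is continuous on $K$. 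By Proposition \ref{rachmaninov2} this gives $T_f \in \mathscr{D}(\mathcal{M}\setminus X)'$, which proves the first claim. No tempered-ness is required here; it is simply the standard identification of smooth functions with distributions.

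Second, I would supply the distribution $t=1 \in \mathscr{D}(\mathcal{M})'$ defined analogously by $\langle 1,\varphi\rangle = \int_{\mathcal{M}}\varphi\,d\mu_g$; this is a distribution on $\mathcal{M}$ because the constant function $1$ lies in $\mathscr{E}(\mathcal{M})$. Since $f \in \mathcal{T}(X,\mathcal{M})$ and $1 \in \mathscr{D}(\mathcal{M})'$, the hypotheses of Theorem \ref{maggotbrain} are satisfied, so the theorem produces a distribution $\mathcal{R}(f\cdot 1) \in \mathscr{D}(\mathcal{M})'$ that coincides with the regular product $f\cdot 1$ on $\mathcal{M}\setminus X$.

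Finally, I would identify restrictions. On $\mathcal{M}\setminus X$ the regular product $f\cdot 1$, acting on $\psi \in \mathscr{D}(\mathcal{M}\setminus X)$, is given by $\langle 1, f\psi\rangle = \int_{\mathcal{M}\setminus X} f\psi \, d\mu_g = \langle T_f, \psi\rangle$, so $\mathcal{R}(f\cdot 1)|_{\mathcal{M}\setminus X} = T_f$. Writing $\mathcal{R}(f) := \mathcal{R}(f\cdot 1)$ yields the desired continuous extension of $T_f$ to $\mathscr{D}(\mathcal{M})$. There is no real obstacle in this proof: the only item worth attention is this last bookkeeping step, namely confirming that the "regular product" constructed inside Theorem \ref{maggotbrain} agrees on $\mathcal{M}\setminus X$ with the distribution $T_f$ produced in the first step, which is immediate from the pairing formulas.
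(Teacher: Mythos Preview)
Your proposal is correct and is exactly what the paper intends: the corollary is stated without a separate proof, as an immediate application of Theorem \ref{maggotbrain} with $t=1$, and your write-up simply makes explicit the identification of $f$ with the distribution $T_f$ via $d\mu_g$ (already used in the Introduction) and the verification that $\mathcal{R}(f\cdot 1)$ restricts to $T_f$ on $\mathcal{M}\setminus X$.
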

\begin{proof}[Proof of Theorem \ref{maggotbrain}]

By a partition of unity argument, the proof of the theorem may be reduced to the case where $X$ is a closed subset of $\mathcal{M}=\mathbb{R}^d$,  $f$ belongs to $\mathcal{T}\left( X,\mathbb{R}^d\right) $, and $t$ is a compactly supported distribution in $\mathscr{D}\left( \mathbb{R}^d\right)'$.
This is done in the following way.
Choose a locally finite cover of $\mathcal{M}$ by relatively compact open charts $(V_\alpha,\psi_\alpha)$, with  
	\begin{equation*}
		\psi_\alpha:V_\alpha\rightarrow V\subseteq\mathbb{R}^d
	\end{equation*}
where $V=\psi_\alpha(V_\alpha)$ is open.
Let $\left\lbrace \varphi_\alpha\right\rbrace _\alpha$ be a  subordinate partition of unity such that $\sum_{\alpha} \varphi_\alpha=1$, and denote
	\begin{equation*}
		K_\alpha=\operatorname{Supp}\left(  \varphi_\alpha\right) \subseteq V_\alpha.
	\end{equation*}
For each $\alpha$ consider $t_\alpha=t\varphi_\alpha$, which belongs to $\mathscr{D}\left( \mathcal{M}\right)' $ and is supported in  $K_\alpha$. 
Set $f_\alpha=\left.f\right|_{V_\alpha}$,  which is  tempered along $X\cap V_\alpha$.
Then, $f_\alpha t_\alpha$  belongs to  $\mathscr{D}\left( V_\alpha\setminus X\right)' $ and has compact support contained in ${K_\alpha}$.
For each $\alpha$ it suffices to find a continuous extension $\mathcal{R}(f_\alpha t_\alpha)$ to the space  $\mathscr{D}\left( V_\alpha\right)$,  supported on $K_\alpha$, which coincides with $f_\alpha t_\alpha$ on  $V_\alpha\setminus X$, because we could then define
	\begin{equation*}
		\mathcal{R}(ft)=\sum_{\alpha} \mathcal{R}(f_\alpha t_\alpha)\mbox{,}
	\end{equation*}
which is a locally finite sum that satisfies the conditions of the theorem
	\footnote{A notation abuse has been used here: in principle, the product $\mathcal{R}(f_\alpha t_\alpha)$ is only defined on $\mathscr{D}\left( V_\alpha\right) $.
 	However, we assume that $\mathcal{R}(f_\alpha t_\alpha)$ extends by zero outside $V_\alpha$ because it has compact support contained in ${K_\alpha}$, and therefore,  it may be thought to belong to the space $\mathscr{D}\left( \mathcal{M}\right)' $.
 	Thus, the sum $\sum_{\alpha} \mathcal{R}(f_\alpha t_\alpha)$ makes sense and belongs to $\mathscr{D}\left( \mathcal{M}\right)' $ .}. 

Now, in order to find such an extension $\mathcal{R}(f_\alpha t_\alpha)$ for each $\alpha$, we reduce the problem to the Euclidean case.
We define the pushforward
	\begin{equation}
		\tilde{f}_\alpha:=\psi_{\alpha\ast}(f_\alpha)=f_\alpha\circ \psi_{\alpha}^{-1}.  
	\end{equation}
The function $\tilde{f}_\alpha$ is tempered along $\psi_\alpha \left(X\cap V_\alpha \right)$ which is contaied in $V$.
Also, define  
	\begin{equation}
		\tilde{t}_\alpha:=\psi_{\alpha\ast}(t_\alpha)\mbox{,}
	\end{equation}
that belongs to  $\mathscr{D}\left( \mathbb{R}^d\right)'$ and has compact support contained in $\psi_\alpha(K_\alpha)$. 
If we could find a continuous extension $\mathcal{R}( \tilde{f_\alpha}\tilde{t_\alpha}) $ to $\mathscr{D}\left( \mathbb{R}^d\right)$  of $\tilde{f_\alpha}\tilde{t_\alpha}$, supported in $\psi_\alpha(K_\alpha)$, and  such that it coincides with $\tilde{f_\alpha}\tilde{t_\alpha}$ outside $\psi_\alpha\left(X\cap V_\alpha \right)$, then we could set 
	\begin{equation*}
		\mathcal{R}(f_\alpha t_\alpha)=\psi_\alpha^{\ast}(\mathcal{R}(\tilde{f_\alpha}\tilde{t_\alpha}))\mbox{,}
	\end{equation*}
that belongs to $\mathscr{D}(\mathcal{M})'$, is supported on $K_\alpha$, coincides with $f_\alpha t_\alpha$ on  $V_\alpha\setminus X$, and the statement of the theorem follows.

Thus, we will prove Theorem \ref{maggotbrain} in the case where $X$ is a closed subset of $\mathcal{M}=\mathbb{R}^d$,  $f$ belongs to $\mathcal{T}\left( X,\mathbb{R}^d\right) $ and $t$ is a compactly supported distribution in $\mathscr{D}\left( \mathbb{R}^d\right)' $. 
By Theorem \ref{elteo}, distributions with moderate growth are extendible.
Therefore, it suffices to show that $ft$ has moderate growth along $X$, but this is an immediate consequence of Lemma \ref{dumbo} and Proposition \ref{G}. 
\end{proof}
\begin{cor}
$\mathcal{T}_{\mathcal{M}\setminus X}(\mathcal{M})$ is a $\mathcal{T}(X,\mathcal{M})$-module.
\end{cor}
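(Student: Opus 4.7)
The plan is to show that for any $f\in\mathcal{T}(X,\mathcal{M})$ and $t\in\mathcal{T}_{\mathcal{M}\setminus X}(\mathcal{M})$ the product $ft$, which is a priori only defined as an element of $\mathscr{D}(\mathcal{M}\setminus X)'$ (since $f$ restricted to $\mathcal{M}\setminus X$ lies in $\mathscr{E}(\mathcal{M}\setminus X)$), in fact has moderate growth along $X$. The strategy will be a manifold version of Proposition \ref{G}: fix a compact $K\subseteq\mathcal{M}$, take $\psi\in\mathscr{D}_K(\mathcal{M}\setminus X)$, write $\langle ft,\psi\rangle=\langle t,f\psi\rangle$, apply the moderate growth estimate of $t$ on $f\psi$ (which is still compactly supported in $K$ and away from $X$), and then absorb the extra factor $f$ using its tempered estimate.

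More precisely, first I would invoke the moderate growth of $t$ to find, given $K$, constants $(C_1,s_1)\in\mathbb{R}_{\geq 0}^2$ and finitely many seminorms $p_{k,l,\alpha_h}$ ($h=1,\dots,q$) such that
\begin{equation*}
|\langle t,f\psi\rangle|\leq C_1\bigl[1+d(\operatorname{Supp}(f\psi),X)^{-s_1}\bigr]\sup_{1\leq h\leq q}p_{k,l,\alpha_h}(f\psi).
\end{equation*}
The key technical step is then to estimate $p_{k,l,\alpha_h}(f\psi)$ in terms of $p_{k,l,\alpha_h}(\psi)$. In the chart $(V_{\alpha_h},\psi_{\alpha_h})$ I would use Leibniz's rule to write
\begin{equation*}
\partial^{\nu}\bigl((f\psi)\circ\psi_{\alpha_h}^{-1}\bigr)=\sum_{|i|\leq|\nu|}\binom{\nu}{i}\,\partial^{i}(f\circ\psi_{\alpha_h}^{-1})\cdot\partial^{\nu-i}(\psi\circ\psi_{\alpha_h}^{-1}),
\end{equation*}
and, since these derivatives only matter at points of $\operatorname{Supp}(\psi\circ\psi_{\alpha_h}^{-1})$, bound the derivatives of $f\circ\psi_{\alpha_h}^{-1}$ via Definition \ref{tempfunc} by $C_2[1+d(x,\psi_{\alpha_h}(X\cap V_{\alpha_h}))^{-s_2}]$ for some $(C_2,s_2)\in\mathbb{R}_{\geq 0}^2$ independent of $\psi$. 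This yields an estimate of the shape
\begin{equation*}
p_{k,l,\alpha_h}(f\psi)\leq C_2'\bigl[1+d(\operatorname{Supp}(\psi),X)^{-s_2}\bigr]p_{k,l,\alpha_h}(\psi).
\end{equation*}

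Finally I would combine both estimates, using $\operatorname{Supp}(f\psi)\subseteq\operatorname{Supp}(\psi)$ so that $d(\operatorname{Supp}(f\psi),X)\geq d(\operatorname{Supp}(\psi),X)$, and the elementary bound $(1+a^{-s_1})(1+a^{-s_2})\leq 4(1+a^{-(s_1+s_2)})$ for $a>0$, to produce an estimate of the form
\begin{equation*}
|\langle ft,\psi\rangle|\leq C\bigl[1+d(\operatorname{Supp}(\psi),X)^{-(s_1+s_2)}\bigr]\sup_{1\leq h\leq q}p_{k,l,\alpha_h}(\psi)
\end{equation*}
for every $\psi\in\mathscr{D}_K(\mathcal{M}\setminus X)$, which is exactly the moderate growth condition of Definition \ref{grulla}. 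The main obstacle I foresee is the careful bookkeeping needed to transfer the tempered estimate for $f$ (which is formulated chart by chart via the pushforwards $\psi_{\alpha\ast}(\varphi_\alpha f)$) into a bound on the seminorm $p_{k,l,\alpha_h}(f\psi)$ of the global product, and to ensure that the dependence on $d(\operatorname{Supp}(\psi),X)$ in the manifold metric matches the Euclidean distances appearing in \eqref{funkadelic}; this is handled by the local equivalence of Riemannian metrics already invoked after Definition \ref{grulla}, so that chart-distance and manifold-distance control each other on any relatively compact piece of $\mathcal{M}$.
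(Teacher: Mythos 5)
Your proof is correct, and it takes a more direct route than the one the paper implicitly intends. The paper states this corollary as a consequence of Theorem \ref{maggotbrain}, whose proof localizes by a partition of unity, passes to the Euclidean case, reformulates moderate growth on the ideal $\mathcal{I}\left( X,\mathbb{R}^d\right)$ via Lemma \ref{dumbo}, and then applies Proposition \ref{G}; strictly speaking Theorem \ref{maggotbrain} is stated for $t$ in $\mathscr{D}\left( \mathcal{M}\right)'$, so deducing the module statement from it requires re-running its proof with a moderate-growth $t$ (which Proposition \ref{G} does permit, since its first hypothesis allows an arbitrary exponent $s_1$). You instead prove the closure property directly in the manifold formulation of Definition \ref{grulla}: Leibniz's rule in each chart, the pointwise tempered bound for $f$ evaluated on $\operatorname{Supp}\left( \psi\right)$, and the elementary inequality $(1+a^{-s_1})(1+a^{-s_2})\leq 4\left(1+a^{-(s_1+s_2)}\right)$. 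This is in effect a globalized Proposition \ref{G}, or equivalently the tempered-coefficient upgrade of Proposition \ref{viir}. What your approach buys is that the corollary no longer rests on Theorem \ref{maggotbrain} at all. The two points that genuinely need care --- that the tempered estimate is only given for the pushforwards $\psi_{\alpha\ast}\left( \varphi_\alpha f\right)$ rather than for $f$ itself (handled by decomposing $f\psi=\sum_{\alpha}\left( \varphi_\alpha f\right)\psi$, a finite sum since $\psi$ has compact support), and the comparison between the chart Euclidean distance and the Riemannian distance on relatively compact sets --- are both ones you flag explicitly, and both are routine.
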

In what follows we give an example of renormalization of products where Theorem \ref{maggotbrain} is applied.
\begin{example}[Particular case of renormalization of products. See \cite{vietdang}, comments after Thm. 4.4]
Let us consider a function $\varphi$ in $\mathscr{E}\left( \mathbb{R}^d\right) $.
Define $X=\left\lbrace \varphi=0 \right\rbrace $ and suppose $\varphi\mathscr{E}\left( \mathbb{R}^d\right) $ is a closed ideal of $\mathscr{E}\left( \mathbb{R}^d\right) $.
Then, a result of Malgrange (see \cite{malgrange}, inequality 2.1, p.88), yields that $\varphi$ satisfies the Lojasiewicz inequality: for every $K$ compact subset of $\mathbb{R}^d$, there exists a pair of positive constants $C$ and $s$ such that 
	\begin{equation*}
	\left| \varphi(x)\right|\geq C d(x,X)^s\mbox{,}
	\end{equation*}  
for every $x$ in $K$.
The Leibniz rule tells us that $f:\mathbb{R}^d\setminus X\rightarrow \mathbb{C}$ given by $f(x)=\left( \varphi(x)\right)^{-1}$ must be tempered along $X$. 
Then, by Theorem \ref{maggotbrain} $ft$ has an  extension $\mathcal{R}(ft)$ which  coincides with the product outside the set $X$.
\end{example}

The previous example is extended to manifolds in Theorem \ref{cocainedecisions} below.
Such a result could be proved by a  partition of unity argument, being the previous example the proof of the local case.
However, we present here an alternative demonstration of this statement due to Malgrange (see \cite{malgrange}, Thm. 2.1, p.100).

\begin{thm}[\textit{cf.} \cite{vietdang}, Thm. 4.5.]\label{cocainedecisions}
Let $\varphi$ belong to $\mathscr{E}\left( \mathcal{M}\right) $, $X=\left\lbrace \varphi=0\right\rbrace $ and suppose $\varphi\mathscr{E}\left( \mathcal{M}\right) $ is a closed ideal of $\mathscr{E}\left( \mathcal{M}\right) $. 
Let $f:\mathcal{M}\setminus X\rightarrow \mathbb{C}$ be given by $f(x)=\left( \varphi\left( x\right) \right)^{-1}$.
Then, for every $t$ in $\mathscr{D}\left( \mathcal{M}\right)'$, there exists $s$ in $\mathscr{D}\left( \mathcal{M}\right)'$ such that  $\varphi s=t$.
In particular, $s=ft$ outside $X$.  
\end{thm}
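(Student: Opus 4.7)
The plan is to prove this by duality, mirroring the approach of Malgrange (\cite{malgrange}, Thm.~2.1, p.~100), using the machinery on transposes of continuous linear maps between Fr\'echet spaces developed in Chapter~\ref{joann1}. The core observation is that solving $\varphi s = t$ in $\mathscr{D}(\mathcal{M})'$ is, up to a localisation argument, the same as inverting the transpose of the multiplication map $M_\varphi: \mathscr{E}(\mathcal{M})\to\mathscr{E}(\mathcal{M})$, $\psi\mapsto \varphi\psi$, whose image $\varphi\mathscr{E}(\mathcal{M})$ is closed by hypothesis.

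First I would reduce to the case where $t$ is compactly supported. Choose a locally finite partition of unity $\{\chi_\alpha\}$ subordinate to a locally finite cover of $\mathcal{M}$ by relatively compact open sets $U_\alpha$. Then $t=\sum_\alpha \chi_\alpha t$ is a locally finite sum of distributions, each $t_\alpha:=\chi_\alpha t$ being compactly supported and thus belonging to $\mathscr{E}(\mathcal{M})'$ by Remark~\ref{hatcher}. If for each $\alpha$ I can produce $s_\alpha\in\mathscr{E}(\mathcal{M})'$ with $\varphi s_\alpha = t_\alpha$ and with $\operatorname{Supp}(s_\alpha)$ contained in a small neighbourhood of $\operatorname{Supp}(\chi_\alpha)$, then $s:=\sum_\alpha s_\alpha$ is a locally finite sum, defines an element of $\mathscr{D}(\mathcal{M})'$, and satisfies $\varphi s = t$.

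Second, for $t_\alpha$ compactly supported I would invoke duality. The map $M_\varphi:\mathscr{E}(\mathcal{M})\to\mathscr{E}(\mathcal{M})$ is continuous (manifold analogue of Lemma~\ref{mafi}(ii)) between Fr\'echet spaces (manifold analogue of Proposition~\ref{pinkfloyd}), and its image $\varphi\mathscr{E}(\mathcal{M})$ is closed by assumption. Proposition~\ref{evgeny}(ii) then gives that the transpose $M_\varphi':\mathscr{E}(\mathcal{M})'\to\mathscr{E}(\mathcal{M})'$ has closed image
\begin{equation*}
\operatorname{Im}(M_\varphi') = \bigl(\operatorname{Ker}(M_\varphi)\bigr)^{\perp}.
\end{equation*}
A direct computation shows $M_\varphi'$ is again multiplication by $\varphi$, since $\langle M_\varphi'(s),\psi\rangle = \langle s,\varphi\psi\rangle = \langle \varphi s,\psi\rangle$. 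To conclude surjectivity of $M_\varphi'$ onto $\mathscr{E}(\mathcal{M})'$, I need $\operatorname{Ker}(M_\varphi)=0$. A smooth $\psi$ with $\varphi\psi\equiv 0$ vanishes on $\mathcal{M}\setminus X$, so $\operatorname{Supp}(\psi)\subseteq X^{\circ}$; and the Lojasiewicz inequality (Malgrange, \cite{malgrange}, inequality 2.1 p.~88, as recalled in the example preceding this theorem) that follows from the closedness of $\varphi\mathscr{E}(\mathcal{M})$ forces $X$ to have empty interior, so $\operatorname{Ker}(M_\varphi) = \{0\}$. Thus $t_\alpha\in(\operatorname{Ker}(M_\varphi))^{\perp} = \mathscr{E}(\mathcal{M})'$ and $s_\alpha$ exists.

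The main obstacle will be the support control needed in the localisation step: the bare application of Proposition~\ref{evgeny} yields some $s_\alpha\in\mathscr{E}(\mathcal{M})'$ but gives no a~priori bound on its support, while the assembly $s=\sum s_\alpha$ requires local finiteness of $\{\operatorname{Supp}(s_\alpha)\}$. To overcome this I would work locally in charts: using a slightly larger relatively compact open $V_\alpha\supset \operatorname{Supp}(\chi_\alpha)$, solve the equation $\varphi s_\alpha = t_\alpha$ within the Fr\'echet space $\mathscr{E}(\overline{V_\alpha})$ (or, equivalently, in the Euclidean chart image), whose dual consists of distributions supported in $\overline{V_\alpha}$; the closedness of the relevant ideal and triviality of its kernel persist locally by restriction. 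Once $s$ is assembled, the identity $\varphi s = t$ holds on all of $\mathcal{M}$, and since $\varphi$ is invertible on $\mathcal{M}\setminus X$, the restriction $s|_{\mathcal{M}\setminus X}$ necessarily equals $t\cdot f$ there, giving the final clause of the statement.
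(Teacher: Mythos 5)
Your overall strategy is the same as the paper's: both follow Malgrange and reduce the division problem to surjectivity of the transpose of the multiplication map $M_\varphi:\mathscr{E}(\mathcal{M})\to\mathscr{E}(\mathcal{M})$, exploiting that its image is closed by hypothesis. The paper phrases this as ``$\operatorname{Im}(m_\varphi)$ is closed and dense'', while you invoke Proposition~\ref{evgeny}\textit{(ii)} directly to get $\operatorname{Im}(M_\varphi')=(\operatorname{Ker}(M_\varphi))^{\perp}$; these are equivalent formulations of the same closed-range duality. The genuine gap is in your justification that $\operatorname{Ker}(M_\varphi)=\{0\}$. You argue that the Lojasiewicz inequality ``forces $X$ to have empty interior'', but this does not follow from the inequality as stated: at any $x$ in $X^{\circ}$ one has $d(x,X)=0$, so the estimate $|\varphi(x)|\geq C\,d(x,X)^{s}$ is vacuous there and places no constraint on the interior of $X$. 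To extract a contradiction from $X^{\circ}\neq\emptyset$ you would have to work at points of $\mathcal{M}\setminus X$ accumulating on $\partial(X^{\circ})$, where $\varphi$ is flat, and you would still need a lower bound on $d(x,X)$ comparable to the distance to the flat point, which is not automatic. The conclusion of the theorem does force $X^{\circ}=\emptyset$ (a Dirac mass at an interior point of $X$ cannot be divided by $\varphi$), so the claim must ultimately hold under the closedness hypothesis, but you have not proved it. The paper avoids this detour entirely: it applies the Open Mapping Theorem to the corestriction $M_\varphi|^{\varphi\mathscr{E}(\mathcal{M})}$ onto its (closed, hence Fr\'echet) image to obtain a two-sided seminorm estimate $s\|\psi\|_{k}^{l}\leq\|\varphi\psi\|_{k'}^{l'}$, from which $\varphi\psi=0\Rightarrow\psi=0$ is immediate. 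Since you already have all the ingredients for that argument, you should replace the geometric claim about $X^{\circ}$ by this estimate.

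On the localisation: you are right that passing from $\mathscr{D}(\mathcal{M})'$ to compactly supported distributions requires control on $\operatorname{Supp}(s_\alpha)$, and it is to your credit that you flag this (the paper simply asserts that it suffices to treat $\mathscr{E}(\mathcal{M})'$). However, your proposed fix rests on the unproved assertion that closedness of $\varphi\mathscr{E}(\mathcal{M})$ ``persists locally by restriction'' to $\varphi|_{V_\alpha}\mathscr{E}(V_\alpha)$; restriction $\mathscr{E}(\mathcal{M})\to\mathscr{E}(V_\alpha)$ is not surjective, so this is not formal. It does follow from Malgrange's characterisation of closedness of principal ideals via the local Lojasiewicz inequality, but if you take this route you must cite or prove that localisation explicitly rather than treat it as a restriction of the global hypothesis.
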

\begin{proof}
It suffices to prove that the linear map
	\begin{align*}
		m_\varphi:\mathscr{E}\left( \mathcal{M}\right)' &\rightarrow \mathscr{E}\left( \mathcal{M}\right)' \\s&\mapsto\varphi s
	\end{align*}
is onto if $\varphi\mathscr{E}\left( \mathcal{M}\right) $ is closed in $\mathscr{E}\left( \mathcal{M}\right) $.
For this purspose we will show that $\operatorname{Im}\left( m_\varphi\right) $ is closed and dense in $\mathscr{E}\left( \mathcal{M}\right)' $. 

To prove the first assertion, consider the map
	\begin{align*}
		M_\varphi:\mathscr{E}\left( \mathcal{M}\right) &\rightarrow \mathscr{E}\left( \mathcal{M}\right) \\ \psi&\mapsto \varphi \psi.
	\end{align*}
Observe that $M_\varphi'=m_\varphi$.
As $\varphi\mathscr{E}\left( \mathcal{M}\right) =\operatorname{Im}\left( M_\varphi\right) $ is closed in $\mathscr{E}\left( \mathcal{M}\right) $ and $\mathscr{E}\left( \mathcal{M}\right) $ is Fr\'echet,  $\operatorname{Im}\left( m_\varphi\right) $ is closed in $\mathscr{E}\left( \mathcal{M}\right)' $ (see \cite{meisevogt}, Thm. 26.3). 

To prove the second assertion, observe that the fact that  $\varphi\mathscr{E}\left( \mathcal{M}\right) =\operatorname{Im}\left( M_\varphi\right) $ is closed in $\mathscr{E}\left( \mathcal{M}\right) $ implies $\varphi\mathscr{E}\left( \mathcal{M}\right) =\operatorname{Im}\left( M_\varphi\right) $ is Fr\'echet.
Then, the continuous linear map 
	\begin{equation}\label{correstr}
		\left. M_{\varphi}\right\vert^{\varphi\mathscr{E}\left( \mathcal{M}\right) }:\mathscr{E}\left( \mathcal{M}\right) \rightarrow \varphi\mathscr{E}\left( \mathcal{M}\right) 
	\end{equation}
is open, by the Open Mapping Theorem.
Then, for  every continuous seminorm $\left\| \cdot\right\|_k^l $  on $\mathscr{E}\left( \mathcal{M}\right) $, there exists a seminorm $\left\| \cdot\right\|_{k'}^{l'} $ such that
	\begin{equation*}
		\left\| \psi\right\|_k^l \leq \left\| \varphi\psi\right\|_{k'}^{l'}.
	\end{equation*}
To show this last statement is true, consider the open set 
	\begin{equation*}
		U=\left\lbrace \psi\in\mathscr{E}\left(M \right) : \left\|\psi \right\|_k^l<1 \right\rbrace. 
	\end{equation*}
As \eqref{correstr} is linear and  open, the image of $U$ under this map contains the zero function and is open.
Therefore, there is some positive number $s$ and a seminorm $\left\| \cdot\right\|_{k'}^{l'}$ such that
	\begin{equation*}
		0\in V:=\left\lbrace \psi\in\varphi\mathscr{E}\left(M \right) : \left\|\psi \right\|_{k'}^{l'}<s \right\rbrace \subseteq \overline{V} \subseteq \left. M_{\varphi}\right\vert^{\varphi\mathscr{E}\left( \mathcal{M}\right) }\left( U\right). 
	\end{equation*}
In particular, 
	\begin{equation*}
		V\cap M_\varphi|^{\varphi\mathscr{E}\left( \mathcal{M}\right) }\left( \partial U\right)=\emptyset.
	\end{equation*}
Therefore, for every $\psi\neq 0$, 
	\begin{equation*}
		\left\| \left. M_{\varphi}\right\vert^{\varphi\mathscr{E}\left( \mathcal{M}\right) }\left( \frac{\psi}{\left\|\psi \right\|_{k}^{l} }\right) \right\|_{k'}^{l'}\geq s\mbox{,}
	\end{equation*}
which can be reexpressed, using the definition of the map  \eqref{correstr}, as
	\begin{equation*}
		\left\| \varphi \psi \right\|_{k'}^{l'}\geq s\left\|\psi \right\|_{k}^{l}\mbox{,} 
	\end{equation*}
for every $\psi\neq 0$.
Since the previous inequality  holds also  for $\psi=0$, we have 
	\begin{equation*}
		\varphi\psi=0  \implies \psi=0.
	\end{equation*}
Then,
	\begin{equation*}
		\begin{split}
			^\perp \left(\operatorname{Im}\left( m_\varphi\right)  \right) &=\left\lbrace \psi \in \mathscr{E}\left( \mathcal{M}\right) : \braket{\varphi s,\psi}=0\mbox{, } \ \forall s\in\mathscr{E}\left( \mathcal{M}\right)' \right\rbrace \\ &= \left\lbrace \psi \in \mathscr{E}\left( \mathcal{M}\right) : \varphi\psi=0 \right\rbrace=\left\lbrace 0\right\rbrace\mbox{,}
		\end{split}  
	\end{equation*}
from which we conclude $\operatorname{Im}\left( m_\varphi\right) $ is dense.
\end{proof}
\section{Extension of Feynman amplitudes}\label{expliFA}

We denote by $\Delta_g$ the Laplace-Beltrami operator corresponding to the metric $g$ and consider the Green function $G$ in $\mathscr{D}\left( \mathcal{M}^2\right)' $ of the operator $\Delta_g+m^2$, where $m$ belongs to  $\mathbb{R}_{\geq 0}$.
$G$ is the Schwartz kernel of the operator inverse of $\Delta_g+m^2$ which always exists when $\mathcal{M}$ is compact and $-m^2$ does not belong to $\operatorname{Spec}\left( \Delta_g\right) $.
In the noncompact case, the general existence and uniqueness result for the Green function usually depends on the global properties of $\Delta_g$ and $\left( \mathcal{M},g \right) $.

However, if $G$ exists, then there is a fundamental result about asymptotics of $G$ near the diagonal 
	\begin{equation}\label{ñlkj}
		D_2=\left\lbrace (x,y)\in\mathcal{M}^2: x=y \right\rbrace\subseteq\mathcal{M}^2.
	\end{equation}
\begin{lem}[\textit{cf.} \cite{vietdang}, Lemma 4.1]\label{silvia}
Let $\left( \mathcal{M},g \right) $ be a smooth Riemannian manifold and $\Delta_g$ the corresponding Laplace-Beltrami operator.
If $G$ in $\mathscr{D}\left( \mathcal{M}^2\right)' $ is the fundamental solution of $\Delta_g+m^2$, then $G$ is tempered along the diagonal $D_2$, given by \eqref{ñlkj}. 
\end{lem}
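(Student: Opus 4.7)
The goal is to verify Definition \ref{tempfunc} for the Green function $G$ with the closed set being $D_2 \subseteq \mathcal{M}^2$. Since temperedness is by definition a local property, by the same partition-of-unity reduction used throughout Chapter \ref{cucuu} it suffices to fix a product chart $\psi_\alpha : V_\alpha \times V_\alpha \to U \times U \subseteq \mathbb{R}^d \times \mathbb{R}^d$ centered at a point of the diagonal and to show that, in these coordinates, the pushforward of $\varphi_\alpha G$ satisfies, on every compact subset of $U \times U$ and for every $k \in \mathbb{N}$, an estimate of the form
\begin{equation*}
\sup_{|\nu|\leq k} \left| \partial^\nu \bigl( \psi_{\alpha\ast}(\varphi_\alpha G)\bigr)(x,y) \right| \leq C\bigl[1 + |(x,y) - \operatorname{diag}|^{-s}\bigr],
\end{equation*}
where $\operatorname{diag} = \{(z,z) : z \in U\}$ and $|\cdot|$ denotes Euclidean distance in $\mathbb{R}^{2d}$. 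By elliptic regularity, $G$ is smooth off the diagonal, so the left-hand side makes sense.

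The key input is the Hadamard parametrix for the elliptic operator $P = \Delta_g + m^2$. In a convex normal neighborhood of a diagonal point there exist smooth functions $U_0, U_1, \ldots, U_N$ on $V_\alpha \times V_\alpha$ (the Hadamard coefficients, determined by transport equations along the geodesic joining $x$ and $y$) and a remainder $R_N$ of class $\mathcal{C}^{k_0}$, with $k_0$ arbitrarily large if $N$ is taken large enough, such that
\begin{equation*}
G(x,y) = \sum_{j=0}^{N} U_j(x,y)\, E_j\bigl(d_g(x,y)\bigr) + R_N(x,y),
\end{equation*}
where $d_g$ denotes the Riemannian distance and the building blocks $E_j$ are explicit smooth functions on $(0,\infty)$: $E_j(r) = r^{2j+2-d}$ when $d$ is odd, with logarithmic corrections $r^{2j+2-d}\log r$ appearing when $d$ is even and $2j+2-d \geq 0$. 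In the trivial low-dimensional case $d=1$ the function $G$ is already continuous on $\mathcal{M}^2$ and the claim is immediate. This construction and its basic mapping properties are classical (see e.g. Chapter XVII of Hörmander's treatise).

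The main computational step is to estimate the derivatives of each summand in the Hadamard expansion. In local coordinates the Riemannian distance $d_g(x,y)$ is comparable to the Euclidean distance $|x-y|$, so $d_g(x,y)^2$ is a smooth function on $U \times U$ vanishing to second order on the diagonal; a standard induction together with Leibniz' rule then yields
\begin{equation*}
\left|\partial^\alpha \bigl(U_j(x,y)\,E_j(d_g(x,y))\bigr)\right| \leq C_{j,\alpha}\, |x-y|^{\,2-d-|\alpha|+2j}\bigl(1 + \bigl|\log|x-y|\bigr|\bigr),
\end{equation*}
uniformly on compact subsets of $U \times U$. The worst case is $j=0$, and choosing $N$ large enough that $R_N \in \mathcal{C}^k$ we obtain, for every multi-index $\nu$ with $|\nu|\leq k$,
\begin{equation*}
\sup_{|\nu|\leq k}\left|\partial^\nu G(x,y)\right| \leq C_k\bigl[1 + |x-y|^{-(k+d-2+\varepsilon)}\bigr],
\end{equation*}
for any fixed $\varepsilon > 0$ (the $\varepsilon$ absorbing the possible logarithmic factor). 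Since the Euclidean distance from $(x,y)$ to $\operatorname{diag}$ equals $|x-y|/\sqrt{2}$ and multiplication by the smooth cutoff $\varphi_\alpha$ preserves estimates of this type by Leibniz, this is exactly the requirement in Definition \ref{tempfunc}.

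The main obstacle is the appeal to the Hadamard parametrix construction itself: producing the expansion together with the derivative estimates for the Hadamard coefficients in the Riemannian elliptic setting is the only nontrivial ingredient, and must either be developed in detail or cited from the literature. Everything beyond that — the local comparison between the Riemannian and Euclidean distances, the derivative bounds on $d_g^{2j+2-d}$, and the bookkeeping of the logarithmic correction in even dimensions — reduces to routine applications of the chain and Leibniz rules.
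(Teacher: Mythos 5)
Your proposal is correct and follows the same overall strategy as the paper: localize to a chart, invoke a classical near-diagonal asymptotic expansion of $G$ with an explicit singular part plus a controllable remainder, and then convert that expansion into the estimate of Definition \ref{tempfunc} by Leibniz-rule bookkeeping. The only substantive difference is the external input. The paper cites Shimakura's expansion for parametrices of elliptic pseudodifferential operators with polyhomogeneous symbol (Theorem 3.3 of \cite{shimakura}), writing $G(x,x+z)$ as a sum of terms $|z|^{2+q-d}\left\lbrace \mathcal{A}_q\left(x,z/|z|\right)\log|z|+\mathcal{B}_q\left(x,z/|z|\right)\right\rbrace$ with a $\mathcal{C}^{k}$ remainder, and the bulk of its proof is the estimation of $\partial_z^\beta$ acting on $|z|^{2+q-d}$, on $\log|z|$, and on the angular functions $\mathcal{A}_q, \mathcal{B}_q$ (each angular derivative costing a factor $|z|^{-1}$), ending with the exponent $s=d+|\beta|+1$. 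You instead cite the Hadamard parametrix, with coefficients $U_j(x,y)$ multiplying powers $d_g(x,y)^{2j+2-d}$ (plus logarithms in even dimension), and the comparability of $d_g$ with the Euclidean distance; your derivative count is the analogous one and yields a comparable exponent. Both inputs are classical and both do the job; Shimakura's version has the minor advantage of being stated directly for the coordinate expression $\mathcal{E}(x,z)=G(x,x+z)$ with uniform remainder bounds in the exact form needed, whereas the Hadamard route requires you to supply (or locate) the statement that the remainder $R_N$ is $\mathcal{C}^{k}$ with $k\to\infty$ as $N\to\infty$ and that the transport-equation coefficients $U_j$ are smooth up to the diagonal — you correctly flag this as the one nontrivial ingredient. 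One small point worth making explicit: the Hadamard expansion is only valid in a neighborhood of the diagonal, so on the remaining part of a compact set $K\times K$ you must separately invoke smoothness of $G$ off $D_2$ (elliptic regularity, which you mention) to bound the derivatives there; the paper handles the analogous region $1\leq|z|\leq\operatorname{diam}(K)$ explicitly.
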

\begin{proof}
Temperedness is a local property therefore it suffices to prove the Lemma for some compact domain 
	\begin{equation*}
		K\times K\subseteq \mathbb{R}^d\times\mathbb{R}^d
	\end{equation*}
and in the case that $g$ is a Riemannian metric on $\mathbb{R}^d$.
This means that given a positive integer $k$ and a  compact subset $K\times K$ of $\mathbb{R}^d\times\mathbb{R}^d$, we seek to show there exist two positive numbers $C$ and $s$ such that the inequality
	\begin{equation} \label{rabinovitch}
		\left| \partial_x^\alpha \partial_y^\beta G(x,y)\right| \leq C\left[1+d\left( \left( x,y \right) , D_2 \right)  ^{-s}\right]  
	\end{equation}
holds for every $\left( x,y\right) $ in $K\times K\setminus D_2$, and every pair of multi-indices $\alpha$ and $\beta$ such that $|\alpha|+|\beta|\leq k$.

The differential operator $\Delta_g+m^2$ is elliptic with smooth coefficients, and $G$ is an fundamental solution of $\Delta_g+m^2$.
In other words, it is a parametrix of $\Delta_g+m^2$ constructed by means of an elliptic pseudodifferential operator with polyhomogeneous symbol (see \cite{shimakura}, Thm. 2.7).

Let $D=\operatorname{diam}\left( K\right) $ be the diameter of $K$, and set $z=y-x$ and 
	\begin{equation*}
		\mathcal{E}(x,z):=G(x,x+z).
	\end{equation*}
We will treat first the case in which $0<|z|\leq 1$. By \cite{shimakura} Thm. 3.3, there exist  two sequences of functions  
	\begin{equation*}
		\big( \mathcal{A}_q\left( x,\zeta\right) \big)_q \mbox{ and }\big( \mathcal{B}_q\left( x,\zeta\right)\big)_q \mbox{,}
	\end{equation*}
smooth on $\mathbb{R}^{d}$ with respect to $x$ and real analytic on $\mathbb{S}^{d-1}$ with respect to $\zeta$, such that  $\mathcal{E}$ satisfies the following estimate: for every positive integer $N$, multi-indices $\alpha$, $\beta$, and every compact subset $K$ of $\mathbb{R}^{d}$ there exists a positive number $c=c\left( N,\alpha,\beta,K\right) $ such that
	\begin{equation*}
		\left| \partial_x^\alpha \partial_z^\beta\left[\mathscr{E}\left( x,z\right) -\sum_{q=0}^{N}\left| z\right|^{2+q-d} \left\lbrace \mathcal{A}_q\left(x,\frac{z}{|z|} \right)\operatorname{log}|z| +\mathcal{B}_q\left(x,\frac{z}{|z|} \right)\right\rbrace  \right] \right| \leq c|z|^{\lambda}
	\end{equation*}
for every pair  $(x,z)$ in $K\times \mathbb{R}^d$, where $z$ is such that $0<|z|\leq 1$, and where 
	\begin{equation*}
		\lambda=\min \left\lbrace 0, 2+N-|\beta|-d\right\rbrace.
	\end{equation*}
Now, for arbitrarily chosen multi-indices $\alpha$, $\beta$, there always is some positive integer $N$ such that $ 2+N-|\beta|-d >0$ which tells us that  $\lambda=0$ in the estimate above.
Then, for arbitrarily  chosen multi-indices $\alpha$, $\beta$ and every compact subset $K$ of $\mathbb{R}^{d}$ there exists a sufficiently large 
positive integer $N$ and a positive number $c=c(N,\alpha,\beta,K)$  such that 
	\begin{equation}\label{libe}
		\left| \partial_x^\alpha \partial_z^\beta\mathscr{E}\left( x,z\right) \right| \leq c+\left|\partial_x^\alpha \partial_z^\beta \sum_{q=0}^{N}\left| z\right|^{2+q-d} \left\lbrace   \mathcal{A}_q\left(x,\frac{z}{|z|} \right)\operatorname{log}|z| +\mathcal{B}_q\left(x,\frac{z}{|z|}\right)\right\rbrace  \right|
	\end{equation}
for every pair $(x,z)$ in $K\times \mathbb{R}^d$, where $0<|z|\leq 1$.

Observe that the second term on the right hand side of equation \eqref{libe} is bounded by
	\begin{equation}\label{mantis}
		\begin{split}
 			\sum_{q=0}^{N}&\left| \partial_z^\beta\left|  z\right|^{2+q-d}\right|  \left| \partial_x^\alpha  \mathcal{A}_q\left(x,\frac{z}{|z|} \right)\operatorname{log}|z| +\partial_x^\alpha \mathcal{B}_q\left(x,\frac{z}{|z|}\right) \right|+\dots\\  &+\sum_{q=0}^{N} \left|  z\right|^{2+q-d} \left| \partial_z^\beta\left[  \partial_x^\alpha  \mathcal{A}_q\left(x,\frac{z}{|z|} \right)\operatorname{log}|z| \right]\right|+ \sum_{q=0}^{N} \left|  z\right|^{2+q-d} \left| \partial_z^\beta\partial_x^\alpha \mathcal{B}_q\left(x,\frac{z}{|z|}\right) \right|.
		\end{split}
	\end{equation}
For the sake of simplicity, in what remains of the proof the word \emph{const.} will denote some constant which is independent of $z$, and whose value is irrelevant.

By the Leibniz rule we have  
	\begin{equation} \label{cotapot}
		\left| \partial_z^\beta\left|  z\right|^{2+q-d}\right| \leq const. \left|  z\right|^{2+q-d-|\beta|}\leq const.  \frac{1}{\left|  z\right|^{d+|\beta|}}
	\end{equation}
and
	\begin{equation} \label{cotalog}
		\left| \partial_z^\beta\operatorname{log}\left|  z\right|\right| \leq const.\frac{1}{ \left|  z\right|^{|\beta|+1}}.
	\end{equation}
Moreover, taking into account that for every $\varepsilon>0$  
	\begin{equation*}
		\lim\limits_{u\rightarrow 0^+}u^\varepsilon \operatorname{log}u=0\mbox{,}
	\end{equation*}
we have
	\begin{equation*}
		\operatorname{log}|z|\leq \frac{const.}{|z|} . 
	\end{equation*}
	
In addition, as  $\mathcal{A}_q\left(x,\zeta  \right) $ and $\mathcal{B}_q\left(x,\zeta  \right)$ are smooth on $\mathbb{R}^{d}$ with respect to $x$, and real analytic on $\mathbb{S}^{d-1}$ with respect to $\zeta$, we have that their derivatives up to order $\alpha$ with respect to $x$, and up to order  $\beta$ with respect to $z$ are bounded by a factor of the type
	\begin{equation}\label{beinmyvideo}
		\left| \partial_z^\nu\partial_x^\rho \mathcal{A}_q\left(x,\frac{z}{|z|}\right) \right|\mbox{,}\ \left| \partial_z^\nu\partial_x^\rho \mathcal{B}_q\left(x,\frac{z}{|z|}\right) \right|\leq const.\frac{1}{|z|^{|\nu|}} \leq const. \frac{1}{|z|^{|\beta|}} 
	\end{equation}
for every pair $(x,z)$ in $K\times \mathbb{R}^d$, where  $0<|z|\leq 1$ and $\nu\leq\beta$, $\rho\leq\alpha$.

We can now estimate the first sum in \eqref{mantis} by
	\begin{equation*}
		\begin{split}
			\sum_{q=0}^{N}&\left| \partial_z^\beta\left|  z\right|^{2+q-d}\right|  \left| \partial_x^\alpha  \mathcal{A}_q\left(x,\frac{z}{|z|} \right)\operatorname{log}|z| +\partial_x^\alpha \mathcal{B}_q\left(x,\frac{z}{|z|}\right) \right|\\ & \leq  const.\sum_{q=0}^{N} \frac{1}{\left|  z\right|^{d+|\beta|}}\left( \frac{1}{|z|} + 1\right) \leq  const. \frac{1}{\left|  z\right|^{d+|\beta|+1}} \leq const.  \frac{1}{|z|^{d+2|\beta|+1}}.
		\end{split}
	\end{equation*}

The second sum in  \eqref{mantis} can be treated in the same fashion.
First, by Leibniz's rule  we have
	\begin{equation*}
		\begin{split}
			\left| \partial_z^\beta\left[  \partial_x^\alpha  \mathcal{A}_q\left(x,\frac{z}{|z|} \right)\operatorname{log}|z| \right]\right|& \leq \sum_{\gamma \leq \beta} \binom{\beta}{\gamma}\left| \partial_z^\gamma  \partial_x^\alpha  \mathcal{A}_q\left(x,\frac{z}{|z|} \right)\partial_z^{\beta-\gamma} \operatorname{log}|z|\right|\\ &\leq  \sum_{\gamma \leq \beta} const. \frac{1}{|z|^{|\beta|}}\frac{1}{\left|  z\right|^{|\beta|+1}}\leq const.\frac{1}{\left|  z\right|^{2|\beta|+1}}.
		\end{split} 
	\end{equation*} 
Therefore,
	\begin{equation*}
		\begin{split}
			\sum_{q=0}^{N} \left|  z\right|^{2+q-d} \left| \partial_z^\beta\left[  \partial_x^\alpha  \mathcal{A}_q\left(x,\frac{z}{|z|} \right)\operatorname{log}|z| \right]\right|&\leq const. \sum_{q=0}^{N}\frac{1}{\left|  z\right|^{d}} \frac{1}{|z|^{2|\beta|+1}} \\ &\leq const.  \frac{1}{|z|^{d+2|\beta|+1}}.
		\end{split}
	\end{equation*}
Finally, we write the estimate for the last sum in \eqref{mantis} as
	\begin{equation*}
		\begin{split}
			\sum_{q=0}^{N} \left|  z\right|^{2+q-d} \left| \partial_z^\beta\partial_x^\alpha \mathcal{B}_q\left(x,\frac{z}{|z|}\right) \right|&\leq const. \sum_{q=0}^{N}\frac{1}{\left|  z\right|^{d}} \frac{1}{|z|^{|\beta| }} \leq const.  \frac{1}{|z|^{d+|\beta|}}\\ &\leq const.  \frac{1}{|z|^{d+2|\beta|+1}}.
		\end{split}
	\end{equation*}
Therefore, the expression in \eqref{mantis} is bounded by a multiple of  $|z|^{-\left( d+2|\beta|+1\right) }$.
Using this in the inequality in \eqref{libe} we obtain that
	\begin{equation}\label{gregory}
		\left| \partial_x^\alpha \partial_z^\beta\mathscr{E}\left( x,z\right) \right| \leq c+  const.  \frac{1}{|z|^{d+2|\beta|+1}}\leq const. \left( 1+\frac{1}{|z|^{d+2|\beta|+1}}\right) 
	\end{equation}
for every pair $(x,z)$ in $K\times \mathbb{R}^d$, where $0<|z|\leq 1$.

The next step is to show that the previous inequality holds for $1\leq|z|\leq D$. 
Since the set 
	\begin{equation*}
		K\times \left\lbrace 1\leq|z|\leq D \right\rbrace\subseteq \mathbb{R}^d\times \mathbb{R}^d
	\end{equation*} 
is compact, there exists a positive constant $M$ such that  
	\begin{equation*}
		\left| \partial_x^\alpha \partial_z^\beta\mathscr{E}\left( x,z\right) \right| \leq M
	\end{equation*}
for every pair of multi-indices $\alpha$ and $\beta$ such that $|\alpha|+|\beta|\leq k$.
We also have that 
	\begin{equation*}
		\begin{split}
			M &=M\frac{D^{d+|\beta|+1}}{D^{d+|\beta|+1}}\leq M\frac{D^{d+|\beta|+1}}{|z|^{d+|\beta|+1}}\\ & \leq MD^{d+|\beta|+1}\left( 1+\frac{1}{|z|^{d+|\beta|+1}}\right) = const. \left( 1+\frac{1}{|z|^{d+|\beta|+1}}\right).
		\end{split}
	\end{equation*}
Therefore, \eqref{gregory} also holds for every pair  $(x,z)$ in $K\times \mathbb{R}^d$ with $1\leq|z|\leq D$.
Considering that 
	\begin{equation*}
		|z|\geq d\left(\left( x,x+z\right) , D_2 \right)=d\left(\left( x,y\right) , D_2\right)\mbox{,}
	\end{equation*} 
and that
	\begin{equation*}
		\partial_x^\alpha \partial_z^\beta\mathscr{E}\left( x,z\right) = \partial_x^\alpha \partial_y^\beta G(x,y)\mbox{,}
	\end{equation*}
we conclude that
	\begin{equation}\label{gregory2}
		\left| \partial_x^\alpha \partial_y^\beta G(x,y)\right|\leq const. \left[  1+d\left(\left( x,y\right) , D_2 \right)^{-(d+|\beta|+1)}\right] 
	\end{equation}
for every pair $(x,y)$ in  $K\times K\setminus D_2$.
Therefore, $G(x,y)$ is tempered along $D_2$. 
\end{proof}
In Theorem \ref{dinsky} we present a result  that involves the extension of the so called Feynman amplitudes.
This requires the introduction of the following  definitions.
\begin{defi}\label{defidiag}
For every finite subset $I$ of $\mathbb{N}$ and open subset $U$ of $\mathcal{M}$, we define the \emph{configuration space of}  \emph{$|I|$ particles in  $U$ labelled by the set $I$}, as \index{configuration space}
	\begin{equation*}
		U^I=\left\lbrace\left( x_i \right)_{i\in I}: x_i \in U\mbox{, }\forall i\in I    \right\rbrace.
	\end{equation*}

We will distinguish two types of diagonals in $U^I$.
The \emph{big diagonal}\index{big diagonal} is given by
	\begin{equation*}
		D_I=\left\lbrace \left( x_i \right)_{i\in I}:\exists i\mbox{,}j\in I\mbox{, }  i\neq j\mbox{, } x_i=x_j \right\rbrace\mbox{,}
	\end{equation*}
and represents configurations where at least two particles in $U^I$  collide. 
Whenever $J$ is a subset of $I$, we set
	\begin{equation*}
		d_{I,J}=\left\lbrace \left( x_i \right)_{i\in I}:\forall i\mbox{, }j\in J\mbox{, } x_i=x_j \right\rbrace. 
	\end{equation*}
In particular,   the \emph{small diagonal} $d_{I,I}$ \index{small diagonal}represents configurations where  all the  particles in $U^I$ collapse  over the same element. 

The configuration space $\mathcal{M}^{\left\lbrace 1,\dots ,n \right\rbrace }$ will be denoted by $\mathcal{M}^n$, for simplicity.
The corresponding big diagonal $D_{\left\lbrace 1,\dots ,n \right\rbrace }$   will be denoted by $D_n$.
In addition, we will use the following compact notations:
	\begin{equation}\label{othernotations}
		\begin{split}
			d_{\left\lbrace i,j,\dots,k\right\rbrace}&:=d_{\left\lbrace i,j,\dots ,k \right\rbrace ,\left\lbrace i,j,\dots,k \right\rbrace }\mbox{,}\\d_{n,\left\lbrace i,j,\dots,k\right\rbrace}&:=d_{\left\lbrace 1,\dots ,n \right\rbrace ,\left\lbrace i,j,\dots,k \right\rbrace }\mbox{,}\\d_n&:=d_{\left\lbrace 1,\dots ,n \right\rbrace ,\left\lbrace 1,\dots,n \right\rbrace }.
		\end{split}
	\end{equation}
\end{defi}
\begin{thm}[See \cite{vietdang}, Thm. 4.1]\label{dinsky}
Let $\left( \mathcal{M}, g\right)$ be a smooth Riemannian manifold, $\Delta_g$ the corresponding Laplace-Beltrami operator and $G$ the Green function of $\Delta_g +m^2$.
Then, all \emph{Feynman amplitudes}\index{Feynman amplitude} of the form
	\begin{equation}\label{vainilla}
		\prod_{1\leq i<j\leq n } G^{n_{ij}}\left( x_i,x_j\right) \in \mathscr{E}\left( \mathcal{M}^n\setminus D_n \right)\mbox{,}\quad n_{ij}\in \mathbb{N}\mbox{,} 
	\end{equation}
are tempered along $D_n$, and therefore extendible to $\mathcal{M}^n$. 
\end{thm}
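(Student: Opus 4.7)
The strategy is to exhibit each factor $G(x_i,x_j)$ as the pullback of the two-point Green function along the smooth projection
\[
\pi_{ij}:\mathcal{M}^n\rightarrow\mathcal{M}^2,\qquad (x_1,\dots,x_n)\mapsto (x_i,x_j),
\]
and to show that pullback along $\pi_{ij}$ maps functions tempered along $D_2$ to functions tempered along the partial diagonal $d_{n,\{i,j\}}=\pi_{ij}^{-1}(D_2)$. Combined with the algebra structure of $\mathcal{T}(D_n,\mathcal{M}^n)$ and the inclusion $d_{n,\{i,j\}}\subseteq D_n$, this will yield temperedness of the full Feynman amplitude along $D_n$, after which Corollary \ref{maggotbrain2} provides the continuous extension to $\mathcal{M}^n$.

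First I would reduce to the Euclidean setting by using the partition of unity already set up in the definition of $\mathcal{T}(D_n,\mathcal{M}^n)$ (Definition \ref{tempfunc}). Working in a product chart on $\mathcal{M}^n$ whose $i$-th and $j$-th factors are a chart of $\mathcal{M}$ into $\mathbb{R}^d$, the projection $\pi_{ij}$ becomes the linear projection $\mathbb{R}^{nd}\rightarrow\mathbb{R}^{2d}$ onto the $(i,j)$-coordinates. The chain rule then gives, for every multi-index $\nu$ with $|\nu|\le k$,
\[
\bigl|\partial^{\nu}\bigl(\pi_{ij}^{*}G\bigr)(x_1,\dots,x_n)\bigr|\leq \sup_{|\mu|\leq k}\bigl|\partial^{\mu}G(x_i,x_j)\bigr|,
\]
while the explicit form of the product metric on $\mathbb{R}^{nd}$ yields
\[
d\bigl((x_1,\dots,x_n),d_{n,\{i,j\}}\bigr)=\tfrac{1}{\sqrt{2}}\,|x_i-x_j|=d\bigl((x_i,x_j),D_2\bigr).
\]
Plugging the estimate of Lemma \ref{silvia} (via inequality \eqref{gregory2}) into these bounds shows that $\pi_{ij}^{*}G$ is tempered along $d_{n,\{i,j\}}$ in the sense of Definition \ref{tempfunc}.

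Next, since $d_{n,\{i,j\}}\subseteq D_n$, every $x$ in $\mathcal{M}^n$ satisfies $d(x,D_n)\leq d(x,d_{n,\{i,j\}})$, hence $d(x,d_{n,\{i,j\}})^{-s}\leq d(x,D_n)^{-s}$ and the temperedness estimate \eqref{funkadelic} along $d_{n,\{i,j\}}$ passes immediately to an estimate along the larger closed set $D_n$. Because $\mathcal{T}(D_n,\mathcal{M}^n)$ is a $\mathbb{C}$-algebra (as observed right after Definition \ref{tempfunc}, this is a direct consequence of Leibniz's rule applied to \eqref{funkadelic}), the finite product
\[
\prod_{1\leq i<j\leq n} G^{n_{ij}}(x_i,x_j)=\prod_{1\leq i<j\leq n}\bigl(\pi_{ij}^{*}G\bigr)^{n_{ij}}
\]
belongs to $\mathcal{T}(D_n,\mathcal{M}^n)$.

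Finally, Corollary \ref{maggotbrain2} (specialization of Theorem \ref{maggotbrain} to $t=1$) asserts that every function in $\mathcal{T}(D_n,\mathcal{M}^n)$, regarded as a distribution on $\mathcal{M}^n\setminus D_n$ through integration against the Riemannian volume form, admits a continuous extension to a distribution in $\mathscr{D}(\mathcal{M}^n)'$. Applying this to the Feynman amplitude yields the required extension to $\mathcal{M}^n$. The only step with genuine content is the first one: verifying that Lemma \ref{silvia}, which is stated on $\mathcal{M}^2$, transfers to a tempered bound on $\mathcal{M}^n$ along the correct partial diagonal. In a product chart this reduces, as sketched above, to two clean observations (chain rule for derivatives and the explicit comparison of distances to the diagonal), so no serious obstacle is expected.
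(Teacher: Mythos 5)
Your proposal is correct and follows essentially the same route as the paper's proof: temperedness of each factor $G(x_i,x_j)$ along the partial diagonal $d_{n,\{i,j\}}$ via Lemma \ref{silvia}, the inclusion $d_{n,\{i,j\}}\subseteq D_n$ to pass to an estimate along $D_n$, the algebra property of $\mathcal{T}(D_n,\mathcal{M}^n)$, and finally Corollary \ref{maggotbrain2}. Your chart-level verification of the transfer from $\mathcal{M}^2$ to $\mathcal{M}^n$ (chain rule plus the identity of distances to the respective diagonals) is a welcome elaboration of a step the paper compresses into a single displayed equality.
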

\begin{proof}

As $d_{n,\left\lbrace i,j  \right\rbrace }$ is contained in $D_n$,  we have
	\begin{equation*}
		d\left( \left( x_i,x_j\right) ,d_{\left\lbrace  i,j  \right\rbrace }\right)^{-s}=d\left( \left(x_1,\dots ,x_n \right) ,d_{n,\left\lbrace  i,j  \right\rbrace }\right)^{-s}\leq d\left( \left(x_1,\dots ,x_n \right),D_n \right)^{-s}
	\end{equation*}
for every $s\geq 0$.
The previous inequality, together with the fact that $G\left( x_i,x_j\right)$ is tempered along $d_{\left\lbrace i,j  \right\rbrace }$ by Theorem \ref{silvia} imply that  $G\left( x_i,x_j\right)$ belongs to $\mathcal{T}\left( D_n,\mathcal{M}^n\right) $.
Since $\mathcal{T}\left( D_n,\mathcal{M}^n\right) $ is an algebra, we have
	\begin{equation*}
		\prod_{1\leq i<j\leq n } G^{n_{ij}}\left( x_i,x_j\right) \in \mathcal{T}\left( D_n,\mathcal{M}^n\right)\mbox{,}
	\end{equation*}
and \eqref{vainilla} is therefore extendible to $\mathcal{M}^n$, by Corollary \ref{maggotbrain2}.
\end{proof}

\chapter{Renormalization maps}\label{tengofiebree}

In \S \ref{expliFA}, specifically in \eqref{vainilla}, Feynmann amplitudes were introduced and it has been shown they are extendible (Theorem \ref{dinsky}).
However, in quantum field theory, renormalization is not only involved with the extension of Feynman amplitudes in configuration space.
Renormalization is intended to be applicable over the algebra which they generate along with $\mathcal{C}^\infty$ functions, namely
	\begin{equation}\label{moustache}
		\mathcal{O}(D_I,\Omega):=\left\langle f\prod_{i<j\in I}\left.G^{n_{ij}}\left(x_i,x_j \right)\right|_{\Omega\setminus D_I}: n_{ij}\in\mathbb{N} \ \forall i<j\in I\mbox{,} \ f\in\mathscr{E}\left( \Omega\right)  \right\rangle _{\mathbb{C}}\mbox{,}
	\end{equation}
where $I$ is a finite subset of $\mathbb{N}$, $\Omega$ is an open subset of the product $\mathcal{M}^I$ of $\left| I\right|$ copies of the Riemannian manifold $\mathcal{M}$, and $D_I$ is given in Definition \ref{defidiag}.
Recall that for any open subset $\Omega$ of $\mathcal{M}^I$, we denote by $\mathcal{T}\left(D_I,\Omega \right) $ the algebra of tempered functions along $D_I$.
As the generators of $\mathcal{O}(D_I,\Omega)$ are tempered along $D_I$ (see Theorem   \ref{dinsky}), we have that  $\mathcal{O}(D_I,\Omega)$ is contained in $\mathcal{T}\left(D_I,\Omega \right) $. 

We define a collection 
	\begin{equation}\label{renmaps}
		\mathscr{R}=\left\lbrace \mathcal{R}_{\Omega}^I : \mathcal{O}(D_I,\Omega)\rightarrow \mathscr{D}(\Omega)':I\subseteq \mathbb{N}\mbox{, } |I|<\infty\mbox{,} \ \Omega\subseteq \mathcal{M}^I\ open \right\rbrace 
	\end{equation}
of objects $\mathcal{R}_{\Omega}^I$, called \emph{renormalization maps}\index{renormalization map}. 
Each of these maps is defined to be an extension operator, 
	\begin{equation*}
		\mathcal{R}_{\Omega}^I: \mathcal{O}(D_I,\Omega)\rightarrow \mathscr{D}\left( \Omega\right)'\mbox{,}
	\end{equation*}
and will be used in the extension procedure we intend to define in the present chapter.
This extension procedure should satisfy some consistency conditions in order to be compatible with the fundamental requirement of locality.

Before we begin, we introduce the following notation, for simplicity: if $\Omega=\mathcal{M}^I$,
we will write $\mathcal{R}_{I}$ instead of $\mathcal{R}_{\mathcal{M}^I}^I$.
If $I=\iota_n$, where
	\begin{equation}\label{defiota}
		\iota_n:=\left\lbrace 1,\dots ,n\right\rbrace
	\end{equation}
and $\Omega=\mathcal{M}^n$ we will write $\mathcal{R}_{n}$ instead of $\mathcal{R}_{\mathcal{M}^{\iota_n}}^{\iota_n}$.

In the first section, we will define the axioms the renormalization maps should satisfy in the extension procedure, and in the second section we will show that such a family of renormalization maps does exist.
\section{Axioms for renormalization maps consistent with locality} \label{axioms}

The collection of renormalization maps \eqref{renmaps} will be required to satisfy the  axioms listed below (\textit{cf.} \cite{vietdang}, Def. 4.1):
	\begin{itemize}
	\item[1.] For every finite subset $I$ of $\mathbb{N}$ and every open subset $\Omega$ of $\mathcal{M}^I$, 
		\begin{equation*}
			\mathcal{R}_{\Omega}^I: \mathcal{O}\left( D_I,\Omega\right) \rightarrow \mathscr{D}\left( \Omega\right)' 
		\end{equation*}
	is a linear extension operator.
	\item[2.] For every inclusion of open subsets 
		\begin{equation}\label{inclu}
			\Omega_1\subseteq \Omega_2\subseteq \mathcal{M}^I\mbox{,}
		\end{equation}
	 we require the following diagram
		\begin{equation*}
			\xymatrix{
			\mathcal{O}\left( D_I,\Omega_2\right) \ar@{->}[d]_\rho   \ar@{->}[r]^{\mathcal{R}_{\Omega_2}^{I}}& \mathscr{D}\left( \Omega_2\right)' \ar@{->}[d]\\ \mathcal{O}\left( D_I,\Omega_1\right) \ar@{->}[r]^{\mathcal{R}_{\Omega_1}^{I}} & \mathscr{D}\left( \Omega_1\right)' },
		\end{equation*}
	to be commutative,	where the vertical arrows correspond to the restriction to $\Omega_1$.
	In other words, we require that 
		\begin{equation*}
			\braket{\mathcal{R}_{\Omega_2}^{I}(f),\varphi}=\braket{ \mathcal{R}_{\Omega_1}^{I}(f),\varphi}\mbox{,}
		\end{equation*}  
	for every $f$ in $\mathcal{O}(D_I,\Omega_2)$ and every $\varphi$ in $\mathscr{D}\left( \Omega_1\right) $.
	Note that all the generators in the algebra $\mathcal{O}(D_I,\Omega)$ come from restricting some Feynmann amplitude originally defined on $\mathcal{M}^I\setminus D_I$. 
	Thus, the map $\rho$ is surjective and this axiom implies that, for every inclusion of open sets \eqref{inclu}, the operator $\mathcal{R}^I_{\Omega_1}$ is completely determined by the operator $\mathcal{R}^I_{\Omega_2}$, whenever the latter is defined.
	\item [3.] For every $G$ in $\mathcal{O}(D_I,\Omega)$ and every function $f$ in $\mathscr{E}\left( \Omega\right) $
		\begin{equation*}
			\mathcal{R}^I_{\Omega}\left( f G\right)=f\mathcal{R}^I_{\Omega}\left(  G\right). 
		\end{equation*} 
	\item [4.] \emph{Factorization axiom}\index{factorization axiom}: for every nontrivial partition\index{nontrivial partition}\footnote{\label{defnontrivpart}$I_1\sqcup I_2= I$ stands for $I_1\cup I_2= I$ and $I_1\cap I_2= \emptyset$; "nontrivial" means that $I_1\neq\emptyset\neq I_2$.} $I_1\sqcup I_2= I$
		\begin{equation}\label{todor}
			\left.\mathcal{R}_{I}\left( G_I\right)\right\vert_{C_{\left\lbrace I_1I_2\right\rbrace }}  =\left(  \left.\mathcal{R}_{I_1}\left(  G_{I_1}\right)  \otimes \mathcal{R}_{I_2}\left(  G_{I_2} \right)   \right)  G_{\left\lbrace I_1,I_2\right\rbrace }\right\vert_{C_{\left\lbrace I_1I_2\right\rbrace }}\mbox{,}
		\end{equation} 
	where, if $I=\left\lbrace i_1,\dots,i_n \right\rbrace $,
		\begin{equation}
			\label{loscij}
			C_{\left\lbrace I_1I_2\right\rbrace }=\left\lbrace \left( x_{i_1},\dots ,x_{i_n}\right): \forall (i,j)\in I_1\times I_2, \ x_i\neq x_j \right\rbrace \subseteq \mathcal{M}^I\mbox{,}
		\end{equation}
		\begin{equation}
			\label{losGI1I2}
			G_{\left\lbrace I_1,I_2\right\rbrace }:= \prod_{\substack{ (i,j)\in I_1\times I_2\\i<j}} G^{n_{ij}}\left( x_i,x_j \right)
		\end{equation}   
	and, for every  finite subset $J$ of $\mathbb{N}$, we define
		\begin{equation}
			\label{losGJ}
			G_{J}:=\prod_{i<j\in J} G^{n_{ij}}\left( x_i,x_j\right).
		\end{equation}
	Observe that 	
		\begin{equation*}
			G_J\in    \mathcal{O}(D_J,\mathcal{M}^J)\subseteq \mathcal{T}(D_J,\mathcal{M}^J)\mbox{,}
		\end{equation*}	
	by  Theorem \ref{dinsky}.
	We also note that the right hand side of \eqref{todor} contains a well-defined product of distributions  due to the restriction to the domain $C_{\left\lbrace I_1I_2\right\rbrace }$, where  $G_{\left\lbrace I_1,I_2 \right\rbrace }$ is a smooth  function.
	\end{itemize}
	
\begin{rem}
For the definition of the factorization axiom we chose not to follow \cite{vietdang}. 
Instead, we preferred the clearer version of this axiom appearing in \cite{todorov}.
\end{rem}
\section{Existence of renormalization maps: recursive procedure}

This section will be dedicated to prove recursively the existence of renormalization maps on general Riemannian manifolds.
Recall that for any open subset $\Omega$ of $\mathcal{M}^I$, where $I$ is a finite subset of $\mathbb{N}$, $\mathcal{O}\left( D_I,\Omega\right) $ denotes the algebra \eqref{moustache}.
We are to prove the following important result that appears in \cite{vietdang}.
We present here a correct and complete proof of it.
 
\begin{thm}[\textit{cf.} \cite{vietdang}, Thm. 4.2]\label{exrenmap}
There exists a collection of renormalization maps 
	\begin{equation}\label{genoa}
		\mathscr{R}=\left\lbrace \mathcal{R}_{\Omega}^{I}: \mathcal{O}\left( D_I,\Omega\right) \rightarrow \mathscr{D}\left( \Omega\right)'  :I\subseteq \mathbb{N}\mbox{, } |I|<\infty\mbox{,} \ \Omega\subseteq \mathcal{M}^I\ open \right\rbrace\mbox{,} 
	\end{equation}
that satisfies the axioms described in \S \ref{axioms}.
\end{thm}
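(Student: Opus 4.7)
The plan is to proceed by induction on $n=|I|$, constructing the maps $\mathcal{R}^I_\Omega$ for all $\Omega\subseteq\mathcal{M}^I$ simultaneously. The base case $n=1$ is trivial: $D_I=\emptyset$, so $\mathcal{O}(D_I,\Omega)\subseteq\mathscr{E}(\Omega)$ injects canonically into $\mathscr{D}(\Omega)'$, and this inclusion vacuously satisfies the axioms. For the inductive step, by Axiom 2 it suffices to construct $\mathcal{R}^I_I$ on the whole configuration space $\mathcal{M}^I$ and then define $\mathcal{R}^I_\Omega$ by restriction; by Axiom 3, it suffices to specify $\mathcal{R}^I_I$ on the pure Feynman amplitudes $G_I=\prod_{i<j}G^{n_{ij}}(x_i,x_j)$ and extend $\mathscr{E}(\mathcal{M}^I)$-linearly. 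So the real task is: given the $\mathcal{R}^J_\bullet$ for all $|J|<n$, construct $\mathcal{R}^I_I(G_I)\in\mathscr{D}(\mathcal{M}^I)'$.

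First I would define $\mathcal{R}^I_I(G_I)$ on the open set $\mathcal{M}^I\setminus d_I$, where $d_I$ is the small diagonal (see notation \eqref{othernotations}). The key observation is that
\begin{equation*}
\mathcal{M}^I\setminus d_I=\bigcup_{I_1\sqcup I_2=I}C_{\{I_1,I_2\}},
\end{equation*}
the union being over all nontrivial partitions, since any configuration not entirely collapsed has at least one coordinate separated from the rest. On each $C_{\{I_1,I_2\}}$, the factorization axiom \eqref{todor} forces
\begin{equation*}
\mathcal{R}^I_I(G_I)\big|_{C_{\{I_1,I_2\}}}=\bigl(\mathcal{R}^{I_1}_{I_1}(G_{I_1})\otimes\mathcal{R}^{I_2}_{I_2}(G_{I_2})\bigr)\,G_{\{I_1,I_2\}}\big|_{C_{\{I_1,I_2\}}},
\end{equation*}
which is well defined by the inductive hypothesis, Proposition \ref{viir} (multiplication by $G_{\{I_1,I_2\}}\in\mathscr{E}(C_{\{I_1,I_2\}})$), and the fact that tensor products of distributions exist.

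The crucial compatibility check is that for two partitions $I=I_1\sqcup I_2=J_1\sqcup J_2$, the two candidate distributions agree on the overlap $C_{\{I_1,I_2\}}\cap C_{\{J_1,J_2\}}$. On this overlap there is a refined partition $I=K_1\sqcup\cdots\sqcup K_r$ (the common refinement) into $r\geq 2$ classes, each of size $<n$, and all cross-cluster separations are simultaneously nonzero. Applying the factorization axiom iteratively (once for $\{I_1,I_2\}$, once for $\{J_1,J_2\}$) and using the inductive factorization property of the $\mathcal{R}^{I_k}$ with $|I_k|<n$, both expressions reduce to the same product $\bigotimes_{k}\mathcal{R}^{K_k}_{K_k}(G_{K_k})\cdot\prod_{k<\ell}G_{\{K_k,K_\ell\}}$. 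Compatibility with restrictions (Axiom 2) at lower levels is used here, and this is the main combinatorial obstacle; I would carry it out by choosing for each configuration in the overlap the finest nontrivial decomposition into connected components under the equivalence $x_i\sim x_j\iff x_i=x_j$, so that both partitions are coarsenings of it. Theorem \ref{gluedistrib} then assembles a well-defined distribution $T\in\mathscr{D}(\mathcal{M}^I\setminus d_I)'$.

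Next I would extend $T$ across the small diagonal $d_I$, which is the only missing piece since $D_I\setminus d_I\subseteq\mathcal{M}^I\setminus d_I$ is already handled. By Theorem \ref{dinsky}, $G_I$ belongs to $\mathcal{T}(D_I,\mathcal{M}^I)$, and in particular along $d_I$, and the extension $T$ built above inherits moderate growth along $d_I$ from the bounds on the factors (this is where temperedness, Lemma \ref{silvia}, and Proposition \ref{G} intervene). Invoking Theorem \ref{elteo} (or Theorem \ref{maggotbrain} directly on the product structure) gives a continuous extension $\mathcal{R}^I_I(G_I)\in\mathscr{D}(\mathcal{M}^I)'$; here one must fix, once and for all, a choice of renormalization scheme (in the sense of Theorem \ref{janis}) on $d_I$, which makes the extension canonical for the purposes of the construction. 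Finally I would verify the four axioms for this $\mathcal{R}^I_I$: linearity and the extension property follow from construction; Axiom 2 follows by defining $\mathcal{R}^I_\Omega$ as the restriction of $\mathcal{R}^I_I$ and using the uniqueness of distributional restrictions; Axiom 3 holds because all intermediate constructions (tensor product, restriction, extension via Theorem \ref{maggotbrain}) are $\mathscr{E}$-linear; and Axiom 4 holds by construction on $\mathcal{M}^I\setminus d_I\supseteq C_{\{I_1,I_2\}}$, since that is precisely where the factorization formula was imposed. The hardest part is the multi-partition coherence check, as it requires a careful combinatorial argument on overlapping partitions and careful use of the inductive factorization property at strictly smaller cardinalities.
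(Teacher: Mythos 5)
Your overall strategy coincides with the paper's: induction on $|I|$, reduction via Axioms 2 and 3 to the single amplitude $G_{\iota_n}$ on $\mathcal{M}^n$, the cover of $\mathcal{M}^n\setminus d_n$ by the sets $C_{\{I_1I_2\}}$ on which the factorization axiom forces the formula, the consistency check on overlaps via the common refinement $K_{a,b}=I_a\cap J_b$ (this is exactly the paper's Lemma \ref{consistency}), and gluing by Theorem \ref{gluedistrib}. Up to that point your argument is sound.

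The genuine gap is in the sentence asserting that the glued distribution $T$ ``inherits moderate growth along $d_I$ from the bounds on the factors.'' This does not follow from Lemma \ref{silvia} or Proposition \ref{G}, which only give temperedness of the Green functions; the obstacle is geometric. To estimate $\langle T,\varphi\rangle$ for $\varphi$ supported near $d_n$ you must decompose $\varphi=\sum\chi_{I_1I_2}\varphi$ with a partition of unity subordinated to the cover $\{C_{\{I_1I_2\}}\}$ of $\mathcal{M}^n\setminus d_n$, and then the moderate-growth estimate for each piece involves the seminorms of $\chi_{I_1I_2}\varphi$. An arbitrary smooth partition of unity on this cover has derivatives that may blow up uncontrollably as one approaches $d_n$, destroying the estimate. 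The paper's resolution is Lemma \ref{parlemma}: a partition of unity $\{\chi_{I_1I_2}\}$ each of whose members is \emph{tempered along} $d_n$, built from a conical cover of the normal bundle of $d_n$ via a tubular neighborhood. With these cutoffs in hand, the paper does not even extend the glued $T$ directly; it extends each piece
\begin{equation*}
t_{I_1I_2}=\left(\mathcal{R}_{I_1}\left(G_{I_1}\right)\otimes\mathcal{R}_{I_2}\left(G_{I_2}\right)\right)\left(\chi_{I_1I_2}G_{\{I_1,I_2\}}\right)
\end{equation*}
separately by Theorem \ref{maggotbrain} (tempered function times distribution) and sets $\mathcal{R}_{\iota_n}(G_{\iota_n})=\sum\bar{t}_{I_1I_2}$, using the consistency lemma and $\sum\chi_{I_1I_2}=1$ only at the end to verify factorization. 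Your proof would be complete if you supplied the construction of such a tempered partition of unity; without it the extension step is unjustified, and this construction is the technical heart of the argument.
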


In order to prove Theorem \ref{exrenmap} we first observe the following facts.
\begin{rem}\label{eldato}
By the second axiom and because of the fact that every element in  $\mathcal{O}\left( D_I,\Omega\right) $ is the restriction of some element in $\mathcal{O}\left( D_I,\mathcal{M}^I\right) $, we have that $\mathcal{R}_{\Omega}^{I}$ is completely determined by $\mathcal{R}_{I}$. 
Therefore, to establish the existence of the collection of renormalization maps \eqref{genoa}, it suffices to prove the existence of the linear maps $\mathcal{R}_{I}$, where $I$ runs over all finite subsets of $\mathbb{N}$.

Moreover, by the first and third axioms it suffices to establish the existence of the operator $\mathcal{R}_{I}\left(G_{I}\right)$ for a generic Feynman amplitude $G_I$ $($see \eqref{losGJ}$)$, in a way that it satisfies the last axiom.

In addition, for an arbitrary index set $I$ of positive integers consisting of $n$ elements there is a unique monotonic isomorphism
	\begin{equation*}
		\iota_n:=\left\lbrace 1,\dots ,n\right\rbrace\simeq I\mbox{;}
	\end{equation*}
and under this isomorphism we identify 
	\begin{equation*}
		\mathscr{D}\left( \mathcal{M}^I\right)' \simeq \mathscr{D}\left( \mathcal{M}^{n}\right)'  \qquad \mbox{and}\qquad \mathcal{O}\left( D_I,\mathcal{M}^I\right) \simeq \mathcal{O}\left( D_{n},\mathcal{M}^{n}\right).
	\end{equation*}
Using these identifications we lift the map $\mathcal{R}_{n}$ to a linear map
	\begin{equation*}
		\mathcal{R}_{I}:\mathcal{O}\left( D_I,\mathcal{M}^I\right) \rightarrow \mathscr{D}\left( \mathcal{M}^I\right)'.
	\end{equation*}

Therefore, to prove Theorem \ref{exrenmap} it suffices to prove the existence of  $\mathcal{R}_{n}\left(G_{\iota_n}\right)$  satisfying the last axiom (factorization property), for every positive integer $n$. 
\end{rem}
The idea of proof of Theorem \ref{exrenmap} is as follows.
By Remark \ref{eldato}, we need only define the extension $\mathcal{R}_{n}\left(G_{\iota_n}\right)$ for every $n$ in $\mathbb{N}$.
We shall assume recursively that the problem of extension is already solved for every proper subset  $ J$ of $\iota_n$. 
Namely, we suppose that for every such $J$ we are given a distribution $\mathcal{R}_{J}\left(  G_{J}\right)$ in $\mathscr{D}\left( \mathcal{M}^J\right)'$ with the property that for every nontrivial partition $J_1{\sqcup}J_2 = J$ equation \eqref{todor} holds with $I$, $I_1$ and $I_2$ replaced by $J$, $J_1$ and $J_2$, respectively.
It is convenient to set
	\begin{equation}\label{convention}
		\mathcal{R}_{I}\left(  G_{I}\right)=1\mbox{,}\qquad \mbox{ if }|I|\leq 1.
	\end{equation}
Thus, the starting point of the renormalization recursion will be the two point case, $I^2= \left\lbrace 1,2\right\rbrace $. 
The existence of $\mathcal{R}_{2}\left( G_{2}\right) $ is guaranteed by Theorem \ref{dinsky}.
The factorization axiom is satisfied trivially in this case by the extra assumption \eqref{convention}.

However, before giving the proof of the inductive step of Theorem \ref{exrenmap} (which is done in \S \ref{lademospostaa}), we shall  first ascertain some technical results presented in the following subsection.

\subsection{Technical results for the proof of existence of renormalization maps }

Before proving the existence of renormalization maps we will need some results described in the following lemmas.

The first lemma  states that $\mathcal{M}^n\setminus d_n$ (see Definition \ref{defidiag}) can be partitioned as a union of the open sets  $C_{\left\lbrace I_1I_2\right\rbrace }$, on which the renormalization map  $\mathcal{R}_{n}$ can be factorized (see \eqref{todor}).
\begin{lem}[See \cite{todorov}, Lemma 2.2]\label{covlemma}
Let $\mathcal{M}$ be a smooth d-dimensional manifold and, for every nontrivial partition $I_1\sqcup I_2= \iota_n $ (see Definition \ref{defiota} and footnote \ref{defnontrivpart}), let $C_{\left\lbrace I_1I_2\right\rbrace }$ be given by  
	\begin{equation}\label{kenshin}
		C_{\left\lbrace I_1I_2\right\rbrace }=\left\lbrace \left( x_1,\dots ,x_n\right): \forall (i,j)\in I_1\times I_2\mbox{,} \ x_i\neq x_j \right\rbrace \subseteq \mathcal{M}^n. 
	\end{equation}
Then,
	\begin{equation}\label{union}
		\bigcup_{\substack{I_1\mbox{,}I_2\neq \emptyset \\
		I_1\sqcup I_2 =\iota_n  }} C_{\left\lbrace I_1I_2\right\rbrace }=\mathcal{M}^n\setminus d_n.
	\end{equation}
\end{lem}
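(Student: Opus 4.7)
The plan is to prove both set-theoretic inclusions in \eqref{union} directly and elementarily; this is really a combinatorial statement about partitions, so there should be no analytic obstacle.

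For the inclusion $\bigcup_{I_1\sqcup I_2 = \iota_n} C_{\{I_1I_2\}}\subseteq \mathcal{M}^n\setminus d_n$, I would take any point $(x_1,\dots,x_n)$ lying in some $C_{\{I_1I_2\}}$ for a nontrivial partition $I_1\sqcup I_2=\iota_n$. Since both $I_1$ and $I_2$ are nonempty, one can pick $i\in I_1$ and $j\in I_2$, and by the very definition \eqref{kenshin} we get $x_i\neq x_j$. In particular $(x_1,\dots,x_n)$ cannot belong to $d_n$, which forces all coordinates to coincide.

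For the reverse inclusion $\mathcal{M}^n\setminus d_n\subseteq \bigcup_{I_1\sqcup I_2 =\iota_n} C_{\{I_1I_2\}}$, the key idea is to use the equivalence relation on $\iota_n$ defined by $i\sim j \iff x_i=x_j$. Given any $(x_1,\dots,x_n)\in\mathcal{M}^n\setminus d_n$, this relation has at least two distinct equivalence classes, because otherwise all coordinates would be equal and the point would be in $d_n$. I would then pick any one equivalence class $C$, set $I_1:=C$ and $I_2:=\iota_n\setminus C$, and observe that this yields a nontrivial partition (both parts are nonempty since there are at least two classes). By construction, for any $i\in I_1$ and $j\in I_2$ the indices lie in different equivalence classes, hence $x_i\neq x_j$, so $(x_1,\dots,x_n)\in C_{\{I_1I_2\}}$.

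I do not anticipate any real obstacle: the argument is a one-paragraph combinatorial observation, and the only care needed is to verify that the partition chosen via the equivalence classes is indeed nontrivial, which is precisely what the condition $(x_1,\dots,x_n)\notin d_n$ guarantees.
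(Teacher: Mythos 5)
Your proof is correct and follows essentially the same route as the paper: the forward inclusion is immediate from picking one index in each part, and the reverse inclusion takes $I_1$ to be the equivalence class (under $x_i=x_j$) of an index witnessing that the point is off the small diagonal, exactly as the paper does. No gaps.
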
   
\begin{proof}
Let $\left(x_1,\dots,x_n \right)$ be an element of $\mathcal{M}^n\setminus d_n $.
Then, there are at least two different indices $j_1\neq j_2$ in $\iota_n$ with $x_{j_1}\neq x_{j_2}$.
We define $I_1$ to be the set of indices $j$ in $\iota_n$ such that $x_j=x_{j_1}$ and $I_2=\iota_n\setminus I_1$.
Then, the partition $\iota_n=I_1{\sqcup} I_2$ is proper and $\left(x_1,\dots,x_n \right)$ belongs to $C_{\left\lbrace I_1I_2\right\rbrace }$.
The other inclusion is trivial.
\end{proof}
\begin{lem}[See \cite{todorov}, Lemma 2.3]\label{consistency} 
Assume the recursion hypothesis, namely that for every proper subset  $ J$ of $\iota_n$ (see Definition \ref{defiota}) there exists a distribution $\mathcal{R}_{J}\left(  G_{J}\right)$ in $\mathscr{D}\left( \mathcal{M}^J\right)'$ with the property that for every nontrivial partition $J_1{\sqcup}J_2 = J$ (see footnote \ref{defnontrivpart}) equation \eqref{todor} holds with $I$, $I_1$ and $I_2$ replaced by $J$, $J_1$ and $J_2$, respectively.
Then, for every pair of  nontrivial partitions $I_1\sqcup I_2=\iota_n$ and $J_1\sqcup J_2=\iota_n$, we have
	\begin{equation}\label{consistency2}
		\begin{split}
			\left(\mathcal{R}_{I_1}\left(  G_{I_1}\right)  \otimes \mathcal{R}_{I_2}\left(  G_{I_2} \right)\right)     & \left.G_{\left\lbrace I_1,I_2\right\rbrace }\right\vert_{C_{\left\lbrace I_1I_2\right\rbrace }\cap C_{\left\lbrace J_1J_2\right\rbrace }}= \dots \\
			&\dots \left( \left.\mathcal{R}_{J_1}\left(  G_{J_1}\right)  \otimes \mathcal{R}_{J_2}\left(  G_{J_2} \right)   \right)   G_{\left\lbrace J_1,J_2\right\rbrace }\right\vert_{C_{\left\lbrace I_1I_2\right\rbrace }\cap C_{\left\lbrace J_1J_2\right\rbrace }}.
		\end{split}
	\end{equation}
\end{lem}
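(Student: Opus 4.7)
The plan is to introduce the common refinement of the two partitions and use the recursion hypothesis to reduce both sides of \eqref{consistency2} to a single symmetric expression. Set $A_{ab} = I_a \cap J_b$ for $a,b \in \{1,2\}$, so that
\begin{equation*}
I_1 = A_{11} \sqcup A_{12}, \quad I_2 = A_{21} \sqcup A_{22}, \quad J_1 = A_{11} \sqcup A_{21}, \quad J_2 = A_{12} \sqcup A_{22}.
\end{equation*}
If the two partitions coincide, the identity is trivial, so we may assume they differ; in that case at least two opposite $A_{ab}$ are nonempty. First I would verify that on the intersection $C_{\{I_1I_2\}} \cap C_{\{J_1J_2\}}$, the $I_1$-coordinates lie in the open subset of $\mathcal{M}^{I_1}$ where $x_i \neq x_j$ for every $i \in A_{11}$, $j \in A_{12}$: this is immediate since $A_{11} \subseteq J_1$ and $A_{12} \subseteq J_2$, so the defining condition of $C_{\{J_1J_2\}}$ forces the separation. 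The analogous statements hold for the coordinates indexed by $I_2$, $J_1$ and $J_2$.

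Next, whenever $A_{11}$ and $A_{12}$ are both nonempty, the recursion hypothesis applied to $I_1 \subsetneq \iota_n$ with the nontrivial partition $A_{11} \sqcup A_{12}$ yields
\begin{equation*}
\mathcal{R}_{I_1}(G_{I_1})\,\big|_{\cdot} = \bigl(\mathcal{R}_{A_{11}}(G_{A_{11}}) \otimes \mathcal{R}_{A_{12}}(G_{A_{12}})\bigr)\, G_{\{A_{11},A_{12}\}}\,\big|_{\cdot}
\end{equation*}
on the appropriate open subset of $\mathcal{M}^{I_1}$, and analogously for $\mathcal{R}_{I_2}(G_{I_2})$, $\mathcal{R}_{J_1}(G_{J_1})$, $\mathcal{R}_{J_2}(G_{J_2})$. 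When one member of a pair is empty, no splitting is performed and the corresponding renormalization map is simply the one attached to the unique nonempty block. Substituting these factorizations into both sides of \eqref{consistency2} reduces them to the common expression
\begin{equation*}
\Bigl(\bigotimes_{A_{ab} \neq \emptyset} \mathcal{R}_{A_{ab}}(G_{A_{ab}})\Bigr) \cdot \prod_{\substack{(a,b) \neq (c,d) \\ A_{ab}, A_{cd} \neq \emptyset}} G_{\{A_{ab},A_{cd}\}}
\end{equation*}
restricted to $C_{\{I_1I_2\}} \cap C_{\{J_1J_2\}}$, provided the four-block decomposition of $G$-factors matches on both sides.

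This last step is a bookkeeping verification. On the $I$-side, the recursion produces the two internal factors $G_{\{A_{11},A_{12}\}}\, G_{\{A_{21},A_{22}\}}$, while
\begin{equation*}
G_{\{I_1,I_2\}} = G_{\{A_{11},A_{21}\}}\, G_{\{A_{11},A_{22}\}}\, G_{\{A_{12},A_{21}\}}\, G_{\{A_{12},A_{22}\}}
\end{equation*}
because the pairs between $I_1$ and $I_2$ are partitioned by their intersections with $J_1$ and $J_2$. Symmetrically, on the $J$-side the recursion produces $G_{\{A_{11},A_{21}\}}\, G_{\{A_{12},A_{22}\}}$, and $G_{\{J_1,J_2\}}$ splits into the complementary four cross-pair products. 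In both decompositions every unordered pair of distinct nonempty blocks $A_{ab}, A_{cd}$ contributes exactly one factor $G_{\{A_{ab},A_{cd}\}}$ to the total product, which establishes the identity.

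The main obstacle is the case analysis when one or more of the $A_{ab}$ is empty, since the factorization axiom only applies to nontrivial partitions. The key observation, however, is that if, say, $A_{12} = \emptyset$, then $I_1 = A_{11}$ and no refinement on the $I_1$-factor is needed; provided the two partitions differ, the splits on $I_2$ and $J_1$ are automatically nontrivial, and the argument proceeds exactly as above. The convention $\mathcal{R}_{J}(G_{J}) = 1$ for $|J| \leq 1$ from \eqref{convention} takes care of the further degenerate situations where a nonempty block $A_{ab}$ is a singleton.
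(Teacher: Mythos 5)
Your proof is correct and follows essentially the same route as the paper's: the common refinement $A_{ab}=I_a\cap J_b$ is exactly the paper's $K_{a,b}$, the recursion hypothesis is applied to each of the four maps $\mathcal{R}_{I_a}$, $\mathcal{R}_{J_b}$ on the intersection $C_{\left\lbrace I_1I_2\right\rbrace }\cap C_{\left\lbrace J_1J_2\right\rbrace }$, and both sides reduce to the product of $G_{\left\lbrace A_{ab},A_{cd}\right\rbrace }$ over all unordered pairs of distinct blocks. Your explicit treatment of the empty and singleton blocks is a slightly more careful rendering of the conventions the paper adopts silently, but the argument is the same.
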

\begin{proof}
Let us introduce the sets 
	\begin{equation*}
		K_{a,b}:=I_a\cap J_b\mbox{,}
	\end{equation*}
some of which can be empty, and which form a partition of $\iota_n$
	\begin{equation*}
		\iota_n=K_{1,1}\ \sqcup\ K_{1,2}\ \sqcup\ K_{2,1}\ \sqcup\ K_{2,2}.
	\end{equation*}
In addition we have
	\begin{equation*}
		I_a=K_{a,1}\ \sqcup\ K_{a,2}\quad \mbox{and}\quad J_b=K_{1,b}\ \sqcup \ K_{2,b}.
	\end{equation*}
With the previous definitions the following relation holds
	\begin{equation}\label{ayuda1}
		C_{\left\lbrace I_1,I_2 \right\rbrace }\cap C_{\left\lbrace J_1,J_2 \right\rbrace }=C_{\left\lbrace K_{1,1},K_{1,2} \right\rbrace }\cap C_{\left\lbrace K_{2,1},K_{2,2} \right\rbrace }\cap C_{\left\lbrace K_{1,1},K_{2,1} \right\rbrace }\cap C_{\left\lbrace K_{1,2},K_{2,2} \right\rbrace }.
	\end{equation}
Then, in equation \eqref{consistency2} we can think each of the extensions restricted to any of the four sets appearing on the right hand side of \eqref{ayuda1}. 

Next, by the recursively assumed condition \eqref{todor}, it follows  that
	\begin{equation}\label{ayuda2}
		\left.\mathcal{R}_{I_a}\left(G_{I_a}\right)\right\vert_{C_{\left\lbrace K_{a,1}K_{a,2}\right\rbrace }}  =\left(  \left.\mathcal{R}_{K_{a,1}}\left(  G_{K_{a,1}}\right)  \otimes \mathcal{R}_{K_{a,2}}\left(  G_{K_{a,2}} \right)   \right)  G_{\left\lbrace K_{a,1},K_{a,2}\right\rbrace }\right\vert_{C_{\left\lbrace K_{a,1}K_{a,2}\right\rbrace }}\mbox{,}
	\end{equation}
and 
	\begin{equation}\label{ayuda3}
		\left.\mathcal{R}_{J_b}\left(G_{J_b}\right)\right\vert_{C_{\left\lbrace K_{1,b}K_{2,b}\right\rbrace }}  =\left(\mathcal{R}_{K_{1,b}}\left(  G_{K_{1,b}}\right)  \otimes \mathcal{R}_{K_{2,b}}\left(  G_{K_{2,b}} \right)  \right)   \left.G_{\left\lbrace K_{1,b},K_{2,b}\right\rbrace }\right\vert_{C_{\left\lbrace K_{1,b}K_{2,b}\right\rbrace }}\mbox{,}
	\end{equation}
where in addition to the convention \eqref{convention}
we set
	\begin{equation*}
		G_{\left\lbrace I,J\right\rbrace }=1\mbox{,}\quad\mbox{ if } I=\emptyset \mbox{ or } J=\emptyset.
	\end{equation*}
By substituting these equalities in  equation \eqref{consistency2}  we see that it holds if and only if 
	\begin{equation*}
		\begin{split}
			&\left(  \mathcal{R}_{K_{1,1}}\left(  G_{K_{1,1}}\right)\right. 
			\otimes
			\left.\mathcal{R}_{K_{1,2}}\left(  G_{K_{1,2}} \right)\right) 
			\left. G_{\left\lbrace K_{1,1},K_{1,2}\right\rbrace }\right\vert_{C_{\left\lbrace K_{1,1}K_{1,2}\right\rbrace }}
			\otimes
			\dots
			\\
			\dots
			&\otimes 
			\left( \mathcal{R}_{K_{2,1}}\left(  G_{K_{2,1}}\right)  \otimes
			\mathcal{R}_{K_{2,2}}\left(  G_{K_{2,2}} \right)\right) 
			\left. G_{\left\lbrace K_{2,1},K_{2,2}\right\rbrace }\right\vert_{C_{\left\lbrace K_{2,1}K_{2,2}\right\rbrace }} 
			\left.  G_{\left\lbrace I_1,I_2\right\rbrace }\right\vert_{C_{\left\lbrace I_1I_2\right\rbrace }\cap C_{\left\lbrace J_1J_2\right\rbrace }}
			\\
			&  
			=\left( \mathcal{R}_{K_{1,1}}\left(  G_{K_{1,1}}\right) \right.
			\otimes \left.
			\mathcal{R}_{K_{2,1}}\left(  G_{K_{2,1}} \right)\right) 
			\left. G_{\left\lbrace K_{1,1},K_{2,1}\right\rbrace }\right\vert_{C_{\left\lbrace K_{1,1}K_{2,1}\right\rbrace }}
			\otimes
			\dots
			\\
			\dots
			&\otimes 
			\left( \mathcal{R}_{K_{1,2}}\left(  G_{K_{1,2}}\right)
			\otimes
			\mathcal{R}_{K_{2,2}}\left(  G_{K_{2,2}} \right)\right) 
			\left. G_{\left\lbrace K_{1,2},K_{2,2}\right\rbrace }\right\vert_{C_{\left\lbrace K_{1,2}K_{2,2}\right\rbrace }} 
			\left. G_{\left\lbrace J_1,J_2\right\rbrace }\right\vert_{C_{\left\lbrace I_1I_2\right\rbrace }\cap C_{\left\lbrace J_1J_2\right\rbrace }}.
		\end{split}
	\end{equation*}
By inspection, the last equality holds if and only if
	\begin{equation}\label{consistency3}
		\left(  G_{\left\lbrace K_{1,1},K_{1,2}\right\rbrace }
		\otimes
		G_{\left\lbrace K_{2,1},K_{2,2}\right\rbrace }\right) 
		G_{\left\lbrace I_1,I_2\right\rbrace }=
		\left(  G_{\left\lbrace K_{1,1},K_{2,1}\right\rbrace }
		\otimes
		G_{\left\lbrace K_{1,2},K_{2,2}\right\rbrace }\right) 
		G_{\left\lbrace J_1,J_2\right\rbrace }
	\end{equation}
holds on $C_{\left\lbrace I_1I_2\right\rbrace }\cap C_{\left\lbrace J_1J_2\right\rbrace}$.
Therefore, it suffices to prove this last assertion.
To achieve this, it only remains to state that both sides of \eqref{consistency3} are equal to  
	\begin{equation*}
		G_{\left\lbrace\! K_{1,1}K_{1,2}K_{2,1}K_{2,2}\!\right\rbrace }\!=\!G_{\left\lbrace\! K_{1,1},K_{1,2}\!\right\rbrace }G_{\left\lbrace\! K_{1,1},K_{2,1}\!\right\rbrace }G_{\left\lbrace\! K_{1,1},K_{2,2}\!\right\rbrace }G_{\left\lbrace\! K_{1,2},K_{2,1}\!\right\rbrace }
		G_{\left\lbrace\! K_{1,2},K_{2,2}\!\right\rbrace }
		G_{\left\lbrace\! K_{2,1},K_{2,2}\!\right\rbrace }
	\end{equation*}
on $C_{\left\lbrace I_1I_2\right\rbrace }\cap C_{\left\lbrace J_1J_2\right\rbrace}$, because on $C_{\left\lbrace I_1I_2\right\rbrace }\cap C_{\left\lbrace J_1J_2\right\rbrace}$ we have the equalities
	\begin{equation*}
		\begin{split}
			G_{\left\lbrace I_1,I_2 \right\rbrace }&=  G_{\left\lbrace K_{1,1},K_{2,1}\right\rbrace }G_{\left\lbrace K_{1,1},K_{2,2}\right\rbrace }G_{\left\lbrace K_{1,2},K_{2,1}\right\rbrace }
			G_{\left\lbrace K_{1,2},K_{2,2}\right\rbrace }\mbox{,}\\
			G_{\left\lbrace J_1,J_2 \right\rbrace }
			&= G_{\left\lbrace K_{1,1},K_{1,2}\right\rbrace }G_{\left\lbrace K_{1,1},K_{2,2}\right\rbrace }G_{\left\lbrace K_{1,2},K_{2,1}\right\rbrace }
			G_{\left\lbrace K_{2,1},K_{2,2}\right\rbrace }.
		\end{split}
	\end{equation*}
The lemma is thus proved.
\end{proof}
\begin{rem}\label{hesse}
Observe that under the recursion hypothesis,  condition \eqref{consistency2},  together with Lemma \ref{covlemma} and  Theorem \ref{gluedistrib},  imply that a distribution on $\mathcal{M}^n\setminus d_n$ is completely determined by defining it on each open set $C_{\left\lbrace I_1I_2\right\rbrace }$ $($see \eqref{kenshin}$)$ as the distribution given by the expression on the right hand side of \eqref{todor}.

By the second axiom of renormalization maps, the distribution on  $\mathcal{M}^n\setminus d_n$ constructed recursively in  this way will coincide with the restriction to  $\mathcal{M}^n\setminus d_n$ of the required extension $\mathcal{R}_n\left( G_{\iota_n}\right) $ we intend to define in the $n$-th step of the induction.
\end{rem}
The next lemma involves the construction of a tempered partition of unity associated to the cover described in Lemma \ref{covlemma}.
\begin{lem}\label{parlemma}
Let $\mathcal{M}$ be a smooth $d$-dimensional manifold and let
	\begin{equation*}
		\left\lbrace   C_{\left\lbrace I_1I_2\right\rbrace } \right\rbrace _{\left(  I_1I_2\right)  }
	\end{equation*}
be the cover of $\mathcal{M}^n\setminus d_n$  defined in Lemma \ref{covlemma}.
Then, there exists a subordinated partition of unity 
	\begin{equation*}
		\left\lbrace \chi_{I_1I_2} \right\rbrace_{\left( I_1I_2\right) } 
	\end{equation*}
such that every function $\chi_{I_1I_2}$ is tempered along $d_n$.  
\end{lem}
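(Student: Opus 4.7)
The plan is to produce the partition of unity by an explicit algebraic ansatz in local product charts, and to verify temperedness via the Lojasiewicz inequality applied to the denominator polynomial.

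First I would reduce to the Euclidean case. Since temperedness along $d_n$ is a chart-wise condition (Definition \ref{tempfunc}), pick a locally finite open cover of $\mathcal{M}^n$ by \emph{product} chart domains $V_\alpha = U_{\alpha,1}\times\cdots\times U_{\alpha,n}$ with diffeomorphisms $\psi_\alpha=\psi_{\alpha,1}\times\cdots\times\psi_{\alpha,n}\colon V_\alpha\to W_\alpha\subseteq(\mathbb{R}^d)^n$, together with a subordinate smooth partition of unity $\{\phi_\alpha\}$ on $\mathcal{M}^n$. It will be enough to construct, for each $\alpha$, a smooth partition of unity on $W_\alpha\setminus\psi_\alpha(d_n\cap V_\alpha)$ subordinate to the restriction of the covering $\{C_{\{I_1I_2\}}\}$ whose members are tempered along $\psi_\alpha(d_n\cap V_\alpha)$. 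Multiplying by $\phi_\alpha$, pulling back by $\psi_\alpha$, extending by zero outside $V_\alpha$, and summing yields the required global family, since local tempered estimates in the chart transfer to tempered estimates on $\mathcal{M}^n$ (the manifold distance induced by $g$ is locally equivalent to the chart Euclidean distance).

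Next, in the chart image, for each nontrivial partition $I_1\sqcup I_2=\iota_n$ put
\[
a_{I_1I_2}(x_1,\dots,x_n)\;=\;\prod_{i\in I_1,\ j\in I_2}|x_i-x_j|^{2},\qquad A(x)\;=\;\sum_{(I_1,I_2)} a_{I_1I_2}(x),
\]
the sum ranging over nontrivial partitions. Each $a_{I_1I_2}$ is a polynomial which vanishes precisely off $C_{\{I_1I_2\}}$ and is strictly positive on $C_{\{I_1I_2\}}$; by Lemma \ref{covlemma}, for every $x\notin d_n$ at least one $a_{I_1I_2}(x)$ is strictly positive, so $A>0$ on $(\mathbb{R}^d)^n\setminus d_n$, while $A$ obviously vanishes on the small diagonal. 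Set
\[
\chi_{I_1I_2}(x)\;:=\;\frac{a_{I_1I_2}(x)}{A(x)}.
\]
Then each $\chi_{I_1I_2}$ is smooth on $(\mathbb{R}^d)^n\setminus d_n$, supported in $C_{\{I_1I_2\}}$, takes values in $[0,1]$, and $\sum_{(I_1,I_2)}\chi_{I_1I_2}\equiv 1$.

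The core of the argument is the temperedness of $\chi_{I_1I_2}$ along $d_n$. Since $A$ is a real polynomial and $\{A=0\}=d_n$ (equality of zero sets by the argument above using Lemma \ref{covlemma}), the Lojasiewicz inequality gives, for every compact $K\subseteq(\mathbb{R}^d)^n$, constants $c>0$ and $s\in\mathbb{N}$ such that
\[
A(x)\;\geq\;c\,d(x,d_n)^{s}\qquad\text{for every } x\in K.
\]
Because $A$ and all $a_{I_1I_2}$ are polynomials, their derivatives of any finite order are bounded on $K$. A standard application of the quotient and Leibniz rules shows that for each multi-index $\nu$ the derivative $\partial^{\nu}\chi_{I_1I_2}$ is a finite sum of terms of the form $P(x)/A(x)^{m}$ with $P$ polynomial and $m\leq|\nu|+1$. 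Combining these two facts, for every $k\in\mathbb{N}$ and every compact $K$ there exist $C_{k,K}>0$ with
\[
\sup_{|\nu|\leq k}|\partial^{\nu}\chi_{I_1I_2}(x)|\;\leq\;C_{k,K}\bigl(1+d(x,d_n)^{-(k+1)s}\bigr),\qquad x\in K\setminus d_n,
\]
which is exactly condition \eqref{funkadelic}. Hence each $\chi_{I_1I_2}$ is tempered along $d_n$ in the chart, and gluing as described in the first paragraph yields the desired global tempered partition of unity.

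The main potential obstacle is verifying that the zero set of the polynomial $A$ coincides with $d_n$, but this is immediate from Lemma \ref{covlemma} together with positivity of each $a_{I_1I_2}$ on its support; once this is in place the Lojasiewicz inequality applies directly since $A$ is polynomial (hence real-analytic) and the rest is a routine quotient-rule computation.
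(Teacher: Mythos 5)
Your construction has a genuine gap: the family $\chi_{I_1I_2}=a_{I_1I_2}/A$ is \emph{not} subordinated to the cover $\{C_{\{I_1I_2\}}\}$. Subordination requires $\operatorname{Supp}(\chi_{I_1I_2})\subseteq C_{\{I_1I_2\}}$, and the support is the \emph{closure} of $\{\chi_{I_1I_2}\neq 0\}=C_{\{I_1I_2\}}$; since each $C_{\{I_1I_2\}}$ is the complement of finitely many sets $\{x_i=x_j\}$ of codimension $d\geq 1$, it is dense, so $\operatorname{Supp}(\chi_{I_1I_2})$ is everything. Your functions merely vanish \emph{on} $\partial C_{\{I_1I_2\}}$ (to order $2$ in each factor $|x_i-x_j|$), not on a neighborhood of $\partial C_{\{I_1I_2\}}\setminus d_n$. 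This is not a cosmetic point: in the proof of Theorem \ref{exrenmap} the lemma is used precisely through the fact that $\chi_{I_1I_2}$ kills a neighborhood of $\partial C_{\{I_1I_2\}}\setminus d_n$, so that $\chi_{I_1I_2}G_{\{I_1,I_2\}}$ extends smoothly by zero across that boundary and $t_{I_1I_2}$ is supported in $\operatorname{Supp}(\chi_{I_1I_2})\subseteq C_{\{I_1I_2\}}$. With your $\chi_{I_1I_2}$, the factor $G_{\{I_1,I_2\}}$ still blows up at $\partial C_{\{I_1I_2\}}$ (like a negative power, possibly with logarithms, of $|x_i-x_j|$ by Lemma \ref{silvia}), and the quadratic vanishing of $a_{I_1I_2}$ cannot absorb it. The defect is repairable — e.g.\ replace $a_{I_1I_2}/A$ by $\theta(a_{I_1I_2}/A)\big/\sum_{J_1J_2}\theta(a_{J_1J_2}/A)$ with $\theta$ smooth, $\theta\equiv 0$ on $[0,1/(2N)]$ and $\theta>0$ on $(1/N,\infty)$, $N$ the number of partitions, so that the denominator is bounded below and each summand is supported where $a_{I_1I_2}/A$ is bounded away from zero — but as written the lemma is not proved, and the temperedness estimate for the corrected functions (via Faa di Bruno on $\theta\circ(a_{I_1I_2}/A)$ together with your Lojasiewicz bound) still needs to be checked.

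A secondary issue is the reduction to charts: the polynomial $a_{I_1I_2}(x)=\prod|x_i-x_j|^2$ only detects the sets $\{x_i=x_j\}$ correctly if all $n$ factors are expressed in the \emph{same} chart of $\mathcal{M}$, i.e.\ you must work on diagonal products $U_\alpha^n$ near $d_n$; a general product $U_{\alpha,1}\times\cdots\times U_{\alpha,n}$ with distinct chart maps makes $|x_i-x_j|$ meaningless for the purpose at hand. Since such diagonal charts only cover a neighborhood of $d_n$, you must (as the paper does) glue your construction near $d_n$ to an arbitrary smooth partition of unity away from $d_n$ using a two-element partition of unity. Modulo these repairs your route — an explicit algebraic ansatz controlled by the Lojasiewicz inequality — is genuinely different from, and substantially more elementary than, the paper's proof, which passes through the normal bundle of $d_n$, the tubular neighborhood theorem, and homogeneous (conical) functions of $h/|h|$ estimated via Faa di Bruno.
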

\begin{proof}
We will first construct a partition of unity in some neighborhood $U$ of $d_n$.
Consider the normal bundle  
	\begin{equation*}
		Nd_n=\coprod_{x\in d_n} T_{x}\mathcal{M}^n/T_xd_n 
	\end{equation*}
associated to the canonical embedding
	\begin{align*}
	i: d_n&\hookrightarrow \mathcal{M}^n\\
	 x&\mapsto  x
	\end{align*}
of $d_n$ as a closed subspace of $\mathcal{M}^n$.
Calling $ 0_{x}$ the null vector in 
	\begin{equation*}
		N_xd_n=T_{x}\mathcal{M}^n/T_xd_n
	\end{equation*}
we  denote by  $\zeta$ the mapping 
	\begin{align*}
	\zeta: d_n&\rightarrow  Nd_n\\
	 x&\mapsto  (x,0_{x})\mbox{,}
	\end{align*}
which embeds $d_n$ as a closed submanifold (the zero cross section,   $\zeta_0:=\zeta\left( d_n\right) $) of $Nd_n$.

We can think the normal bundle inside the tangent bundle over $\mathcal{M}^n$, using local coordinates, as follows. 
Let $\left\lbrace U_i\right\rbrace_i $ be some open cover of $\mathcal{M}$. 
Consider the diagonal map
	\begin{align*}
		\Delta: U_i&\rightarrow  \Delta U_i\\
		 x&\mapsto  \underbrace{(x,x,\dots ,x)}_{\mbox{$n$ times}}.
	\end{align*}
We trivialize the normal bundle over $\Delta U_i$ for every $i$ by means of the trivializing maps
	\begin{equation*}
		\Psi_{U_i}:N d_n|_{\Delta U_i}\rightarrow U_i\times \mathbb{R}^{d(n -1)}. 
	\end{equation*}	
In other words, we use local coordinates of the form 
	\begin{equation*}
		(x,h_1,\dots,h_{n-1}) \in U_i\times \mathbb{R}^{d(n-1)}\mbox{,}
	\end{equation*}
where $x$ in  $U_i$  represents the  point $\Delta(x)$ in $d_n$;  and  each $h_j$ belongs to $\mathbb{R}^d$ (for $j=1, \dots , n-1$) so that the vector $(h_1,\dots ,h_{n-1})$ belongs to $\mathbb{R}^{d(n-1)}$, and represents an element in the fiber attached to $\Delta(x)$.
Thus, the injection of $N d_n|_{\Delta U_i}$ in $\left.T\mathcal{M}^n\right|_{\Delta U_i\cap d_n}$ may be represented locally by an exact sequence
	\begin{equation*}
		\begin{array}{rcccl}
		0\rightarrow & U_i\times \mathbb{R}^{d(n-1)}&\overset{T}{\rightarrow} & U_i\times \mathbb{R}^{dn}&\rightarrow 0  \\
		&(x,h_1,\dots,h_{n-1})&\mapsto & (x,w_1,\dots,w_{n})&
	 	\end{array}
	\end{equation*}
where
	\begin{equation}\label{tantui}
		\left\{
		\begin{array}{rcl}
		w_i&=& h_i\mbox{,}\qquad\qquad\qquad\mbox{ if } \ i=1\mbox{,}\dots\mbox{, }n-1\mbox{,}\\
		w_n&=& -\sum_{i=1}^{n-1}h_i\ .
		\end{array}\right.
	\end{equation}
Now, let $\xi$ be a spray on $\mathcal{M}$ (see \cite{langdiff}, Thm. 3.1, for the existence of $\xi$).
For every $\upsilon$ in $T\mathcal{M} $ denote by $\beta_\upsilon$ the unique integral curve of $\xi$ such that $\beta_\upsilon(0)=\upsilon$.
We denote by $\mathfrak{D}$ the open subset of $T\mathcal{M}$ given by 
	\begin{equation*}
		\mathfrak{D}=\left\lbrace\upsilon \in T\mathcal{M}: \beta_\upsilon \mbox{ is defined at least on } [0,1] \right\rbrace. 
	\end{equation*} 
Then, the exponential map associated to the spray $\xi$ is defined by 
	\begin{align}
		\operatorname{exp}^\xi: \mathfrak{D}&\rightarrow  \mathcal{M}\\
		\upsilon &\mapsto  \pi_\mathcal{M}\left( \beta_\upsilon (1)\right) \mbox{,}\nonumber
	\end{align}
where $\pi_\mathcal{M}:T\mathcal{M}\rightarrow \mathcal{M}$ denotes the canonical projection.

The spray $\xi$ in turn induces a spray $\xi^n$ on $\mathcal{M}^n$ in a canonical way.  
For every $\upsilon$ in $T\mathcal{M}^n$, let $\beta^n_{\upsilon}$ denote the integral curve of $\xi^n$ 
such that $\beta^n_{\upsilon}(0)=\upsilon$.
The integral curve $\beta^n_{\upsilon}$ can be expressed in terms of the integral curves of $\xi$ in a simple way: if $\upsilon$ belongs to $T\mathcal{M}^n$ then there are elements $\upsilon_i$ in $T\mathcal{M}$ ($i=1\mbox{,}\dots \mbox{, }n$) with $\pi_i\upsilon=\upsilon_i$, where $\pi_i:T\mathcal{M}^n\rightarrow T\mathcal{M}$ denotes the projection to the $i$-th copy of $T\mathcal{M}$ in $T\mathcal{M}^n$.
In this case we then have
	\begin{equation*}
		\beta_{\upsilon}^n=\left(\beta_{\upsilon_1},\dots,\beta_{\upsilon_n} \right) 
	\end{equation*}
and the exponential map associated to the spray $\xi^n$ is given by
	\begin{align*}
		\operatorname{exp}^{\xi^n}: \mathfrak{D}^n&\rightarrow  \mathcal{M}^n\\
		\upsilon &\mapsto \pi_{\mathcal{M}^n}\left( \beta^n_\upsilon (1)\right) \mbox{,}
	\end{align*}
with
	\begin{equation*}
		\pi_{\mathcal{M}^n}\left( \beta^n_\upsilon (1)\right) =\Big(\pi_\mathcal{M}\left( \beta_{\upsilon_1}(1)\right) ,\dots,\pi_\mathcal{M}\left( \beta_{\upsilon_n}(1)\right)  \Big)=\Big(\operatorname{exp}^\xi\left( \upsilon_1\right) ,\dots,\operatorname{exp}^\xi\left( \upsilon_n\right) \Big)\mbox{,}
	\end{equation*}
where $\pi_{\mathcal{M}^n}:T\mathcal{M}^n\rightarrow \mathcal{M}^n$ denotes the canonical projection.

By the Tubular Neighborhood Theorem (see \cite{langdiff}, Ch. IV, Thm. 5.1) there is an open  neighborhood $Z$ (in the topology of $Nd_n$), contained in $\mathfrak{D}^n$ (viewed inside the tangent bundle $T\mathcal{M}^n$, as described above), of the zero cross section $\zeta_0$.
The set $Z$ is such that the restriction of the exponential map to the normal bundle $Nd_n$, denoted by $\left.\operatorname{exp}^{\xi^n}\right|_N$,  is a diffeomorphism of $Z$ onto an open set $U$ in $\mathcal{M}^n$,
	\begin{equation*}
		\left.\operatorname{exp}^{\xi^n}\right|_N : Z\rightarrow   U\mbox{,}
	\end{equation*}
which makes the following diagram
	\begin{equation*}
		\xymatrix{
		Z \ar@/^/[ddrr]^{\left.\operatorname{exp}^{\xi^n}\right|_N}   && &\\
		& & &\\
		d_n\ar@{->}[uu]^{\zeta}\ar@{->}[rr]_{i} && U\ar@/^/[uull]^{\Gamma}\ar@{^{(}->}[r] & \mathcal{M}^n 
		}
	\end{equation*}
commutative, where we have denoted by $\Gamma$ the inverse mapping of $\left.\operatorname{exp}^{\xi^n}\right|_N$.

Let us denote by $\dot{Z}$ the neighborhood $Z\setminus \zeta_0$. 
Then,
	\begin{equation}\label{aurora}
		\left\lbrace   \Gamma\left(  C_{\left\lbrace I_1I_2\right\rbrace }\right) \right\rbrace _{\left(  I_1I_2\right)  }
	\end{equation}
forms an open cover of $\dot{Z}$.

The next step is to find the relations between the  coordinates describing the points in the sets  \eqref{aurora}.
In other words, we seek to describe the sets 
	\begin{equation}\label{auroraaurora}
			V_{I_1I_2}^i:=\Psi_{U_i}\circ \Gamma\left(C_{\left\lbrace I_1I_2\right\rbrace }\right).
	\end{equation}
In order to do that, we will show that
	\begin{equation}\label{maxi}
	 	V_{I_1I_2}^i=\bigcap_{(i,j)\in I_1\times I_2}\left\lbrace (x,h_1,\dots,h_{n-1}) \in U_i\times \mathbb{R}^{d(n-1)}: \ h_i-h_j\neq 0 \right\rbrace.
	\end{equation}
To prove \eqref{maxi}, first notice that we have the following commutative diagram
	\begin{equation*}
		\xymatrix{
		\Gamma\left(  C_{\left\lbrace I_1I_2\right\rbrace }\right) \cap Z \cap \left.Nd_n\right|_{\Delta U_i}   \ar@{->}[dd]_{\left.\operatorname{exp}^\xi \right|_N}\ar@{->}[rrr]^{\Psi_{U_i}} &  &  &  V_{\left\lbrace I_1I_2\right\rbrace }^i\cap U_i\times \mathbb{R}^{d(n-1)}\ar@{->}[dd]^{T}\\
		&&&
		\\ C_{\left\lbrace I_1I_2\right\rbrace }\cap \left.\operatorname{exp}^{\xi^n}\right|_N\left(Z  \cap \left.Nd_n\right|_{\Delta U_i} \right)  \ar@{->}[rrr]^{}  & & &   T\left( V_{\left\lbrace I_1I_2\right\rbrace }^i\right) \cap U_i\times \mathbb{R}^{dn} 
		}
	\end{equation*}
from which we easily see that given a fixed point $(x_1,\dots ,x_n)$ belonging to the open subset of $U$
	\begin{equation*}
		C_{\left\lbrace I_1I_2\right\rbrace 	}\cap \left.\operatorname{exp}^{\xi^n}\right|_N\left(Z  \cap \left.Nd_n\right|_{\Delta U_i} \right)
	\end{equation*}
there correspond unique elements
	\begin{equation*}
		\upsilon=\left( \upsilon_1,\dots,\upsilon_n\right)\mbox{,}\ \left( x,h_1,\dots,h_{n-1}\right) \mbox{ and }  (x,w_1,\dots,w_{n}) 
	\end{equation*}
belonging to
	\begin{equation*}
		\Gamma\left(  C_{\left\lbrace I_1I_2\right\rbrace }\right)\cap Z \cap \left.Nd_n\right|_{\Delta U_i}\mbox{,}\
		V^i_{\left\lbrace I_1I_2\right\rbrace }\cap U_i\times \mathbb{R}^{d(n-1)}\  \mbox{ and }
		T\left( V^i_{\left\lbrace I_1I_2\right\rbrace }\right) \cap U_i\times \mathbb{R}^{dn}\mbox{,}
	\end{equation*}
respectively, with $\sum w_i=0$,  and such that
	\begin{equation*}
		\xymatrix{
		\upsilon=\left( \upsilon_1,\dots,\upsilon_n\right)   \ar@{|->}[dd]_{\left.\operatorname{exp}^\xi \right|_N}\ar@{|->}[rrr]^{\Psi_{U_i}} & &  &  \left( x,h_1,\dots,h_{n-1}\right) \ar@{|->}[dd]^{T}\\
		&&&
		\\ \left(\operatorname{exp}^\xi\left( \upsilon_1\right) ,\dots,\operatorname{exp}^\xi\left( \upsilon_n\right) \right) = (x_1,\dots ,x_n)  \ar@{|->}[rrr]^{}  & & &   (x,w_1,\dots,w_{n})}
	\end{equation*}
commutes.
Also note that 
	\begin{equation}\label{bucket}
		T\circ \Psi_{U_i}\Bigg( \! \Big(\! \left(\pi_\mathcal{M}(\upsilon_1),0\right) ,\dots,\left(\pi_\mathcal{M}(\upsilon_i),0 \right),\dots,\left(\pi_\mathcal{M}(\upsilon_n),0 \right) \! \Big)\!    \Bigg)\!=\!\left( x,0,\dots,w_i,\dots,0\right)\mbox{,}
	\end{equation}
for $i=1\mbox{,}\dots\mbox{, }n$.
Therefore,
	\begin{equation*}
		x_i=x_j \iff   \operatorname{exp}^\xi\left( \upsilon_i \right)=\operatorname{exp}^\xi\left( \upsilon_j \right) \iff \upsilon_i=\upsilon_j \iff w_i=w_j\iff  h_i=h_j\mbox{,}
	\end{equation*}
where in the last equivalence we have used the fact that $T$ is injective on each fiber $\mathbb{R}^{d(n-1)}$.

From the previous argument, \eqref{maxi} is proved and therefore, the sets \eqref{auroraaurora} are invariant under scalings of the type
	\begin{align*}
		\operatorname{Id}_{U_i}\times M_\lambda :\quad U_i\times \mathbb{R}^{d(n-1)}\
		&\rightarrow U_i\times \mathbb{R}^{d(n-1)}\\
		(x,h_1,\dots,h_{n-1})&\mapsto (x,\lambda h_1,\dots,\lambda h_{n-1})\mbox{,}
	\end{align*}
where $\lambda$ is a nonzero real number; 
and they are also invariant under translations of the type
	\begin{align}\label{actiontrans}
		\operatorname{Id}_{U_i}\times T_u:\quad
		U_i\times \mathbb{R}^{d(n-1)}\ &\rightarrow   U_i\times \mathbb{R}^{d(n-1)}\\
		(x,h_1,\dots,h_{n-1})&\mapsto  (x,u + h_1,\dots,u + h_{n-1})\mbox{,}\nonumber
	\end{align}
where $u$ is any element of the Euclidean space $\mathbb{R}^d$.
Therefore,
	\begin{equation}\label{conical}
		\Big\{ V_{\left\lbrace I_1I_2\right\rbrace }^i\Big\}_{(I_1,I_2)}
	 \end{equation}
is an open conical cover of
$U_i\times\left( \mathbb{R}^{d(n-1)}\setminus L \right)$, 
where 
	\begin{equation*}
		L=\left\lbrace\left( h_1,\dots,h_{n-1}\right)\in\mathbb{R}^{d(n-1)}:h_i=h_j \forall i\mbox{, }j  \right\rbrace
	\end{equation*}
is the small diagonal.

We seek to construct a partition of unity
	\begin{equation}\label{metengoquebañar}
		\Big\{\chi^i_{\left\lbrace I_1I_2\right\rbrace } \Big\} _{(I_1,I_2)}
	\end{equation}
subordinated to this cover, whose elements are tempered along $U_i\times L$.
By translation invariance of the sets \eqref{conical}, it is possible to construct this partition in such a way that its elements are  translation invariant.
Also, as the invariance of the sets \eqref{conical} under translations is independent of the base point $x$ in $U_i$, the functions \eqref{metengoquebañar} will not need to depend on this coordinate.
The way in which this translation symmetry simplifies the task is as follows.
We first consider the quotient space
	\begin{equation*}
		\mathfrak{Q}= \mathbb{R}^{d(n-1)}\biggr/ L
	\end{equation*}
under the action of $L$ by translations \eqref{actiontrans}. 
In particular, we focus on the subspace
	\begin{equation*}
		\mathfrak{Q}_L=\left( \mathbb{R}^{d(n-1)}\setminus L \right)\biggr/ L.
	\end{equation*}
Let
	\begin{equation*}
		p:\mathbb{R}^{d(n-1)} \rightarrow\mathfrak{Q}
	\end{equation*}
be the canonical projection.
We shall use square brakets to denote the class $p\left( \left(h_1,\dots, h_{n-1} \right)\right) $ of a given element $\left(h_1,\dots, h_{n-1} \right)$.
We then have that
	\begin{equation}
		\big[ \left(h_1,\dots, h_{n-1} \right)\big] = \left[ \left(h'_1,\dots, h'_{n-1} \right)\right] \mbox{ in }\mathfrak{Q}  \iff  \left(h_1-h'_1,\dots, h_{n-1}-h'_{n-1} \right)\in L. 
	\end{equation}

Next, consider the isomorphism
	\begin{align}\label{mofo}
		\phi: U_i\times \mathfrak{Q}& \rightarrow U_i\times \mathbb{R}^{d(n-2)}\\
		\left( x,\left[ h_1,\dots,h_{n-1}\right] \right) &\mapsto  \left( x,h_1-h_{n-1},\dots,h_{n-2}-h_{n-1}\right) \nonumber\mbox{,}
	\end{align}
under which the diagonal $L$ is mapped to the origin of the Euclidean space $\mathbb{R}^{d(n-2)}$ and the 
quotient space $\mathfrak{Q}_L$ is therefore  isomorphic to the punctured Euclidean space $\mathbb{R}^{d(n-2)}\setminus \left\lbrace 0 \right\rbrace$.
If we now consider the composition
	\begin{equation*}
		\Phi:U_i\times \mathbb{R}^{d(n-1)}\overset{\operatorname{Id}_{U_i}\times p}{\xrightarrow{\hspace*{1.5cm}}} U_i\times \mathfrak{Q}\overset{\operatorname{Id}_{U_i}\times \phi}{\xrightarrow{\hspace*{1.5cm}}} U_i\times \mathbb{R}^{d(n-2)}\mbox{,}
	\end{equation*}
then, the collection of sets 
	\begin{equation}\label{kit}
		\left\lbrace \Phi\left( V^i_{\left\lbrace I_1I_2\right\rbrace } \right) \right\rbrace _{(I_1,I_2)}
	\end{equation}
is an open covering of the space 
	\begin{equation*}
		\mathcal{M}\times \mathbb{R}^{d(n-2)}\setminus \left\lbrace 0\right\rbrace.
	\end{equation*}
Thus, it suffices to construct a partition of unity
	\begin{equation}\label{kitkat}
		\Big\{ \bar{\chi}^i_{\left\lbrace I_1I_2\right\rbrace }  \Big\}_{(I_1,I_2)}
	\end{equation}
subordinated to this open cover, such that  every function $\bar{\chi}^i_{\left\lbrace I_1I_2\right\rbrace }$ is tempered along $U_i\times\left\lbrace 0 \right\rbrace $ in $U_i\times \mathbb{R}^{d(n-2)}$.
If this was the case, we could take
	\begin{equation}\label{yamismo}
		\chi^i_{\left\lbrace I_1I_2\right\rbrace }:=\Phi^\ast\Big( \bar{\chi}^i_{\left\lbrace I_1I_2\right\rbrace }\Big) 
	\end{equation}
for every pair $(I_1,I_2)$, which is tempered along $U_i\times L$ by the chain rule, and then the lemma follows.

As the cover \eqref{conical} is conical and the fiber is independent of the base point $x$ it is possible to take every $\bar{\chi}^i_{\left\lbrace I_1I_2\right\rbrace }$ as a function only of the direction given by the vector component of each point $\left(x,h_1,\dots,h_{n-2} \right) $; \textit{i.e.} it can be taken of the form
	\begin{equation*}
		\bar{\chi}^i_{\left\lbrace I_1I_2\right\rbrace }(x,h)=f^i_{\left\lbrace I_1I_2\right\rbrace }\left(\frac{h}{|h|} \right)\mbox{,}
	\end{equation*}
where $h=\left(h_1,\dots,h{_{n-2}} \right) $,
	\begin{equation*}
		|h|=\sqrt{\sum_{l=1}^{n-2}h_l^2},\ \mbox{ and }\ f^i_{\left\lbrace I_1I_2\right\rbrace } \in \mathscr{E}\left(\mathbb{R}^{d(n-2)}\setminus \left\lbrace 0 \right\rbrace \right).
	\end{equation*}
Therefore, combining the Faa Di Bruno formula
\footnote{The Faa Di Bruno formula is an identity generalizing the chain rule to higher derivatives.
Here we use the generalised Faa Di Bruno formula which is an extension of the former to multivariate functions. 
The reader is referred to  \url{http://digital.csic.es/bitstream/10261/21265/3/FaadiBruno\%20-\%20copia.pdf}}
and the fact that 
	\begin{equation*}
		\left|\partial_h^k\left( \frac{h}{|h|}\right) \right| \leq C_{k} \left( 1+|h|^{-|k|}\right)\mbox{,}
	\end{equation*}
we get
	\begin{equation*}
		\left|\partial_h^\alpha \bar{\chi}^i_{\left\lbrace I_1I_2\right\rbrace }(h) \right| =\left|\partial_h^\alpha f^i_{\left\lbrace I_1I_2\right\rbrace }\left(\frac{h}{|h|} \right)  \right|\leq C_{\alpha} \left( 1+|h|^{-|\alpha|}\right).
	\end{equation*}
This implies that $\bar{\chi}^i_{\left\lbrace I_1I_2\right\rbrace }$ is tempered along $U_i\times \left\lbrace 0 \right\rbrace$, and so each function \eqref{yamismo} is tempered along $U_i\times L$.
Therefore, 
	\begin{equation*}
		\Psi^\ast_{U_i}\chi^i_{\left\lbrace I_1I_2\right\rbrace }
	\end{equation*}
is tempered along the zero cross section  $\left.\zeta_0\right| _{U_i}$ contained in $\left.Nd_n\right| _{\Delta U_i}$.
 
Let $\left\lbrace \varphi_i\right\rbrace_i $  be a partition of unity subordinated to the cover $\left\lbrace U_i\right\rbrace _i$ of $\mathcal{M}$.
Then,
	\begin{equation*}
		\left\lbrace\sum_i \varphi_i \Psi^\ast_{U_i}\chi^i_{\left\lbrace I_1I_2\right\rbrace } \right\rbrace_{(I_1,I_2)} 
	\end{equation*}
 
is a partition of unity of $Nd_n\setminus \zeta_0$ which is subordinated to the conical cover
	\begin{equation*}
		\left\lbrace \Gamma\left(  C_{\left\lbrace I_1I_2\right\rbrace }\right)  \right\rbrace _{\left( I_1,I_2\right) }.
	\end{equation*}
To go back to the configuration space $\mathcal{M}^n$, choose a neighborhood $U'$ of $d_n$ such that 
	\begin{equation*}
		\overline{U'}\subseteq U.
	\end{equation*}
We then have the inclusions 
	\begin{equation*}
		d_n\subseteq U'\subseteq U.
	\end{equation*}
Let $\left\lbrace\chi_{\alpha},\chi_{\beta} \right\rbrace $ be a  partition of unity subordinated to the cover 
	\begin{equation*}
		\left\lbrace U, \mathcal{M}^n\setminus \overline{U'}\right\rbrace
	\end{equation*}
and choose 
	\begin{equation*}
		\left\lbrace \tilde{\chi}_{\left\lbrace I_1I_2\right\rbrace }\right\rbrace_{(I_1,I_2)}
	\end{equation*}
to be an arbitrary partition of unity subordinated to the cover 
	\begin{equation*}
		\left\lbrace  C_{\left\lbrace I_1I_2\right\rbrace }  \right\rbrace _{\left( I_1,I_2\right) }
	\end{equation*}
of $\mathcal{M}^n\setminus d_n$.
Finally,  set 
	\begin{equation*}
		\chi_{\left\lbrace I_1I_2\right\rbrace }= \chi_{\alpha}\Gamma^\ast\left( \sum_i \varphi_i \Psi^\ast_{U_i}\chi^i_{\left\lbrace I_1I_2\right\rbrace } \right)+\chi_{\beta} \tilde{\chi}_{\left\lbrace I_1I_2\right\rbrace }\mbox{,}
	\end{equation*}
where $I_1\mbox{, }I_2$ run through all proper  subsets of $\iota_n$ such that $I_1\sqcup I_2=\iota_n$.
Then, it follows by construction that every function $\chi_{\left\lbrace I_1I_2\right\rbrace }$ is tempered along $d_n$.
Moreover, the sum over all such subsets $I_1\mbox{, }I_2$ gives one, since
	\begin{equation*}
		\begin{split}
		\sum_{I_1I_2}  \chi_{\left\lbrace I_1,I_2\right\rbrace }&= \sum_{I_1I_2} \chi_{\alpha}\Gamma^\ast\left( \sum_i \varphi_i \Psi^\ast_{U_i}\chi^i_{\left\lbrace I_1I_2\right\rbrace } \right)+\sum_{I_1I_2}\chi_{\beta} \tilde{\chi}_{\left\lbrace I_1I_2\right\rbrace }\\ &=\chi_{\alpha}\Gamma^\ast\left( \sum_i \varphi_i \Psi^\ast_{U_i}\left( \sum_{I_1I_2}\chi^i_{\left\lbrace I_1I_2\right\rbrace } \right) \right)+\chi_{\beta} \sum_{I_1I_2}\tilde{\chi}_{\left\lbrace I_1I_2\right\rbrace }\\
		&=\chi_{\alpha}\Gamma^\ast\left( \sum_i \varphi_i \Psi^\ast_{U_i}\left(1\right) \right)+\chi_{\beta}=\chi_{\alpha}\Gamma^\ast\left( \sum_i \varphi_i \right)+\chi_{\beta}\\
		&=\chi_{\alpha}\Gamma^\ast\left( 1\right)+\chi_{\beta}=\chi_{\alpha}+\chi_{\beta}=1.
		\end{split}
	\end{equation*}
This completes the proof.
\end{proof}

\subsection{Proof of the existence of renormalization maps }\label{lademospostaa}
Now we are in a position to give the proof of Theorem \ref{exrenmap}.
\begin{proof}[Proof of Theorem \ref{exrenmap}]
Recall that by Remark \ref{eldato}, we need only to define the extension $\mathcal{R}_{n}\left(G_{\iota_n}\right)$ for every $n$ in $\mathbb{N}$, where $\iota_n:=\left\lbrace 1,\dots ,n\right\rbrace$.
In addition, this must be done in such a way that the extension satisfies the factorization axiom (see \eqref{todor}).
We proceed by induction on $n$, starting with $n=2$. 
As mentioned earlier,  the existence of  $\mathcal{R}_{2}\left( G_{2}\right)$ 
is guaranteed by Theorem \ref{dinsky}.
The factorization axiom is satisfied trivially in this case by the extra assumption \eqref{convention}.

For the inductive step, we assume recursively that the problem of extension is already solved for every proper subset  $ J$ of $\iota_n$ ($n>2$). 
Namely, we suppose that for every such $J$ we are given a distribution $\mathcal{R}_{J}\left(  G_{J}\right)$ in $\mathscr{D}\left( \mathcal{M}^J\right)' $ with the property that, for every nontrivial partition $J_1\sqcup J_2 = J$, equation \eqref{todor} holds with $I$, $I_1$ and $I_2$ replaced by $J$, $J_1$ and $J_2$, respectively.

By Lemma \ref{covlemma}, the complement $\mathcal{M}^n\setminus d_n$ of the small diagonal $d_n$ in $\mathcal{M}^n$ is covered by open sets of the form
	\begin{equation}\label{cij2}
		C_{\left\lbrace I_1I_2\right\rbrace }=\left\lbrace \left( x_1,\dots ,x_n\right): \forall (i,j)\in I_1\times I_2 \ x_i\neq x_j \right\rbrace \subseteq \mathcal{M}^n\mbox{,}
	\end{equation}
where $I_1\mbox{, }I_2$ run through all proper  subsets of $\iota_n$ such that $I_1\sqcup I_2=\iota_n$.  
In Lemma \ref{parlemma} we constructed a partition of unity  
	\begin{equation*}
		\left\lbrace \chi_{I_1I_2} \right\rbrace_{\left( I_1,I_2\right) }
	\end{equation*}
subordinated to this cover, such that every function $\chi_{I_1I_2}$ is tempered along $d_n$.

The key idea is that the product 
	\begin{equation*}
		\mathcal{R}_{I_1}\left(  G_{I_{1}}\right)  \otimes \mathcal{R}_{I_2}\left(  G_{I_{2}} \right)
	\end{equation*}
is well-defined in $\mathscr{D}\left( \mathcal{M}^n\right)'$ and $G_{\left\lbrace I_1,I_2\right\rbrace }$
is tempered along $\partial C_{\left\lbrace I_1,I_2\right\rbrace }$.
Then,
	\begin{equation*}
		t_{I_1I_2}:=\underbrace{\left( \mathcal{R}_{I_1}\left(  G_{I_1}\right)  \otimes \mathcal{R}_{I_2}\left(  G_{I_2} \right)\right) }_{\in \mathscr{D}\left( \mathcal{M}^n\right)' }     \underbrace{\left( \chi_{I_1I_2}G_{\left\lbrace I_1,I_2\right\rbrace }\right)}_{\in \mathcal{T}\left( \partial C_{\left\lbrace I_1I_2\right\rbrace },\mathcal{M}^n\right)}
	\end{equation*} 
is a product of a tempered function along $\partial C_{\left\lbrace I_1,I_2\right\rbrace }$ and a distribution in $\mathscr{D}\left( \mathcal{M}^n\right)' $. 
Therefore, it has a continuous extension $\bar{t}_{I_1I_2}$ to $\mathscr{D}\left( \mathcal{M}^n\right)$ by Theorem \ref{maggotbrain}, which is supported on  
	\begin{equation*}
		K_{I_1I_2}:=\operatorname{Supp}\left( \chi_{I_1I_2}\right)\mbox{,}
	\end{equation*}
in which case we may write 
	\begin{equation*}
		\bar{t}_{I_1I_2}\in\mathscr{D}'_{K_{I_1I_2}}\left( \mathcal{M}^n\right) .
	\end{equation*}
By construction, $\chi_{I_1I_2}$ vanishes in some neighborhood of $\partial C_{\left\lbrace I_1,I_2\right\rbrace }\setminus d_n$ in $\mathcal{M}^n\setminus d_n $ which implies  that the equality 
	\begin{equation*}
		t_{I_1I_2}=\left( \mathcal{R}_{I_1}\left(  G_{I_1}\right)  \otimes \mathcal{R}_{I_2}\left(  G_{I_2} \right) \right)  \left( \chi_{I_1I_2}G_{\left\lbrace I_1,I_2\right\rbrace }\right)
		=\bar{t}_{I_1I_2} 
	\end{equation*}
holds in $\mathscr{D}\left( \mathcal{M}^n\setminus d_n\right)'$. 
Then, we define $\mathcal{R}_{\iota_n}\left(G_{\iota_n} \right)$ to be the distribution given by
	\begin{equation}\label{pudao1}
		\mathcal{R}_{\iota_n}\left(G_{\iota_n} \right) = \sum_{I_1,I_2} \bar{t}_{I_1I_2}\mbox{,}
	\end{equation}
where $I_1\mbox{, }I_2$ run through all proper  subsets of $\iota_n$ such that $I_1\sqcup I_2=\iota_n$.

Now we verify that the extension \eqref{pudao1} satisfies the factorization axiom \eqref{todor}. 
Fix $I_1$ and $I_2$ two proper  subsets of $\iota_n$ such that $I_1\sqcup I_2=\iota_n$.
Then,
	\begin{equation*}
		\begin{split}
			\left.\mathcal{R}_{\iota_n}\left(G_{\iota_n} \right) \right\vert_{C_{\left\lbrace I_1I_2\right\rbrace }}&= \sum_{J_1,J_2} \left.\bar{t}_{J_1J_2}\right\vert_{C_{\left\lbrace I_1I_2\right\rbrace }}\\ &=\sum_{J_1,J_2}\left. \left( \mathcal{R}_{J_1}\left(  G_{J_1}\right)  \otimes \mathcal{R}_{J_2}\left(  G_{J_2} \right) \right)  \left( \chi_{J_1J_2}G_{\left\lbrace J_1,J_2\right\rbrace }\right) \right\vert_{C_{\left\lbrace I_1I_2\right\rbrace }} \\
			&=\sum_{J_1,J_2}\chi_{J_1J_2}\left. \left[ \left( \mathcal{R}_{J_1}\left(  G_{J_1}\right)  \otimes \mathcal{R}_{J_2}\left(  G_{J_2} \right) \right)   G_{\left\lbrace J_1,J_2\right\rbrace } \right] \right\vert_{C_{\left\lbrace I_1I_2\right\rbrace }}\\
			&=\sum_{J_1,J_2} \chi_{J_1J_2}\left. \left[ \left( \mathcal{R}_{I_1}\left(  G_{I_1}\right)  \otimes \mathcal{R}_{I_2}\left(  G_{I_2} \right) \right)   G_{\left\lbrace I_1,I_2\right\rbrace } \right] \right\vert_{C_{\left\lbrace I_1I_2\right\rbrace }} \\
			&=\left. \left( \mathcal{R}_{I_1}\left(  G_{I_1}\right)  \otimes \mathcal{R}_{I_2}\left(  G_{I_2} \right) \right)   G_{\left\lbrace I_1,I_2\right\rbrace }  \right\vert_{C_{\left\lbrace I_1I_2\right\rbrace }}\mbox{,}
		\end{split}
	\end{equation*}
where $J_1\mbox{, }J_2$ in the sums above run through all proper  subsets of $\iota_n$ such that $J_1\sqcup J_2=\iota_n$. 
We have used the consistency relation \eqref{consistency2} given by Lemma \ref{consistency} and the fact that $\sum \chi_{J_1J_2}=1$.

The theorem is thus proved.
\end{proof}
%

\begin{bibdiv}
\begin{biblist}
\addcontentsline{toc}{chapter}{Bibliography}
\bib{bastiani}{article}{
   author={Bastiani, Andr{\'e}e},
   title={Applications diff\'erentiables et vari\'et\'es diff\'erentiables de dimension infinie},
   language={French},
   journal={J. Analyse Math.},
   volume={13},
   date={1964},
   pages={1--114},
   }
\bib{bierstone}{article}{
   author={Bierstone, Edward},
   title={Differentiable functions},
   journal={Bol. Soc. Brasil. Mat.},
   volume={11},
   number={2},
   publisher={Springer-Verlag, New York},
   date={1980},
   pages={139--189},
}

\bib{brunetti}{article}{
   author={Brunetti, Romeo},
   author={Fredenhagen, Klaus},
   title={Microlocal analysis and interacting quantum field theories:
   renormalization on physical backgrounds},
   journal={Comm. Math. Phys.},
   volume={208},
   date={2000},
   number={3},
   pages={623--661},
}

\bib{conway}{book}{
   author={Conway, John B. },
   title={A Course in Functional Analysis.},
   series={Graduate Texts in Mathematics},
   volume={96},
   edition={2},
   publisher={Springer-Verlag, New York},
   date={1990},
   pages={xvi+399},
}
\bib{vietdang}{article}{
   author={Dang, Viet Nguyen},
   title={Extension of distributions, scalings and renormalization of QFT on Riemannian manifolds},
   date={2014},
   eprint={http://arxiv.org/abs/1411.3670},
}
\bib{diestel}{book}{
   author={Diestel, Joseph},
   title={Sequences and series in Banach spaces},
   series={Graduate Texts in Mathematics},
   volume={92},
   publisher={Springer-Verlag, New York},
   date={1984},
   pages={xii+261},
   isbn={0-387-90859-5},
   review={\MR{737004}},
   doi={10.1007/978-1-4612-5200-9},
}
\bib{dieu}{book}{
   author={Dieudonn{\'e}, J.},
   title={Treatise on analysis. Vol. III},
   note={Translated from the French by I. G. MacDonald;
   Pure and Applied Mathematics, Vol. 10-III},
   publisher={Academic Press, New York-London},
   date={1972},
   pages={xvii+388},
}
\bib{dutsch}{article}{
   author={D{\"u}tsch, Michael}, 
   title={Connection between the renormalization groups of St\"uckelberg-Petermann and Wilson}, 
   journal={Confluentes Math.}, 
   volume={4}, 
   date={2012}, 
   number={1}, 
   pages={1240001, 16}, 
   issn={1793-7442},  
   review={\MR{2921528}}, 
   doi={10.1142/S1793744212400014}
   }
\bib{grothe}{article}{
   author={Grothendieck, Alexander},
   title={Sur quelques points d'alg\`ebre homologique},
   journal={The T\^ohoku Math. J. (2)},
   language={French},
   series={Second Series},
   volume={9},
   date={1957},
   pages={119--221},
}
\bib{hormander}{book}{
    author={H{\"o}rmander, Lars},
    title={The analysis of linear partial differential operators. I},
    series={Grundlehren der Mathematischen Wissenschaften [Fundamental
    Principles of Mathematical Sciences]},
    volume={256},
    edition={2},
    note={Distribution theory and Fourier analysis},
    publisher={Springer-Verlag, Berlin},
    date={1990},
    pages={xii+440},
 }
 \bib{langlang93}{book}{
   author={Lang, Serge},
   title={Real and functional analysis},
   series={Graduate Texts in Mathematics},
   volume={142},
   edition={3},
   publisher={Springer-Verlag, New York},
   date={1993},
   pages={xiv+580},
}
\bib{langlang}{book}{
   author={Lang, Serge},
   title={Undergraduate analysis},
   series={Undergraduate Texts in Mathematics},
   edition={2},
   publisher={Springer-Verlag, New York},
   date={1997},
   pages={xvi+642},
}
 \bib{langdiff}{book}{
    author={Lang, Serge},
    title={Fundamentals of differential geometry},
    series={Graduate Texts in Mathematics},
    volume={191},
    publisher={Springer-Verlag, New York},
    date={1999},
    pages={xviii+535},
 }
\bib{malgrange}{book}{
   author={Malgrange, B.},
   title={Ideals of differentiable functions},
   series={Tata Institute of Fundamental Research Studies in Mathematics,
   No. 3},
   publisher={Tata Institute of Fundamental Research, Bombay; Oxford
   University Press, London},
   date={1967},
}

\bib{meisevogt}{book}{
   author={Meise, Reinhold},
   author={Vogt, Dietmar},
   title={Introduction to functional analysis},
   series={Oxford Graduate Texts in Mathematics},
   volume={2},
   note={Translated from the German by M. S. Ramanujan and revised by the
   authors},
   publisher={The Clarendon Press, Oxford University Press, New York},
   date={1997},
   pages={x+437},
}
\bib{Mi}{article}{
   author={Michal, A. D.},
   title={Differential calculus in linear topological spaces},
   journal={Proc. Nat. Acad. Sci. U. S. A.},
   volume={24},
   date={1938},
   pages={340--342},
}
\bib{Mi2}{article}{
   author={Michal, A. D.},
   title={Differentials of functions with arguments and values in
   topological abelian groups},
   journal={Proc. Nat. Acad. Sci. U. S. A.},
   volume={26},
   date={1940},
   pages={356--359},
}
\bib{nikolov}{article}{
   author={Nikolov, Nikolay M.},
   title={Anomalies in quantum field theory and cohomologies of
      configuration spaces},
   date={2009},
   eprint={http://arxiv.org/abs/0903.0187},
}
\bib{Nik09}{article}{
   author={Nikolov, Nikolay M.},
   title={Talk on anomalies in quantum field theory and cohomologies of
   configuration spaces},
   journal={Bulg. J. Phys.},
   volume={36},
   date={2009},
   number={3},
   pages={199--213},
}
\bib{todorov}{article}{
   author={Nikolov, Nikolay M.},
   author={Stora, Raymond},
   author={Todorov, Ivan},
   title={Renormalization of massless Feynman amplitudes in configuration space},
   journal={Rev. Math. Phys.},
   volume={26},
   date={2014},
   number={4},
   pages={1430002, 65},
}
\bib{quillen}{book}{
   author={Quillen, Daniel},
   title={Higher algebraic $K$-theory. I},
   conference={
       title={Algebraic $K$-theory, I: Higher $K$-theories},
       address={Proc. Conf., Battelle Memoirs Inst., Seattle, Wash.},
       date={1972},
       },
   book={
      publisher={Springer, Berlin},
   },
   date={1973},
   pages={85--147. Lecture Notes in Math., Vol: 341},
   volume={341},
}

\bib{rudin}{book}{
   author={Rudin, Walter},
   title={Functional analysis},
   series={International Series in Pure and Applied Mathematics},
   edition={2},
   publisher={McGraw-Hill, Inc., New York},
   date={1991},
   pages={xviii+424},
}

\bib{shimakura}{book}{
   author={Shimakura, Norio},
   title={Partial differential operators of elliptic type},
   series={Translations of Mathematical Monographs},
   volume={99},
   note={Translated and revised from the 1978 Japanese original by the
   author},
   publisher={American Mathematical Society, Providence, RI},
   date={1992},
   pages={xiv+288},
}
\bib{treves}{book}{
   author={Tr{\`e}ves, Fran{\c{c}}ois},
   title={Topological vector spaces, distributions and kernels},
   note={Unabridged republication of the 1967 original},
   publisher={Dover Publications, Inc., Mineola, NY},
   date={2006},
   pages={xvi+565},
}

\bib{vandijk}{book}{
   author={van Dijk, Gerrit},
   title={Distribution theory},
   series={De Gruyter Graduate Lectures},
   note={Convolution, Fourier transform, and Laplace transform},
   publisher={De Gruyter, Berlin},
   date={2013},
   pages={viii+109},
}
\bib{zeidler}{book}{ 
author={Zeidler, Eberhard}, 
title={Quantum field theory. II. Quantum electrodynamics},
note={A bridge between mathematicians and physicists}, 
publisher={Springer-Verlag, Berlin}, date={2009}, 
pages={xxxviii+1101}, isbn={978-3-540-85376-3}, 
review={\MR{2456465 (2010a:81002)}}
}
\end{biblist}
\end{bibdiv}

\clearpage
\chapter*{List of symbols and acronyms}
\markboth{LIST OF SYMBOLS AND ACRONYMS}{LIST OF SYMBOLS AND ACRONYMS}
\addcontentsline{toc}{chapter}{List of symbols and acronyms}
\renewcommand*{\arraystretch}{1.37}
\setlength\LTleft{0pt}
\setlength\LTright{0pt}
\begin{longtable}{  p{3cm}@{\hspace{6mm}}  p{9cm}l  }
\textbf{QFT}& quantum field theory& \pageref{introooo}\\
\textbf{pQFT}& perturbative quantum field theory& \pageref{introooo}\\
\textbf{TVS}& topological vector space& \pageref{trutru}\\
\textbf{LCS}& locally convex space& \pageref{triojgh}\\
$\bm{\psi_\ast(\zeta)}$& pushforward of the object $\zeta$ &\pageref{pulpush}\\
${\bm{\psi^\ast(\zeta)}}$& pullback of the object $\zeta$ & \pageref{pulpush}\\
$\bm{T_{h}}$& translation operator& \pageref{trasla}\\
$\bm{M_{\lambda}}$&  multiplication operator& \pageref{multi}\\
$\bm{\mu_V}$& Minkowski functional& \pageref{minki}\\
 \bm{$E/F$}& quotient space of $E$ modulo $F$& \pageref{espcoc1}\\
$\bm{\sigma(E',E)}\mbox{ or }\bm{\sigma^\ast}$& weak star topology& \pageref{wstdef}\\
$\bm{b^\ast}$& strong topology&  \pageref{stdef}\\
$\bm{E'}$& continuous dual space of a topological vector space $E$& \pageref{kutiii}\\
$\bm{u'}$& transpose of the map $u$& \pageref{trdef}\\
$\bm{M^{\perp}}$, $\bm{^{\perp}N}$& orthogonal of a subspace& \pageref{orthdef}\\
$\bm{M^{\circ}}$, $\bm{^{\circ}N}$& polar of a set& \pageref{polarsdef}\\

$\bm{\ell_{\infty}}$& space of bounded complex valued sequences& \pageref{linfc0}\\
$\bm{\ell_{1}}$& space of complex valued sequences of absolutely convergent associated series & \pageref{elle12}, \pageref{elle1}\\
$\bm{c_0}$& space of complex valued sequences whose limit is zero& \pageref{linfc0}\\
$\bm{ba}$& space of finitely additive signed measures& \pageref{badef}\\
$\bm{ca}$& space of countably additive and bounded signed measures& \pageref{cadef}\\

$\bm{\mathscr{E}\left( \Omega\right)}$&  space of infinitely differentiable, or $\bm{\mathcal{C}^\infty}$ complex valued  functions on $\Omega$& \pageref{funcinf1}, \pageref{defmani}\\
$\bm{\mathscr{E}^m\left( \Omega\right)}$ &  space of $m$ times continuously differentiable, or $\mathcal{C}^m$ complex valued  functions on $\Omega$& \pageref{funcinf1}, \pageref{defmani}\\
$\bm{\mathcal{I}^\infty \left( X,\mathbb{R}^d\right)}$&  space of $\bm{\mathcal{C}^\infty}$ functions which vanish on $X$ together with all of their derivatives& \pageref{funcinf1}\\
$\bm{\mathcal{I}^m \left( X,\mathbb{R}^d\right)}$& space of $\mathcal{C}^m$ functions which vanish on $X$ together with all of their derivatives of order less than or equal to $m$& \pageref{funcinf1}\\
$\bm{\mathcal{I}\left( X,\mathbb{R}^d\right)}$& space of $\mathcal{C}^\infty$ functions whose support does not intersect $X$& \pageref{funcinf1}\\
$\bm{\mathscr{D}\left( \Omega\right)}$& space of test functions& \pageref{nnnnn}, \pageref{defmani}\\
$\bm{\mathscr{D}_K}$, $\mathscr{D}_K\left( \Omega\right)$& spaces of test functions whose support lies in $K$& \pageref{nnnnn}, \pageref{defmani}\\
$\bm{\mathscr{D}\left(\Omega\right)'}$& space of distributions& \pageref{distrdef}, \pageref{mariacallas}\\
$\bm{\mathscr{D}'_K\left( \Omega\right)}$& space of distributions with compact support contained in $K$& \pageref{distrcomp}\\

$\bm{J^m(K)}$& space of jets of order $m$& \pageref{stravinsky}\\
$\bm{ \mathscr{E}^m\left( K\right) }$& space of differentiable functions of order $m$ in the sense of Whitney& \pageref{frida}\\

$\bm{\mathcal{T}_{\mathcal{M}\setminus X}(\mathcal{M})}$& space of distributions with moderate growth along a closed set $X$ of a manifold $\mathcal{M}$& \pageref{virchulina}\\
$\bm{\mathcal{T}(X,\mathcal{M})}$& algebra of tempered functions along a closed set $X$ of a manifold $\mathcal{M}$& \pageref{tempfunc}\\
$\bm{D_I}$, $\bm{D_n}$& big diagonal& \pageref{defidiag}\\
$\bm{d_{I,J}}$, $\bm{\ d_{\left\lbrace i,j,\dots,k\right\rbrace}}$, $\bm{\ d_{n,\left\lbrace i,j,\dots,k\right\rbrace}}$, $\bm{\ d_n}$& small diagonal& \pageref{defidiag} \\
$\bm{\mathcal{O}(D_I,\Omega)}$& algebra generated by Feynman amplitudes and $\mathcal{C}^\infty$ functions& \pageref{moustache}
\end{longtable}

\newpage
\addcontentsline{toc}{chapter}{Index}
\printindex

\end{document}